\PassOptionsToPackage{unicode}{hyperref}
\PassOptionsToPackage{hyphens}{url}
\PassOptionsToPackage{dvipsnames,svgnames,x11names}{xcolor}

\documentclass[12pt]{article}
\usepackage{amsmath}
\usepackage{graphicx,psfrag,epsf}
\usepackage{enumerate}
\usepackage{natbib}
\usepackage{url} 

\usepackage[T1]{fontenc}
\usepackage[utf8]{inputenc}
\usepackage{textcomp} 
\usepackage{lmodern}

\IfFileExists{upquote.sty}{\usepackage{upquote}}{}
\IfFileExists{microtype.sty}{
  \usepackage[]{microtype}
  \UseMicrotypeSet[protrusion]{basicmath} 
}{}
\makeatletter
\@ifundefined{KOMAClassName}{
  \IfFileExists{parskip.sty}{%
    \usepackage{parskip}
  }{
    \setlength{\parindent}{0pt}
    \setlength{\parskip}{6pt plus 2pt minus 1pt}}
}{
  \KOMAoptions{parskip=half}}
\makeatother
\usepackage{xcolor}
\makeatletter
\ifx\paragraph\undefined\else
  \let\oldparagraph\paragraph
  \renewcommand{\paragraph}{
    \@ifstar
      \xxxParagraphStar
      \xxxParagraphNoStar
  }
  \newcommand{\xxxParagraphStar}[1]{\oldparagraph*{#1}\mbox{}}
  \newcommand{\xxxParagraphNoStar}[1]{\oldparagraph{#1}\mbox{}}
\fi
\ifx\subparagraph\undefined\else
  \let\oldsubparagraph\subparagraph
  \renewcommand{\subparagraph}{
    \@ifstar
      \xxxSubParagraphStar
      \xxxSubParagraphNoStar
  }
  \newcommand{\xxxSubParagraphStar}[1]{\oldsubparagraph*{#1}\mbox{}}
  \newcommand{\xxxSubParagraphNoStar}[1]{\oldsubparagraph{#1}\mbox{}}
\fi
\makeatother

\usepackage{longtable,booktabs,array}
\usepackage{calc} 
\usepackage{etoolbox}
\makeatletter
\patchcmd\longtable{\par}{\if@noskipsec\mbox{}\fi\par}{}{}
\makeatother
\IfFileExists{footnotehyper.sty}{\usepackage{footnotehyper}}{\usepackage{footnote}}
\makesavenoteenv{longtable}
\usepackage{graphicx}
\makeatletter
\def\maxwidth{\ifdim\Gin@nat@width>\linewidth\linewidth\else\Gin@nat@width\fi}
\def\maxheight{\ifdim\Gin@nat@height>\textheight\textheight\else\Gin@nat@height\fi}
\makeatother
\setkeys{Gin}{width=\maxwidth,height=\maxheight,keepaspectratio}
\makeatletter
\def\fps@figure{htbp}
\makeatother

\makeatletter
\@ifpackageloaded{caption}{}{\usepackage{caption}}
\AtBeginDocument{%
\ifdefined\contentsname
  \renewcommand*\contentsname{Table of contents}
\else
  \newcommand\contentsname{Table of contents}
\fi
\ifdefined\listfigurename
  \renewcommand*\listfigurename{List of Figures}
\else
  \newcommand\listfigurename{List of Figures}
\fi
\ifdefined\listtablename
  \renewcommand*\listtablename{List of Tables}
\else
  \newcommand\listtablename{List of Tables}
\fi
\ifdefined\figurename
  \renewcommand*\figurename{Figure}
\else
  \newcommand\figurename{Figure}
\fi
\ifdefined\tablename
  \renewcommand*\tablename{Table}
\else
  \newcommand\tablename{Table}
\fi
}
\@ifpackageloaded{float}{}{\usepackage{float}}
\floatstyle{ruled}
\@ifundefined{c@chapter}{\newfloat{codelisting}{h}{lop}}{\newfloat{codelisting}{h}{lop}[chapter]}
\floatname{codelisting}{Listing}

\makeatother
\makeatletter
\@ifpackageloaded{caption}{}{\usepackage{caption}}
\@ifpackageloaded{subcaption}{}{\usepackage{subcaption}}
\makeatother

\newcommand{\blind}{0}

\usepackage[]{natbib}
\usepackage{bookmark}

\hypersetup{
  pdftitle={Title},
  pdfauthor={Author 1; Author 2},
  pdfkeywords={3 to 6 keywords, that do not appear in the title},
  colorlinks=true,
  linkcolor={blue},
  filecolor={Maroon},
  citecolor={Blue},
  urlcolor={Blue},
  pdfcreator={LaTeX via pandoc}}

\usepackage{pdflscape} 
\usepackage{rotating} 
\usepackage{threeparttable} 

\usepackage{perpage} 
\MakePerPage[1]{footnote} 

\usepackage{scalefnt}

\usepackage{amsthm, amssymb, amsfonts, bbm, bm}
\usepackage{amsmath}
\usepackage{graphicx}
\usepackage{mathrsfs}
\usepackage{threeparttable}
\usepackage[title]{appendix}

\usepackage{caption}
\usepackage{tikz}
\usepackage{algorithm}
\usepackage{algpseudocode}
\usepackage{booktabs}
\usepackage{chngpage}
\usepackage{bbding}

\usepackage{pifont}
\usepackage{enumerate}
\usepackage{enumitem}
\newcounter{globalcondition}
\newenvironment{globalconditions}{%
\begin{enumerate}[label=,start=\value{globalcondition}]
}{%
\setcounter{globalcondition}{\value{enumi}}%
\addtocounter{globalcondition}{1}%
\end{enumerate}%
}

\usepackage{tabu}
\usepackage{multirow}
\usepackage{multicol}
\usepackage{float}
\usepackage{makecell}
\pgfdeclarelayer{background}
\pgfsetlayers{background,main}
\usetikzlibrary{arrows,positioning}
\usepackage{arydshln}
\tikzset{
>=stealth',
punkt/.style={
rectangle,
rounded corners,
draw=black, very thick,
text width=6.5em,
minimum height=2em,
text centered},
pil/.style={
->,
thick,
shorten <=2pt,
shorten >=2pt,}
}
\newcommand{\Vertex}[2]
{\node[minimum width=0.6cm,inner sep=0.05cm] (#2) at (#1) {$\textstyle#2$};}
\newcommand{\Vertexunsee}[2]
{\node[minimum width=0.6cm,inner sep=0.05cm] (#2) at (#1) {$ $};
}
\newcommand{\Vertexr}[2]
{\node[rectangle, draw, minimum width=0.6cm,inner sep=0.05cm] (#2) at (#1) {$\textstyle#2$};
}
\newcommand{\ArrowR}[3]%
{ \begin{pgfonlayer}{background}
\draw[->,#3] (#1) to[bend right=30] (#2);
\end{pgfonlayer}
}
\newcommand{\ArrowL}[3]%
{ \begin{pgfonlayer}{background}
\draw[->,#3] (#1) to[bend left=45] (#2);
\end{pgfonlayer}
}
\newcommand{\EdgeL}[3]%
{ \begin{pgfonlayer}{background}
\draw[dashed,#3] (#1) to[bend right=-45] (#2);
\end{pgfonlayer}
}

\newcommand{\Arrow}[3]%
{ \begin{pgfonlayer}{background}
\draw[->,#3] (#1) -- +(#2);
\end{pgfonlayer}
}

\newtheorem{theorem}{Theorem}[section]
\newtheorem{corollary}{Corollary}[section]
\newtheorem{lemma}{Lemma}[section]

\newtheorem{assumption}{Assumption}
\newtheorem{example}{Example}

\numberwithin{equation}{section} 
\def\idf{\mathbbm{1}}

\def\d{\mathrm{d}}

\def\R{\mathbb{R}}

\def\N{\mathbb{N}}
\def\E{\mathbb{E}}
\def\P{\mathbb{P}}
\def\K{\mathbb{K}}

\def\I{\mathcal{I}}

\newcommand{\bs}[1]{\boldsymbol{#1}}

\def\bphi{\bs{\Phi}}

\def\bnabla{\bs{\nabla}}
\def\bbeta{\bs{\beta}}
\def\bnu{\bs{\nu}}
\def\hbnu{\hat{\bnu}}
\def\bpsi{\phi}
\def\mpsi{\phi}

\def\ssigma{\sigma}
\def\rsigma{\sigma_{\mathrm{m}}}

\def\bepi{\bbeta_{\pi}}
\def\bepia{\bbeta_{\pi,a}}
\def\bepib{\bbeta_{\pi,b}}
\def\bep{\bbeta_{p}}
\def\bemu{\bbeta_\mu}
\def\bemua{\bbeta_{\mu,\bX}}

\def\betau{\bbeta_\tau}
\def\bettau{\bbeta_{\tau_n}}
\def\beq{\bbeta_{q}}
\def\hbepi{\hat{\bbeta}_{\pi}}
\def\hbepia{\hat{\bbeta}_{\pi,a}}
\def\hbepib{\hat{\bbeta}_{\pi,b}}
\def\hbep{\hat{\bbeta}_{p}}
\def\hbemu{\hat{\bbeta}_\mu}

\def\hbetau{\hat{\bbeta}_\tau}
\def\hbettau{\hat{\bbeta}_{\tau_n}}
\def\hbeq{\hat{\bbeta}_{q}}
\def\hbettau{\hat{\bbeta}_{\tau_n}}

\def\tbepia{\tilde{\bbeta}_{\pi,a}}
\def\tbepib{\tilde{\bbeta}_{\pi,b}}

\def\tbemu{\tilde{\bbeta}_\mu}

\def\tbetau{\tilde{\bbeta}_\tau}
\def\tbeq{\tilde{\bbeta}_{q}}
\def\tbettau{\tilde{\bbeta}_{\tau_n}}

\def\tvarepsilon{\tilde{\varepsilon}}
\def\tzeta{\tilde{\zeta}}
\def\tvarrho{\tilde{\varrho}}

\def\bepir{\bbeta_{\pi}^*}
\def\bepiar{\bbeta_{\pi,a}^*}
\def\bepibr{\bbeta_{\pi,b}^*}
\def\beqr{\bbeta_{q}^*}
\def\bemur{\bbeta_{\mu}^*}

\def\betaur{\bbeta_{\tau}^*}
\def\bettaur{\bbeta_{\tau_{n}}^*}
\def\hbepiar{\hat{\bbeta}_{\pi,a}}
\def\hbepibr{\hat{\bbeta}_{\pi,b}}
\def\hbeqr{\hat{\bbeta}_{q}}
\def\hbemur{\hat{\bbeta}_{\mu}}

\def\hbetaur{\hat{\bbeta}_{\tau}}
\def\hbettaur{\hat{\bbeta}_{\tau_n}}

\def\tbemur{\tilde{\bbeta}_{\mu}}

\def\tbetaur{\tilde{\bbeta}_{\tau}}
\def\tbettaur{\tilde{\bbeta}_{\tau_n}}

\def\norme#1{\left\|#1\right\|_{\psi_{1}}}
\def\normg#1{\left\|#1\right\|_{\psi_2}}
\def\normp#1#2{\left\|#1\right\|_{\P,#2}}

\def\suj#1{\sum\limits_{j=1}^{#1}}
\def\supiar{\sum\limits_{i\in \I_{\pi}}}
\def\supibr{\sum\limits_{i\in \I_{\pi}}}
\def\suqr{\sum\limits_{i\in \I_{q}}}
\def\sumur{\sum\limits_{i\in \I_{\mu}}}
\def\suttaur{\sum\limits_{i\in \I_{\tau_n}}}
\def\sutaur{\sum\limits_{i\in \I_{\tau}}}

\def\bx{\bs{x}}
\def\bU{\bs{U}}
\def\balpha{\bs{\alpha}}
\def\halpha{\hat{\balpha}}
\def\bX{\bs{X}}
\def\bM{\bs{M}}

\def\bS{\bs{S}}
\def\bzS{\bS_0}

\def\bziS{\bS_{i,0}}
\def\bW{\bs{W}}
\def\bv{\bs{v}}
\def\e{\bs{e}}
\def\bDelta{\bs{\Delta}}
\def\bz{\bs{0}}
\def\Var{\mathrm{Var}}
\def\Ktil{\breve{\mathbb{K}}}
\def\Ctil{\breve{\mathbb{C}}}
\def\independenT#1#2{\mathrel{\rlap{$#1#2$}\mkern2mu{#1#2}}}
\newcommand\independent{\protect\mathpalette{\protect\independenT}{\perp}}

\title{Causal Mediation Analysis in the Presence of High-dimensional Confounders and Mediators}
\author{Zhen Qi$^{1}$ and Yuqian Zhang$^{1,}$\thanks{Corresponding author (email: yuqianzhang@ruc.edu.cn)}
\\	{\small \it $^{1}$ Institute of Statistics and Big Data, Renmin University of China, Beijing, China}\\~
}

\begin{document}

\def\spacingset#1{\renewcommand{\baselinestretch}%
{#1}\small\normalsize} \spacingset{1}


\if0\blind
{
  \title{\bf Quadruply robust methods for causal mediation analysis}
  \author{Zhen Qi
    \hspace{.2cm}\\
    Institute of Statistics and Big Data, Renmin University of China\\
    and \\
    Yuqian Zhang\thanks{
Corresponding author (email: yuqianzhang@ruc.edu.cn)} \\
    Institute of Statistics and Big Data, Renmin University of China}
  \maketitle
} \fi

\if1\blind
{
  \bigskip
  \bigskip
  \bigskip
  \begin{center}
    {\LARGE\bf Quadruply robust methods for causal mediation analysis}
\end{center}
  \medskip
} \fi

\bigskip
\begin{abstract}
Estimating natural effects is a core task in causal mediation analysis. Existing triply robust (TR) frameworks \citep{SemiparametricTheoryCausal2012} and their extensions have been developed to estimate the natural effects. In this work, we introduce a new quadruply robust (QR) framework that enlarges the model class for unbiased identification. We study two modeling strategies. The first is a nonparametric modeling approach, under which we propose a general QR estimator that supports the use of machine learning methods for nuisance estimation. We also study high-dimensional settings, where the dimensions of covariates and mediators may both be large. In these settings, we adopt a parametric modeling strategy and develop a model quadruply robust (MQR) estimator to limit the impact of model misspecification. Simulation studies and a real data application demonstrate the finite-sample performance of the proposed methods.
\end{abstract}

\noindent%
{\it Keywords:} Natural (in)direct effect; Mediation analysis; Cross-world counterfactual; Robust inference; High-dimensional statistics 
\vfill

\newpage
\spacingset{1.8} 

\section{Introduction}
\label{sec:intro}

Causal inference has gained increasing attention across scientific disciplines as a formal framework for studying cause-and-effect relationships. Beyond assessing whether an intervention produces an effect, researchers are often interested in understanding the mechanisms through which such effects arise. This motivation leads to causal mediation analysis, which decomposes the total treatment effect into direct and indirect components transmitted through intermediate variables, known as mediators. Such decomposition provides interpretable information about the pathways of treatment effects and supports mechanism-based scientific conclusions \citep{GeneralApproachCausal2010,InterpretationIdentificationCausal2014}. 

Modern data collection now provides access to a large number of covariates and mediators. High-dimensional covariates enable more effective control of confounding in large observational studies, while high-dimensional mediators allow researchers to estimate mediation effects after accounting for many indirect pathways. For example, studies may examine whether genome-wide DNA methylation mediates the effect of prenatal exposures on offspring outcomes while adjusting for a large set of covariates \citep{EpigenomicsGestationalProgramming2015,DNAMethylationMediator2018}. When relevant covariates are omitted, important confounding structures may be missed, which can distort estimation and lead to biased causal conclusions. Moreover, complex systems often involve multiple pathways through which a treatment affects an outcome, and these pathways cannot be represented by a small number of variables. As a result, methods that can handle high-dimensional information are necessary for credible causal analysis and for understanding the mechanisms behind observed effects.

A key challenge in causal mediation analysis is misspecification of nuisance functions that describe the treatment assignment, mediator processes, and outcome mechanisms. This issue is more severe in high-dimensional settings, where accurate estimation of all nuisance components is often not feasible. Reliable causal inference therefore calls for methods that remain stable under errors in nuisance model estimation. Such robustness has been studied, for example, by \cite{SemiparametricTheoryCausal2012,TargetedMaximumLikelihood2012,CausalMediationAnalysis2022}. In this work, we develop methods with stronger robustness properties, especially for high-dimensional settings.

\paragraph*{Problem setup}
Suppose we collect $N$ independent and identically distributed (i.i.d.) samples $(\bW_i)_{i=1}^N = (Y_i,A_i,\bX_i,\bM_i)_{i=1}^N$, with $\bW = (Y,A,\bX,\bM)$ being an independent copy. Here, $Y\in\R$ denotes the outcome, $A\in\{0,1\}$ is a binary treatment indicator, $\bX\in\R^{d_1}$ represents the pre-treatment covariates, and $\bM\in\mathcal{S}_M\subseteq \R^{d_2}$ denotes the mediators that lie on the causal pathway from $A$ to $Y$. A directed acyclic graph in Figure~\ref{Fig:DAG} shows the data-generating structure. We write $\bS:=(\bX^\top,\bM^\top)^\top$ for the collection of covariates and mediators, which has dimension $d:=d_1+d_2$. Throughout this paper, we allow both $d_1$ and $d_2$ to increase with $N$, covering high-dimensional cases where the numbers of covariates and mediators may \emph{both} be large.

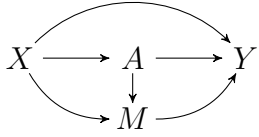
\begin{figure}[!t]
\centering
\begin{tikzpicture}[scale=1, every node/.style={transform shape}]
    \Vertex{-1.5, 0}{X}
    \Vertex{0, -0.8}{M}
    \Vertex{0, 0}{A}
    \Vertex{1.5, 0}{Y}
    \Arrow{X}{A}{black}
    \Arrow{A}{M}{black}
    \ArrowR{M}{Y}{black}
    \ArrowL{X}{Y}{black}
    \Arrow{A}{Y}{black}
    \ArrowR{X}{M}{black}
\end{tikzpicture}
\caption{A directed acyclic graph illustrating the causal relationships among treatment $A$, mediator $\bM$, outcome $Y$, and confounder $\bX$.}\label{Fig:DAG}
\end{figure}
We adopt the counterfactual framework to define the causal effects of interest. For any $a\in\{0,1\}$ and $\bs{m}\in\mathcal{S}_M$, let $\bM(a)$ be the potential mediator that would be observed if the treatment $A$ were set, possibly contrary to fact, to level $a$. Similarly, let $Y(a,\bs{m})$ be the potential outcome that would be observed if $A$ were set to level $a$ and $\bM$ to $\bs{m}$. In causal inference, the average treatment effect (ATE), $\mathrm{ATE} := \theta_{1,1} - \theta_{0,0}$, measures the total effect of $A$ on $Y$, where $\theta_{a,a^\prime} := \E[Y(a,\bM(a^\prime))]$ for $a,a^\prime \in \{0,1\}$. To study the underlying causal pathways, the ATE is usually decomposed into the natural direct effect (NDE) and the natural indirect effect (NIE), defined as $\mathrm{NDE} := \theta_{1,0} - \theta_{0,0}$ and $\mathrm{NIE} := \theta_{1,1} - \theta_{1,0}$ \citep{InterpretationIdentificationCausal2014}. The NDE captures the effect along paths that bypass the mediator ($A \rightarrow Y$), while the NIE captures the effect along paths that operate through $\bM$ ($A \rightarrow \bM \rightarrow Y$).

Identification and estimation of the mediation effects are usually more challenging than the total effect, since both the NDE and NIE rely on the quantity $\theta_{1,0}=\E[Y(1,\bM(0))]$. The variable $Y(1,\bM(0))$ is a cross-world counterfactual, representing the potential outcome under treatment $A=1$ and the mediator values that would occur under $A=0$, and it is \emph{never observed} in the data. In contrast, the ATE is defined through $Y(1,\bM(1))$ and $Y(0,\bM(0))$, which are observed when $A=1$ and $A=0$, respectively. While robust inference for $\theta_{1,1}$ and $\theta_{0,0}$ (and thus the ATE) has been studied extensively \citep{HighdimensionalInferenceAverage2022,SparsityDoubleRobust2019,InferenceTreatmentEffect2021,UnifyingApproachDoublyrobust2019}, robust inference for $\theta_{1,0}$ is more challenging and is the main focus of this work. For simplicity, the target parameter $\theta_{1,0}=\E[Y(1,\bM(0))]$ is denoted as $\theta$ throughout the sequel.

Since the potential outcome $Y(1,\bM(0))$ is never observed, $\theta=\E[Y(1,\bM(0))]$ cannot be identified directly from the observable variables $\bW$. Instead, we identify $\theta$ through observable quantities together with several related nuisance functions. Define the conditional mean outcome as $\mu(\bS) := \E[Y \mid A = 1, \bS]$ and the cross-world conditional mean outcome as $\tau(\bX) := \E[Y(1,\bM(0)) \mid \bX]$. We also define the treatment propensity score as $\pi(\bX) := \P(A = 1 \mid \bX)$, let $f_a(\bS) := f(\bM \mid A = a, \bX)$ be the conditional density of $\bM$ given $\bX$ under treatment $a \in \{0,1\}$, and set $q(\bS):=f_0(\bS)/ f_1(\bS)$. We use the superscript ``$^*$'' to denote working models for these functions, such as $\mu^*$ for $\mu$, which represent population-level approximations. A working model is said to be \emph{correctly specified} if it coincides with the corresponding true function, that is, $\mu^*=\mu$; otherwise, it is \emph{misspecified}.

The following standard identification conditions are assumed throughout \citep{GeneralApproachCausal2010,SemiparametricTheoryCausal2012,MediationAnalysisMultiple2014}.

\begin{assumption}[Basic assumptions]\label{cond:basic}
(a) Consistency: $\bM = \bM(A)$, $Y = Y(A,\bM)$. (b) Ignorability: For $\bs{m} \in \mathcal{S}_M$, $\{Y(1,\bs{m}),\bM(0)\} \independent A \mid \bX$ and $Y(1,\bs{m}) \independent \{\bM(1),\bM(0)\} \mid \bX$. (c) Positivity: $\pi(\bX)\in[c_0,1-c_0]$ and $q(\bS)\in[c_0^2,c_0^{-2}]$ almost surely with some $c_0\in(0,1/2)$.
\end{assumption}

Under Assumption~\ref{cond:basic}, \cite{SemiparametricTheoryCausal2012, TargetedMaximumLikelihood2012, SemiparametricEstimationPathspecific2019, CausalMediationAnalysis2022, DeepMedSemiparametricCausal2022} proposed ``triply robust'' methods, which allow unbiased identification of $\theta$ under three model scenarios, where each scenario requires correct specification of two nuisance models; see Section~\ref{sec:back} for a survey. However, in high-dimensional settings, existing approaches require correct specification of all nuisance models to ensure $\sqrt{N}$-consistency and asymptotic normality; otherwise, they only provide consistency. In this work, we propose ``quadruply robust'' methods that enlarge the model classes under which both unbiased identification and $\sqrt{N}$-inference are achieved.

\paragraph*{Our contribution}
We consider two modeling approaches:
\begin{itemize}
    \item \emph{Nonparametric modeling}: We propose a new \emph{quadruply robust} (QR) estimator (see Section~\ref{sec:QR}) for $\theta$ that allows \emph{four} model correctness scenarios, which is strictly broader than existing triply robust methods that allow only \emph{three} scenarios. The estimator is built on a new QR representation for $\theta$, together with a new doubly robust (DR) representation for the cross-world conditional mean $\tau$. The QR estimator supports machine learning estimates of the nuisance functions, and asymptotic normality is established when the nuisance estimation errors satisfy suitable product-rate conditions.

    \item \emph{Parametric modeling}: We also study a parametric approach (see Section~\ref{sec:MQR}) for settings where nonparametric methods face the curse of dimensionality. In such cases, we use regularized (generalized) linear models for nuisance estimation and propose a \emph{model quadruply robust}\footnote{The term ``model quadruply robust'' is inspired by the ``model doubly robust'' concept in ATE inference \citep{UnifyingApproachDoublyrobust2019, InferenceTreatmentEffect2021}.} (MQR) estimator to reduce bias from model misspecification. Beyond the proposed QR representation, the MQR estimator uses specialized loss functions for nuisance estimation to further limit the impact of misspecification. To our knowledge, this approach is the first to achieve $\sqrt{N}$-consistency and asymptotic normality under model misspecification in high-dimensional settings. It also allows a broader set of model correctness scenarios than existing methods, including those developed for simpler low-dimensional settings.
\end{itemize}

\section{Background and Related Works}
\label{sec:back}

The semiparametric theory for natural effects has become a main focus in causal mediation analysis and has led to the development of robust methods. A foundational contribution is \cite{SemiparametricTheoryCausal2012}, which derived the representation $\theta = \E[\bpsi_\mathrm{eff}(\bW)]$ with the efficient influence function
\begin{align}
\bpsi_\mathrm{eff}(\bW)=\frac{A  f_0(\bS)\{Y - \mu(\bS)\}}{\pi(\bX)  f_1(\bS)} + \frac{(1 - A)\{\mu(\bS) - \tau(\bX)\}}{1 - \pi(\bX)}+ \tau(\bX).\label{def:EIF}
\end{align}
Estimation of $\theta$ can be carried out by estimating all associated nuisance functions, plugging in these estimates, and taking the empirical average. Depending on how the nuisance components are estimated, existing approaches can be broadly grouped into two categories: the triply robust (TR) approach and odds-modeling triply robust (O-TR) approach.

\paragraph*{The triply robust approach}
The TR approach \citep{SemiparametricTheoryCausal2012, CausalMediationAnalysis2022, DeepMedSemiparametricCausal2022} relies on direct estimation of the conditional densities $f_1$ and $f_0$. Since $Y(1,\bM(0))$ is not observed, to estimate the cross-world conditional mean $\tau(\bX) = \E[Y(1,\bM(0)) \mid \bX]$, the TR approach uses the integral representation
\begin{align}
\tau(\bX) = \int_{\mathcal{S}_M} \mu(\bX,\bs{m}) f_0(\bX,\bs{m}) \,\mathrm{d}\bs{m}. \label{eq:joint2}
\end{align}
A plug-in estimator is then formed as
$\hat\tau(\bX) = \int_{\mathcal{S}_M} \hat\mu(\bX,\bs{m}) \hat f_0(\bX,\bs{m}) \,\mathrm{d}\bs{m}$.
The resulting estimator for $\theta$ is consistent if at least one of the following scenarios holds:
\begin{globalconditions}
\centering
\item $\mathcal M_a$: $\mu^*(\bS)$ and $\pi^*(\bX)$ are correctly specified;\label{cond:Eric3}
\item $\mathcal M_b$: $\mu^*(\bS)$, $f_0^*(\bS)$, and $f_1^*(\bS)$ are correctly specified;\label{cond:Eric1}
\item $\mathcal M_c$: $\pi^*(\bX)$, $f_0^*(\bS)$, and $f_1^*(\bS)$ are correctly specified.\label{cond:Eric2}
\end{globalconditions}
We denote the corresponding union model by
$\mathcal M_\mathrm{TR} := \mathcal{M}_a \cup \mathcal{M}_b \cup \mathcal{M}_c$.
However, correct specification of the conditional densities $(f_1,f_0)$ can be challenging, especially when $d_1$ or $d_2$ is large. In addition, numerical evaluation of the integral defining $\hat\tau$ can be computationally expensive and unstable when the mediator dimension $d_2$ is moderate or high.

\paragraph*{The odds-modeling triply robust approach}
The O-TR approach \citep{TargetedMaximumLikelihood2012, SemiparametricEstimationPathspecific2019,CausalMediationAnalysis2022} instead models the conditional probability $p(\bS) := \P(A = 1 \mid \bS)$ and use Bayes' law to replace the conditional density ratio with the odds ratio:
\begin{align}
q(\bS) = \frac{f_0(\bS)}{f_1(\bS)} = \frac{\pi(\bX)}{1-\pi(\bX)} \cdot \frac{1-p(\bS)}{p(\bS)}.
\label{eq:joint1}
\end{align}
Together with a nested regression for estimating $\tau$ based on the representation 
\begin{align}\label{rep:nested}
\tau(\bX) = \E[\mu(\bS) \mid A = 0, \bX],
\end{align}
their approach avoids the estimation of the conditional densities $(f_1,f_0)$. The resulting O-TR estimator remains consistent under $\mathcal M_\mathrm{OTR}:=\mathcal{M}_a \cup \mathcal{M}_b' \cup \mathcal{M}_c'$, where
\begin{globalconditions}
\centering
\item $\mathcal M_b'$: $\mu^*(\bS)$ and $\tau_n^*(\bX)$ are correctly specified;\label{cond:Eric5}
\item $\mathcal M_c'$: $\pi^*(\bX)$ and $p^*(\bS)$ are correctly specified.\label{cond:Eric4}
\end{globalconditions}
In the above, $\tau_n^*(\bX)$ denotes a working model for the pseudo conditional mean $\tau_n(\bX):=\E[\mu^*(\bS)\mid A=0,\bX]$, which is defined through the possibly misspecified outcome model $\mu^*$. Under $\mathcal M_b'$, where $\mu^*=\mu$, we have $\tau_n^*=\tau_n=\tau$. In contrast, when the initial working model is misspecified with $\mu^*\neq\mu$, it typically follows that $\tau_n\neq\tau$, and therefore $\tau_n^*\neq\tau$, even if the functional form used for $\tau_n$ is correctly parameterized.

The representations \eqref{eq:joint2} and \eqref{eq:joint1} imply that $\mathcal M_b \subset \mathcal M_b'$ and $\mathcal M_c \subset \mathcal M_c'$, provided the working models are chosen correspondingly. Therefore, the O-TR estimator achieves consistency under a larger class $\mathcal M_\mathrm{OTR} \supset \mathcal M_\mathrm{TR}$. However, this approach requires modeling the inverse causality probability $p(\bS) = \P(A = 1 \mid \bX, \bM)$, where the mediators $\bM$ are downstream variables affected by $A$, as illustrated in Figure \ref{Fig:DAG}. In general, it is more realistic to impose smoothness or sparsity assumptions on elements that describe the forward data-generating process (DGP), and more appropriate to treat $p$ as an induced quantity rather than an independent modeling target. Even when $p$ is smooth or sparse, its complexity should be determined by $(\pi,q)$ through Bayes' law \eqref{eq:joint1}.

\paragraph*{High-dimensional approaches}
In principle, the TR and O-TR approaches can be extended to high-dimensional settings by using regularization for nuisance estimation. The TR approach, however, requires estimation of conditional densities and numerical integration, which becomes challenging in high dimensions. The O-TR approach avoids explicit density estimation but instead relies on inverse causality probability modeling. A further challenge in high dimensions is that nuisance estimators generally cannot achieve parametric convergence rates when model complexity increases with the sample size. As a consequence, existing methods guarantee only \emph{consistency} under $\mathcal M_\mathrm{TR}$ or $\mathcal M_\mathrm{OTR}$ when model misspecification is present, rather than root-$N$ consistency or asymptotic normality. For instance, within the double machine learning framework \citep{DoubleDebiasedMachine2018}, \cite{CausalMediationAnalysis2022,DeepMedSemiparametricCausal2022} obtained valid inference only when all nuisance models are correctly specified. In practice, when the dimension greatly exceeds the sample size, nuisance functions are often estimated using regularized parametric models. Requiring all such models to accurately approximate the true DGP is a strong and often unrealistic assumption. Fully model-based approaches \citep{HypothesisTestMediation2016a, BayesianShrinkageEstimation2020, HighdimensionalMediationAnalysis2022, CausalMediationAnalysis2023} face even stronger limitations, as they require correct specification of all nuisance models to achieve consistent estimation, and misspecification can result in inconsistent estimates.

\section{Quadruply Robust Estimation}\label{sec:QR}

In this section, we consider a nonparametric modeling approach and develop a flexible and robust framework for estimating mediation effects.

\subsection{A quadruply robust estimator}\label{sec:dr_tau}

\paragraph*{Doubly robust estimation for $\tau$}
We begin by estimating the cross-world conditional mean $\tau(\bX)=\E[Y(1,\bM(0)) \mid \bX]$, which is a key quantity for identifying $\theta$. Because the cross-world counterfactual $Y(1,\bM(0))$ is not observable, $\tau$ cannot be estimated directly from the observed data. To address this issue, the TR method \citep{SemiparametricTheoryCausal2012} relies on the integral representation in \eqref{eq:joint2}. To avoid explicit integration, the O-TR approach \citep{TargetedMaximumLikelihood2012,SemiparametricEstimationPathspecific2019,CausalMediationAnalysis2022} uses a nested identification strategy in \eqref{rep:nested}. Both approaches depend on an accurate initial estimate of $\mu$, and errors at this stage propagate to the estimation of $\tau$. In particular, if the initial estimate of $\mu$ is inconsistent, the resulting estimator of $\tau$ is generally also inconsistent.

Alternatively, we note that $\tau$ can be identified through importance weighting \citep{CorrectingSampleSelection2006} as $\tau(\bX) = \E[q(\bS) Y \mid A = 1, \bX]$, where the conditional density ratio $q(\bS)$ adjusts for the conditional distribution shift from the $A=1$ group to the $A=0$ group. This strategy instead depends on a reliable initial estimate of $q$.

To reduce reliance on any single initial estimate and to make use of data from both the $A=0$ and $A=1$ groups, we combine the nested and importance weighting strategies and propose a doubly robust (DR) representation for $\tau$. This representation requires only one of the working models for $\mu$ or $q$ to be correctly specified, and therefore provides a more reliable identification of $\tau$.

\begin{theorem}[A DR representation for $\tau$]\label{thm:dr_tau}
Let Assumptions \ref{cond:basic} hold. If either $\mu^*=\mu$ or $q^*=q$, then $\tau(\bX)=\tau_n(\bX)+\E[q^*(\bS)\{Y-\mu^*(\bS)\}\mid A=1,\bX]$.
\end{theorem}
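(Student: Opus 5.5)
The plan is to first establish two identification facts under Assumption~\ref{cond:basic}, and then split into the two cases $\mu^*=\mu$ and $q^*=q$.

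\textbf{Two identification facts.} First, the nested representation \eqref{rep:nested}, $\tau(\bX)=\E[\mu(\bS)\mid A=0,\bX]$. Second, the importance-weighting change of measure: for any integrable $g$,
\begin{align*}
\E[q(\bS)g(\bS)\mid A=1,\bX]=\E[g(\bS)\mid A=0,\bX].
\end{align*}
The change of measure is immediate, because $q=f_0/f_1$ is the density ratio of $\bM\mid A=0,\bX$ with respect to $\bM\mid A=1,\bX$; positivity in Assumption~\ref{cond:basic}(c) guarantees this ratio is well defined and bounded.

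For the nested representation, I would write
\begin{align*}
\tau(\bX)=\E[Y(1,\bM(0))\mid\bX]=\int \E[Y(1,\bs m)\mid \bM(0)=\bs m,\bX]\,\d F_{\bM(0)\mid\bX}(\bs m).
\end{align*}
The cross-world independence $Y(1,\bs m)\independent \bM(0)\mid\bX$ replaces the inner expectation by $\E[Y(1,\bs m)\mid\bX]$. The ignorability condition $Y(1,\bs m)\independent A\mid\bX$, combined with $Y(1,\bs m)\independent \bM(1)\mid \bX$ and consistency, then gives
\begin{align*}
\E[Y(1,\bs m)\mid\bX]=\E[Y(1,\bs m)\mid A=1,\bM(1)=\bs m,\bX]=\mu(\bX,\bs m).
\end{align*}
Finally, $\bM(0)\independent A\mid \bX$ and consistency give $F_{\bM(0)\mid\bX}=F_{\bM\mid A=0,\bX}$, which yields \eqref{eq:joint2} and hence \eqref{rep:nested}. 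This step needs the most care, because the joint independence statements in Assumption~\ref{cond:basic}(b) must be applied in exactly the right combination. Since the paper already cites \eqref{eq:joint2} and \eqref{rep:nested} from the literature, I would present this derivation only briefly.

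\textbf{Case $\mu^*=\mu$.} Here $\tau_n(\bX)=\E[\mu(\bS)\mid A=0,\bX]=\tau(\bX)$ by the nested representation. The correction term also vanishes: by the tower property over $\bS$ within the $A=1$ group,
\begin{align*}
\E[q^*(\bS)\{Y-\mu(\bS)\}\mid A=1,\bX]=\E\big[q^*(\bS)\{\E[Y\mid A=1,\bS]-\mu(\bS)\}\mid A=1,\bX\big]=0,
\end{align*}
which requires only integrability of $q^*Y$ and $q^*\mu$.

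\textbf{Case $q^*=q$.} Again conditioning on $\bS$ within $A=1$, the correction term equals $\E[q(\bS)\{\mu(\bS)-\mu^*(\bS)\}\mid A=1,\bX]$. The change of measure turns this into
\begin{align*}
\E[\mu(\bS)\mid A=0,\bX]-\E[\mu^*(\bS)\mid A=0,\bX]=\tau(\bX)-\tau_n(\bX).
\end{align*}
Adding $\tau_n(\bX)$ gives $\tau(\bX)$, which completes the proof.
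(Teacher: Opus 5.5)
\textbf{Gap in the nested representation.} Your two-case argument matches the paper's proof: the tower property over $\bS$ in the $A=1$ arm, then the change of measure $\E[q(\bS)g(\bS)\mid A=1,\bX]=\E[g(\bS)\mid A=0,\bX]$. It is correct once $\tau(\bX)=\E[\mu(\bS)\mid A=0,\bX]$ is available. The gap is in how you derive that representation. You obtain $\E[Y(1,\bs m)\mid\bX]=\E[Y(1,\bs m)\mid A=1,\bM(1)=\bs m,\bX]$ by combining $Y(1,\bs m)\independent A\mid\bX$ with $Y(1,\bs m)\independent\bM(1)\mid\bX$. Conditional independence from $A$ and from $\bM(1)$ separately does not give independence from the pair $(A,\bM(1))$. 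No other combination of Assumption~\ref{cond:basic}(b) helps either, since its first clause involves $\bM(0)$, not $\bM(1)$.

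\textbf{Counterexample.} Let $\bX$ be degenerate, and let $U,V\sim\mathrm{Bern}(1/2)$ and $M(0)\sim\mathrm{Bern}(1/4)$ be mutually independent. Set $A=\idf\{U\neq V\}$, $M(1)=V$, and $Y(1,m)=U$ for $m\in\{0,1\}$. Then $(U,M(0))\independent A$ and $U\independent(M(1),M(0))$, so Assumption~\ref{cond:basic} holds with $\pi\equiv1/2$, $f_1\equiv1/2$, $q\in\{1/2,3/2\}$, and $c_0=0.45$. However, on $\{A=1,M=m\}$ we have $U=1-m$, so $\mu(m)=1-m$ and $\E[\mu(\bS)\mid A=0]=3/4$, whereas $\tau=\E[U]=1/2$. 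Both cases of the statement give $3/4$ on the right-hand side. So the conclusion does not follow from Assumption~\ref{cond:basic}(b) as literally written.

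\textbf{Fix.} You need the standard sequential-ignorability condition $Y(1,\bs m)\independent\bM(1)\mid A=1,\bX$. The paper's own proof invokes exactly this at the same step, in the form $Y(1,\bs m)\independent\{\bM(0),\bM(1)\}\mid A=1,\bX$, although it is not what Assumption~\ref{cond:basic}(b) says. State it as an explicit hypothesis. With it, $\E[Y(1,\bs m)\mid\bX]=\E[Y(1,\bs m)\mid A=1,\bX]=\E[Y(1,\bs m)\mid A=1,\bM(1)=\bs m,\bX]=\mu(\bX,\bs m)$, and the rest of your proof goes through unchanged.
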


Based on Theorem \ref{thm:dr_tau}, we propose the following strategy for estimating $\tau$. Let $\hat\mu$ denote an estimator of $\mu(\bS)=\E[Y \mid A = 1, \bS]$, obtained by fitting any regression model of $Y$ on $\bS$ using the subsample with $A=1$. Similarly, let $\hat q$ be an estimator of the density ratio $q(\bS)=f_0(\bS)/f_1(\bS)$. This can be obtained by separately estimating the conditional densities $f_a(\bS)=f(\bM\mid\bX,A=a)$ for $a\in\{0,1\}$ and then taking their ratio; see recent developments in conditional density estimation in \cite{WassersteinGenerativeRegression2025,NonparametricEstimationConditional2025}. When $\bM$ is binary or discrete, the conditional densities reduce to conditional probabilities, which can be estimated using a range of machine learning methods. In contrast, nonparametric estimation becomes difficult when $\bM$ is continuous and high-dimensional. In such cases, we recommend using the parametric modeling approach described in Section~\ref{sec:MQR}.

We first compute a nested estimator of $\tau$ using $\hat\mu$ and the control group ($A=0$) only:
\begin{align*}
\hat\tau_{n}(\cdot)\in\arg\min_{f \in \mathcal{G}_{\tau}}N^{-1}\sum_{i=1}^N (1-A_i)\left\{\hat\mu(\bS_i)-f(\bX_i)\right\}^2,
\end{align*}
where $\mathcal{G}_{\tau}$ is a pre-specified function class, such as linear models, random forests, or neural networks. We then refine this estimator using the DR representation to correct the bias:
\begin{align}\label{def:tauhat}
\hat\tau(\cdot)\in\arg\min_{f \in \mathcal{G}_{\tau}}N^{-1}\sum_{i=1}^NA_i\left[\hat\tau_{n}(\bX_i)+\hat q(\bS_i)\left\{Y_i-\hat\mu(\bS_i)\right\}-f(\bX_i)\right]^2.
\end{align}
Alternatively, we can also estimate $\tau$ through modeling the difference $\tau(\bX)-\tau_n(\bX)$; see Section~\ref{apdx:tau_alter} of the Supplementary Material.

\paragraph*{A QR representation for $\theta$}
We now turn to the estimation of the target parameter $\theta$. We first introduce an equivalent formulation of the model class $\mathcal{M}_c'$ in Section~\ref{sec:back}:
\begin{globalconditions}
\centering
\item $\mathcal M_c''$: $\pi^*(\bX)$ and $q^*(\bS)$ are correctly specified,
\end{globalconditions}
where $\mathcal M_c''=\mathcal M_c'$ by Bayes' rule \eqref{eq:joint1}. Here, we focus on modeling $q(\bS)$, which characterizes the forward DGP $(\bX,A)\rightarrow\bM$, rather than the inverse causality probability $p(\bS)=\P(A=1\mid\bX,\bM)$. Under the O-TR approach, correct specification of both $\mu^*$ and $\pi^*$ appears twice in the union class $\mathcal M_\mathrm{OTR}=\mathcal{M}_a \cup \mathcal{M}_b' \cup \mathcal{M}_c''$, while correct specification of $\tau_n^*$ and $q^*$ appears only once. This asymmetry motivates the question of whether the model class can be expanded to achieve robustness that treats different nuisance components evenly. To this end, we introduce the following additional scenario:
\begin{globalconditions}
\centering
\item $\mathcal M_d$: $q^*(\bS)$, $\tau_n^*(\bX)$, and $\tau^*(\bX)$ are correctly specified.\label{cond:Eric6}
\end{globalconditions}
The scenario $\mathcal M_d$ involves two distinct working models, $(\tau_n^*,\tau^*)$, for the cross-world conditional mean, corresponding to the limits of $(\hat{\tau}_n,\hat{\tau})$. As discussed earlier, when $\mu^*\neq\mu$, we have $\tau_n\neq\tau$, and therefore $\tau_n^*\neq\tau^*$ in general. Define the QR model class as $\mathcal M_\mathrm{QR}:=\mathcal M_a\cup\mathcal M_b'\cup\mathcal M_c''\cup\mathcal M_d$. Within $\mathcal M_\mathrm{QR}$, the nuisance components $\mu^*$, $\pi^*$, $q^*$, and $(\tau_n^*,\tau^*)$ enter in a symmetric way, with each component appearing exactly twice across the four scenarios. Here, we treat $(\tau_n^*,\tau^*)$ as a single component and note that $\tau_n^*=\tau^*$ under $\mathcal M_b'$, since $\tau_n=\tau$ when $\mu^*=\mu$. In the following, we aim to construct an estimator for $\theta$ that accommodates the additional scenario $\mathcal M_d$.

Let $\hat\pi$ be an estimator of the propensity score $\pi(\bX)=\P(A=1\mid\bX)$. A natural strategy is to plug the nuisance estimates $\hat\pi$, $\hat q$, $\hat\mu$, together with the new DR estimator $\hat\tau$ defined in \eqref{def:tauhat}, into the score function in \eqref{def:EIF}, and then take the empirical average. We denote the resulting estimator by $\hat\theta_\mathrm{PI}$. However, this direct plug-in approach is in general insufficient to guarantee improved robustness. We give a brief analysis below to show that further modification of the score function is required.

Replace the true nuisance functions in \eqref{def:EIF} with their working models and define
\begin{align*}
\bpsi^\dagger(\bW):= \frac{Aq^*(\bS)\{Y-\mu^*(\bS)\}}{\pi^*(\bX)}+ \frac{(1-A)\{\mu^*(\bS)-\tau^*(\bX)\}}{1-\pi^*(\bX)}+ \tau^*(\bX).
\end{align*}
Then under standard regularity conditions, $\hat\theta_{\mathrm{PI}}=\E[\bpsi^\dagger(\bW)]+o_p(1)$. Consider the newly introduced scenario $\mathcal M_d$, where $(q^*,\tau^*,\tau_n^*)=(q,\tau,\tau_n)$, its asymptotic bias becomes
\begin{align*}
\E[\bpsi^\dagger(\bW)]-\theta = \E[\varDelta(\bW)],\;\; 
\varDelta(\bW) := \Big\{\frac{\pi(\bX)}{\pi^*(\bX)} - \frac{1-\pi(\bX)}{1-\pi^*(\bX)}\Big\} \E[\mu(\bS)-\mu^*(\bS)\mid A=0,\bX].
\end{align*}
The bias term $\E[\varDelta(\bW)]$ is generally non-zero when both $\pi^*$ and $\mu^*$ are misspecified.
We derive an equivalent representation for the bias under $\mathcal M_d$:
$$\E[\varDelta(\bW)]=\E[\varDelta^*(\bW)],\;\;\varDelta^*(\bW) := \Big\{\frac{A}{\pi^*(\bX)} - \frac{1-A}{1-\pi^*(\bX)}\Big\} \{\tau^*(\bX)-\tau_n^*(\bX)\},$$
where $\pi(\bX)$ is replaced by $A$, and $\E[\mu(\bS)-\mu^*(\bS)\mid A=0,\bX]=\tau(\bX)-\tau_n(\bX)$ is replaced by the difference between working models $\tau^*(\bX)-\tau_n^*(\bX)$. By subtracting this bias term $\varDelta^*(\bW)$ from $\bpsi^\dagger(\bW)$, we introduce a new QR score function $\bpsi^*$:
\begin{align}
\bpsi^*(\bW)&:=\frac{A q^*(\bS)\{Y-\mu^*(\bS)\}}{\pi^*(\bX)}+\frac{(1-A)\{\mu^*(\bS)-\tau_{n}^*(\bX)\}}{1-\pi^*(\bX)}\nonumber\\
&\qquad+\frac{A\{\tau_{n}^*(\bX)-\tau^*(\bX)\}}{\pi^*(\bX)}+\tau^*(\bX).\label{def:QR_form}
\end{align}
This adjustment yields the identification $\theta=\E[\bpsi^*(\bW)]$ under the new scenario $\mathcal M_d$. At the same time, $\E[\varDelta^*(\bW)]=0$ under the TR and O-TR classes $\mathcal M_\mathrm{TR}$ and $\mathcal M_\mathrm{OTR}$, where either $\pi^*=\pi$ or $(\mu^*,\tau^*)=(\mu,\tau)$ holds. Hence, substracting the debiasing term $\varDelta^*(\bW)$ does not create extra bias when the correction is unnecessary.

The following theorem shows the QR property of the proposed representation.

\begin{theorem}[A QR representation for $\theta$]\label{thm:QR_robust}
Let Assumption \ref{cond:basic} hold. Then $\E[\bpsi^*(\bW)]=\theta$ over the model class
$\mathcal M_\mathrm{QR}=\mathcal M_a\cup\mathcal M_b'\cup\mathcal M_c''\cup\mathcal M_d$; that is, the representation holds if at least one of the following conditions is satisfied: (a) $\mu^*=\mu$ and $\pi^*=\pi$; (b) $\mu^*=\mu$ and $\tau_n^*=\tau_n$; (c) $\pi^*=\pi$ and $q^*=q$; (d) $q^*=q$, $\tau^*=\tau$, and $\tau_n^*=\tau_{n}$.
\end{theorem}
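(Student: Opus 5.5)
The plan is to write $\E[\bpsi^*(\bW)]-\theta$ as an explicit sum of product-form bias terms and then check that each term vanishes under each of the four scenarios. The basic tools are iterated expectations conditional on $\bX$ (and on $(A,\bX)$ or $(A,\bS)$), together with a few identities that follow from Assumption~\ref{cond:basic}:
\begin{itemize}
\item $\E[A g(\bX)]=\E[\pi(\bX)g(\bX)]$;
\item $\E[A q(\bS) h(\bS)\mid\bX]=\pi(\bX)\,\E[h(\bS)\mid A=0,\bX]$, by the change of measure $f_0=q f_1$;
\item $\E[Y-\mu(\bS)\mid A=1,\bS]=0$;
\item $\tau(\bX)=\E[\mu(\bS)\mid A=0,\bX]$, from \eqref{rep:nested};
\item $\tau_n(\bX)=\E[\mu^*(\bS)\mid A=0,\bX]$.
\end{itemize}
Since $\theta=\E[\tau(\bX)]$, it suffices to compute $\E[\bpsi^*(\bW)\mid\bX]-\tau(\bX)$.

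\textbf{Step 1: the four terms of $\bpsi^*$.} Conditioning on $\bX$, the terms of $\bpsi^*$ in \eqref{def:QR_form} become, in order:
\begin{itemize}
\item $T_1=\frac{\pi}{\pi^*}\E[q^*(\mu-\mu^*)\mid A=1,\bX]$;
\item $T_2=\frac{1-\pi}{1-\pi^*}(\tau_n-\tau_n^*)$;
\item $T_3=\frac{\pi}{\pi^*}(\tau_n^*-\tau^*)$;
\item $T_4=\tau^*$.
\end{itemize}

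\textbf{Step 2: rearrange the first term.} Write $q^*=q+(q^*-q)$. The $q$-part of $T_1$ equals $\frac{\pi}{\pi^*}(\tau-\tau_n)$, and the remainder is $R_q:=\frac{\pi}{\pi^*}\E[(q^*-q)(\mu-\mu^*)\mid A=1,\bX]$. Collecting everything,
\begin{align*}
\E[\bpsi^*\mid\bX]-\tau
&=R_q+\tfrac{\pi}{\pi^*}(\tau-\tau_n+\tau_n^*-\tau^*)+\tfrac{1-\pi}{1-\pi^*}(\tau_n-\tau_n^*)+\tau^*-\tau\\
&=R_q+\Big(\tfrac{\pi}{\pi^*}-1\Big)(\tau-\tau^*)+\Big(\tfrac{1-\pi}{1-\pi^*}-\tfrac{\pi}{\pi^*}\Big)(\tau_n-\tau_n^*).
\end{align*}
Every term here is a product of errors, and the note $\frac{1-\pi}{1-\pi^*}-\frac{\pi}{\pi^*}=0$ when $\pi^*=\pi$ will be used repeatedly.

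\textbf{Step 3: check the four cases.}
\begin{itemize}
\item \emph{Case (d).} $R_q=0$ since $q^*=q$, the second term vanishes since $\tau^*=\tau$, and the third vanishes since $\tau_n^*=\tau_n$.
\item \emph{Case (c).} $R_q=0$ and the third term vanishes because $\pi^*=\pi$. The middle term is $0\cdot(\tau-\tau^*)=0$, so no condition on $\tau^*$ is needed.
\item \emph{Case (a).} $\mu^*=\mu$ gives $R_q=0$, and $\pi^*=\pi$ kills the other two terms.
\item \emph{Case (b).} $\mu^*=\mu$ gives $R_q=0$ and $\tau_n=\tau$, and $\tau_n^*=\tau_n$ kills the third term. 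The middle term is $(\pi/\pi^*-1)(\tau-\tau^*)$, which is not obviously zero.
\end{itemize}

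The main obstacle is this leftover term in case (b), where the statement places no condition on $\tau^*$. I would first check the intended convention that in $\mathcal M_b'$ one has $\tau^*=\tau_n^*=\tau$, as the text preceding the theorem says. Redoing the algebra with $\tau^*=\tau_n^*$ and $\tau_n=\tau$: $T_3=0$, the first term of $\bpsi^*$ is zero since $\mu^*=\mu$, and the second term gives zero, so $\E[\bpsi^*\mid\bX]=\tau^*=\tau$.

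To treat $\tau^*$ as arbitrary instead, I would combine the $\tau_n^*$ and $\tau^*$ contributions directly: they give $\frac{\pi}{\pi^*}(\tau-\tau^*)+\tau^*-\tau=(\pi/\pi^*-1)(\tau-\tau^*)$. This vanishes only if $\tau^*=\tau$, which is true when $\hat\tau$ is built from \eqref{def:tauhat} under $\mu^*=\mu$, because by Theorem~\ref{thm:dr_tau} the limit of $\hat\tau$ is then $\tau$. I would state this explicitly as the justification in case (b). Finally, taking expectations over $\bX$ completes the argument, with the positivity conditions ensuring that all ratios are bounded.
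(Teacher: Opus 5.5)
Your argument is correct, but it is organized differently from the paper's proof.

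\textbf{How the two routes differ.} The paper treats the four scenarios one at a time through $\bpsi^*=\bpsi^\dagger-\varDelta^*$. In each case it computes $\E[\bpsi^\dagger(\bW)]$ and $\E[\varDelta^*(\bW)]$. It shows $\E[\varDelta^*]=0$ when $\pi^*=\pi$ or when the two $\tau$-models coincide, and that under $\mathcal M_d$ the mean of $\varDelta^*$ exactly cancels the bias of $\bpsi^\dagger$. You instead derive a single exact identity
\[
\E[\bpsi^*(\bW)\mid\bX]-\tau(\bX)=R_q+\Big(\tfrac{\pi}{\pi^*}-1\Big)(\tau-\tau^*)+\Big(\tfrac{1-\pi}{1-\pi^*}-\tfrac{\pi}{\pi^*}\Big)(\tau_n-\tau_n^*)
\]
and read all four cases off it. This is the conditional-mean form of the $U_j$ decomposition that the paper only introduces later (Lemma~\ref{lem:QR_consistent_lower} and the proof of Theorem~\ref{thm:r_bias}). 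The paper's route motivates why the correction term $\varDelta^*$ is there. Yours displays the product structure of the bias in one line, which is what the later rate bounds exploit, and shows exactly which conditions each scenario needs.

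\textbf{Case (b).} That transparency is why you caught the problem here. With $\tau^*$ unrestricted, $\mu^*=\mu$ and $\tau_n^*=\tau_n$ leave the term $\E[(\pi/\pi^*-1)(\tau-\tau^*)]$, which need not vanish. The paper resolves this the way you propose, but only implicitly. Its case (3) assumes ``$\tau=\tau^*$ and $\mu=\mu^*$'' and writes $\tau^*-\tau^*$ inside $\E[\varDelta^*]$; that is, it uses $\tau_n^*=\tau^*=\tau$, relying on the remark before the theorem and on Lemma~\ref{lem:compatible}.

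Your justification should cite that lemma rather than Theorem~\ref{thm:dr_tau} alone. The population target of \eqref{def:tauhat} is the $\pi$-weighted least-squares projection onto $\mathcal G_\tau$ of $\tau_n^*(\bX)+\E[q^*(\bS)\{Y-\mu^*(\bS)\}\mid A=1,\bX]$. This involves $\tau_n^*$, not $\tau_n$ as in Theorem~\ref{thm:dr_tau}. Under $\mu^*=\mu$ the second summand vanishes, so the target is $\tau_n^*$ itself: it lies in $\mathcal G_\tau$, and the minimizer is a.s. unique because $\pi\geq c_0$. It equals $\tau$ only because case (b) also gives $\tau_n^*=\tau_n=\tau$. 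With that sharpening the proof is complete.
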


The class $\mathcal M_\mathrm{QR}=\mathcal M_\mathrm{OTR}\cup\mathcal M_d\supset\mathcal M_\mathrm{TR}\cup\mathcal M_d$ strictly contains $\mathcal M_\mathrm{TR}$ and $\mathcal M_\mathrm{OTR}$, since identification is now valid under the additional scenario $\mathcal M_d$; concrete examples are provided in Section~\ref{apdx:example} of the Supplementary Material. Our method achieves stronger identification than existing approaches by allowing the broadest class of DGP considered so far. 

\begin{algorithm}[!t] \caption{The general quadruply robust (QR) estimator for $\theta$}\label{alg:QR}
\begin{algorithmic}[1]
\Require Observations $\mathbb{S}=(\bW_i)_{i=1}^N$. Let $\K\geq2$ be the number of folds.
\State Let $\I_1,\dots,\I_{\K}$ be a disjoint partition of $\I=\{1,\dots,N\}$ with equal sizes.
\For{$k\in\{1,2,\dots,\K\}$}
\State Let $\I_{-k}^0,\I_{-k}^1, \I_{-k}^2$ be a disjoint partition of $\I_{-k}=\I\setminus\I_k$ with equal sizes $M$. 
\State Using $\I_{-k}^0$, construct $\hat{\pi}^{-k}$, $\hat{q}^{-k}$, and $\hat{\mu}^{-k}$.
\State Using $\I_{-k}^1$, construct $\hat{\tau}_n^{-k}$ through solving
\begin{align*}
    \hat{\tau}_n^{-k}\in\arg\min\limits_{f\in\mathcal{G}_{\tau}}M^{-1}\sum\limits_{i\in \I_{-k}^1}(1-A_i)\{\hat{\mu}^{-k}(\bS_i)-f(\bX_i)\}^2.
\end{align*}
\State Using $\I_{-k}^2$, construct $\hat{\tau}^{-k}$ through solving
\begin{align*}
    \hat{\tau}^{-k}\in\arg\min\limits_{f\in\mathcal{G}_{\tau}}M^{-1}\sum\limits_{i\in \I_{-k}^2}A_i[\hat{\tau}_n^{-k}(\bX_i)+\hat{q}^{-k}(\bS_i)\{Y_i-\hat{\mu}^{-k}(\bS_i)\}-f(\bX_i)]^2.
\end{align*}
\EndFor\\
\Return The QR estimator is proposed as
$\hat{\theta}_{\mathrm{QR}}=N^{-1}\sum_{k=1}^{\K}\sum_{i\in\I_k}\hat\bpsi^{(-k)}(\bW_i),$
where $\hat\bpsi^{(-k)}$ is defined as in \eqref{def:psihat}, with $(\hat{\pi},\hat{q},\hat{\mu},\hat{\tau}_{n},\hat{\tau})$ replaced with the cross-fitted estimates $(\hat{\pi}^{-k},\hat{q}^{-k},\hat{\mu}^{-k},\hat{\tau}_{n}^{-k},\hat{\tau}^{-k})$.
\end{algorithmic}
\end{algorithm}
\paragraph*{The general QR estimator}
Based on the QR score function \eqref{def:QR_form}, we propose a \emph{quadruply robust estimator} for $\theta$ as $\hat{\theta}_{\mathrm{QR}}=N^{-1}\sum_{i=1}^N\hat\bpsi(\bW_i)$, where
\begin{align}\label{def:psihat}
\hat\bpsi(\bW)&=\frac{A\hat q(\bS)\{Y-\hat\mu(\bS)\}}{\hat\pi(\bX)}+\frac{(1-A)\{\hat\mu(\bS)-\hat\tau_{n}(\bX)\}}{1-\hat\pi(\bX)}+\frac{A\{\hat\tau_{n}(\bX)-\hat\tau(\bX)\}}{\hat\pi(\bX)}+\hat\tau(\bX).
\end{align}
To mitigate overfitting bias from nonparametric nuisance estimation, we apply a cross-fitting strategy \citep{DoubleDebiasedMachine2018}; see Algorithm~\ref{alg:QR} for implementation details.

The proposed QR estimator differs from existing approaches in two key aspects: (a) a new DR estimator \eqref{def:tauhat} for $\tau$, and (b) a new QR score function \eqref{def:psihat} that combines both the DR estimator and nested estimators $\hat\tau$ and $\hat\tau_n$ of $\tau$. Both components are needed to improve robustness. Part (a) reduces the dependence of accurate $\tau$ estimation on an accurate initial estimate of $\mu$, while part (b) further reduces bias in the final estimation of $\theta$.

\subsection{Theoretical properties of the QR estimator}\label{sec:QR_thm}

In this section, we present the theoretical properties of the proposed QR estimator under the following regularity conditions \citep{DoubleDebiasedMachine2018,CausalMediationAnalysis2022}.

\begin{assumption}[Regularity conditions]\label{cond:QR_basic}
Denote $\bzS=(\bX^\top,\bM(0)^\top)^\top$. Assume the following conditions hold for some constants $c_0\in(0,1/2)$, $G_0>0$, $c_1>0$, and $r>2$:

(a) $\P(c_0\leq \pi^*(\bX)\leq 1-c_0)=1$ and $\P(c_0^2\leq q^*(\bS)\leq c_0^{-2})=1$; with probability approaching one, the estimators satisfy $\P(c_0\leq \hat{\pi}(\bX)\leq 1-c_0)=1$ and $\P(c_0^2\leq \hat{q}(\bS)\leq c_0^{-2})=1$.

(b) $\E[\{\hat{\pi}(\bX)-\pi^*(\bX)\}^2]=O_p(\bar{r}_{\pi}^2)$, $\E[\{\hat{q}(\bS)-q^*(\bS)\}^2]=O_p(\bar{r}_{q}^2)$, $\E[\{\hat{\mu}(\bS)-\mu^*(\bS)\}^2]=O_p(\bar{r}_{\mu}^2)$, $\E[\{\hat{\tau}_n(\bX)-\tau_n^*(\bX)\}^2]=O_p(\bar{r}_{\tau_n}^2)$, and $\E[\{\hat{\tau}(\bX)-\tau^*(\bX)\}^2]=O_p(\bar{r}_{\tau}^2)$ as $N\to\infty$, with positive sequences $\bar{r}_{\pi},\bar{r}_{q},\bar{r}_{\mu},\bar{r}_{\tau_n},\bar{r}_{\tau}=o(1)$.

(c) Define $\xi:=\tau(\bX)-\theta$, $\varepsilon:=Y(1,\bM)-\mu^*(\bS)$, $\zeta:=\mu^*(\bzS)-\tau^*(\bX)$, and $\varrho:=\mu^*(\bzS)-\tau_n^*(\bX)$. Assume that $\E[|\varepsilon|^r\mid\bS]+\E[|\varrho|^r+|\zeta|^r+|\xi|^r\mid\bX]\leq G_0$ almost surely, and $\sigma^2:=\Var[\bpsi^*(\bW)]>c_1$. 
\end{assumption}
\begin{theorem}[Consistency]\label{thm:QR_consistent}
Let Assumptions \ref{cond:basic} and \ref{cond:QR_basic} hold, and assume the model correctness conditions in Theorem~\ref{thm:QR_robust}. Then as $N\to\infty$, $\hat{\theta}_{\mathrm{QR}}-\theta=O_p(N^{-1/2}+r_{\mathrm{QR}})$, where
$r_{\mathrm{QR}}:=\bar{r}_{\mu}\bar{r}_{q}+\bar{r}_{\tau}\bar{r}_{\pi}+\bar{r}_{\tau_{n}}\bar{r}_{\pi}+\bar{r}_{\mu}\bar{r}_{\pi}+\idf_{\pi\neq \pi^*}\bar{r}_{\tau}+\idf_{q\neq q^*}\bar{r}_{\mu}+\idf_{\mu\neq \mu^*}\bar{r}_{q}+\idf_{(\tau_{n},\tau)\neq (\tau_{n}^*,\tau^*)}\bar{r}_{\pi}.
$
\end{theorem}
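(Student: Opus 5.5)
We will prove the bound fold by fold and then sum over the $\K$ folds. Fix $k$, write $\E_k$ for the empirical average over $\I_k$ and $n=|\I_k|$, and condition on the out-of-fold data $\I_{-k}$. The cross-fitted nuisances $(\hat\pi,\hat q,\hat\mu,\hat\tau_n,\hat\tau)$ are then fixed functions independent of $(\bW_i)_{i\in\I_k}$. We decompose
\begin{align*}
\E_k[\hat\bpsi(\bW)]-\theta=\bigl\{\E_k[\bpsi^*(\bW)]-\theta\bigr\}+(\E_k-\E)[\hat\bpsi(\bW)-\bpsi^*(\bW)]+\E[\hat\bpsi(\bW)-\bpsi^*(\bW)\mid \I_{-k}].
\end{align*}
The first term is $O_p(N^{-1/2})$ by Theorem~\ref{thm:QR_robust}, since $\E[\bpsi^*(\bW)]=\theta$ under the stated model conditions. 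The moment bounds in Assumption~\ref{cond:QR_basic}(c) and the positivity bounds in (a) give $\Var[\bpsi^*(\bW)]<\infty$, which handles the fluctuation.

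For the second (empirical process) term we use conditional Chebyshev. It suffices to show $\E[\{\hat\bpsi-\bpsi^*\}^2\mid\I_{-k}]=o_p(1)$; the term is then $o_p(N^{-1/2})$. We write $\hat\bpsi-\bpsi^*$ as a sum of differences, each a product of one nuisance error and bounded or $L^r$-bounded factors. Boundedness comes from (a), and the moments of $\varepsilon,\zeta,\varrho$ from (c) together with consistency under Assumption~\ref{cond:basic}. The $L^2$ rates in (b) then make each piece $O_p(\bar r_\cdot)=o_p(1)$. Some products, such as $\hat q\cdot(\hat\mu-\mu^*)$, are handled directly by the boundedness of $\hat q$. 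Error terms multiplied by $Y-\mu^*$ are handled by conditioning on $\bS$ and using the bounded conditional second moment of $\varepsilon$.

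The main work is the third (bias) term. We will expand $\E[\hat\bpsi-\bpsi^*\mid\I_{-k}]$ exactly by iterated expectations. We condition on $(A,\bS)$ for the $Y$ terms, and use $\E[A\,g(\bS)\mid\bX]=\pi(\bX)\E[g\mid A=1,\bX]$ together with the change of measure $\E[q\,g\mid A=1,\bX]=\E[g\mid A=0,\bX]$. Writing $\delta_\pi=\hat\pi-\pi^*$ and similarly for the other nuisances, we group terms as follows:

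(i) $\E\bigl[\tfrac{\pi}{\hat\pi}\,\delta_q\,(\mu-\hat\mu)\mid A=1\bigr]$-type terms. Splitting $\mu-\hat\mu=(\mu-\mu^*)-\delta_\mu$ gives $\bar r_q\bar r_\mu+\idf_{\mu\neq\mu^*}\bar r_q$.

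(ii) Terms in $\hat q\,\delta_\mu$ versus $q\,\delta_\mu$ under $A=0$. Writing $\hat q-q=\delta_q+(q^*-q)$ gives $\bar r_\mu\bar r_q+\idf_{q\neq q^*}\bar r_\mu$.

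(iii) Propensity terms of the form $\bigl(\tfrac{\pi}{\hat\pi}-\tfrac{\pi}{\pi^*}\bigr)\times(\text{errors in }\mu,\tau,\tau_n)$. These are handled by splitting $\pi-\hat\pi=(\pi-\pi^*)-\delta_\pi$, giving products $\bar r_\pi\bar r_\mu$, $\bar r_\pi\bar r_\tau$, $\bar r_\pi\bar r_{\tau_n}$, plus first-order leftovers $\idf_{\pi\neq\pi^*}\bar r_\tau$. When $(\tau_n^*,\tau^*)$ are misspecified, the leftover is $\idf_{(\tau_n,\tau)\ne(\tau_n^*,\tau^*)}\bar r_\pi$.

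Each product is bounded by Cauchy–Schwarz with (b), using (a) to control $1/\hat\pi$ and $\hat q$. The bookkeeping is easiest if we linearize $\hat\bpsi$ around $\bpsi^*$ in each nuisance separately, keeping exact second-order remainders. We then check that the first-order derivative terms either vanish (this is the Neyman-orthogonality-type identity underlying Theorem~\ref{thm:QR_robust}) or carry a misspecification indicator.

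The main obstacle is step (iii) together with the $\tau$ terms. The first-order derivative in the $\tau$ direction is $\E[(1-A/\pi^*)\delta_\tau]$, which vanishes only when $\pi^*=\pi$; this produces $\idf_{\pi\ne\pi^*}\bar r_\tau$. The derivative in the $\tau_n$ direction, $\E[\{A/\pi^*-(1-A)/(1-\pi^*)\}\delta_{\tau_n}]$, likewise vanishes only when $\pi^*=\pi$. That this term is absorbed as stated relies on the stated rate structure, possibly via the pairing with $\bar r_\pi$. I would verify this case by case over scenarios (a)–(d) of Theorem~\ref{thm:QR_robust}, confirming every surviving first-order term matches an indicator in $r_{\mathrm{QR}}$. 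Summing over folds then yields $\hat\theta_{\mathrm{QR}}-\theta=O_p(N^{-1/2}+r_{\mathrm{QR}})$.
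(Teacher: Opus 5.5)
\textbf{Gap in the bias term.} Your three-way split (oracle average, centered empirical-process term, conditional bias) is the paper's $\Delta_{k,0}+\Delta_{k,1}+\Delta_{k,2}$ decomposition, and your treatment of the first two terms is fine. The gap is in the bias term, at exactly the step you flag and then defer. In (iii) you claim that after splitting $\pi-\hat\pi=(\pi-\pi^*)+(\pi^*-\hat\pi)$ the only first-order leftover is $\idf_{\pi\neq\pi^*}\bar r_{\tau}$. That is not true, and no case-by-case check will make it true. Condition on the out-of-fold data and suppress arguments. Iterated expectations then give the exact identity
\begin{align*}
\E[\hat\bpsi(\bW)]-\theta
&=\E\Big[\frac{\pi}{\hat\pi}\,\E\{(\hat q-q)(\mu-\hat\mu)\mid A=1,\bX\}\Big]\\
&\quad+\E\Big[\frac{\pi-\hat\pi}{\hat\pi}\Big\{\tau-\hat\tau-\frac{\E\{\hat\mu(\bS)\mid A=0,\bX\}-\hat\tau_n}{1-\hat\pi}\Big\}\Big].
\end{align*}
The part of the second line proportional to $\pi-\pi^*$ contains $\E\{\hat\mu(\bS)-\mu^*(\bS)\mid A=0,\bX\}+\tau_n^*-\hat\tau_n$. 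This is first order in $\bar r_\mu$ and $\bar r_{\tau_n}$ whenever $\pi^*\neq\pi$, and scenarios (b) and (d) both allow $\pi^*\neq\pi$.

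The $\tau_n$ part is exactly the derivative you note vanishes only when $\pi^*=\pi$. The $\mu$ part is the same phenomenon, which your item (ii) misses: $\hat\mu$ enters with weight $\pi/\hat\pi$ on the $A=1$ arm but $(1-\pi)/(1-\hat\pi)$ on the $A=0$ arm. Nothing pairs either term with $\bar r_\pi$.

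\textbf{Counterexample.} Take $\pi\equiv 1/2$, $\hat\pi=\pi^*\equiv 1/4$, $\mu^*=\mu$, $q^*=q$, and $\tau^*=\tau=\tau_n=\tau_n^*$. Let every estimate equal its limit except $\hat\tau_n=\tau_n+\delta_N$. Then
\[
\E[\hat\bpsi-\bpsi^*]=\delta_N\{\pi/\pi^*-(1-\pi)/(1-\pi^*)\}=4\delta_N/3,
\]
while $r_{\mathrm{QR}}$ contains no first-order $\bar r_{\tau_n}$ term. So for $N^{-1/2}\ll\delta_N=o(1)$, the claimed rate does not follow from Assumption~\ref{cond:QR_basic}.

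\textbf{The paper's proof has the same hole.} It sits in $\Delta_{k,5}$, whose weight is $\frac{1-\pi}{1-\hat\pi}-\frac{\pi}{\hat\pi}=\frac{\hat\pi-\pi}{\hat\pi(1-\hat\pi)}$. The paper nevertheless bounds it by $c_0^{-2}\|\hat\pi-\pi^*\|_{\P,2}\{\|\hat\tau_n-\tau_n^*\|_{\P,2}+\|\hat\mu(\bzS)-\mu^*(\bzS)\|_{\P,2}\}$, which is valid only when $\pi^*=\pi$. Compare the $\tau$ direction, where the paper does split off the misspecification part as $\Delta_{k,6}$.

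What your argument and the paper's decomposition actually prove is the bound with $r_{\mathrm{QR}}$ enlarged by $\idf_{\pi\neq\pi^*}(\bar r_\mu+\bar r_{\tau_n})$. Consistency over $\mathcal M_{\mathrm{QR}}$ survives, since every $\bar r=o(1)$, but the stated rate does not. Recovering it would at best require exploiting that Algorithm~\ref{alg:QR} fits $\hat\tau$ to $\hat\tau_n+\hat q(Y-\hat\mu)$, so that errors in $\hat\tau_n$ propagate into $\hat\tau$. You would also need assumptions linking the rates, which Assumption~\ref{cond:QR_basic}(b) does not provide.

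This is exactly the obstruction the MQR losses are built to remove: the pair $(\ell_1,\ell_2)$ kills the $\tau_n$-direction derivative under misspecification, and $\ell_3$ kills the $\mu$-direction derivative. With generic nuisance estimates and a single $\hat\pi$ there is no such cancellation.
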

Theorem~\ref{thm:QR_consistent} establishes that the QR estimator is consistent under the model class $\mathcal M_\mathrm{QR}$. When all nuisance models are correct, we have $\hat{\theta}_{\mathrm{QR}}-\theta=O_p(N^{-1/2}+\bar{r}_{\mu}\bar{r}_{q}+\bar{r}_{\tau}\bar{r}_{\pi}+\bar{r}_{\tau_{n}}\bar{r}_{\pi}+\bar{r}_{\mu}\bar{r}_{\pi})$, so that $\sqrt{N}$-consistency is achieved as long as all the involved product rates are $O(N^{-1/2})$. If some nuisance models are misspecified, the convergence rate additionally depends linearly on the nuisance estimation errors. Note that all $\bar r$ terms above represent the convergence rates of the nuisance estimators to their probability limits, which may differ from the true nuisance functions. In low-dimensional parametric settings, $\sqrt{N}$-consistency remains attainable under misspecification, since all nuisance errors (relative to their probability limits) scale as $N^{-1/2}$. In high-dimensional or nonparametric settings, nuisance estimation rates are typically slower than $N^{-1/2}$, which can reduce the overall convergence rate under misspecification. In Section~\ref{sec:MQR}, we introduce a model QR approach to mitigate this additional bias arising from model misspecification, with a particular focus on high-dimensional settings.

The following theorem establishes the asymptotic normality of the QR estimator.

\begin{theorem}[Asymptotic normality]
\label{thm:QR_normal}
Suppose that all nuisance models are correctly specified. Let Assumptions \ref{cond:basic} and \ref{cond:QR_basic} hold, further assume that $\bar{r}_{\pi}\bar{r}_{\tau}+\bar{r}_{\pi}\bar{r}_{\tau_{n}}+\bar{r}_{q}\bar{r}_{\mu}+\bar{r}_{\pi}\bar{r}_{\mu}=o(N^{-1/2})$.
Then, as $N\to\infty$, we have
$N^{1/2}(\hat{\theta}_{\mathrm{QR}}-\theta)\leadsto N(0,\sigma^2)$
and $\hat{\sigma}^2=\sigma^2\{1+o_p(1)\}$, where
$\hat{\sigma}^2:=N^{-1}\sum_{k=1}^{\K}\sum_{i\in\I_k}\{\hat\bpsi^{(-k)}(\bW_i)-\hat{\theta}_\mathrm{QR}\}^2.$
\end{theorem}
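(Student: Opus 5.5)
The plan is to decompose $\hat\theta_{\mathrm{QR}}-\theta$ fold by fold. With all models correct, $(\pi^*,q^*,\mu^*,\tau_n^*,\tau^*)=(\pi,q,\mu,\tau,\tau)$, so $\bpsi^*=\bpsi_{\mathrm{eff}}$ and $\E[\bpsi^*(\bW)]=\theta$ by Theorem~\ref{thm:QR_robust}. For each $k$ we write
\[
|\I_k|^{-1}\sum_{i\in\I_k}\hat\bpsi^{(-k)}(\bW_i)-\theta
= |\I_k|^{-1}\sum_{i\in\I_k}\{\bpsi^*(\bW_i)-\theta\}+R_{1,k}+R_{2,k},
\]
where $R_{1,k}$ is the centered empirical process term $(\P_{\I_k}-\P)(\hat\bpsi^{(-k)}-\bpsi^*)$, and $R_{2,k}=\E[\hat\bpsi^{(-k)}(\bW)\mid \I_{-k}]-\theta$ is the conditional bias. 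The first term, summed over folds, gives $N^{-1}\sum_i\{\bpsi^*(\bW_i)-\theta\}$. Since $\sigma^2>c_1$ and $r>2$ moments hold under Assumption~\ref{cond:QR_basic}(c), the Lindeberg (or Lyapunov) CLT yields $N^{1/2}$ times this average $\leadsto N(0,\sigma^2)$. It therefore suffices to show $R_{1,k}+R_{2,k}=o_p(N^{-1/2})$.

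\emph{Empirical process term.} Conditional on $\I_{-k}$, the nuisance estimates are fixed functions, and $\I_k$ is independent of them. Chebyshev's inequality then gives $R_{1,k}=O_p(|\I_k|^{-1/2}\{\E[(\hat\bpsi^{(-k)}-\bpsi^*)^2\mid\I_{-k}]\}^{1/2})$. I bound the $L_2$ distance term by term in \eqref{def:psihat}, using the positivity bounds in Assumption~\ref{cond:QR_basic}(a) to control denominators. For example,
\[
\frac{A\hat q(Y-\hat\mu)}{\hat\pi}-\frac{Aq(Y-\mu)}{\pi}
\]
expands into pieces bounded by $|\hat q-q|\,|\varepsilon|$, $|\hat\mu-\mu|$ and $|\hat\pi-\pi|\,|\varepsilon|$. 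Products with $\varepsilon,\varrho,\zeta$ are handled by conditioning on $\bS$ or $\bX$ and using the conditional moment bounds in (c). This shows the $L_2$ distance is $O_p(\bar r_\pi+\bar r_q+\bar r_\mu+\bar r_{\tau_n}+\bar r_\tau)=o_p(1)$, so $R_{1,k}=o_p(N^{-1/2})$.

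\emph{Bias term.} This is the key computation. I compute $R_{2,k}$ exactly by iterated expectations. The first piece is
\[
\E[A\hat q(Y-\hat\mu)/\hat\pi\mid\bX]=\pi\,\E[\hat q(\mu-\hat\mu)\mid A=1,\bX]/\hat\pi .
\]
The second piece gives $(1-\pi)\E[\hat\mu-\hat\tau_n\mid A=0,\bX]/(1-\hat\pi)$, and I rewrite its expectation under $A=0$ as an expectation under $A=1$ weighted by $q$. Collecting terms, and using $\tau=\E[q\mu\mid A=1,\bX]$, the conditional bias becomes a sum of second-order products:
\[
\E\Big[\tfrac{\pi}{\hat\pi}(\hat q-q)(\mu-\hat\mu)\,\Big|\,A=1\Big]
\]
plus terms like $(\pi/\hat\pi-(1-\pi)/(1-\hat\pi))\,\E[\hat\mu-\mu\mid A=0,\bX]$, $(\pi/\hat\pi-1)(\hat\tau-\tau)$ and $(\pi/\hat\pi-(1-\pi)/(1-\hat\pi))(\tau-\hat\tau_n)$. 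Each coefficient $\pi/\hat\pi-1$ or $\pi/\hat\pi-(1-\pi)/(1-\hat\pi)$ is $O(|\hat\pi-\pi|)$ by positivity. Cauchy--Schwarz then bounds $R_{2,k}$ by $O_p(\bar r_q\bar r_\mu+\bar r_\pi\bar r_\mu+\bar r_\pi\bar r_\tau+\bar r_\pi\bar r_{\tau_n})$, which is $o(N^{-1/2})$ by assumption. This is the same algebra as in Theorem~\ref{thm:QR_consistent} with all indicator terms vanishing, so I may simply invoke that expansion. The main obstacle is carrying out this algebra so that every first-order term cancels exactly. Care is needed to keep $\hat\tau_n$ and $\hat\tau$ distinct, since their first-order contributions cancel only because $\tau_n^*=\tau^*=\tau$.

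\emph{Variance estimator.} Write $\hat\sigma^2=N^{-1}\sum(\hat\bpsi^{(-k)}(\bW_i)-\theta)^2-(\hat\theta_{\mathrm{QR}}-\theta)^2$. Then $N^{-1}\sum(\bpsi^*(\bW_i)-\theta)^2\to\sigma^2$ by the weak LLN, which applies because of the $r>2$ moments. The cross-difference $N^{-1}\sum(\hat\bpsi^{(-k)}-\bpsi^*)^2$ is $o_p(1)$ by the conditional Markov inequality and the $L_2$ bound from the empirical process step. The remaining cross term is handled by Cauchy--Schwarz. Since $\sigma^2>c_1$, it follows that $\hat\sigma^2=\sigma^2\{1+o_p(1)\}$.
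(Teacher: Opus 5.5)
Your proposal is correct and follows the paper's proof. Both use the same three-way split into the oracle average ($\Delta_{k,0}$), the conditionally centered cross-fitting term ($\Delta_{k,1}$, controlled by conditional Chebyshev and the $L_2$ bound of Lemma~\ref{lem:Delta_k_1}(b)), and the conditional bias ($\Delta_{k,2}$, reduced to the product rates as in Lemma~\ref{lem:Delta_k_2}(b)), followed by a Lyapunov CLT and a consistency argument for $\hat\sigma^2$. Your direct iterated-expectation computation of the bias, together with the positivity bound $\|\E[\hat\mu-\mu\mid A=0,\bX]\|_{\P,2}\lesssim\|\hat\mu-\mu\|_{\P,2}$ of Lemma~\ref{lem:mu_relation}, is a streamlined version of the paper's eight-term decomposition; since $d_1,d_2$ may grow with $N$, read your ``weak LLN'' step as a triangular-array LLN using the uniform $L^{r/2}$ bound on $(\bpsi^*-\theta)^2$, which is what the paper invokes.
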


When all working models are correctly specified, $\bpsi^*=\bpsi_\mathrm{eff}$, and $\hat{\theta}_{\mathrm{QR}}$ is semiparametrically efficient, as its asymptotic variance attains the semiparametric lower bound derived in \cite{SemiparametricTheoryCausal2012}.

\section{Model Quadruply Robust Estimation}\label{sec:MQR}

In this section, we consider high-dimensional settings, where the dimensions of covariates and mediators $(d_1,d_2)$ can both be large relative to the sample size $N$. Fully nonparametric methods suffer from the curse of dimensionality, which leads to slow convergence rates or even inconsistent estimation when $d=d_1+d_2$ exceeds $N$. We therefore focus on regularized parametric nuisance estimation and aim to reduce the bias caused by model misspecification.

\subsection{A model quadruply robust estimator}\label{sec:r_display}

We adopt linear estimators for all outcome-related nuisance functions: $\hat\mu(\bS)=\bS^\top\hbemu$, $\hat\tau_n(\bX)=\bX^\top\hbettau$, and $\hat\tau(\bX)=\bX^\top\hbetau$. Let $g(x)=\{1+\exp(-x)\}^{-1}$ denote the logistic function. We consider two separate logistic models for the propensity score, $\hat\pi_a(\bX):= g(\bX^\top\hbepia)$ and $\hat\pi_b(\bX):= g(\bX^\top\hbepib)$; the motivation for using two distinct estimates is discussed later. The conditional density ratio is modeled using an exponential form, $\hat q(\bS)=\exp(\bS^\top\hbeq)$. Our procedure can also be naturally applied after performing basis transformations on $\bX$ and $\bS$, although our analysis focuses only on the original variables. In what follows, we construct a collection of specialized nuisance estimators $\hbnu=(\hbepia,\hbepib,\hbeq,\hbemu,\hbettau,\hbetau)$ with the following properties: (a) when a nuisance model is correctly specified, the corresponding estimator converges to the true parameter; (b) when a nuisance model is misspecified, the resulting error has a limited impact on the estimation of the target parameter $\theta$.

\paragraph*{A moment condition for robust inference}
Let $\bnu=(\bepia,\bepib,\beq,\bemu,\bettau,\betau)$ be an arbitrary collection of vectors. We use the QR score function in \eqref{def:psihat}, but modify the weighting scheme as follows. For the $A=1$ observations, we use the inverse probability $g^{-1}(\bX^\top\bepia)$, and for the $A=0$ observations, we use $[1-g(\bX^\top\bepib)]^{-1}$:
\begin{align}\label{def:psihat-MQR}
\bpsi(\bW,\bnu)&:=\frac{A\exp(\bS^\top\beq)(Y-\bS^\top\bemu)}{g(\bX^\top\bepia)}+\frac{(1-A)(\bS^\top\bemu-\bX^\top\bettau)}{1-g(\bX^\top\bepib)}\nonumber\\
&\qquad+\frac{A(\bX^\top\bettau-\bX^\top\betau)}{g(\bX^\top\bepia)}+\bX^\top\betau.
\end{align}
As shown in Theorem \ref{thm:QR_consistent}, the convergence rate of the plug-in estimator $\hat{\theta}=N^{-1}\sum_{i=1}^N\bpsi(\bW_i,\hat\bnu)$ typically depends linearly on the nuisance estimation errors $\|\hat\bnu-\bnu^*\|_2$ when model misspecification occurs. To remove this first-order dependence, the main idea is to construct a specialized set of nuisance estimates $\hbnu$ whose limits $\bnu^*=(\bepiar,\bepibr,\beqr,\bemur,\bettaur,\betaur)$ satisfy the moment condition
\begin{align}\label{eq:moment}
\E[\nabla_{\bnu}\bpsi(\bW,\bnu^*)]=\mathbf0\quad\mathrm{\emph{even under model misspecification}}.
\end{align}
When \eqref{eq:moment} holds, Taylor expansion gives
\begin{align*}
\E[\bpsi(\bW,\hat\bnu)]=\E[\bpsi(\bW,\bnu^*)]+\E[\nabla_{\bnu}\bpsi(\bW,\bnu^*)]^\top(\hbnu-\bnu^*)+\mathrm{Rem}=\E[\bpsi(\bW,\bnu^*)]+\mathrm{Rem},
\end{align*}
where the remainder term satisfies $\mathrm{Rem}=O(\|\hbnu-\bnu^*\|_2^2)$ under regularity conditions.
Moreover, $\E[\bpsi(\bW,\bnu^*)]=\theta$ under the model class $\mathcal M_d$, as established in Theorem \ref{thm:QR_robust}. Therefore, the bias $\E[\bpsi(\bW,\hat\bnu)]-\theta=\mathrm{Rem}$ is of second order, even when model misspecification occurs.

It is worth noting that Neyman orthogonality \citep{DoubleDebiasedMachine2018}, which is a property of the score function $\bpsi$, is not sufficient to obtain this result. Neyman orthogonality only guarantees the moment condition $\E[\nabla_{\bnu}\bpsi(\bW,\bnu^*)]=\mathbf0$ when all models are correctly specified. To maintain this orthogonality under model misspecification, we construct nuisance estimates $\hbnu$ that converge to a misspecified but well-performing target $\bnu^*$, which ensures \eqref{eq:moment} even when some working models are incorrect.

\paragraph*{Specialized loss functions for nuisance estimation}
To reach the desired moment condition, we introduce a sequence of loss functions:
\begin{align}
\ell_1(\bW,\bepia)&:=(1-A)\bX^\top\bepia+A\exp(-\bX^\top\bepia),\label{eq:r_loss1}\\
\ell_2(\bW,\bepib)&:=(1-A)\exp(\bX^\top\bepib)-A\bX^\top\bepib,\label{eq:r_loss2}\\
\ell_3(\bW,\bnu_3,\beq)&:=Ag^{-1}(\bX^\top\bepia)\exp(\bS^\top\beq)-(1-A)\{1-g(\bX^{\top}\bepib)\}^{-1}\bS^\top\beq,\label{eq:r_loss3}\\
\ell_4(\bW,\bnu_4,\bemu)&:=A g^{-1}(\bX^\top\bepia)\exp(\bS^\top\beq)(\bS^\top\bemu-Y)^2,\label{eq:r_loss4}\\
\ell_5(\bW,\bnu_5,\bettau)&:=(1-A)\exp(\bX^\top\bepib)(\bX^\top\bettau-\bS^\top\bemu)^2,\label{eq:r_loss5}\\
\ell_6(\bW,\bnu_6,\betau)&:=A\exp(-\bX^\top\bepia)\left[\exp(\bS^\top\beq)(Y-\bS^\top\bemu)+\bX^\top\bettau-\bX^\top\betau\right]^2,\label{eq:r_loss6}
\end{align}
where $\bnu_3=(\bepia,\bepib)$, $\bnu_4=(\bepia,\beq)$, $\bnu_5=(\bepib,\bemu)$, and $\bnu_6=(\bepia,\beq,\bemu,\bettau)$. Let $\bnu^*=(\bepiar,\bepibr,\beqr,\bemur,\bettaur,\betaur)$ denote the population-level minimizers, and let $(\pi_a^*,\pi_b^*,q^*,\mu^*,\tau_n^*,\tau^*)$ be the corresponding target nuisance functions induced by these parameters. For instance, $\bepiar:=\arg\min_{\bepia\in\R^{d_1}}\E[\ell_1(\bW,\bepia)]$ and $\pi_a^*(\bX)=g(\bX^\top\bepiar)$. These target parameters are defined sequentially in the order listed above, and $(\bnu_3^*,\bnu_4^*,\bnu_5^*,\bnu_6^*)$ denote the corresponding intermediate collections of population-level minimizers used to define subsequent nuisance targets. The uniqueness of $\bnu^*$ and the correctness of working models are discussed in Sections~\ref{apdx:unique} and~\ref{sec:justification} of the Supplementary Material.

The loss $\ell_1$ has been used to achieve covariate balancing, $\E[A g^{-1}(\bX^\top\bepiar)\bX]=\E(\bX)$, in the ATE estimation literature \citep{RobustEstimationCausal2020,RegularizedCalibratedEstimation2020}. The loss $\ell_2$ is introduced to obtain the complementary balancing condition $\E[(1-A)\{1-g(\bX^\top\bepibr)\}^{-1}\bX]=\E(\bX)$ for the $A=0$ group. The loss $\ell_3$ is designed to estimate the conditional density ratio $q(\bS)=f_0(\bS)/f_1(\bS)$. Since accurately estimating the conditional densities $f_a(\bS)=f(\bM\mid A=a,\bX)$ ($a\in\{0,1\}$) is typically challenging when both $\bX$ and $\bM$ are high-dimensional, we propose to estimate the ratio directly by adapting the loss $A\exp(\bS^\top\beq)-(1-A)\bS^\top\beq$ used for \emph{unconditional} density ratio estimation \citep{HighdimensionalCausalMediation2022}, while further introducing inverse probability weights derived from $\ell_1$ and $\ell_2$ to account for conditioning on covariates. Since the estimation now involves initial estimates for the propensity score, it might appear that an accurate estimate of $q$ requires accurate estimates for $\pi$ as well. However, as discussed in Section~\ref{sec:justification} of the Supplementary Material, correct specification of $q^*$ can still hold even when $\pi^*$ is misspecified, as long as the mediator conditional mean $\bs{m}(\bX):=\E[\bM\mid A=0,\bX]$ is linear in $\bX$. Importantly, this does not require explicitly estimating the function $\bs{m}(\bX)$.

The outcome-related nuisance functions are estimated through weighted squared losses, $\ell_4$, $\ell_5$, and $\ell_6$. These weights follow directly from the construction of the score function \eqref{def:psihat-MQR}. For example, the residual $Y-\bS^\top\bemu$ appears in \eqref{def:psihat-MQR} with sampling weight $A g^{-1}(\bX^\top\bepia)\exp(\bS^\top\beq)$, and the same weight is therefore used in $\ell_4$. In addition, similar to the QR estimator in Section~\ref{sec:dr_tau}, we use a nested regression in $\ell_5$ and a DR regression in $\ell_6$ to correct the estimation error of $\tau$ induced by an inaccurate initial estimate of $\mu$. Unlike the previous construction, we incorporate the newly designed weights, which further reduce the impact of model misspecification on the final parameter of interest.

All constructions above ensure the following connections between the population scores associated with the loss functions and the gradient of the QR score, $\E[\nabla_{\bnu}\bpsi(\bW,\bnu^*)]$:
\begin{align*}
&\E[\bnabla_{\betau}\bpsi(\bW,\bnu^*)]=\E[\bnabla_{\bepia}\ell_1(\bW,\bepiar)],\\
&\E[\bnabla_{\bettau}\bpsi(\bW,\bnu^*)]=-\E[\bnabla_{\bepia}\ell_1(\bW,\bepiar)+\bnabla_{\bepib}\ell_2(\bW,\bepibr)],\\
&\E[\bnabla_{\bemu}\bpsi(\bW,\bnu^*)]=-\E[\bnabla_{\beq}\ell_3(\bW,\bnu^*_3,\beqr)],\;\; \E[\bnabla_{\beq}\bpsi(\bW,\bnu^*)]=-\tfrac12\E[\bnabla_{\bemu}\ell_4(\bW,\bnu^*_4,\bemur)],\\
&\E[\bnabla_{\bepib}\bpsi(\bW,\bnu^*)]=-\tfrac12\E[\bnabla_{\bettau}\ell_5(\bW,\bnu^*_5,\bettau)],\;\; \E[\bnabla_{\bepia}\bpsi(\bW,\bnu^*)]=\tfrac12\E[\bnabla_{\betau}\ell_6(\bW,\bnu^*_6,\betaur)].
\end{align*}

\begin{algorithm}[t!] \caption{The model quadruply robust (MQR) estimator for $\theta$}\label{alg:MQR}
\begin{algorithmic}[1]
\Require Observations $\mathbb{S}=(\bW_i)_{i=1}^N$. Let $\K\geq2$ be the number of folds.
\State Let $\I_1,\dots,\I_{\K}$ be a disjoint partition of $\I=\{1,\dots,N\}$ with equal sizes.
\For{$k\in\{1,2,\dots,\K\}$}
\State Let $\I_{\pi}, \I_{q}, \I_{\mu}, \I_{\tau_n}, \I_{\tau}$ be a disjoint partition of $\I_{-k}=\I\setminus\I_k$ with equal sizes $M$.
\State Construct the nuisance estimates:
\begin{align*}
\hbepia^{-k}&=\mathop{\arg\min}_{\bepia\in\R^{d_1}}\Big\{M^{-1}\supiar\ell_1(\bW_i;\bepia)+\lambda_{\pi_a}\|\bepia\|_1\Big\},\\
\hbepib^{-k}&=\mathop{\arg\min}_{\bepib\in\R^{d_1}}\Big\{M^{-1}\supibr\ell_2(\bW_i;\bepib)+\lambda_{\pi_b}\|\bepib\|_1\Big\},\\
\hbeq^{-k}&=\mathop{\arg\min}_{\beq\in\R^{d}}\Big\{M^{-1}\suqr\ell_3(\bW_i;\hbepia,\hbepib,\beq)+\lambda_{q}\|\beq\|_1\Big\},\\   
\hbemu^{-k}&=\mathop{\arg\min}_{\bemu\in\R^d}\Big\{M^{-1}\sumur\ell_4(\bW_i;\hbepia,\hbeq,\bemu)+\lambda_{\mu}\|\bemu\|_1\Big\},\\
\hbettau^{-k}&=\mathop{\arg\min}_{\bettau\in\R^{d_1}}\Big\{M^{-1}\suttaur\ell_5(\bW_i;\hbepib,\hbemu,\bettau)+\lambda_{\tau_n}\|\bettau\|_1\Big\},\\ 
\hbetau^{-k}&=\mathop{\arg\min}_{\betau\in\R^{d_1}}\Big\{M^{-1}\sutaur\ell_6(\bW_i;\hbepia,\hbeq,\hbemu,\hbettau,\betau)+\lambda_{\tau}\|\betau\|_1\Big\},
\end{align*}
where $\lambda_{\pi_a},\lambda_{\pi_b},\lambda_{q},\lambda_{\mu},\lambda_{\tau_n},\lambda_{\tau}>0$ are tuning parameters selected by cross-validation. Each tuning parameter is chosen sequentially, tuning one parameter at a time while treating all previously selected parameters and nuisance estimates as fixed. 
\EndFor\\
\Return The MQR estimator is proposed as
$
\hat{\theta}_{\mathrm{MQR}}=N^{-1}\sum_{k=1}^{\K}\sum_{i\in\I_k}\bpsi(\bs{W}_i,\hbnu^{-k}),
$
where $\bpsi$ is defined in \eqref{def:psihat-MQR} and $\hbnu^{-k} = (\hbepia^{-k},\hbepib^{-k},\hbeq^{-k},\hbemu^{-k},\hbettau^{-k},\hbetau^{-k})$.
\end{algorithmic}
\end{algorithm}

By the first-order conditions defining the population minimizers, all right-hand sides are zero vectors, and hence the corresponding gradients of the QR score also vanish, which yields the moment condition \eqref{eq:moment} even under model misspecification. This argument also explains why two separate estimates are required for the same propensity score function $\pi$. The loss $\ell_1$ is constructed so that the final estimator is first-order insensitive to the error $\hbetau-\betaur$ by enforcing $\E[\bnabla_{\betau}\bpsi(\bW,\bnu^*)]=\E[\{1-A g^{-1}(\bX^\top\bepiar)\}\bX]=\bf0$, whereas the loss $\ell_2$ is constructed so that the estimator is also first-order insensitive to the error $\hbettau-\bettaur$ by enforcing 
$\E[\bnabla_{\bettau}\bpsi(\bW,\bnu^*)]=\E[[Ag^{-1}(\bX^\top\bepiar)-(1-A)\{1-g(\bX^\top\bepibr)\}^{-1}]\bX]=\E[[1-(1-A)\{1-g(\bX^\top\bepibr)\}^{-1}]\bX]=\bf0.$
In general, there does not exist a single coefficient vector $\bepir=\bepiar=\bepibr$ that satisfies both conditions at the same time unless the propensity score model is correctly specified. We therefore introduce two distinct targets $(\bepiar,\bepibr)$ through different losses $(\ell_1,\ell_2)$ to reduce the impact of both estimation errors $\hbetau-\betaur$ and $\hbettau-\bettaur$ on the final parameter $\theta$. When the true propensity score function is logistic, $\bepiar=\bepibr$, and both targets coincide with the true logistic coefficient; see Section~\ref{sec:justification} of the Supplementary Material for justifications.

\paragraph*{The MQR estimator} 
Let $\hbnu$ denote the collection of nuisance estimates obtained by solving the empirical minimization problems associated with the loss functions $\ell_1,\ell_2,\dots,\ell_6$. Combined with the QR score \eqref{def:psihat-MQR}, we define the \emph{model quadruply robust} estimator for $\theta$ as
$
\hat{\theta}_{\mathrm{MQR}}=N^{-1}\sum_{i=1}^N\bpsi(\bW_i,\hbnu).
$
The full construction, together with the cross-fitting procedure, is described in Algorithm~\ref{alg:MQR}.

\subsection{Theoretical properties of the MQR estimator}\label{sec:r_thm}

As previously discussed, the correct specification of $q$ relies on the form of either $\pi$ or $\bs{m}$. Therefore, we introduce a new function class
\begin{globalconditions}
\centering
\item $\mathcal{M}_{d}'$: $q^*(\bS)$, $\tau^*(\bX)$, and $\bs{m}^*(\bX)$ are correctly specified. \label{cond:Eric7}
\end{globalconditions}
Unless $\bM$ is one-dimensional, $\mathcal M_d'\subset\mathcal M_d$ in general, and hence the class $\mathcal M_\mathrm{MQR}:=\mathcal M_a\cup\mathcal M_b'\cup\mathcal M_c''\cup\mathcal M_d'$ is a subset of $\mathcal M_{\mathrm{QR}}=\mathcal M_a\cup\mathcal M_b'\cup\mathcal M_c''\cup\mathcal M_d$ studied in Section \ref{sec:QR}. This restriction comes from the difficulty of estimating the conditional density ratio $q$ in high-dimensional settings. Even with this restriction, $\mathcal M_{\mathrm{MQR}}=\mathcal M_\mathrm{OTR}\cup\mathcal M_d'\supset\mathcal M_\mathrm{TR}\cup\mathcal M_d'$ remains strictly broader than the previously studied classes $\mathcal M_\mathrm{TR}$ and $\mathcal M_\mathrm{OTR}$; illustrative examples and further discussion are provided in Section~\ref{apdx:example} of the Supplementary Material. More importantly, existing works achieve only consistency under the smaller classes $\mathcal M_\mathrm{TR}$ and $\mathcal M_\mathrm{OTR}$ in high-dimensional settings, whereas our goal is to provide robust inference under the broader class $\mathcal M_\mathrm{MQR}$.

We assume the following model class, regularity, and sparsity conditions.

\begin{assumption}[MQR class]\label{cond:MQR}
Suppose the true DGP belongs to the model class $\mathcal M_\mathrm{MQR}$; that is, at least one of the following conditions holds: 
(a) $\mu$ is linear and $\pi$ is logistic; (b) $\mu$ is linear and $\tau_n$ is linear; (c) $\pi$ is logistic and $q$ is exponential; (d) $q$ is exponential, $\tau$ is linear, and $\bs{m}$ is linear.
\end{assumption}

\begin{assumption}[Regularity conditions]\label{cond:basic2}
Let the following conditions hold for some constants $c_0\in(0,1/2)$ and $\sigma_{\bS},\sigma_0,c_{\min},c_Y>0$:

(a) $\P(c_0\leq \pi_a^*(\bX)\leq 1-c_0)=1$, $\P(c_0\leq \pi_b^*(\bX)\leq 1-c_0)=1$, and $\P(c_0^2\leq q^*(\bS)\leq c_0^{-2})=1$.

(b) The random vector $\bS$ is sub-Gaussian and satisfies $\|\bv^\top\bS\|_{\psi_2}\leq\sigma_{\bS}\|\bv\|_2$ for all $\bv\in\R^{d}$. Let $\varepsilon:=Y(1,\bM)-\bS^\top\bemur$, $\zeta:=\bzS^\top\bemur-\bX^\top\betaur$, and $\varrho:=\bzS^\top\bemur-\bX^\top\bettaur$. Assume that $\varepsilon$, $\zeta$, and $\varrho$ are sub-Gaussian with parameter $\sigma_0$. 
In addition, the smallest eigenvalues of $\Var(A\bS)$, $\Var\{(1-A)\bX\}$, and $\E[\Var(\bzS\mid \bX)]$  are bounded below by $c_{\min}$. Let $\E[A\{Y(1,\bM)-\mu(\bS)\}^2]>c_Y$.
\end{assumption}

We choose the tuning parameters $\lambda_{\pi_a},\lambda_{\pi_b},\lambda_{\tau_n},\lambda_{\tau}\asymp \sqrt{\log d_1/N}$ and $\lambda_{q},\lambda_{\mu}\asymp \sqrt{\log d/N}$. Define $(s_{\pi_a},s_{\pi_b},s_{q},s_{\mu},s_{\tau_n},s_{\tau})$ as the sparsity levels of the corresponding target nuisance parameters collected in $\bnu^*$. Let $s_\pi:=s_{\pi_a}+s_{\pi_b}$,
$$
r_{\pi}:=\sqrt{\frac{s_{\pi}\log d_1}{N}},\;\;
r_{q}:=\sqrt{\frac{s_{q}\log d}{N}},\;\;
r_{\mu}:=\sqrt{\frac{s_{\mu}\log d}{N}},\;\;
r_{\tau_n}:=\sqrt{\frac{s_{\tau_n}\log d}{N}},\;\;
r_{\tau}:=\sqrt{\frac{s_{\tau}\log d_1}{N}}.
$$
All $r$-values above reflect the complexity of the corresponding nuisance models. They can be interpreted as the $\ell_2$ estimation rates that would be achieved if all preceding nuisance parameters were replaced by their exact values. 

\begin{assumption}[Sparsity]
Let $r_{\pi}+r_{q}+r_{\mu}+r_{\tau_n}+r_{\tau}=o(1)$ and $(r_{\pi}^2+r_{q}^2)\log d=O(1)$.\label{cond:sparsity}
\end{assumption}

Sparsity conditions of the form $r=o(1)$, or equivalently $s=o(N/\log d)$, are standard in the high-dimensional statistics literature and are used to ensure consistent estimation. The additional requirement $(r_{\pi}^2+r_{q}^2)\log d=O(1)$ can alternatively be replaced by the boundedness condition $\|\bS\|_\infty< C$, as seen in \cite{UnifyingApproachDoublyrobust2019,HighdimensionalInferenceDynamic2024}.

\begin{theorem}[Improved convergence]\label{thm:r_bias}
Let Assumptions \ref{cond:basic} and \ref{cond:MQR}--\ref{cond:sparsity} hold. Then as $N,d\to\infty$, 
$\rsigma^2:=\E\{\bpsi(\bW;\bnu^*)-\theta\}^2\asymp \|\bemur\|_2+1$ and 
$\hat{\theta}_{\mathrm{MQR}}-\theta=O_p(\rsigma N^{-1/2}+r_{\mathrm{MQR}})$, where
$r_{\mathrm{MQR}}:=r_{q}r_{\mu}+r_{\pi}r_{\tau}+r_{\pi}r_{\tau_n}+r_{\pi}r_{\mu}+\idf_{\mu\neq\mu^*}(r_{\pi}^2+r_{q}^2+r_{\pi}r_{q})+\idf_{(\tau_n,\tau)\neq (\tau_n^*,\tau^*)}(r_{\pi}^2+r_{\pi}r_{q}).$
\end{theorem}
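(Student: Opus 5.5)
The plan is to split the error into an oracle term and a bias term, fold by fold. Fix $k$ and write
\begin{align*}
N^{-1}\sum_{i\in\I_k}\bpsi(\bW_i,\hbnu^{-k})-\theta
&= N^{-1}\sum_{i\in\I_k}\{\bpsi(\bW_i,\bnu^*)-\theta\}\\
&\quad + N^{-1}\sum_{i\in\I_k}\bigl[\{\bpsi(\bW_i,\hbnu^{-k})-\bpsi(\bW_i,\bnu^*)\}-\E_{\bW}\{\bpsi(\bW,\hbnu^{-k})-\bpsi(\bW,\bnu^*)\}\bigr]\\
&\quad + \E_{\bW}\{\bpsi(\bW,\hbnu^{-k})\}-\E\{\bpsi(\bW,\bnu^*)\},
\end{align*}
where we use $\E[\bpsi(\bW,\bnu^*)]=\theta$, which holds by Theorem~\ref{thm:QR_robust} once we check that under each scenario of Assumption~\ref{cond:MQR} the population targets of $\ell_1$--$\ell_6$ recover the correctly specified nuisances (the case (d) uses linearity of $\bs m$ to get $q^*=q$ even with $\pi_a^*,\pi_b^*$ misspecified, as in the supplementary justification). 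The first term is $O_p(\rsigma N^{-1/2})$ by Chebyshev. For $\rsigma^2\asymp\|\bemur\|_2+1$ we bound above using sub-Gaussianity of $\bS$, $\varepsilon,\zeta,\varrho$ and bounded weights, and below using $\E[A\{Y(1,\bM)-\mu(\bS)\}^2]>c_Y$ together with the eigenvalue conditions on $\Var(A\bS)$ and $\E[\Var(\bzS\mid\bX)]$. The second term is a centered empirical process conditional on the out-of-fold data, so its size is $N^{-1/2}$ times the $L_2$ norm of the score difference, which is $o_p(\rsigma N^{-1/2})$ once the nuisance errors are $o_p(1)$.

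The next step is to get $\ell_1$/$\ell_2$ rates for the nuisance estimators, handled sequentially. For $\hbepia$ and $\hbepib$ (convex losses with restricted strong convexity, Lasso-type analysis) we get $\|\hbepia-\bepiar\|_2+\|\hbepib-\bepibr\|_2=O_p(r_\pi)$. For $\hbeq$, $\hbemu$, $\hbettau$, $\hbetau$, each loss depends on earlier estimates. The analysis gives a rate equal to its own oracle rate plus the propagated error of preceding estimates through the gradient of the population loss at the target. For example, $\|\hbeq-\beqr\|_2=O_p(r_q+r_\pi)$ and $\|\hbemu-\bemur\|_2=O_p(r_\mu+r_q+r_\pi)$. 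The key refinement is that the propagated term vanishes when the model is correct: if $\mu^*=\mu$, the weighted score of $\ell_4$ at $\bemur$ is mean zero for any weights, so $\hbemu$ attains $r_\mu$ alone. This is the source of the indicators in $r_{\mathrm{MQR}}$. The condition $(r_\pi^2+r_q^2)\log d=O(1)$ keeps the perturbed exponential weights $\exp(\bS^\top\hbeq)$ and $g^{-1}(\bX^\top\hbepia)$ uniformly controlled, via a sub-Gaussian maximal bound on $\bS^\top(\hbeq-\beqr)$.

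Finally, for the bias term we Taylor expand $\bnu\mapsto\E[\bpsi(\bW,\bnu)]$ around $\bnu^*$. The first-order term vanishes exactly by the moment condition \eqref{eq:moment}, which follows from the displayed identities linking $\E[\nabla_{\bnu}\bpsi(\bW,\bnu^*)]$ to the population first-order conditions of $\ell_1$--$\ell_6$. What remains is the second-order term. Since $\bpsi$ is affine in $(\bemu,\bettau,\betau)$, the Hessian blocks among these vanish. The surviving cross terms are $(\beq,\bemu)$, $(\bepia,\betau)$, $(\bepia,\bettau)$, $(\bepib,\bettau)$, $(\bepia,\bemu)$, and pure $(\bepia,\beq)$ curvature. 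Each block is bounded by $\sup_{\|\bv\|=1}\E[|\bv^\top\bS|^2\cdot\text{bounded weights}]$, giving products of the rates above. Substituting the sequential rates gives $r_qr_\mu+r_\pi r_\tau+r_\pi r_{\tau_n}+r_\pi r_\mu$ when the relevant models are correct. When $\mu\ne\mu^*$ the inflated rates of $\hbemu$ and $\hbettau$ produce the extra $r_\pi^2+r_q^2+r_\pi r_q$. Pure-curvature terms in $(\bepia,\bepib,\beq)$ multiply the residual factors $Y-\bS^\top\bemur$ or $\tau_n^*-\tau^*$. These terms have conditional mean zero unless $\mu\ne\mu^*$ or $(\tau_n,\tau)\ne(\tau_n^*,\tau^*)$, which yields the indicator structure.

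I expect the main obstacle to be this remainder bookkeeping. Specifically, we must verify that each squared-in-$(\pi,q)$ term carries a factor whose conditional expectation vanishes under the correct scenario, and control the integral remainder of the Taylor expansion with random exponential weights.
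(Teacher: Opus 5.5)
Your proposal is essentially the paper's proof. It uses the same fold-wise split into an oracle average, a centered empirical-process term and the bias $\E\{\bpsi(\bW,\hbnu^{-k})-\bpsi(\bW,\bnu^*)\}$; the same sequential Lasso rates with the decoupling refinement when $\mu$ (then $\tau_n$, $\tau$) is correct; and the same first-order conditions of $\ell_1$--$\ell_6$ to reduce the bias to products of rates. One small correction: the paper uses $(r_\pi^2+r_q^2)\log d=O(1)$ not for uniform control of the weights, but to keep the $\ell_\infty$-norm of the mean-zero gradient perturbation from the estimated weights (which picks up a factor $\|\bS\|_\infty$ of order $\sqrt{\log d}$) at order $\sqrt{\log d/N}$.

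The only real difference is bookkeeping in the bias step. The paper writes the bias as the explicit sum $\Delta'_{k,3}+\cdots+\Delta'_{k,10}$ and Taylor-expands the three residual-carrying pieces $\Delta'_{k,8},\Delta'_{k,9},\Delta'_{k,10}$ in a single direction, with the residual ($\varepsilon$, $\varrho$, or $\exp(\bS^\top\beqr)\varepsilon+\zeta-\varrho$) frozen at $\bnu^*$, so each vanishes identically under the corresponding correct specification. Your joint expansion instead evaluates the Hessian at intermediate points, which leaves third-order terms such as $\|\hbeq-\beqr\|_2^2\|\hbemu-\bemur\|_2$; these are harmless because all rates are $o(1)$.
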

Theorem~\ref{thm:r_bias} describes the convergence rate of the MQR estimator. When all nuisance models are correctly specified, we have
$\hat{\theta}_{\mathrm{MQR}}-\theta=O_p(N^{-1/2}+r_{q}r_{\mu}+r_{\pi}r_{\tau}+r_{\pi}r_{\tau_n}+r_{\pi}r_{\mu}),$
which matches the rate of the QR estimator in Theorem~\ref{thm:QR_consistent}. When model misspecification occurs, the QR estimator incurs additional bias terms that depend linearly on the nuisance estimation errors, whereas the MQR estimator only depends on their products. This shows the advantage of the proposed specialized loss functions: by enforcing local orthogonality, they remove the first-order dependence of the final estimator on nuisance estimation errors under misspecification, leading to more accurate estimation.

Due to the reduced bias, we can now establish asymptotic normality for the MQR estimator and thus obtain valid inference, even when model misspecification occurs.

\begin{theorem}[Robust inference]\label{thm:r_infer}
Let Assumptions \ref{cond:basic} and \ref{cond:MQR}--\ref{cond:sparsity} hold. Assume the following product-rate condition
$r_{\pi}(r_{\tau}+r_{\tau_n})+r_{q}r_{\mu}+r_{\pi}r_{\mu}=o(N^{-1/2})$.
If $\mu$ is non-linear, further assume $r_{\pi}=o(N^{-1/4})$ and $r_{q}=o(N^{-1/4})$.
If $\tau$ or $\tau_n$ is non-linear, further assume $r_{\pi}=o(N^{-1/4})$ and $r_{\pi}r_{q}=o(N^{-1/2})$.
Then, as $N,d\to\infty$,
$
\rsigma^{-1}N^{-1/2}(\hat{\theta}_{\mathrm{MQR}}-\theta)\leadsto\mathcal N(0,1)
$
and $\hat{\sigma}_{\mathrm{m}}^2=\rsigma^2\{1+o_p(1)\}$, where
$\hat{\sigma}_{\mathrm{m}}^2:=N^{-1}\sum_{k=1}^{\K}\sum_{i\in\I_k}\{\bpsi(\bW_i;\hbnu^{-k})-\hat{\theta}_{\mathrm{MQR}}\}^2.$
\end{theorem}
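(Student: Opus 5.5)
We prove Theorem~\ref{thm:r_infer} by refining the decomposition behind Theorem~\ref{thm:r_bias} and showing that every remainder is $o_p(\rsigma N^{-1/2})$. On each fold $k$, write
\begin{align*}
\frac{1}{|\I_k|}\sum_{i\in\I_k}\bpsi(\bW_i;\hbnu^{-k})-\theta
&=\frac{1}{|\I_k|}\sum_{i\in\I_k}\{\bpsi(\bW_i;\bnu^*)-\theta\}
+\Big\{\E[\bpsi(\bW;\hbnu^{-k})\mid\hbnu^{-k}]-\E[\bpsi(\bW;\bnu^*)]\Big\}\\
&\quad+\frac{1}{|\I_k|}\sum_{i\in\I_k}\Big\{\bpsi(\bW_i;\hbnu^{-k})-\bpsi(\bW_i;\bnu^*)-\E[\bpsi(\bW;\hbnu^{-k})-\bpsi(\bW;\bnu^*)\mid\hbnu^{-k}]\Big\}.
\end{align*}
This uses $\E[\bpsi(\bW;\bnu^*)]=\theta$, which holds on $\mathcal M_\mathrm{MQR}$ by Theorem~\ref{thm:QR_robust}. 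That theorem applies because, under Assumption~\ref{cond:MQR}, the population minimizers of $\ell_1$--$\ell_6$ recover the true nuisances in the relevant scenario (Supplementary justification, taken as given). The three terms are the oracle average, the bias, and the empirical-process remainder.

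\textbf{Oracle term.} The first term is a normalized sum of i.i.d.\ variables with variance $\rsigma^2$. Theorem~\ref{thm:r_bias} gives $\rsigma^2\asymp\|\bemur\|_2+1$, and $\rsigma$ is bounded below via $c_Y$. I would verify the Lyapunov condition with fourth moments: under Assumption~\ref{cond:basic2}, with weights bounded by (a) and $\varepsilon,\zeta,\varrho$ sub-Gaussian, $\E|\bpsi-\theta|^4\lesssim\rsigma^4$. Hence $\rsigma^{-1}N^{-1/2}\sum_i\{\bpsi(\bW_i;\bnu^*)-\theta\}\leadsto\mathcal N(0,1)$.

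\textbf{Empirical-process remainder.} Conditional on $\hbnu^{-k}$, the third term has mean zero. Its conditional variance is bounded by $|\I_k|^{-1}\E[\{\bpsi(\bW;\hbnu^{-k})-\bpsi(\bW;\bnu^*)\}^2\mid\hbnu^{-k}]$. Using sub-Gaussianity of linear forms of $\bS$, the boundedness of $g$, the exponential weights near their targets, and the bound $(r_\pi^2+r_q^2)\log d=O(1)$ to control $\exp(\bS^\top(\hbeq-\beqr))$ and $g^{-1}$ perturbations, this is $O_p\{N^{-1}(r_\pi+r_q+r_\mu+r_{\tau_n}+r_\tau)^2\rsigma^2\}=o_p(\rsigma^2/N)$. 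Chebyshev then gives $o_p(\rsigma N^{-1/2})$. The $\ell_2$ and $\ell_1$ error bounds for each sequential Lasso-type estimator, of order $r$ times accumulated preceding errors, come from the analysis behind Theorem~\ref{thm:r_bias}, which we reuse.

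\textbf{Bias term (main obstacle).} I would Taylor expand $\bnu\mapsto\E[\bpsi(\bW;\bnu)]$ around $\bnu^*$. The first-order term vanishes by the moment condition \eqref{eq:moment}, obtained from the six identities linking $\E[\nabla\bpsi]$ to the population scores of $\ell_1,\dots,\ell_6$. The key point is that the sequential estimators target $\bnu^*$ \emph{as defined through the previous population targets}. The actual errors $\hbnu-\bnu^*$ therefore include propagated components; for example, $\hbemu-\bemur$ carries a term driven by $(\hbeq-\beqr,\hbepia-\bepiar)$ whenever $\mu\neq\mu^*$. Under correct specification these propagation terms vanish, because the conditional score is zero for every weight. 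Under misspecification they contribute $O(r_\pi^2+r_q^2+r_\pi r_q)$ (for $\mu$) or $O(r_\pi^2+r_\pi r_q)$ (for $\tau,\tau_n$), as in $r_{\mathrm{MQR}}$.

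The second-order terms are cross-products of errors, bounded through Cauchy--Schwarz as $r_qr_\mu+r_\pi(r_\tau+r_{\tau_n})+r_\pi r_\mu$. The pure quadratic terms in the same block, such as $(\hbemu-\bemur)^{\top}\nabla^2\ldots$, vanish because $\bpsi$ is linear in each outcome coefficient; quadratic terms in $\bepia$ or $\beq$ alone are handled like the propagation terms above.

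Combining everything, the bias is $O_p(r_{\mathrm{MQR}})$. Under the stated product-rate condition, together with $r_\pi,r_q=o(N^{-1/4})$ and $r_\pi r_q=o(N^{-1/2})$ in the misspecified cases, this is $o(N^{-1/2})=o(\rsigma N^{-1/2})$ since $\rsigma\gtrsim1$. The delicate part is controlling the cubic Taylor remainder in $\bepia,\beq$ uniformly. I would do this by bounding $\E[\exp(2|\bS^\top\bDelta|)]$ by a constant when $\|\bDelta\|_1\sqrt{\log d}=O_p(1)$, which is where Assumption~\ref{cond:sparsity}'s log condition enters.

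\textbf{Variance estimator.} Expand
$\hat\sigma_{\mathrm m}^2-\rsigma^2=N^{-1}\sum\{(\bpsi(\bW_i;\bnu^*)-\theta)^2-\rsigma^2\}+\text{cross terms}$. The first part is $o_p(\rsigma^2)$ by the weak law of large numbers, using the fourth-moment bound above. The cross terms are $o_p(\rsigma^2)$ by Cauchy--Schwarz, using the $L_2$ bound from the empirical-process step and $\hat\theta_{\mathrm{MQR}}-\theta=o_p(1)$.
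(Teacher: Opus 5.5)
Your overall architecture is the same as the paper's: oracle term, conditional empirical-process term, bias term, Lyapunov CLT, Chebyshev under cross-fitting. The bias step, however, has a genuine gap. You expand $\bnu\mapsto\E[\bpsi(\bW;\bnu)]$ jointly around $\bnu^*$ and control the third-order Lagrange remainder by a uniform bound on exponential moments. That only gives a remainder of order $\|\hbnu-\bnu^*\|_2^3$, which contains $r_\pi^3$, $r_q^3$ and $r_\pi^2 r_q$.

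These terms are not negligible under the theorem's hypotheses. When $\mu$, $\tau_n$ and $\tau$ are linear (scenario (b), or all models correct), only the product-rate condition is imposed. That condition, together with Assumption~\ref{cond:sparsity} for $\log d=O(N^{1/4})$, is compatible with $r_\pi\asymp r_q\asymp N^{-1/8}$ and $r_\mu+r_{\tau_n}+r_\tau=o(N^{-3/8})$, and then $r_\pi^3\gg N^{-1/2}$. So your conclusion that the bias is $O_p(r_{\mathrm{MQR}})$ does not follow, since $r_{\mathrm{MQR}}$ contains no such cubic terms.

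The underlying issue is at the intermediate point $\tilde\bnu$. There the pure derivatives in $\bepia$ or $\beq$ no longer vanish, because the residual factor is evaluated at $(\tilde\bemu,\tilde\bettau,\tilde\betau)$ rather than at the targets. To rescue the joint expansion you would have to show that these derivatives carry an extra factor of the outcome-coefficient errors, and your argument does not do this.

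The paper avoids any third-order remainder with an exact telescoping decomposition of $\E[\bpsi(\bW;\hbnu)-\bpsi(\bW;\bnu^*)]$ into $\Delta'_{k,3},\dots,\Delta'_{k,10}$. Each piece is one of two kinds:
\begin{itemize}
\item an exact product of two estimation errors, after subtracting a zero population score from $\ell_1$, $\ell_2$ or $\ell_3$, bounded by H\"older; or
\item a single-block term in which all other nuisances sit at their targets, such as $\E[\{A/g(\bX^\top\hbepia)-A/g(\bX^\top\bepiar)\}\{q^*(\bS)\varepsilon+\zeta-\varrho\}]$.
\end{itemize}
Within $\mathcal M_{\mathrm{MQR}}$, that displayed term is identically zero for every value of $\hbepia$ once $(\tau_n^*,\tau^*)=(\tau_n,\tau)$. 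Otherwise it is expanded to second order in $\bepia$ alone: the linear term is killed by the $\ell_6$ score, leaving $O(r_\pi^2)$. The $\bepib$ and $\beq$ analogues use the $\ell_5$ and $\ell_4$ scores and carry the indicators $\idf_{\tau_n\neq\tau_n^*}$ and $\idf_{\mu\neq\mu^*}$ respectively.

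Separately, you misplace the role of $(r_\pi^2+r_q^2)\log d=O(1)$. Exponential moments of $\bS^\top\bDelta$ are already bounded by sub-Gaussianity whenever $\|\bDelta\|_2=O(1)$, so no log condition is needed there. The condition is needed, through the gradient bounds under correct specification, to get the decoupled rates
\[
\|\hbemu-\bemur\|_2=O_p(r_\mu),\qquad \|\hbettau-\bettaur\|_2=O_p(r_\mu+r_{\tau_n}),\qquad \|\hbetau-\betaur\|_2=O_p(r_\mu+r_{\tau_n}+r_\tau).
\]
Without these, products such as $\|\hbeq-\beqr\|_2\|\hbemu-\bemur\|_2$ would contain $(r_\pi+r_q)^2$, which is not assumed to be $o(N^{-1/2})$.
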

Theorem~\ref{thm:r_infer} shows that the MQR method provides valid $\sqrt{N}$-inference when (i) the DGP belongs to $\mathcal M_\mathrm{MQR}$, and (ii) the nuisance parameters are estimated at sufficiently fast rates. When more working models are misspecified, condition (ii) requires stronger convergence rates for the nuisance estimates. One sufficient, but not necessary, condition is that all nuisance estimates converge at rate $o(N^{-1/4})$ to their possibly misspecified limits, which holds when all target nuisance parameters are ultra-sparse with sparsity levels of the form $s=o(\sqrt N/\log d)$. In contrast, when model misspecification occurs, Theorem~\ref{thm:QR_consistent} achieves $\sqrt N$-consistency for the QR method only if all nuisance estimates converge at rate $O(N^{-1/2})$, which is possible only under low-dimensional parametric models. Similar limitations appear in existing TR and O-TR methods, where valid inference is obtained only when all models are correct \citep{CausalMediationAnalysis2022,DeepMedSemiparametricCausal2022}, or under the smaller DGP classes $\mathcal M_\mathrm{TR}$ or $\mathcal M_\mathrm{OTR}$ with low-dimensional parametric specifications \citep{SemiparametricTheoryCausal2012,TargetedMaximumLikelihood2012}.

To sum up, we establish inference under a model class $\mathcal M_\mathrm{MQR}$ that is strictly larger than $\mathcal M_\mathrm{TR}$ and $\mathcal M_\mathrm{OTR}$, whereas existing work in high-dimensional settings guarantees only consistency for these smaller classes. To the best of our knowledge, this is also the first result that achieves valid inference under the smaller classes $\mathcal M_\mathrm{TR}$ or $\mathcal M_\mathrm{OTR}$ in high dimensions.

Notably, although the proposed MQR method delivers valid inference under model misspecification, it does not uniformly dominate the QR approach introduced in Section~\ref{sec:QR}. The reason is that MQR relies on (generalized) linear working models for the nuisance components, possibly after basis transformations, whereas the QR method allows flexible nuisance estimation procedures, including kernel methods, random forests, and neural networks. Consequently, when the covariate dimension is small or moderate, we recommend using the QR approach to reduce the risk of misspecification through fully nonparametric nuisance estimation. In contrast, when nonparametric estimation is infeasible and parametric modeling is considered, the MQR method is preferable, as it mitigates the impact of parametric model misspecification on inference.

\section{Numerical Experiments}\label{sec:num}
\subsection{Simulation results}\label{sec:simu}

In this section, we evaluate the finite-sample performance of the proposed methods through simulation studies. We consider eight methods in total, four of which are proposed in this paper: \textbf{QR1}, the QR estimator defined in Algorithm \ref{alg:QR}; \textbf{MQR1}, the MQR estimator defined in Algorithm \ref{alg:MQR}; and \textbf{QR2} and \textbf{MQR2}, which are modified versions of QR1 and MQR1 where all nuisance functions are estimated using the full sub-sample $\I_{-k}$. While QR1 and MQR1 rely on additional sample splitting and lead to weaker sparsity requirements in the asymptotic analysis, our numerical results show that QR2 and MQR2 are more efficient in finite samples, since they use more observations for nuisance estimation.

The remaining four methods are included for comparison:
\textbf{Oracle}, an oracle estimator computed from \eqref{def:EIF} using the true nuisance functions;
\textbf{TR}, a cross-fitted version of the triply robust estimator \citep{SemiparametricTheoryCausal2012};
\textbf{O-TR}, the odds-modeling triply robust estimator \citep{CausalMediationAnalysis2022};
and \textbf{O-TR'}, a parametric variant of O-TR in which all nuisance functions are estimated using regularized (generalized) linear models.

We divide all considered methods into two groups: (a) nonparametric modeling, including QR1, QR2, O-TR, and TR, where random forests are used for all nuisance estimation; and (b) parametric modeling, including MQR1, MQR2, and O-TR', where parametric working models are adopted. All methods are implemented using 5-fold cross-fitting ($\K=5$), with all tuning parameters selected via additional cross-validation.

We now describe the DGPs. Define $\tilde{g}(x)=(0.1+|x|+\sin x)/(0.7+|x|+\sin x)$. Let $\bX_{i1}=1$ and generate $\varepsilon_i\sim\mathcal{N}(0,1)$ for $i\leq N$. Let $\bs{E}_{ij}$ denote the matrix with entry $1$ at position $(i,j)$ and $0$ elsewhere, and define $\bs{\Sigma}=\mathbf{I}_{d_2\times d_2}+0.3(\mathbf{E}_{12}+\mathbf{E}_{21})$. Set the parameter values as $\bepi = (\bs{1}_4, \bz_{d_1-4})$, $\bemu' = (\bs{1}_4, \bz_{d_1-4}^\top, \bs{1}_3, \bz_{d_2-3}^\top)$, $\bemu'' = (1, -0.6\bs{1}_3, \bz_{d_1-4}^\top, \bs{1}_3, \bz_{d_2-3}^\top)$, $\bs{\delta}_1 = (1.2, 0.6, 0.6, \bz_{d_2-3}^\top)$, and $\bs{\delta}_2 = (0.4\bs{1}_3, \bz_{d_2-3}^\top)$. Construct the matrix $\bphi\in\R^{d_1\times d_2}$ as $\bphi_{j,k}=0.6$ if $j=k$, $\bphi_{j,k}=0.3$ if $|j-k|=1$, and $\bphi_{j,k}=0$ otherwise. We choose $(d_1,d_2)=(101,50)$ and $N\in\{600,1000,1400,1800\}$. Each experiment is repeated 100 times.

\begin{table}[!t]
\footnotesize
\centering
\spacingset{1}
\caption{Simulation results under Setting (i). Bias: empirical bias; RMSE: root mean square error; AL: average length of the $95\%$ confidence intervals; AC: average coverage of the $95\%$ confidence intervals. All the reported values (except AC) are based on robust (median-type) estimates. \textit{Italics} denote the proposed methods. The nonparametric modeling methods are prefixed by $\circ$, and parametric modeling methods are prefixed by $\bullet$.}
\begin{tabular}{lccccclcccc}
    \Xhline{1pt}
    \multicolumn{5}{c}{$N=600$, $(d_1,d_2)=(101,50)$}&\quad&\multicolumn{5}{c}{$N=1000$, $(d_1,d_2)=(101,50)$}\\
    \cmidrule(lr){1-5}\cmidrule(lr){7-11}
    \textbf{Method}&\textbf{Bias}&\textbf{RMSE}& \textbf{AL}&\textbf{AC}&\quad&\textbf{Method}&\textbf{Bias}&\textbf{RMSE}& \textbf{AL}&\textbf{AC}\\
    ~~Oracle&0.001&0.251&1.070&0.97&\quad&~~Oracle&0.005&0.218&0.818&0.97\\
    \hdashline  
    $\circ$ TR&0.662&0.693&0.775&0.18&\quad&$\circ$ TR&0.618&0.625&0.614&0.08\\
    $\circ$ O-TR&0.530&0.564&0.766&0.29&\quad&$\circ$ O-TR&0.396&0.439&0.603&0.32\\
    $\circ$ \textit{QR1}&0.461&0.559&0.852&0.48&\quad&$\circ$ \textit{QR1}&0.378&0.447&0.625&0.39\\
    $\circ$ \textit{QR2}&0.272&0.385&0.796&0.63&\quad&$\circ$ \textit{QR2}&0.283&0.344&0.608&0.58\\
    \hdashline
    $\bullet$ O-TR'&-0.228&0.401&1.049&0.80&\quad&$\bullet$ O-TR'&-0.191&0.288&0.834&0.87\\
    $\bullet$ \textit{MQR1}&-0.316&0.475&0.965&0.67&\quad&$\bullet$ \textit{MQR1}&-0.294&0.380&0.753&0.64\\  
    $\bullet$ \textit{MQR2}&-0.073&0.270&0.958&0.91&\quad&$\bullet$ \textit{MQR2}&-0.023&0.196&0.740&0.93
    \\
    \Xhline{1pt}
    \multicolumn{5}{c}{$N=1400$, $(d_1,d_2)=(101,50)$}&\quad&\multicolumn{5}{c}{$N=1800$, $(d_1,d_2)=(101,50)$}\\
    \cmidrule(lr){1-5}\cmidrule(lr){7-11}
    \textbf{Method}&\textbf{Bias}&\textbf{RMSE}& \textbf{AL}&\textbf{AC}&\quad&\textbf{Method}&\textbf{Bias}&\textbf{RMSE}& \textbf{AL}&\textbf{AC}\\
    ~~Oracle&0.000&0.175&0.698&0.96&\quad&~~Oracle&-0.001&0.197&0.682&0.91\\
    \hdashline
    $\circ$ TR&0.562&0.571&0.512&0.04&\quad&$\circ$ TR&0.496&0.492&0.456&0.03\\
    $\circ$ O-TR&0.363&0.381&0.516&0.28&\quad&$\circ$ O-TR&0.286&0.312&0.455&0.31\\
    $\circ$ \textit{QR1}&0.358&0.407&0.523&0.32&\quad&$\circ$ \textit{QR1}&0.303&0.332&0.462&0.28\\
    $\circ$ \textit{QR2}&0.300&0.320&0.516&0.42&\quad&$\circ$ \textit{QR2}&0.262&0.266&0.457&0.41\\
    \hdashline
    $\bullet$ O-TR'&-0.150&0.241&0.715&0.83&\quad&$\bullet$ O-TR'&-0.159&0.224&0.630&0.80\\
    $\bullet$ \textit{MQR1}&-0.201&0.300&0.649&0.70&\quad&$\bullet$ \textit{MQR1}&-0.171&0.240&0.570&0.75\\  
    $\bullet$ \textit{MQR2}&-0.031&0.170&0.628&0.93&\quad&$\bullet$ \textit{MQR2}&-0.026&0.139&0.546&0.94    
\end{tabular}
\label{table:config1}
\end{table} 

\textbf{Setting (i)}: Non-logistic $\pi$ and non-linear $\mu$. Generate $\bX_{i,j}$ as i.i.d. truncated standard normal random variables with truncation to $[-1.2, 1.2]$ for $j \geq 2$, $A_i\mid\bX_i \sim \mathrm{Bernoulli}(\tilde{g}(\bX_i^\top\bepi))$, $\bM_i=\bphi^\top\bX_i+\mathbf{U}_i$, where $\mathbf{U}_i\mid A_i\sim\mathcal{N}(A_i\bs{\delta}_1+(1-A_i)\bs{\delta}_2,\bs{\Sigma})$. Let $Y_i=\mathbf{h}_i^\top\bemu'+\varepsilon_i$, where $\mathbf{h}_i=(1,w_{i,1},w_{i,2},w_{i,3},\bX_{i,4},\ldots,\bX_{i,d_1},\bM_i^\top)^\top$ and $w_{i,j}=X_{i,j}(\mathbf{U}_{i,j}+\idf_{\mathbf{U}_{i,1}>0.4})+X_{i,j}^2(U_{i,j}-0.4)$.

\textbf{Setting (ii)}: Non-logistic $\pi$ and non-exponential $q$. Generate
$\bX_{i,j}\sim\mathcal{N}(0,1)$ for $j\geq2$, $A_i\mid\bX_i\sim \mathrm{Bernoulli}(\tilde{g}(\bX_i^\top\bepi))$, $\bM_i=\bphi^\top\bX_i+\bU_i$, where $\bU_i\mid (A_i,\bX_i)\sim\mathcal{N}(A_i\bs{\eta}_i+(1-A_i)\bs{\delta}_2,\bs{\Sigma})$ for $\bs{\eta}_i=(2\idf_{X_{i,4}>0}-1)(g(\bX_{i,1}), 0.6g(\bX_{i,2}), 0.6g(\bX_{i,3}), \bz_{d_2-3})^\top$, and $Y=\bS_i^\top\bemu''+\varepsilon_i$.

\begin{table}[!t]
\footnotesize
\centering
\spacingset{1}
\caption{Simulation results under Setting (ii). The rest of the caption details remain the same as those in Table~\ref{table:config1}.}
\begin{tabular}{lccccclcccc}
    \Xhline{1pt}
    \multicolumn{5}{c}{$N=600$, $(d_1,d_2)=(101,50)$}&\quad&\multicolumn{5}{c}{$N=1000$, $(d_1,d_2)=(101,50)$}\\
    \cmidrule(lr){1-5}\cmidrule(lr){7-11}
    \textbf{Method}&\textbf{Bias}&\textbf{RMSE}& \textbf{AL}&\textbf{AC}&\quad&\textbf{Method}&\textbf{Bias}&\textbf{RMSE}& \textbf{AL}&\textbf{AC}\\
    ~~Oracle&0.018&0.161&0.688&0.98&\quad&~~Oracle&0.025&0.122&0.543&0.97\\
    \hdashline
    $\circ$  TR&-0.362&0.385&0.607&0.36&\quad&$\circ$  TR&-0.319&0.345&0.481&0.24\\
    $\circ$  O-TR&-0.328&0.341&0.549&0.39&\quad&$\circ$  O-TR&-0.298&0.329&0.437&0.30\\
    $\circ$  \textit{QR1}&-0.297&0.333&0.534&0.40&\quad&$\circ$  \textit{QR1}&-0.253&0.292&0.423&0.33\\
    $\circ$  \textit{QR2}&-0.246&0.274&0.537&0.59&\quad&$\circ$  \textit{QR2}&-0.187&0.224&0.420&0.56\\
    \hdashline
    $\bullet$ O-TR'&-0.190&0.229&0.549&0.71&\quad&$\bullet$ O-TR'&-0.132&0.180&0.424&0.80\\ 
    $\bullet$ \textit{MQR1}&-0.373&0.421&0.531&0.26&\quad&$\bullet$ \textit{MQR1}&-0.346&0.367&0.435&0.24\\  
    $\bullet$ \textit{MQR2}&-0.091&0.162&0.568&0.95&\quad&$\bullet$ \textit{MQR2}&-0.038&0.127&0.458&0.95
    \\
    \Xhline{1pt}
    \multicolumn{5}{c}{$N=1400$, $(d_1,d_2)=(101,50)$}&\quad&\multicolumn{5}{c}{$N=1800$, $(d_1,d_2)=(101,50)$}\\
    \cmidrule(lr){1-5}\cmidrule(lr){7-11}
    \textbf{Method}&\textbf{Bias}&\textbf{RMSE}& \textbf{AL}&\textbf{AC}&\quad&\textbf{Method}&\textbf{Bias}&\textbf{RMSE}& \textbf{AL}&\textbf{AC}\\
    ~~Oracle&0.025&0.124&0.453&0.92&\quad&~~Oracle&-0.003&0.104&0.406&0.95\\
    \hdashline
    $\circ$ TR&-0.305&0.319&0.411&0.19&\quad&$\circ$ TR&-0.280&0.301&0.368&0.19\\
    $\circ$ O-TR&-0.279&0.293&0.372&0.20&\quad&$\circ$ O-TR&-0.243&0.272&0.329&0.25\\
    $\circ$ \textit{QR1}&-0.241&0.258&0.361&0.30&\quad&$\circ$ \textit{QR1}&-0.199&0.222&0.321&0.32\\
    $\circ$ \textit{QR2}&-0.155&0.180&0.355&0.59&\quad&$\circ$ \textit{QR2}&-0.131&0.170&0.313&0.58\\
    \hdashline
    $\bullet$ O-TR'&-0.100&0.146&0.364&0.79&\quad&$\bullet$ O-TR'&-0.094&0.125&0.322&0.79\\ 
    $\bullet$ \textit{MQR1}&-0.238&0.287&0.387&0.42&\quad&$\bullet$ \textit{MQR1}&-0.179&0.215&0.348&0.49\\  
    $\bullet$ \textit{MQR2}&-0.027&0.114&0.386&0.91&\quad&$\bullet$ \textit{MQR2}&-0.027&0.095&0.345&0.95    
\end{tabular}
\label{table:config2}
\end{table}

As reported in Tables \ref{table:config1}--\ref{table:config2}, the QR2 and MQR2 versions of the proposed methods outperform the QR1 and MQR1 versions, reflecting more efficient use of samples for nuisance training. For instance, MQR1 uses only $N(\K-1)/(5\K)=288$ samples for nuisance training when $N=1800$ and $\K=5$, whereas QR2 and MQR2 utilize $N(\K-1)/\K = 1440$ training samples.

Among the nonparametric modeling methods TR, O-TR, and QR2, the QR2 method achieves the best performance, outperforming both TR and O-TR in estimation accuracy across both settings. Nevertheless, all nonparametric methods exhibit relatively large bias and under-coverage of confidence intervals, reflecting the slow convergence rates of nonparametric nuisance estimates in high dimensions. In contrast, the parametric modeling methods O-TR' and MQR2 generally perform better than the nonparametric approaches, with the only exception occurring when $N=600$ under Setting (i), where QR2 slightly outperforms O-TR'. Overall, the MQR2 method achieves the best performance, with root mean squared error (RMSE) comparable to the oracle estimate and average coverage (AC) close to the nominal $95\%$ level. Additional simulation results under complete misspecification of all parametric models are provided in Section~\ref{apdx:detail_simu} of the Supplementary Material, where we show a clear advantage of the proposed nonparametric method QR2 over both other nonparametric and parametric approaches, particularly when the covariate dimension is moderate and the mediator dimension is low.

\subsection{Real data application}\label{sec:real}

We apply the proposed methods to data from the AIDS Clinical Trials Group Protocol 175 (ACTG175) \citep{TrialComparingNucleoside1996}, a randomized trial evaluating antiretroviral regimens in HIV-1 infected adults with baseline CD4 T-cell counts between 200 and 500 cells/mm$^3$. Clinical evidence from the DELTA trial, a randomized double-blind controlled study, demonstrated that combination therapy leads to a significantly larger increase in CD4 counts compared with zidovudine (ZDV) monotherapy \citep{DeltaRandomisedDoubleblind1996}, with this benefit emerging early and persisting over the long term.
The goal of our analysis is to quantify the extent to which this immunological benefit is mediated through early immune recovery. 

The outcome $Y$ is defined as the log-transformed CD4 count at week 96, and the mediator vector $\bM$ consists of the log-transformed CD4 and CD8 counts at week 20, representing early immune recovery. We consider two treatment groups: $A=1$ for combination therapy and $A=0$ for ZDV monotherapy. The baseline covariate vector $\bX$ includes 12 variables. To improve the flexibility of parametric methods and reduce model misspecification, we augment $\bX$ and $\bM$ with higher-order terms and interactions, resulting in $d_1 = 188$ covariates and $d_2 = 101$ mediators; see Section~\ref{apdx:detail_real} of the Supplementary Material for details. Nonparametric methods, in contrast, use the original covariates and mediators. After excluding observations with missing values, the final sample includes $N=780$ patients.

The average treatment effect, $\mathrm{ATE} = \E[Y(1,\bM(1))] - \E[Y(0,\bM(0))]$, for the treatment regimes in this study has been extensively examined in previous work \citep{MultiplyRobustEstimation2014,DoublyRobustInference2017, MultipleRobustEstimation2019}. Our focus is on the mediation effects: the natural indirect effect $\mathrm{NIE} = \E[Y(1,\bM(1))] - \E[Y(1,\bM(0))]$ and the natural direct effect $\mathrm{NDE} = \E[Y(1,\bM(0))] - \E[Y(0,\bM(0))]$, as well as the mediation proportion $\mathrm{NIE}/\mathrm{ATE}$, which quantifies the fraction of the total effect mediated through early immune recovery.

To ensure a fair comparison, we use the same doubly robust estimator \citep{UnifiedMethodsCensored2003,DoubleRobustEstimatorAverage2006,DoubleDebiasedMachine2018} to estimate $(\theta_{0,0}, \theta_{1,1})$, incorporating random forest nuisance estimates with cross-fitting. For $\theta=\theta_{1,0}$, we implement the proposed QR2 and MQR2 methods, along with the other comparison methods described in Section~\ref{sec:simu}. To account for the randomness introduced by the cross-fitting procedure, we repeat cross-fitting 30 times and use the median of the resulting estimates as the final point estimate \citep{DoubleDebiasedMachine2018}.

\begin{table}[!t]
    \centering
    \spacingset{1}
    \caption{Mediation effect estimation based on the ACTG175 dataset. Est.: point estimate; Len.: length of the $95\%$ confidence intervals; $p$-value: p-value for testing the null hypothesis that the effect equals zero. \textit{Italics} denote the proposed methods. The nonparametric modeling methods are prefixed by $\circ$, and parametric modeling methods are prefixed by $\bullet$.}\label{table:ACTG}
    \begin{tabular}{l cc ccc ccc c}
    \toprule
    \textbf{Method} & \multicolumn{2}{c}{$\theta_{1,0}$} & \multicolumn{3}{c}{\textbf{NDE}} & \multicolumn{3}{c}{\textbf{NIE}} & \textbf{NIE}/\textbf{ATE} \\
    \cmidrule(lr){2-3} \cmidrule(lr){4-6} \cmidrule(lr){7-9} 
    & Est. & Len. & Est. & Len. & $p$-value & Est. & Len. & $p$-value & \\
    \midrule

    $\circ$ TR & 5.683 & 0.123 & 0.163 & 0.206 & 1.6e-04 & 0.082 & 0.026 & 1.8e-07 & 33.51\% \\ 
    $\circ$ O-TR & 5.690 & 0.131 & 0.170 & 0.213 & 1.3e-04 & 0.075 & 0.046 & 2.6e-04 & 30.88\% \\ 
    $\circ$ \textit{QR2} & 5.673 & 0.124 & 0.153 & 0.203 & 3.0e-04 & 0.092 & 0.024 & 9.6e-10 & 37.71\% \\
    \hdashline 
    $\bullet$ O-TR' & 5.700 & 0.081 & 0.180 & 0.173 & 7.6e-06 & 0.065 & 0.020 & 1.3e-06 & 26.50\% \\ 
    $\bullet$ \textit{MQR2} & 5.680 & 0.104 & 0.160 & 0.202 & 1.6e-04 & 0.085 & 0.030 & 4.0e-07 & 34.89\% \\ 

    \midrule
    \textbf{ATE} & \multicolumn{2}{c}{Est.: 0.245} & \multicolumn{2}{c}{Len.: 0.190} & \multicolumn{2}{c}{$p$-value: 5.8e-09} & & & \\
    \bottomrule
    \end{tabular}
\end{table}

As shown in Table~\ref{table:ACTG}, the proposed QR2 and MQR2 methods yield smaller estimates of $\theta_{1,0}$ than the existing TR, O-TR, and O-TR' methods, and thus result in smaller NDE and larger NIE estimates. Despite these differences, all five methods consistently identify positive and statistically significant NDE and NIE. The estimated mediation proportion ranges from 26.50\% to 37.71\%, indicating that nearly one-third of the treatment effect is mediated through early immune recovery. Compared with the other methods, the proposed approaches suggest a larger share of the causal effect operating through indirect paths related to early immune recovery.

\section{Discussion}
\label{sec:discuss}

In this work, we develop a new QR framework for causal mediation analysis. Within this framework, we propose two estimation approaches. The first is a general QR estimator based on a nonparametric modeling strategy that is compatible with flexible machine learning methods for nuisance estimation. The second is an MQR parametric modeling approach, designed to provide robust inference even when some regularized (generalized) linear estimates are misspecified. The proposed framework extends the existing TR literature by enlarging the model class under which valid estimation and inference can be achieved.

With treatment-induced confounders, \cite{IdentificationSemiparametricEfficiency2023} also proposed a QR estimation framework for mediation effects. However, their QR property relies on the presence of these additional treatment-induced confounders; in their absence, their approach reduces to the TR method. It is therefore natural to extend our QR framework to settings with treatment-induced confounding and to further expand the model class that supports valid inference, so as to enable reliable mediation analysis in a wider range of practical problems.

Our work provides a principled approach for robust causal inference involving cross-world counterfactuals. Beyond natural effects, it is also of interest to extend our methods to a broader class of cross-world functionals, including conditional and marginal interventional effects \citep{EffectDecompositionPresence2014,LongitudinalMediationAnalysis2017}.

\section*{Supplementary Material}
The supplementary material for ``Quadruply robust methods for causal mediation analysis'' contains additional results and proofs supporting the main findings. Section~\ref{apdx:method} offers further discussions and justifications for the proposed methods, while Section~\ref{apdx:num} presents supplementary numerical results. The proofs of the asymptotic properties for the QR and MQR estimators are given in Sections~\ref{apdx:QR} and \ref{apdx:MQR}.

\section*{Data Availability}
The ACTG175 dataset is available at \url{https://r-packages.io/datasets/ACTG175}.

\bibliography{mediation}

\newpage
\begin{appendices}
\allowdisplaybreaks[3]
\begin{center}
    {\large\bf SUPPLEMENTARY MATERIAL}
\end{center}

This supplementary document contains additional results and proofs supporting the main findings. Section~\ref{apdx:method} offers further discussions and justifications for the proposed methods, while Section~\ref{apdx:num} presents supplementary numerical results. The proofs of the asymptotic properties for the quadruply robust (QR) and model quadruply robust (MQR) estimators are given in Sections~\ref{apdx:QR} and \ref{apdx:MQR}. Unless otherwise stated, all notation and numbering follow those in the main text.

\paragraph*{Notations.}
We use the following notation throughout. Let $\P$ and $\E$ denote the probability measure and expectation induced by the joint distribution of the underlying random vector $\mathbf W:=(\{Y(a,\bs{m})\}_{a\in\{0,1\}, \bs{m}\in \mathcal{S}_M},\{\bM(a)\}_{a\in\{0,1\}}, A, \bX)$ (independent of the observed samples), respectively. 
For any $\alpha>0$, let $\psi_{\alpha}(x):=\exp(x^\alpha)-1$, $\forall x>0$. Specifically, $\psi_2(x)=\exp(x^2)-1$ and $\psi_{1}(x)=\exp(x)-1$. The $\psi_{\alpha}$-Orlicz norm $\|\cdot\|_{\psi_{\alpha}}$ of a random variable $X\in\R$ is defined as $\|X\|_{\psi_{\alpha}}:=\inf\{c>0:\E[\psi_{\alpha}(\lvert X\rvert/c)]\leq 1\}$. 
For any $r > 0$ and a measurable function $f$, we define $ \| f(\cdot) \|_{r,\P} := \{\E \lvert f(\bs{Z})\rvert^r\}^{1/r}$.
We denote the logistic function with $g(u)= {\exp(u)}/{(1+\exp(u))}$ for all $u\in\R$.
For any $\tilde{\mathbb{S}}\subseteq\mathbb{S}=(\mathbb{Z}_i)_{i=1}^N$, define $\P_{\tilde{\mathbb{S}}}$ as the joint distribution of $\tilde{\mathbb{S}}$ and $\E_{\tilde{\mathbb{S}}}(f)=\int f \mathrm{d}\P_{\tilde{\mathbb{S}}}$. Specifically, let $\mathbb{S}_k$ be the subset of $\mathbb{S}$ corresponding to the index sets $\mathcal I_k$ for $k\in \{1,2,\ldots, \K\}$.
To simplify the notation, we denote $\bM_0$ as $\bM(0)$ and $\bzS$ as $(\bX^\top,\bM(0)^\top)^\top$.  

Constants $c,C>0$, independent of $N$ and $d$, may change from one line to the other. For $r\geq1$, define the $l_r$-norm of a vector $\bs{z}$ with $\|\bs{z}\|_r:=(\sum_{j=1}^p\lvert\bs{z}_j\rvert^r)^{1/r}$, $\|\bs{z}\|_0 := \lvert\{j:\bs{z}_j\neq0\}\rvert$, and $\|\bs{z}\|_\infty:=\max_j\lvert\bs{z}_j\rvert$. Additionally, let $\bs{z}_S$ denote the sub-vector of $\bs{z}$ indexed by the set $S$.
Denote $\bs{e}_j$ as the vector whose $j$-th element is $1$ and other elements are $0$s. A $d$ dimensional vector of all ones and zeros are denoted with $\mathbf{1}_{d}$ and $\mathbf{0}_{d}$, respectively. For any symmetric matrices $\mathbf{A}$ and $\mathbf{B}$, $\mathbf{A}\succ\mathbf{B}$ denotes that $\mathbf{A}-\mathbf{B}$ is positive definite and $\mathbf{A}\succeq\mathbf{B}$ denotes that $\mathbf{A}-\mathbf{B}$ is positive semidefinite; denote $\lambda_{\min}(\mathbf{A})$ as the smallest eigenvalue of $\mathbf{A}$.

\section{Additional Discussions and Justifications}\label{apdx:method}

We provide additional discussions and justifications for the QR and MQR methods. Section~\ref{apdx:tau_alter} presents an alternative doubly robust (DR) strategy for estimating $\tau$ in the QR approach. Sections~\ref{apdx:unique} and \ref{apdx:identi} discuss the uniqueness of the target nuisance parameters and the correctness of the nuisance models in the MQR approach. Illustrative examples demonstrating the improved robustness of the QR and MQR methods are provided in Section~\ref{apdx:example}.

\subsection{An alternative DR estimator for \texorpdfstring{$\tau$}{} in Section \ref{sec:QR}}\label{apdx:tau_alter}

In addition to the DR estimator of $\tau$ in \eqref{def:tauhat}, an alternative strategy is to model the difference $\Delta(\bX):=\tau(\bX)-\tau_n(\bX)$. This difference can be estimated by solving
\begin{align*}
\hat\Delta(\cdot)\in\arg\min_{f \in \mathcal{G}_{\Delta}} N^{-1}\sum_{i=1}^N A_i \left[\hat q(\bS_i)\left\{Y_i-\hat\mu(\bS_i)\right\} - f(\bX_i)\right]^2,
\end{align*}
for a pre-specified function class $\mathcal{G}_{\Delta}$. An estimator of $\tau(\bX)$ is then obtained as $\hat\tau_n(\bX) + \hat\Delta(\bX)$. This difference-based approach can be advantageous when $\Delta(\bX)$ has a simpler structure and can be estimated more accurately; otherwise, directly modeling the target $\tau(\bX)$ as in \eqref{def:tauhat} may be more appropriate.

\subsection{Uniqueness of \texorpdfstring{$\bnu^*$}{} in Section \ref{sec:MQR}}\label{apdx:unique}
Below, we demonstrate the uniqueness of the target parameters $\bnu^*$. We begin by listing the Hessian matrices of the corresponding objective functions:
\begin{align*}
    \nabla^2_{\bepia}\E[\ell_1(\bW,\bepia)]&=\E\big[A\exp(-\bX^\top\bepia)\bX\bX^\top\big],\\
    \nabla^2_{\bepib}\E[\ell_2(\bW,\bepib)]&=\E\big[(1-A)\exp(\bX^\top\bepib)\bX\bX^\top\big],\\
    \nabla^2_{\beq}\E[\ell_3(\bW,\bnu_3^*,\beq)]&=\E\big[Ag^{-1}(\bX^\top\bepiar)\exp(\bS^\top\beq)\bS\bS^\top\big],\\
    \nabla^2_{\bemu}\E[\ell_4(\bW,\bepiar,\bepibr,\beqr,\bemu)]&=2\E\big[Ag^{-1}(\bX^\top\bepiar)\exp(\bS^\top\beqr)\bS\bS^\top\big],\\    
    \nabla^2_{\bettau}\E[\ell_5(\bW,\bnu_5^*,\bettau)]&=2\E\big[(1-A)\exp(\bX^\top\bepibr)\bX\bX^\top\big],\\
    \nabla_{\betau}^2\E[\ell_6(\bW,\bnu_6^*,\betau)]&=2\E[A\exp(-\bX^\top\bepiar)\bX\bX^\top].
\end{align*}

First, we use contradiction to show that 
\begin{align}
    \nabla^2_{\bepia}\E[\ell_1(\bW,\bepia)]\succ 0 \text{,\; for any \;} \bepia\in\R^{d_1}.\label{cond:succ0_pi_a}
\end{align}
Suppose there exists some $\bv,\bepia\in\R^{d_1}$, such that $\bv^\top\nabla^2_{\bepia}\E[\ell_1(\bW,\bepia)]\bv=0$ and $\|\bv\|_2=1$, then we have $A\bv^\top\bX=0$ almost surely since $\exp(-\bX^\top\bepia)>0$. It follows that $A(\bv^\top\bX)^2=0$ almost surely and hence $\E[A(\bv^\top\bX)^2]=0$. Further, we have $0=\lambda_{\min}(\E[A\bX\bX^\top])\geq \lambda_{\min}(\E[A\bS\bS^\top])$, which contradicts with Assumption~\ref{cond:basic2} (b). Therefore, \eqref{cond:succ0_pi_a} holds, the solution $\bepiar$ is unique and can be equivalantly defined as the solution of $\nabla_{\bepia}\E[\ell_1(\bW,\bepia)]=\bz_{d_1}$. 

Note that $\exp(\bX^\top\bepib)>0$ for any $\bepib\in\R^{d_1}$ and $\exp(\bS^\top\beq)>0$ for any $\beq\in\R^{d}$. Follow the similar procedure as above, we also have
\begin{align*}
    \nabla^2_{\bepib}\E[\ell_2(\bW,\bepib)]\succ 0 \text{,\; for any \;} \bepib\in\R^{d_1},\\
    \nabla^2_{\beq}\E[\ell_3(\bW,\bnu_3^*,\beq)]\succ 0 \text{,\; for any \;} \beq\in\R^{d}.
\end{align*}
Otherwise, it would contradict Assumption \ref{cond:basic2} (b). Therefore, $\bepibr$ and $\beqr$ are the unique solution of $\nabla_{\bepib}\E[\ell_2(\bW,\bepib)]=\bz_{d_1}$ and $\nabla_{\beq}\E[\ell_3(\bW,\bnu_3^*,\beq)]=\bz_{d}$, respectively, and can be equivalantly defined as their solution.

Furthermore, since 
\begin{align*}
    \nabla^2_{\bemu}\E[\ell_4(\bW,\bnu_4^*,\bemu)]=2\nabla^2_{\beq}\E[\ell_3(\bW,\bnu_3^*,\beq)]\succ 0 \text{,\; for any \;} \bemu\in\R^{d},\\
    \nabla^2_{\bettau}\E[\ell_5(\bW,\bnu_5^*,\bettau)]=2\nabla^2_{\bepib}\E[\ell_2(\bW,\bepib)]\succ 0 \text{,\; for any \;} \bettau\in\R^{d_1},\\
    \nabla^2_{\betau}\E[\ell_6(\bW,\bnu_6^*,\betau)]=2\nabla^2_{\bepia}\E[\ell_1(\bW,\bepia)]\succ 0 \text{,\; for any \;} \betau\in\R^{d_1},
\end{align*}
the solutions $\bemur$, $\bettaur$ and $\betaur$ are also unique.

\subsection{Correct specification of the nuisance models}\label{sec:justification}\label{apdx:identi}
Consider the following parametric nuisance classes:
\begin{align}
&\mathcal{G}_\pi=\{g(\bX^\top\bepi):\bepi\in\R^{d_1}\},\quad 
\mathcal{G}_q=\{\exp(\bS^\top\beq):\beq\in\R^d\},\quad 
\mathcal{G}_\mu=\{\bS^\top\bemu:\bemu\in\R^d\},\nonumber\\
&\mathcal{G}_{\tau}=\{\bX^\top\betau:\betau\in\R^{d_1}\},\quad 
\mathcal{G}_{\bs{m}}=\{\bphi^\top\bX:\bphi\in\R^{d_1\times d_2}\}. \label{def:parametrization}
\end{align}
Then the following lemma clarify the conditions under which the target nuisance functions $(\pi_a^*,\pi_b^*,q^*,\mu^*,\tau_n^*,\tau^*)$, defined through the proposed loss functions, coincide with the true nuisance functions.

\begin{lemma}\label{lem:r_identi}
The following statements hold:
(a) $\pi_a^*=\pi_b^*=\pi\Longleftrightarrow\pi\in\mathcal{G}_\pi$;
(b) $\mu^*=\mu\Longleftrightarrow\mu\in\mathcal{G}_\mu$;
(c) $\tau_n^*=\tau_n\Longleftrightarrow\tau_n\in\mathcal{G}_{\tau}\Longleftarrow\bs{m}\in\mathcal{G}_{\bs{m}}$;
(d) If either (a) holds or $\bs{m}\in\mathcal{G}_{\bs{m}}$, then $q^*=q\Longleftrightarrow q\in\mathcal{G}_q$;
(e) If either (b) holds or both (c) and (d) hold, then $\tau^*=\tau\Longleftrightarrow\tau\in\mathcal{G}_{\tau}$.
(f) $\tau^*=\tau$ if (b) and (c) hold.
\end{lemma}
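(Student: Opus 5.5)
The plan is to characterize every target in $\bnu^*$ through the first-order condition of its population loss, and then use uniqueness. Section~\ref{apdx:unique} shows that each Hessian is positive definite, so $\bnu^*$ is the unique root of its score equation. Every ``$\Longleftarrow$'' direction will then follow by checking that the true coefficient solves the score equation. Every ``$\Longrightarrow$'' direction is immediate, because the targets lie in the corresponding parametric class by construction.

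\emph{Parts (a)--(c).} For (a), the score of $\ell_1$ gives $\E[\{1-\pi(\bX)-\pi(\bX)e^{-\bX^\top\bepia}\}\bX]=\bz$. If $\pi=g(\bX^\top\bbeta_0)$, then $\pi e^{-\bX^\top\bbeta_0}=1-\pi$, so $\bbeta_0$ solves this equation. The score of $\ell_2$ gives $\E[\{(1-\pi)e^{\bX^\top\bepib}-\pi\}\bX]=\bz$, which $\bbeta_0$ also solves. Hence $\pi_a^*=\pi_b^*=\pi$.

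For (b), $\ell_4$ is a weighted least-squares loss on $A=1$ whose weight $w(\bS)=g^{-1}(\bX^\top\bepiar)\exp(\bS^\top\beqr)$ depends on $\bS$ only. Its score is $\E[Aw(\bS)\{\bS^\top\bemu-\mu(\bS)\}\bS]=\bz$ after conditioning on $(A,\bS)$, and a linear $\mu$ solves it.

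For (c), the weight in $\ell_5$ depends only on $\bX$. Since $\tau_n(\bX)=\E[\bS^\top\bemur\mid A=0,\bX]$, the score reduces to $\E[(1-A)e^{\bX^\top\bepibr}\{\bX^\top\bettau-\tau_n(\bX)\}\bX]=\bz$, so $\tau_n\in\mathcal G_\tau$ gives $\tau_n^*=\tau_n$. Moreover, $\tau_n(\bX)=\bX^\top\bbeta_{\mu,\bX}+\bs m(\bX)^\top\bbeta_{\mu,\bM}$, which is linear whenever $\bs m\in\mathcal G_{\bs m}$.

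\emph{Part (d).} Suppose $q=\exp(\bS^\top\bbeta_0)$, and write $w_a=g^{-1}(\bX^\top\bepiar)$ and $w_b=\{1-g(\bX^\top\bepibr)\}^{-1}$. Using $\E[q(\bS)\bS\mid A=1,\bX]=\E[\bS\mid A=0,\bX]$, the score of $\ell_3$ evaluated at $\bbeta_0$ becomes
\[
\E\big[\{\pi(\bX)w_a(\bX)-(1-\pi(\bX))w_b(\bX)\}(\bX^\top,\bs m(\bX)^\top)^\top\big].
\]
Under (a), $\pi w_a=(1-\pi)w_b=1$ and this vanishes. If instead $\bs m$ is linear, the vector is a linear map of $\bX$. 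The balancing identities from $\ell_1$ and $\ell_2$, namely $\E[(\pi w_a-1)\bX]=\bz$ and $\E[\{(1-\pi)w_b-1\}\bX]=\bz$, then make it vanish. In either case uniqueness gives $q^*=q$.

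\emph{Parts (e)--(f), the main step.} The score of $\ell_6$ is
\[
\E\big[Ae^{-\bX^\top\bepiar}\{q^*(\bS)(Y-\mu^*(\bS))+\tau_n^*(\bX)-\bX^\top\betau\}\bX\big]=\bz .
\]
I take the two branches separately.

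\emph{Branch (b): $\mu^*=\mu$.} Conditioning on $(A=1,\bS)$ kills the residual term, so the score equation reduces to $\E[Ae^{-\bX^\top\bepiar}\{\tau_n^*(\bX)-\bX^\top\betau\}\bX]=\bz$. Because $\tau_n^*$ is linear, its own coefficient solves this, and uniqueness gives $\tau^*=\tau_n^*$. Since $\mu^*=\mu$ we also have $\tau_n=\tau$. Combining with (c),
\[
\tau^*=\tau\iff\tau_n^*=\tau_n\iff\tau_n\in\mathcal G_\tau\iff\tau\in\mathcal G_\tau .
\]
This is the full equivalence, with no extra linearity assumption on $\tau_n$ beyond $\tau\in\mathcal G_\tau$ itself.

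\emph{Branch (c)+(d): $\tau_n^*=\tau_n$ and $q^*=q$.} First,
\[
\E[q(\bS)\{Y-\mu^*(\bS)\}\mid A=1,\bX]=\E[\mu(\bS)-\mu^*(\bS)\mid A=0,\bX]=\tau(\bX)-\tau_n(\bX).
\]
Substituting, the score equation becomes $\E[Ae^{-\bX^\top\bepiar}\{\tau(\bX)-\bX^\top\betau\}\bX]=\bz$. A linear $\tau$ solves this, and uniqueness gives $\tau^*=\tau$.

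For (f), assume (b) and (c). Then $\tau=\tau_n\in\mathcal G_\tau$, and the first branch of (e) yields $\tau^*=\tau$.

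The delicate points are the conditional-expectation identities in (d) and in the second branch of (e). Both rely on $q=f_0/f_1$ converting $A=1$ expectations into $A=0$ ones, together with the ignorability and consistency in Assumption~\ref{cond:basic}. These identities are where I expect the main care to be needed.
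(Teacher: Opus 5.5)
Your proposal is correct and follows the paper's proof essentially step for step. Both arguments rest on the uniqueness of each population minimizer, which comes from the positive-definite Hessians in Section~\ref{apdx:unique}, together with checking that the true coefficient solves the corresponding first-order condition, and both use the same change-of-measure identities through $q=f_0/f_1$ in parts (d) and (e). In the $\mu^*=\mu$ branch of (e) you first show $\tau^*=\tau_n^*$ and then chain through (c); this makes explicit the step $\tau_n^*=\tau_n=\tau$ that the paper uses silently when it drops the term $\bX^\top\bettaur-\bX^\top\betau^0$ from the $\ell_6$ score, and you likewise spell out the implication $\bs{m}\in\mathcal{G}_{\bs{m}}\Rightarrow\tau_n\in\mathcal{G}_\tau$, which the paper only notes in the surrounding text.
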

The working models $\pi_a^*$ and $\pi_b^*$ are correctly specified whenever the true propensity score $\pi$ is logistic. Analogous results apply to $\mu^*$ and $\tau_n^*$, although their definitions depend on other nuisance targets. This equivalence does not extend automatically to $q^*$ and $\tau^*$. For $\tau^*$, even if $\tau(\bX)=\bX^\top\betau^0$ for some $\betau^0\in\R^{d_1}$, linearity alone does not imply $\tau^*=\tau$, since the population minimizer $\betaur$ may differ from $\betau^0$. Additional conditions, namely $\mu^*=\mu$ or $(q^*,\tau_n^*)=(q,\tau_n)$, are required. These conditions are already satisfied under the model class $\mathcal M_{\mathrm{QR}}$, so under the assumptions of Theorem~\ref{thm:QR_robust}, $\tau^*$ is correctly specified whenever $\tau$ is linear. In addition, $(\mu^*,\tau_n^*)=(\mu,\tau_n)$ implies the correct specification of $tau^*$, which represents a special case of Lemma~\ref{lem:compatible} under parametric specifications.

The requirement for correct specification of $q^*$ is more involved. Suppose $q(\bS)=\exp(\bS^\top\beq^0)$ for some $\beq^0\in\R^d$. Then $(\beqr,q^*)=(\beq^0,q)$ if either $\pi(\bX)$ is logistic or the mediator conditional mean $\bs{m}(\bX)=\E[\bM\mid A=0,\bX]$ is linear. Since $\ell_3$ depends on the propensity scores, it is straightforward to see that a logistic $\pi$ leads to correct specification of $q^*$. In addition, if $\bs{m}(\bX)=\bphi^{0\top}\bX$ for some $\bphi^{0}\in\R^{d_1\times d_2}$, then substituting $\beq^0$ into the population score yields
\begin{align*}
\E[\bnabla_{\beq}\ell_3(\bW,\bnu_3^*,\beq^0)]
&\overset{(i)}{=}\E\!\left[\Big\{\frac{A}{g(\bX^\top\bepiar)}-\frac{1-A}{1-g(\bX^\top\bepibr)}\Big\}
\begin{pmatrix}\bX\\ \bs{m}(\bX)\end{pmatrix}\right] \\
&\overset{(ii)}{=}\begin{pmatrix}\bm{I}_{d_1}\\ \bphi^{0\top}\end{pmatrix}
\E[\bnabla_{\bettau}\bpsi(\bW,\bnu^*)]
\overset{(iii)}{=}\bf0,
\end{align*}
where (i) follows from Assumption~\ref{cond:basic}, (ii) holds by $\bs{m}(\bX)=\bphi^{0\top}\bX$, and (iii) follows from $\E[\bnabla_{\bettau}\bpsi(\bW,\bnu^*)]=\E\left[\Big\{\frac{A}{g(\bX^\top\bepiar)}-\frac{1-A}{1-g(\bX^\top\bepibr)}\Big\}\bX\right]=\E\left[\Big\{1-\frac{1-A}{1-g(\bX^\top\bepibr)}\Big\}\bX\right]=\bf0
$ based on the first-order conditions for $(\ell_1,\ell_2)$. Uniqueness of the minimizer then gives $\beqr=\beq^0$ and hence $q^*=q$. 

The function $\bs{m}(\bX)$ and the class $\mathcal{G}_{\bs{m}}$ are introduced only to discuss correct specification of $q$. Unlike other nuisance functions, $\bs{m}(\bX)$ is not estimated in our procedure. Instead, it is sufficient to estimate $\tau_n(\bX)=\E(\bS^\top\bemur\mid A=0,\bX)=(\bX^\top\;\bs{m}(\bX)^\top)\bemur$, which only requires estimation of a linear combination of $\bs{m}(\bX)$. Even when $\bs{m}$ is linear with $\bs{m}(\bX)=\bphi^{0\top}\bX$, estimating the full matrix $\bphi^0$, whose row and column dimensions are both large, is typically difficult, and our method avoids imposing any sparse structure on this matrix. 
At the same time, correct specification of $q$ may rely on the linearity of $\bs{m}$, and linearity of $\tau_n$ alone is not sufficient, except in the special case where the mediator $\bM$ is one-dimensional.

\begin{proof}[Proof of Lemma~\ref{lem:r_identi}]
    As justified in Section~\ref{apdx:unique}, all of the target nuisance parameters $(\bepiar$, $\bepibr$, $\beqr$, $\bemur$, $\bettaur, \betaur)$ are the unique minimizers of the loss functions $\ell_1$--$\ell_6$, respectively. 

    \textbf{(a).} 
    If $\pi_a^*=\pi$, then $\pi(\bX)=g(\bX^\top\bepiar)$ and hence $\pi(\bX)=g(\bX^\top\bepi^0)$ with $\bepi^0=\bepiar$. 
    
    Conversely, if there exists some $\bepi^0\in\R^{d_1}$, such that $\pi(\bX)=g(\bX^\top\bepi^0)$ holds, then $\bepi^0$ is a solution of $\nabla_{\bepia}\E[\ell_1(\bW,\bepia)]=\bz_{d_1}$, since
    $$\nabla_{\bepia}\E[\ell_1(\bW,\bepi^0)]=\E\big[\{1-Ag^{-1}(\bX^\top\bepi^0)\}\bX\big]=\E\big[\{1-\pi(\bX)/\pi(\bX)\}\bX\big]=\bz_{d_1}.$$
    
    Furthermore, $\bepi^0$ is also a solution of $\nabla_{\bepib}\E[\ell_2(\bW,\bepib)]=\bz_{d_1}$, since
    \begin{align*}
        \nabla_{\bepib}\E[\ell_2(\bW,\bepi^0)]&=\E\big[\big\{1-(1-A)\{1-g(\bX^\top\bepi^0)\}^{-1}\big\}\bX\big]\\
        &=\E\big[\big\{1-\{1-\pi(\bX)\}/\{1-\pi(\bX)\}\big\}\bX\big]=\bz_{d_1}.
    \end{align*}
    By the uniqueness of $\bepiar$ and $\bepibr$, we have $\bepiar=\bepi^0=\bepibr$ and hence $\pi_a^*=\pi_b^*=\pi$.
    \vspace{0.3em}

    \textbf{(b).} If $\mu^*=\mu$, then $\mu(\bS)=\bS^\top\bemur$ and hence $\mu(\bS)=\bS^\top\bemu^0$ with $\bemu^0=\bemur$.
    
    Conversely, if there exists some $\bemu^0\in\R^{d}$, such that $\mu(\bS)=\bS^\top\bemu^0$ holds, then $\bemu^0$ is a solution of $\nabla_{\bemu}\E[\ell_4(\bW,\bepiar,\bepibr,\beqr,\bemu)]=\bz_{d}$, since
    \begin{align*}
        &\nabla_{\bemu}\E[\ell_4(\bW,\bepiar,\bepibr,\beqr,\bemu^0)]\\
        =&-2\E\big[Ag^{-1}(\bX^\top\bepiar)\exp(\bS^\top\beqr)(Y-\bS^\top\bemu^0)\bS\big]\\
        =&-2\E\Big[\pi(\bX)g^{-1}(\bX^\top\bepiar)\E\big[\exp(\bS^\top\beqr)\E[Y-\bS^\top\bemu^0\mid A=1,\bS]\bS\mid \bX\big]\Big]=\bz_{d}.
    \end{align*}
    By the uniqueness of $\bemur$, we have $\bemur=\bemu^0$ and hence $\mu^*(\bS)=\mu(\bS)$.
    \vspace{0.3em}

    \textbf{(c).} If $\tau_n^*=\tau_n$, then $\tau_n(\bX)=\bX^\top\bettaur$ and hence $\tau_n(\bX)=\bX^\top\bettau^0$ with $\bettau^0=\bettaur$. 

    Conversely, if there exists some $\bettau^0\in\R^{d_1}$, such that $\tau_n(\bX)=\bX^\top\bettau^0$ holds, then $\bettau^0$ is a solution of $\nabla_{\bettau}\E[\ell_5(\bW,\bnu_5^*,\bettau)]=\bz_{d_1}$, since
    \begin{align*}
        &\nabla_{\bettau}\E[\ell_5(\bW,\bnu_5^*,\bettau^0)]\\
        =&-2\E\big[(1-A)\exp(\bX^\top\bepibr)(\bS^\top\bemur-\bX^\top\bettau^0)\bX\big]\\
        =&-2\E\big[(1-A)\exp(\bX^\top\bepibr)\E[\mu^*(\bS)-\bX^\top\bettau^0\mid A=0,\bX]\bX\big]\overset{(i)}{=}\bz_{d_1},
    \end{align*}
    where (i) follows from the definition of $\tau_n$: $\tau_n(\bX)=\E[\mu^*(\bS)\mid A=0,\bX]$. By the uniqueness of $\bettaur$, we have $\bettaur=\bettau^0$ and hence $\tau_n^*(\bX)=\tau_n(\bX)$.
    \vspace{0.3em}

    \textbf{(d).} If $q^*=q$, then $q(\bS)=\exp(\bS^\top\beqr)$ and hence $q(\bS)=\exp(\bS^\top\beq^0)$ with $\beq^0=\beqr$.
    
    Conversely, if there exists some $\beq^0\in\R^{d_1}$, such that $q(\bS)=\exp(\bS^\top\beq^0)$ holds, we want to proof that when one of (a) and (c) holds, $\beq^0$ is a solution of $\nabla_{\beq}\E[\ell_3(\bW,\bnu_3^*,\beq)]=\bz_{d}$, i.e.,
    $$
    \E\left\{\left[g^{-1}(\bX^\top\bepiar)A\exp(\bS^\top\beq^0)-(1-A)/\{1-g(\bX^{\top}\bepibr)\}\right\}\bS\right]=\bz_{d}.
    $$
    if (a) holds, we have
    \begin{align*}
    &\E\left[\left[A\cdot g^{-1}(\bX^\top\bepiar)\exp(\bS^\top\beq^0)-(1-A)/\{1-g(\bX^{\top}\bepibr)\}\right]\bS\right]\\
    =&\E\left[\E\left[\left[A\cdot g^{-1}(\bX^\top\bepiar)\exp(\bS^\top\beq^0)\}\right]\bS\mid\bX\right]-\E\left[\left[(1-A)/\{1-g(\bX^{\top}\bepibr)\right]\bS\mid\bX\right]\right]\\
    =&\E\left[\E\left[\bM\mid A=0,\bX\right]-\E\left[\bM\mid A=0,\bX\right]\right]=\bz_{d_1}.
    \end{align*}
    If $\bs{m}\in\mathcal{G}_{m}$, we denote $\bs{m}$ as $\bs{m}(\bX)=\bX^\top\bphi$ with some $\bphi^0\in\R^{d_1\times d_2}$.  
    By the construction of $\bepiar$ and $\bepibr$, we have   
    \begin{align*}
    &\E\left[\left\{\frac{A\cdot\exp(\bS^\top\beq^0)}{g(\bX^\top\bepiar)}-\frac{1-A}{1-g(\bX^{\top}\bepibr)}\right\}\bS\right]\\
    =&\E\left[\left\{\frac{\pi(\bX)}{g(\bX^\top\bepiar)}-\frac{1-\pi(\bX)}{1-g(\bX^{\top}\bepibr)}\right\}\E[\bS\mid A=0,\bX]\right]\\
    =&\E\left[\left\{\frac{\pi(\bX)}{g(\bX^\top\bepiar)}-\frac{1-\pi(\bX)}{1-g(\bX^{\top}\bepibr)}\right\}\left(\begin{array}{c}\bX\\{\bphi^{0\top}\bX}\end{array}\right)\right]=\bz_{d}.
    \end{align*}
    Hence, provided that (a) or $\bs{m}\in\mathcal{G}_{m}$ holds,  $\beq^0$ is a solution of $\nabla_{\beq}\E[\ell_3(\bW,\bnu_3^*,\beq)]=\bz_{d}$. By the uniqueness of $\beqr$, we have $\beqr=\beq^0$ and hence $q^*(\bS)=q(\bS)$.
    \vspace{0.3em}

    \textbf{(e).} If $\tau^*=\tau$, then $\tau(\bX)=\bX^\top\betaur$ and hence $\tau(\bX)=\bX^\top\betau^0$ with $\betau^0=\betaur$.
    
    Conversely, we assume that $\tau(\bX)=\bX^\top\betau^0$ hold with some $\betau^0\in\R^{d_1}$.By the uniqueness of $\betaur$, it suffices to show that $\betau^0$ is a solution of is a solution of $\nabla_{\betau}\E[\ell_6(\bW,\bnu_6^*,\betau)]=\bz_{d}$ if either (b) or (c)(d) hold. 

    Suppose (b) holds. Then $\mu=\mu^*$ and $\tau_n(\bX)=\tau(\bX)=\bX^\top\betau^0$. 
    It follows that 
    \begin{align*}
        &\E\left[A\exp(-\bX^\top\bepiar)\left\{\exp(\bS^\top\beqr)(Y-\bS^\top\bemur)+\bX^\top\bettaur-\bX^\top\betau^0\right\}\bX\right]\\
        =&\E\left[\pi(\bX)\exp(-\bX^\top\bepiar)\left\{\E\left[q^*(\bS)\E[Y-\bS^\top\bemur\mid A=1,\bS]\mid A=1,\bX\right]\right\}\bX\right]=\bz_{d},
    \end{align*}
    i.e., $\betau^0$ is a solution of $\nabla_{\betau}\E[\ell_6(\bW,\bnu_6^*,\betau)]=\bz_{d}$. 
    Otherwise, let (c) and (d) hold, i.e., $\exp(\bS^\top \beqr)=q(\bS)$ and $\bX^\top\bettaur=\tau_n(\bX)$, then
    \begin{align*}
        &\E\left[A\exp(-\bX^\top\bepiar)\left\{\exp(\bS^\top\beqr)(Y-\bS^\top\bemur)+\bX^\top\bettaur-\bX^\top\betau^0\right\}\bX\right]\\
        =&\E\left[\pi(\bX)\exp(-\bX^\top\bepiar)\left\{\E\left[q(\bS)\E[Y-\bS^\top\bemur\mid A=1,\bS]\mid A=1,\bX\right]+\tau_n(\bX)-\tau(\bX)\right\}\bX\right]\\
        \overset{(i)}{=}& \E\left[\pi(\bX)\exp(-\bX^\top\bepiar)\left\{\tau(\bX)-\tau_n(\bX)+\tau_n(\bX)-\tau(\bX)\right\}\bX\right]=\bz_{d},
    \end{align*} 
    where (i) holds since $\tau(\bX)= \E[q(\bS) Y \mid A = 1, \bX]$. Therefore, $\betau^0$ is a solution of $\nabla_{\betau}\E[\ell_6(\bW,\bnu_6^*,\betau)]=\bz_{d}$ as well. 

    \vspace{0.3em}

    \textbf{(f).} If $(\mu^*, \tau_n^*)=(\mu,\tau_n)$, then
    $\mu(\bS)=\mu^*(\bS)=\bX^\top\bemur$, and therefore $\tau(\bX)=\tau_n(\bX)=\tau_n^*(\bX)=\bX^\top\betaur$. Note that 
    \begin{align*}
        &\E\left[A\exp(-\bX^\top\bepiar)\left\{\exp(\bS^\top\beqr)(Y-\bS^\top\bemur)+\bX^\top\bettaur-\bX^\top\bettaur\right\}\bX\right]\\
        =&\E\left[\pi(\bX)\exp(-\bX^\top\bepiar)\left\{\E\left[q^*(\bS)\E[Y-\mu(\bS)\mid A=1,\bS]\mid A=1,\bX\right]\right\}\bX\right]=\bz_{d},
    \end{align*}
    $\bettaur$ is a solution of $\nabla_{\betau}\E[\ell_6(\bW,\bnu_6^*,\betau)]=\bz_{d}$. By (e), it implies that $\betaur=\bettaur$, i.e., $\tau^*(\bX)=\tau_n^*(\bX)=\tau(\bX)$.
\end{proof}

\subsection{Examples for the new scenarios \texorpdfstring{$\mathcal M_d$}{} and \texorpdfstring{$\mathcal M_d'$}{}}\label{apdx:example}

In this section, we provide examples to illustrate the enhanced robustness of the proposed methods and to demonstrate that $\mathcal M_\mathrm{MQR}=\mathcal M_\mathrm{OTR}\cup\mathcal M_d'$ (and $\mathcal M_\mathrm{QR}=\mathcal M_\mathrm{OTR}\cup\mathcal M_d$) strictly contains both $\mathcal M_\mathrm{TR}$ and $\mathcal M_\mathrm{OTR}$. To establish this, it suffices to present examples showing that $\mathcal M_d'\neq\emptyset$, since $\mathcal M_d\supset\mathcal M_d'$ and $\mathcal M_\mathrm{OTR}\supset\mathcal M_\mathrm{TR}$.

\begin{example}
Consider a simple setting where the mediator $M$ is univariate. We generate independent $\bX\sim \mathrm{Unif}[-1,1]^{d_1}$ and $\varepsilon_1,\varepsilon_2\sim\mathcal{N}(0,1)$. The propensity score is given by a non-logistic function $\pi(\bX)=\P(A=1\mid\bX)=0.3X_1+0.5$, where $X_1$ denotes the first coordinate of $\bX$. We assume that this functional form is unknown to the analyst and is modeled using a misspecified logistic working model. The data are generated as
\begin{align*}
M = X_1 + A + \varepsilon_1, \qquad Y = X_1 + M + 1 - A + (M-X_1)^2 + \varepsilon_2.
\end{align*}
Under this construction, $M(1)=X_1+1+\varepsilon_1$, $M(0)=X_1+\varepsilon_1$, and $Y(1,m)=X_1+m+(m-X_1)^2+\varepsilon_2$. In this setting, the density ratio satisfies $q(\bS)=\exp(-M+X_1+0.5)$ and hence belongs to $\mathcal G_q$. The outcome regression $\mu(\bS)=X_1+M+(M-X_1)^2$ is nonlinear, whereas the mediator conditional mean $m(\bX)=X_1$ and the cross-world conditional mean $\tau(\bX)=2X_1+1$ are both linear.

Consequently, we have $q\in\mathcal G_q$, $m\in\mathcal G_m$, and $\tau\in\mathcal G_\tau$, but $\mu\notin\mathcal G_\mu$ and $\pi\notin\mathcal G_\pi$. The DGP therefore belongs to both $\mathcal M_\mathrm{MQR}$ and $\mathcal M_\mathrm{QR}$, but not to $\mathcal M_\mathrm{TR}$ or $\mathcal M_\mathrm{OTR}$. Theorem~\ref{thm:QR_consistent} then implies that the QR estimator is $\sqrt N$-consistent when $d_1$ is fixed and remains consistent when $d_1$ grows with $N$. Moreover, Theorem~\ref{thm:r_infer} ensures $\sqrt N$-consistency and asymptotic normality of the MQR estimator even when $d_1$ is large relative to $N$. In contrast, existing TR and O-TR methods fail to achieve even consistency in this setting.

To see why these existing methods fail, we examine the nuisance estimation stage. Let $\hat\mu_{\mathrm{otr}}$ be a linear estimator obtained by regressing $Y\sim\bS$ given $A=1$, and let $\hat\tau_{\mathrm{otr}}$ be a nested linear estimator obtained by regressing $\hat\mu_{\mathrm{otr}}(\bS)\sim\bX$ given $A=0$, both constructed without the specialized weights used in our approach. These estimators converge to the population limits $\mu_{\mathrm{otr}}^*(\bS)=-X_1+3M$ and $\tau_{\mathrm{otr}}^*(\bX)=2X_1$. If $\hat\tau_{\mathrm{tr}}$ is instead constructed through the integral representation in \eqref{eq:joint2}, it still converges to the same limit $\tau_{\mathrm{tr}}^*(\bX)=\tau_{\mathrm{otr}}^*(\bX)=2X_1$, even when $f_0$ is known. Since the true outcome regression $\mu$ is nonlinear, the misspecification error is $\|\mu_{\mathrm{otr}}^*-\mu\|_{\P,2}\approx 2.12\neq0$, as expected. More importantly, although the true target $\tau$ is linear, we still obtain a nonzero error $\|\tau_{\mathrm{otr}}^*-\tau\|_{\P,2}=1\neq0$, which arises from the inaccurate initial working model for $\mu$. Because neither $\pi$ nor $\tau$ can be consistently estimated under the considered parametric models, existing TR and O-TR methods do not guarentee a consistent estimator of the final parameter $\theta$ in this example.
\end{example}

\section{Supplementary Numerical Results}
\label{apdx:num}
This section presents supplementary results for the numerical studies. Specifically, Section~\ref{apdx:detail_simu} reports additional simulation results that could not be included in the main text due to space constraints, while Section~\ref{apdx:detail_real} provides a detailed description of the higher-order and interaction expansions of covariates and mediators used in the parametric modeling approaches for the real data application.

\subsection{Additional simulation results}\label{apdx:detail_simu}

We report the results of a simulation study conducted under complete misspecification of all parametric models. Using the notation from Section~\ref{sec:simu}, we consider the following simulation setting:

\textbf{Setting (iii)}: Univariate binary mediator, all parametric models misspecified. Generate $\bX_{i,0}\sim\mathcal{N}_{d_1}(\bs{0},\bs{\Sigma})$, where $\bs{\Sigma}_{j,k}=0.3^{|j-k|}$. Let $A_i\mid\bX_i\sim\mathrm{Bernoulli}(\tilde{g}(\bX_i^\top\bbeta_{\pi}))$, $M_i\mid A_i,\bX_i\sim\mathrm{Bernoulli}(0.2+0.15(1-A_i)(1+\idf_{\{\bX_{i,2}>0\}})+0.3g(\bX_i^\top\beq))$ where $\beq=(0,\bs{1}_3,\bz_{d_1-3}^\top)$. Let $Y=1+\omega_{i,3}\bX_{i,1}+\omega_{i,2}\bX_{i,2}+\omega_{i,1}\bX_{i,3}+|\bX_i-\bs{1}_{d_1}|^{\top}\bemu''+\varepsilon_i$, where $\omega_{i,j}=0.3\idf_{\{\bX_{i,j}>0\}}+(1-M_i)\bX_{i,j}$ and $\bemu''=(0.5\bs{1}_3,\bz_{d_1-3}^\top)$. Choose $d_1=51$ and $d_2=1$.

\begin{table}[t!]
\footnotesize
\centering
\spacingset{1}
\caption{Simulation results under Setting (iii). The rest of the caption details remain the same as those in Table~\ref{table:config1}.}
\begin{tabular}{lccccclcccc}
    \Xhline{1pt}
    \multicolumn{5}{c}{$N=600$, $(d_1,d_2)=(50,1)$}&\quad&\multicolumn{5}{c}{$N=1000$, $(d_1,d_2)=(50,1)$}\\
    \cmidrule(lr){1-5}\cmidrule(lr){7-11}
    \textbf{Method}&\textbf{Bias}&\textbf{RMSE}& \textbf{AL}&\textbf{AC}&\quad&\textbf{Method}&\textbf{Bias}&\textbf{RMSE}& \textbf{AL}&\textbf{AC}\\
    ~~Oracle&0.015&0.184&0.643&0.98&\quad&~~Oracle&0.004&0.136&0.495&0.92\\
    \hdashline
    $\circ$ TR&0.167&0.213&0.601&0.85&\quad&$\circ$ TR&0.142&0.188&0.458&0.74\\       
    $\circ$ O-TR&0.255&0.300&0.728&0.77&\quad&$\circ$ O-TR&0.235&0.276&0.556&0.64\\   
    $\circ$ \textit{QR1}&0.130&0.195&0.632&0.92&\quad&$\circ$ \textit{QR1}&0.109&0.160&0.478&0.85\\   
    $\circ$ \textit{QR2}&0.082&0.165&0.606&0.95&\quad&$\circ$ \textit{QR2}&0.051&0.116&0.454&0.98\\
    \hdashline
    $\bullet$ O-TR'&0.409&0.406&0.707&0.42&\quad&$\bullet$ O-TR'&0.362&0.390&0.550&0.18\\
    $\bullet$ \textit{MQR1}&0.415&0.464&0.785&0.48&\quad&$\bullet$ \textit{MQR1}&0.435&0.515&0.625&0.23\\  
    $\bullet$ \textit{MQR2}&0.320&0.348&0.659&0.54&\quad&$\bullet$ \textit{MQR2}&0.309&0.332&0.493&0.24\\ 
    \Xhline{1pt}
    \multicolumn{5}{c}{$N=1400$, $(d_1,d_2)=(50,1)$}&\quad&\multicolumn{5}{c}{$N=1800$, $(d_1,d_2)=(50,1)$}\\
    \cmidrule(lr){1-5}\cmidrule(lr){7-11}
    \textbf{Method}&\textbf{Bias}&\textbf{RMSE}& \textbf{AL}&\textbf{AC}&\quad&\textbf{Method}&\textbf{Bias}&\textbf{RMSE}& \textbf{AL}&\textbf{AC}\\
    ~~Oracle&0.024&0.129&0.453&0.94&\quad&~~Oracle&0.028&0.107&0.396&0.98\\
    \hdashline
    $\circ$ TR&0.142&0.168&0.384&0.72&\quad&$\circ$ TR&0.132&0.158&0.337&0.68\\       
    $\circ$ O-TR&0.214&0.241&0.458&0.56&\quad&$\circ$ O-TR&0.188&0.216&0.400&0.57\\  
    $\circ$ \textit{QR1}&0.096&0.130&0.398&0.90&\quad&$\circ$ \textit{QR1}&0.080&0.118&0.344&0.90\\   
    $\circ$ \textit{QR2}&0.040&0.101&0.377&0.94&\quad&$\circ$ \textit{QR2}&0.029&0.080&0.328&0.96\\
    \hdashline
    $\bullet$ O-TR'&0.329&0.343&0.449&0.16&\quad&$\bullet$ O-TR'&0.339&0.345&0.398&0.06\\
    $\bullet$ \textit{MQR1}&0.397&0.421&0.533&0.24&\quad&$\bullet$ \textit{MQR1}&0.385&0.418&0.449&0.16\\  
    $\bullet$ \textit{MQR2}&0.294&0.305&0.409&0.19&\quad&$\bullet$ \textit{MQR2}&0.295&0.304&0.361&0.13
\end{tabular}
\label{table:config3}
\end{table} 

As shown in Table~\ref{table:config3}, all parametric modeling methods (O-TR', MQR1, and MQR2) display relatively large bias and root mean squared error (RMSE). This aligns with theoretical expectations, since parametric methods require at least some nuisance models to be correctly specified, whereas all parametric models are misspecified under Setting (iii). Among these methods, MQR2 still achieves smaller estimation error compared with O-TR', while MQR1 performs poorly due to the inefficient use of training samples, as already noted in Section~\ref{sec:simu}.

In comparison, the nonparametric modeling methods (TR, O-TR, QR1, QR2) show better performance. Although QR1 still suffers from the limited use of training samples, it achieves smaller estimation error than the existing TR and O-TR methods. Overall, QR2 provides the best performance under Setting (iii), with the smallest RMSE and average coverage (AC) close to the nominal 95\% level. These results highlight the advantages of the proposed QR framework over existing TR methods, as well as the benefits of nonparametric modeling compared with parametric approaches, particularly when the mediator dimension is low and the true data-generating process deviates from the assumed parametric models.

\subsection{Implementation details in Section~\ref{sec:real}}\label{apdx:detail_real}

In the following, we describe the implementation details of the parametric modeling approaches used in the real data analysis of Section~\ref{sec:real}. 

The original mediator vector $\bM$ consists of the log-transformed CD4 (CD420) and CD8 (CD820) cell counts at week 20 ($\pm$ 5 weeks). The original baseline covariate vector $\mathbf{X} = (\mathbf{X}_{\mathrm{c}}^\top, \mathbf{X}_{\mathrm{b}}^\top)^\top$ includes five continuous variables $\mathbf{X}_{\mathrm{c}}$, namely age, weight, Karnofsky score ({karnof}), and baseline CD4 ({cd40}) and CD8 ({cd80}) counts, as well as seven binary indicators $\mathbf{X}_{\mathrm{b}}$ for hemophilia ({hemo}), homosexual activity ({homo}), history of intravenous drug use ({drugs}), race, gender, antiretroviral history ({str2}), and symptomatic status ({symptom}).

To improve the flexibility of parametric modeling methods (O-TR' and MQR2), we expand the baseline covariate vector $\mathbf{X}$ to include: 

(i) the linear and quadratic terms of $\mathbf{X}_{\mathrm{c}}$:
\begin{align*}
    \mathrm{I} := (\mathrm{age} + \mathrm{weight} + \mathrm{karnof} + \mathrm{CD40} + \mathrm{CD80} + 1)^2;
\end{align*}

(ii) the linear and interaction terms within $\mathbf{X}_{\mathrm{b}}$:
\begin{align*}
    \mathrm{II} &:= (\mathrm{hemo} + \mathrm{homo} + \mathrm{drugs} + \mathrm{race} + \mathrm{gender} + \mathrm{str2} + \mathrm{symptom} + 1)^2 \\
    &\quad - \mathrm{hemo}^2 - \mathrm{homo}^2 - \mathrm{drugs}^2 - \mathrm{race}^2 - \mathrm{gender}^2 - \mathrm{str2}^2 - \mathrm{symptom}^2;
\end{align*}

(iii) the interactions between $\mathbf{X}_{\mathrm{b}}$ and components of (i):
\begin{align*}
    \mathrm{III} := \mathrm{I} \times (\mathrm{hemo} + \mathrm{homo} + \mathrm{drugs} + \mathrm{race} + \mathrm{gender} + \mathrm{str2} + \mathrm{symptom}).
\end{align*}

The quadratic terms of the binary variables are excluded, as they are identical to the linear terms. The expanded covariate set is defined as $\mathrm{I} + \mathrm{II} + \mathrm{III}$, with a total dimension of 188.

Similarly, the mediator vector $\mathbf{M}$ is augmented with its linear and quadratic components, as well as interactions with (i) and (ii), i.e.,
\begin{align*}
    (\mathrm{CD40} + \mathrm{CD80} + 1)^2 + (\mathrm{CD40} + \mathrm{CD80}) \times (\mathrm{I} + \mathrm{II}).
\end{align*}
This results in an expanded mediator set of dimension 101.

\section{Proofs of the results for the quadruply robust (QR) estimator}
\label{apdx:QR}
This section focuses on the QR estimator $\hat{\theta}_{\mathrm{QR}}$ for $\theta$, supplementing the results established in Section~\ref{sec:QR} of the main text. Note that the construction of $\hat{\theta}_{\mathrm{QR}}$ and the choices of parameters associated with the sample splitting procedure follow from the Algorithm~\ref{alg:QR}.

\subsection{Proof of the results in Section \ref{sec:dr_tau}}
\label{apdx:dr_rep}
\begin{proof}[Proof of Theorem~\ref{thm:dr_tau}]

Recall that $f_0(\bS)=f(\bM\mid A=0,\bX)$, $\mu(\bS)=\E[Y\mid A=1,\bS]$ and $\tau(\bX)=\E[\mu(\bS)\mid A=0,\bX]$. For any $\bx\in\R^{d_1}$, $\bs{m}\in\mathcal{S}_M$ and $\bs{s}=(\bx^\top,\bs{m}^\top)^\top$, we have
\begin{align*}
    f_0(\bs{s})&\overset{(i)}{=}f(\bM(0)=\bs{m}\mid A=0,\bX=\bx)\overset{(ii)}{=}f(\bM(0)=\bs{m}\mid \bX=\bx)\\
    &\overset{(ii)}{=}f(\bM(0)=\bs{m}\mid A=1,\bX=\bx),
\end{align*}
where (i) follows from Assumption~\ref{cond:basic} (a); (ii) and (iii) hold by Assumption~\ref{cond:basic} (b). Similarlly, 
\begin{align*}
    &\E[Y(1,\bM(0))\mid \bM(0)=m, A=1,\bX=\bx]\\
    &\overset{(i)}{=}\E[Y(1,\bs{m})\mid \bM(0)=\bs{m}, A=1,\bX=\bx]\overset{(ii)}{=}\E[Y(1,\bs{m})\mid A=1,\bX=\bx]\\
    &\overset{(iii)}{=}\E[Y(1,\bs{m})\mid \bM=\bs{m}, A=1,\bX=\bx]=\mu(\bs{s}),
\end{align*}
where (i) holds by Assumption~\ref{cond:basic} (a); (ii) and (iii) follow from $Y(1,\bs{m})\independent \{\bM(0),\bM(1)\}\mid A=1,\bX$.
By tower rule,
\begin{align*}
    &\E[Y(1,\bM(0))\mid A=1,\bX=\bx]\\
    =&\int_{\mathcal{S}_M}\E[Y(1,\bM(0))\mid \bM(0)=\bs{m}, A=1,\bX=\bx]\,f(\bM(0)=\bs{m}|A=1,\bX=\bx)\,\d \bs{m}\\
    =&\int_{\mathcal{S}_M}\mu(\bs{s})\,f_0(\bs{s})\,\d m=\tau(\bX).
\end{align*}

Next, we show the doubly robustness of proposed representation.
Recall the definition $\tau_{n}(\bX)=\E[\mu^*(\bS)\mid A=0,\bX]$. 
When $\mu^*=\mu$, we have $$\tau_{n}(\bX)=\E[\mu^*(\bS)\mid A=0,\bX]=\E[\mu(\bS)\mid A=0,\bX]=\tau(\bX),$$ then
\begin{align*}
    &\tau_{n}(\bX)+\E[q^*(\bS)\{Y-\mu^*(\bS)\}\mid A=1, \bX]\\
    =&\tau(\bX)+\E[q^*(\bS)\{Y-\mu(\bS)\}\mid A=1,\bX]\\
    \overset{(i)}{=}&\tau(\bX)+\E\left[q^*(\bS)\{\E\big[Y\mid A=1,\bM,\bX\big]-\mu(\bS)\mid A=1,\bX\right]\\
    \overset{(ii)}{=}&\tau(\bX),
\end{align*}
where (i) holds by tower rule, and (ii) holds by the difinition of $\mu$.

Otherwise, if $q^*=q$, we have 
\begin{align*}
    &\tau_{n}(\bX)+\E[q^*(\bS)\{Y-\mu^*(\bS)\}\mid A=1, \bX]\\
    \overset{(i)}{=}&\tau_{n}(\bX)+\E\left[q(\bS)\E\big[Y-\mu^*(\bS)\mid A=1,\bM,\bX\big]\mid A=1,\bX\right]\\
    \overset{(ii)}{=}&\tau_{n}(\bX)+\E[\mu(\bS)-\mu^*(\bS)\mid A=0,\bX]\\
    \overset{(iii)}{=}&\tau_n(\bX)+\tau(\bX)-\tau_n(\bX)=\tau(\bX),
\end{align*}
where (i) holds by tower rule; (ii) follows from the definition of $\mu$; (iii) holds by the definition of $\tau_n$ and $\tau$.
\end{proof}

The following lemma validates that the estimates defined in \eqref{def:tauhat} brings no extra requirement for model correct specification, as its population level limit $\tau^*$ satisfies that $\tau^*=\tau$ when $\mu^*=\mu$ and $\tau_n^*=\tau_n$. 
\begin{lemma}
\label{lem:compatible}
Suppose $\tau_n^*$ and $\tau^*$ are defined as 
\begin{align*}
    \tau_n^*(\bX)&\in\mathop{\mathrm{argmin}}\limits_{\tau_n\in\mathcal{G}_{\tau}} \E[(1-A)\{\mu^*(\bS)-\tau_n(\bX)\}^2], \text{\;\;and}\\
    \tau^*(\bX)&\in\mathop{\mathrm{argmin}}\limits_{\tau\in\mathcal{G}_{\tau}} \E\big[A\big\{q^*(\bS)\{Y-\mu^*(\bS)\}+\tau_n^*(\bX)-\tau(\bX)\big\}^2\big].
\end{align*}
Let $\mu^*(\bS)=\mu(\bS)$, then $\tau^*(\bX)=\tau_{n}^*(\bX)$ almost surely. Addtionally, if $\tau_n^*=\tau_n$, we have $\tau^*=\tau$.
\end{lemma}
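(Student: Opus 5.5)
The plan is to compare the objective defining $\tau^*$ with a shifted version of the objective defining $\tau_n^*$, using that the residual term $q^*(\bS)\{Y-\mu(\bS)\}$ has conditional mean zero given $(A=1,\bS)$ and hence contributes only a constant once $\mu^*=\mu$.

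\textbf{Step 1: strip the residual.} Fix any $f\in\mathcal G_\tau$. Expand
$$\E\big[A\{q^*(\bS)(Y-\mu(\bS))+\tau_n^*(\bX)-f(\bX)\}^2\big]=\E\big[A\,q^{*2}(\bS)(Y-\mu(\bS))^2\big]+2C(f)+\E\big[A\{\tau_n^*(\bX)-f(\bX)\}^2\big].$$
The cross term is
$$C(f):=\E\big[A\,q^*(\bS)\{Y-\mu(\bS)\}\{\tau_n^*(\bX)-f(\bX)\}\big].$$
Conditioning on $(A,\bS)$ and using $\E[Y-\mu(\bS)\mid A=1,\bS]=0$ shows $C(f)=0$. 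The first term does not depend on $f$. Therefore the objective defining $\tau^*$ equals a constant plus $\E[A\{\tau_n^*(\bX)-f(\bX)\}^2]$.

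\textbf{Step 2: identify the minimizer.} The function $f=\tau_n^*$ lies in $\mathcal G_\tau$ and makes the remaining term zero, which is the smallest possible value. So $\tau_n^*$ is a minimizer. Any other minimizer must satisfy $A\{\tau_n^*(\bX)-f(\bX)\}=0$ almost surely. Taking conditional expectations given $\bX$ gives $\pi(\bX)\{\tau_n^*(\bX)-f(\bX)\}^2=0$. Positivity (Assumption~\ref{cond:basic}(c)) gives $\pi(\bX)\ge c_0>0$, so $f=\tau_n^*$ almost surely. Hence $\tau^*=\tau_n^*$ a.s.

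\textbf{Step 3: the second claim.} If in addition $\tau_n^*=\tau_n$, recall $\tau_n(\bX)=\E[\mu^*(\bS)\mid A=0,\bX]$. With $\mu^*=\mu$ this equals $\E[\mu(\bS)\mid A=0,\bX]$, which is $\tau(\bX)$ by \eqref{rep:nested}, as established in the proof of Theorem~\ref{thm:dr_tau}. So $\tau^*=\tau_n^*=\tau_n=\tau$.

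\textbf{Main obstacle.} The only delicate point is the well-definedness of the expansion in Step 1. We need $\E[A q^{*2}(Y-\mu)^2]<\infty$ and the cross term integrable. Boundedness of $q^*$ (Assumption~\ref{cond:QR_basic}(a)) together with finite second moments of $Y$ and of functions in $\mathcal G_\tau$ (implicit in the definition of the argmin) suffices. We also need uniqueness to be read as uniqueness almost surely on the support of $\bX$; positivity is what makes this work.
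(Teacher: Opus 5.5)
Your proposal is correct and follows the paper's proof essentially line for line. You expand the square and eliminate the cross term using $\E[Y-\mu(\bS)\mid A=1,\bS]=0$, reduce the objective to a constant plus $\E[\pi(\bX)\{\tau_n^*(\bX)-f(\bX)\}^2]$ and invoke positivity, and then obtain $\tau^*=\tau$ from $\tau_n=\E[\mu(\bS)\mid A=0,\bX]=\tau$ when $\mu^*=\mu$. Your explicit observation that $\tau_n^*\in\mathcal{G}_\tau$ attains the value zero (so every minimizer must make the weighted term vanish) and your integrability remark make precise steps that the paper leaves implicit.
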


\begin{proof}[Proof of Lemma~\ref{lem:compatible}]
If $\mu^*(\bS)=\mu(\bS)$, then
\begin{align*}
    &\E\big[A\big\{q^*(\bS)\{Y-\mu^*(\bS)\}+\tau_{n}^*(\bX)-\tau^*(\bX)\big\}^2\big]\\
    =&\E[A\{q^*(\bS)(Y-\mu^*(\bS))\}^2]+2\E[Aq^*(\bS)(Y-\mu^*(\bS))\{\tau_{n}^*(\bX)-\tau^*(\bX)\}]\\
    &+\E[A\{\tau_{n}^*(\bX)-\tau^*(\bX)\}^2],
\end{align*}
where the interaction term is zero, since
\begin{align*}
    &\E[Aq^*(\bS)(Y-\mu^*(\bS))\{\tau_{n}^*(\bX)-\tau^*(\bX)\}]\\
    =&\E\big[q^*(\bS)\E[A(Y-\mu^*(\bS))\mid A=1,\bS]\{\tau_{n}^*(\bX)-\tau^*(\bX)\}\big]\\
    =&\E\big[q^*(\bS)\E[(Y-\mu(\bS))\mid A=1,\bS]\{\tau_{n}^*(\bX)-\tau^*(\bX)\}\big]\\
    =&0.
\end{align*}
Therefore, $\tau^*$ is the minimizer of $\E[A\{\tau_{n}^*(\bX)-\tau^*(\bX)\}^2]$. By the tower rule, 
\begin{align*}
    \E[A\{\tau_{n}^*(\bX)-\tau^*(\bX)\}^2]=\E[\pi(\bX)\{\tau_{n}^*(\bX)-\tau^*(\bX)\}^2].
\end{align*}
Since $\P(c_0<\pi(\bX)<1-c_0)=1$, we have $\tau^*(\bX)=\tau_{n}^*(\bX)$ almost surely. Additionally, if $\tau_n^*=\tau_n$, then $\tau_n(\bX)=\E[\mu^*(\bS)\mid A=0,\bX]=\E[\mu(\bS)\mid A=0,\bX]=\tau(\bX)$. Therefore
$\tau^*(\bX)=\tau_n^*(\bX)=\tau_n(\bX)=\tau(\bX)$.
\end{proof}

\begin{proof}[Proof of Theorem \ref{thm:QR_robust}]
Recall that $\bpsi^*(\bW)=\bpsi^\dagger(\bW)-\varDelta^*(\bW)$, the proof for every case is provided below. 

\textbf{(1)} If $\pi=\pi^*$ and $q=q^*$, then 
\begin{align*}
    \E[\bpsi^\dagger(\bW)]
    &\overset{(i)}=\E\left[\frac{\pi(\bX)}{\pi(\bX)}\E\big[\E\left[q(\bS)\{Y-\mu^*(\bS)\}\mid A=1,\bS\right]\mid A=1,\bX\big]\right]\\
    &\;\;\;+\E\left[\frac{1-\pi(\bX)}{1-\pi(\bX)}\E\left[\mu^*(\bS)-\tau^*(\bX)\mid A=0,\bX\right]\right]+\E[\tau^*(\bX)]\\
    &\overset{(i)}{=}\E[\tau(\bX)-\tau_{n}(\bX)]+\E[\tau_{n}(\bX)-\tau^*(\bX)]+\E[\tau^*(\bX)]=\theta,
\end{align*}
where (i) holds by tower rule, and (ii) holds by the definition of $\tau_{n}^*$ and that $\tau(\bX)=\E[q(\bS)Y\mid A=1,\bX]$. Additionally,
\begin{align*}
    \E[\varDelta^*(\bW)]&=\E\left[\E\left[\frac{A}{\pi(\bX)}-\frac{1-A}{1-\pi(\bX)}\middle|\bX\right]\{\tau^*(\bX)-\tau_{n}^*(\bX)\}\right]=0.
\end{align*}
Thus, $\E[\bpsi^*(\bW)]=\E[\bpsi^\dagger(\bW)]-\E[\varDelta^*(\bW)]=\theta$.
\vspace{0.3em}

\textbf{(2)} If $\pi=\pi^*$ and $\mu=\mu^*$, then 
\begin{align*}
    \E[\bpsi^\dagger(\bW)]
    \overset{(i)}=&\E\left[\frac{\pi(\bX)}{\pi(\bX)}\E\big[\E\left[q^*(\bS)\{Y-\mu(\bS)\}\mid A=1,\bS\right]\mid A=1,\bX\big]\right]\\
    &+\E\left[\frac{1-\pi(\bX)}{1-\pi(\bX)}\E\left[\mu(\bS)-\tau^*(\bX)\mid A=0,\bX\right]\right]+\E[\tau^*(\bX)]\\
    \overset{(ii)}{=}&0+\E[\tau(\bX)-\tau^*(\bX)]+\E[\tau^*(\bX)]=\theta,
\end{align*}
where (i) holds by tower rule, and (ii) holds by the definition of $\mu(\bS)$. Recall the proof of (1), $\E[\varDelta^*(\bW)]=0$ holds when $\pi=\pi^*$. Hence, $\E[\bpsi^*(\bW)]=\E[\bpsi^\dagger(\bW)]-\E[\varDelta^*(\bW)]=\theta$.

\textbf{(3)} If $\tau=\tau^*$ and $\mu=\mu^*$, we have
\begin{align*}
    \E[\bpsi^\dagger(\bW)]
    \overset{(i)}=&\E\left[\frac{\pi(\bX)}{\pi^*(\bX)}\E\big[\E\left[q^*(\bS)\{Y-\mu(\bS)\}\mid A=1,\bS\right]\mid A=1,\bX\big]\right]\\
    &+\E\left[\E\left[\frac{1-\pi(\bX)}{1-\pi^*(\bX)}\{\mu(\bS)-\tau(\bX)\}\mid A=0,\bX\right]\right]+\E[\tau(\bX)]\\
    \overset{(ii)}{=}&0+0+\E[\tau(\bX)]=\theta,
\end{align*}
where (i) holds by tower rule, and (ii) follows from the definitions of $\mu(\bS)$ and $\tau(\bX)$. Additionally,
\begin{align*}
    \E[\varDelta^*(\bW)]&=\E\left[\E\left[\frac{A}{\pi^*(\bX)}-\frac{1-A}{1-\pi^*(\bX)}\middle |\bX\right]\{\tau^*(\bX)-\tau^*(\bX)\}\right]=0.
\end{align*}
Therefore, $\E[\bpsi^*(\bW)]=\E[\bpsi^\dagger(\bW)]-\E[\varDelta^*(\bW)]=\theta$.
\vspace{0.3em}

\textbf{(4)} When $\tau=\tau^*$, $\tau_{n}=\tau_{n}^*$ and $q=q^*$ hold, we have 
\begin{align*}
    \E[\bpsi^\dagger(\bW)]
    \overset{(i)}=&\E\left[\frac{\pi^*(\bX)}{\pi(\bX)}\E\big[\E\left[q(\bS)\{Y-\mu^*(\bS)\}\mid A=1,\bS\right]\mid A=1,\bX\big]\right]\\
    &+\E\left[\E\left[\frac{1-\pi(\bX)}{1-\pi^*(\bX)}\{\mu^*(\bS)-\tau(\bX)\}\mid A=0,\bX\right]\right]+\E[\tau(\bX)]\\
    \overset{(ii)}=&\E\left[\frac{\pi(\bX)}{\pi^*(\bX)}\{\tau(\bX)-\tau_n(\bX)\}\right]\\
    &+\E\left[\frac{1-\pi(\bX)}{1-\pi^*(\bX)}\{\tau_n(\bX)-\tau(\bX)\}\right]+\E[\tau(\bX)]\\
    =&\E\left[\left\{\frac{\pi(\bX)}{\pi^*(\bX)}-\frac{1-\pi(\bX)}{1-\pi^*(\bX)}\right\}\{\tau(\bX)-\tau_{n}(\bX)\}\right]+\theta\\
    \overset{(iii)}=&\E[\varDelta^*(\bW)]+\theta,
\end{align*} 
where (i) holds by tower rule; (ii) follows from the definition of $\tau_n$ and that $\tau(\bX)=\E[q(\bS)Y\mid A=1,\bX]$; (iii) holds since
\begin{align*}
    \E[\varDelta^*(\bW)]=&\E\left[\E\left[\frac{A}{\pi^*(\bX)}-\frac{1-A}{1-\pi^*(\bX)}\middle|\bX\right]\{\tau^*(\bX)-\tau_{n}^*(\bX)\}\right]\\
    =&\E\left[\left\{\frac{\pi(\bX)}{\pi^*(\bX)}-\frac{1-\pi(\bX)}{1-\pi^*(\bX)}\right\}\{\tau(\bX)-\tau_{n}(\bX)\}\right].
\end{align*}
Hence, $\E[\bpsi^*(\bW)]=\E[\bpsi^\dagger(\bW)]-\E[\varDelta^*(\bW)]=\theta$.
\end{proof}

\subsection{Proof of the results in Section \ref{sec:QR_thm}}
Denote $n=N/\K$. Following the construction of $\hat{\theta}_{\mathrm{QR}}$ in Algorithm~\ref{alg:QR}, we have 
\begin{align}
    \hat{\theta}_{\mathrm{QR}}-\theta&=N^{-1}\sum\limits_{k=1}^{\K}\sum\limits_{j\in\I_k}\{\hat{\bpsi}^{(-k)}(\bW_j)-\bpsi^*(\bW_j)\}
    +N^{-1}\sum\limits_{k=1}^{\K}\sum\limits_{j\in\I_k}\{\bpsi^*(\bW_j)-\theta\}\nonumber\\
    &=\K^{-1}\sum\limits_{k=1}^{\K}\Delta_{k,0}+\K^{-1}\sum\limits_{k=1}^{\K}\Delta_{k,1}+\K^{-1}\sum\limits_{k=1}^{\K}\Delta_{k,2},\label{decom:QR_theta}
\end{align}
where
\begin{align}
    \Delta_{k,0}&=n^{-1}\sum\limits_{j\in\I_k}\{\bpsi^*(\bW_j)-\theta\},\label{def:QR_delta_k_0}\\
    \Delta_{k,1}&=n^{-1}\sum\limits_{j\in\I_k}\left\{\hat{\bpsi}^{(-k)}(\bW_j)-\bpsi^*(\bW_j)-\E[\hat{\bpsi}^{(-k)}(\bW_j)-\bpsi^*(\bW_j)]\right\},\label{def:QR_delta_k_1}\\
    \Delta_{k,2}&=n^{-1}\sum\limits_{j\in\I_k}\E[\hat{\bpsi}^{(-k)}(\bW_j)-\bpsi^*(\bW_j)]=\E[\hat{\bpsi}^{(-k)}(\bW)-\bpsi^*(\bW)].\label{def:QR_delta_k_2}
\end{align}
The analyses of $\Delta_{k,0}$, $\Delta_{k,1}$ and $\Delta_{k,2}$ are provided in Lemmas~\ref{lem:Delta_k_0}, \ref{lem:Delta_k_1} and \ref{lem:Delta_k_2}, repectively. We begin with a technical lemma that bounds the $L_2$ distance 
between $\hat{\mu}$ and $\mu^*$ under the distribution of $\bzS$ by the corresponding distance under distribution of $\bS$.
\begin{lemma}
Under Assumptions~\ref{cond:basic} and \ref{cond:QR_basic} (a), we have$$\normp{\{\mu^*(\bzS)-\hat{\mu}(\bzS)\}}{2}\leq c_0^{-1/2}\normp{\{\mu^*(\bS)-\hat{\mu}(\bS)\}}{2}.$$\label{lem:mu_relation} 
\end{lemma}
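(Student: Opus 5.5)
The plan is a change of measure from the distribution of $\bzS=(\bX^\top,\bM(0)^\top)^\top$ to that of $\bS=(\bX^\top,\bM^\top)^\top$, combined with the positivity bounds in Assumption~\ref{cond:basic}(c). Write $h(\bs{s}):=\{\mu^*(\bs{s})-\hat{\mu}(\bs{s})\}^2\ge 0$. Under cross-fitting, $\hat\mu$ is built from a sample independent of $\bW$, so I will treat it as a fixed function and take expectations conditionally on that training sample.

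The first step is to identify the law of $\bM(0)$ given $\bX$. Exactly as in the proof of Theorem~\ref{thm:dr_tau}, consistency and ignorability give
\[
f(\bM(0)=\bs{m}\mid\bX=\bx)=f(\bM(0)=\bs{m}\mid A=0,\bX=\bx)=f_0(\bx,\bs{m}).
\]
By the tower rule this yields
\[
\E[h(\bzS)]=\E\Big[\int h(\bX,\bs{m})f_0(\bX,\bs{m})\,\d\bs{m}\Big].
\]

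The second step is to compare this with the marginal density of $\bM$ given $\bX$. That density is
\[
f(\bs{m}\mid\bX)=\pi(\bX)f_1(\bX,\bs{m})+\{1-\pi(\bX)\}f_0(\bX,\bs{m}).
\]
Since $1-\pi(\bX)\ge c_0$, we get the pointwise bound $f_0\le c_0^{-1}f(\cdot\mid\bX)$. Because $h\ge0$, this gives
\[
\E[h(\bzS)]\le c_0^{-1}\E\Big[\int h(\bX,\bs{m})f(\bs{m}\mid\bX)\,\d\bs{m}\Big]=c_0^{-1}\E[h(\bS)].
\]
Taking square roots gives the claim with constant $c_0^{-1/2}$.

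There is a minor technical point. If $\bM$ has no density, the same argument goes through with the conditional laws: $P_{\bM(0)\mid\bX}=P_{\bM\mid A=0,\bX}\le c_0^{-1}P_{\bM\mid\bX}$ as measures. The only step needing care is the identification of the law of $\bM(0)$ given $\bX$, which is already established in the proof of Theorem~\ref{thm:dr_tau}. The bound on $q$ is not even needed here, only $\pi\le 1-c_0$.
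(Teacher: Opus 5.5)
Your proof is correct and is essentially the paper's argument. Discarding the $\pi(\bX)f_1$ component of the mixture density of $\bM$ given $\bX$ is the same as the paper's step $\E[h(\bS)]\geq\E[(1-A)h(\bS)]$, with $h=(\mu^*-\hat{\mu})^2$; both arguments then use consistency and $\bM(0)\independent A\mid\bX$ to identify the $A=0$ part with the law of $\bzS$, and finish with $1-\pi(\bX)\geq c_0$. You are more explicit than the paper that ignorability is needed at that step (the paper cites only consistency there), and you are right that the bound on $q$ plays no role.
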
    
\begin{proof}
We have
\begin{align*}
    \normp{\mu^*(\bS)-\hat{\mu}(\bS)}{2}
    =& \sqrt{\E\big[\{\mu^*(\bS)-\hat{\mu}(\bS)\}^2\big]} \\
    \geq& \sqrt{\E\big[(1-A)\{\mu^*(\bS)-\hat{\mu}(\bS)\}^2\big]} \\
    \overset{(i)}=& \sqrt{\E\left[\{1-\pi(\bX)\}\E\left[\{\mu^*(\bzS)-\hat{\mu}(\bzS)\}^2 \mid A=0, \bX\right]\right]} \\
    \overset{(ii)}\geq& \sqrt{c_0\, \E\left[\E\left[\{\mu^*(\bzS)-\hat{\mu}(\bzS)\}^2 \mid \bX\right]\right]},
\end{align*}
where (i) holds by Assumption~\ref{cond:basic}~(a), and (ii) follows from Assumption~\ref{cond:basic}~(c), which implies that $1-\pi(\bX)\ge c_0>0$ holds almost surely. 
After rearrangement, the desired inequality follows.
\end{proof}

\begin{lemma}\label{lem:Delta_k_0}
Under the condition of Theorem~\ref{thm:QR_robust}, further let Assumptions \ref{cond:basic} and \ref{cond:QR_basic} hold. Then for each $k\in\{1,2,\ldots,\K\}$, as $N\to\infty$,
\begin{align*}
    \Delta_{k,0}=O_p\left(\frac{1}{\sqrt{n}}\left\{\normp{\varrho}{2}+\normp{\varepsilon}{2}+\normp{\zeta}{2}+\normp{\xi}{2}\right\}\right),
\end{align*}
where $\Delta_{k,0}$ is defined in \eqref{def:QR_delta_k_0}.
\end{lemma}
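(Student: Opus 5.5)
Since $\Delta_{k,0}$ is an average of $n$ i.i.d. terms $\bpsi^*(\bW_j)-\theta$ over the fold $\I_k$, and $\E[\bpsi^*(\bW)]=\theta$ under the model correctness conditions of Theorem~\ref{thm:QR_robust}, the plan is to apply Chebyshev's inequality. This gives $\Delta_{k,0}=O_p(n^{-1/2}\normp{\bpsi^*(\bW)-\theta}{2})$. It then suffices to show that $\normp{\bpsi^*(\bW)-\theta}{2}\le C\{\normp{\varrho}{2}+\normp{\varepsilon}{2}+\normp{\zeta}{2}+\normp{\xi}{2}\}$ for a constant $C$ depending only on $c_0$.

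To obtain this bound, I will rewrite $\bpsi^*(\bW)-\theta$ in terms of $\varepsilon,\zeta,\varrho,\xi$ using consistency (Assumption~\ref{cond:basic}(a)). On $\{A=1\}$ we have $Y-\mu^*(\bS)=\varepsilon$. On $\{A=0\}$ we have $\bS=\bzS$, so $\mu^*(\bS)-\tau_n^*(\bX)=\varrho$. For the last two terms, I will write $\tau_n^*-\tau^*=\zeta-\varrho$ and $\tau^*(\bX)-\theta=\xi-(\tau(\bX)-\tau^*(\bX))$. The difficulty is that $\tau-\tau^*$ is not among the listed quantities. To handle it, I will bound $\normp{\tau-\tau^*}{2}$ by combining $\tau(\bX)-\theta=\xi$ with $\tau^*(\bX)=\mu^*(\bzS)-\zeta$. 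This gives $|\tau^*(\bX)-\theta|\le|\xi|+|\tau^*-\tau|$, and $\normp{\tau^*-\tau}{2}\le\normp{\tau^*-\theta}{2}+\normp{\xi}{2}$. Alternatively, I can simply write $\tau^*(\bX)-\theta$ as $\E[\mu^*(\bzS)\mid\bX]-\E[\zeta\mid\bX]-\theta$ and handle it directly. The cleanest decomposition is
\[
\bpsi^*(\bW)-\theta=\frac{Aq^*(\bS)\varepsilon}{\pi^*(\bX)}+\frac{(1-A)\varrho}{1-\pi^*(\bX)}+\frac{A(\zeta-\varrho)}{\pi^*(\bX)}\cdot(-1)^{0}+\{\tau^*(\bX)-\theta\},
\]
where I use $\tau_n^*-\tau^*=(\mu^*(\bzS)-\tau^*)-(\mu^*(\bzS)-\tau_n^*)=\zeta-\varrho$. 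The weights are bounded by Assumption~\ref{cond:QR_basic}(a): $1/\pi^*\le c_0^{-1}$, $1/(1-\pi^*)\le c_0^{-1}$, and $q^*\le c_0^{-2}$. Minkowski's inequality then bounds the first three terms by $C(\normp{\varepsilon}{2}+\normp{\varrho}{2}+\normp{\zeta}{2})$. Since $\bzS$ is a potential-outcome vector defined for every unit, $\varrho$ and $\zeta$ make sense even on $\{A=1\}$.

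The main obstacle is the last term, $\tau^*(\bX)-\theta$. I plan to write $\tau^*(\bX)-\theta=\xi+\{\tau^*(\bX)-\tau(\bX)\}$ and then bound $\tau^*-\tau$ using the case structure of Theorem~\ref{thm:QR_robust}. In cases (b)/(d), or whenever $\tau^*=\tau$, this term vanishes. Otherwise, I will use Jensen's inequality: $|\tau^*(\bX)-\theta|=|\E[\mu^*(\bzS)-\zeta\mid\bX]-\theta|$. Writing $\mu^*(\bzS)-\theta=\zeta+\tau^*(\bX)-\theta$ is circular, so instead I will use the conditional expectations of $\varepsilon$ and $\xi$. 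When $\mu^*=\mu$, we have $\E[\mu^*(\bzS)\mid\bX]=\tau(\bX)$, so $\tau^*-\theta=\xi-\E[\zeta\mid\bX]$. When $\mu^*\ne\mu$, I will write $\mu^*(\bzS)=\E[Y(1,\bM(0))\mid\bzS]-\E[\varepsilon_0\mid\bzS]$, where $\varepsilon_0$ is $\varepsilon$ evaluated at $\bM(0)$. I will then use Lemma~\ref{lem:mu_relation}-type change-of-measure arguments with positivity to bound $\normp{\varepsilon_0}{2}\le c_0^{-1/2}\cdot c_0^{-1}\normp{\varepsilon}{2}$ via $q\le c_0^{-2}$. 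Either way, Jensen's inequality gives $\normp{\tau^*-\theta}{2}\le\normp{\xi}{2}+\normp{\zeta}{2}+C\normp{\varepsilon}{2}$, which completes the bound.
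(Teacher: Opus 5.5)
Your proposal is correct and follows essentially the same route as the paper: Chebyshev on the fold average (using $\E[\bpsi^*(\bW)]=\theta$ from Theorem~\ref{thm:QR_robust}), then Minkowski on the same four-term split with weights bounded via Assumption~\ref{cond:QR_basic}(a). The remaining term $\tau^*-\tau$ is controlled by $\zeta$ and $\varepsilon$ through conditional Jensen plus positivity, which is the content of the paper's \eqref{bound:tau-tau^*}. Your identity $\tau-\tau^*=\E[\varepsilon_0+\zeta\mid\bX]$ already covers every case at once, so the case analysis you sketch (including the claim that the term vanishes in case (b)) is unnecessary.
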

\begin{proof}
Note that
\begin{align*}
    \normp{\Delta_{k,0}}{2}=\leq \frac{1}{\sqrt{n}}\normp{\bpsi^*(\bW)-\theta}{2}.
\end{align*}
Hence, by Chybeshev's inequality, it suffices to show that 
\begin{align}
    \normp{\bpsi^*(\bW_j)-\theta}{2}=O_p\left(\normp{\varrho}{2}+\normp{\varepsilon}{2}+\normp{\zeta}{2}+\normp{\xi}{2}\right).\label{order:gen_bias}
\end{align}
By Minkowski's inequality, we have
\begin{align*}
    \normp{\bpsi^*(\bW_j)-\theta}{2}\leq \sum\limits_{t=1}^4\normp{I_t}{2},
\end{align*}
where
\begin{align*}
    I_1&=\frac{A}{\pi^*(\bX)}q^*(\bS)\{Y-\mu^*(\bS)\},\\
    I_2&=\frac{1-A}{1-\pi^*(\bX)}\{\mu^*(\bS)-\tau_{n}^*(\bS)\},\\
    I_3&=\frac{A}{\pi^*(\bX)}\{\tau_{n}^*(\bX)-\tau^*(\bX)\},\\
    I_4&=\tau^*(\bX)-\theta.
\end{align*}
We then condition on the following event
\begin{align}
    \mathcal{E}_0:=\{\P(c_0\leq \hat{\pi}(\bX)\leq 1-c_0)=1, \P(c_0^{2}\leq \hat{q}(\bS)\leq c_0^{-2})=1\}.\label{def:E0}
\end{align}
Under Assumption~\ref{cond:QR_basic} (a), the event $\mathcal{E}_0$ occurs with probability approaching one. 
For the first term $I_1$, we have
\begin{align*}
    \normp{I_1}{2}&\leq c_0^{-3}\sqrt{\E[A\{Y-\mu^*(\bS)\}^2]}\overset{(i)}{\leq} c_0^{-3}\sqrt{\E\big[\E[A\{Y-\mu^*(\bS)\}^2\mid \bX]\big]}\\
    &\overset{(ii)}{\leq} c_0^{-3}\sqrt{\E[A\{Y(1,\bM)-\mu^*(\bS)\}^2]}\overset{(iii)}{\leq} c_0^{-3}\normp{\varepsilon}{2},
\end{align*} 
where (i) holds by the tower rule; (ii) follows from Assumption~\ref{cond:basic}~(a); (iii) holds by Assumption~\ref{cond:QR_basic} (c) and the fact that $|A|<1$. Similarly, we have
\begin{align*}
    \normp{I_2}{2}&\leq c_0^{-1}\sqrt{\E[(1-A)\{\mu^*(\bzS)-\tau_{n}^*(\bX)\}^2]}\leq c_0^{-1}\normp{\varrho}{2}.
\end{align*}
By Minkowski's inequality, we further have
\begin{align*}
    \normp{I_3}{2}&\leq c_0^{-1}\sqrt{\E[\{\tau_{n}^*(\bX)-\mu^*(\bzS)+\mu^*(\bzS)-\tau^*(\bX)\}^2]}\leq c_0^{-1}(\normp{\varrho}{2}+\normp{\zeta}{2}).\\
    \normp{I_4}{2}&\leq \normp{\tau^*(\bX)-\tau(\bX)}{2}+\normp{\tau(\bX)-\theta}{2}\overset{(i)}{\leq}\normp{\varrho}{2}+\normp{\zeta}{2}+\normp{\xi}{2},
\end{align*}
where (i) holds by \eqref{bound:tau-tau^*}. Collectively, we obtain \eqref{order:gen_bias}.
\end{proof}

\begin{lemma}\label{lem:Delta_k_1}

    \emph{(a)} Under the condition of Theorem~\ref{thm:QR_robust}, further let Assumptions \ref{cond:basic} and \ref{cond:QR_basic} hold. Recall that $\Delta_{k,1}$ defined in \eqref{def:QR_delta_k_1}, we have that for each $k\in\{1,2,\ldots,\K\}$, as $N\to\infty$,
    \begin{align*}      
            \Delta_{k,1}=O_p\left(\frac{1}{\sqrt{n}}(\bar{r}_{\mu}+\bar{r}_{\tau}+\bar{r}_{\tau_{n}}+\normp{\varepsilon}{2}+\normp{\varrho}{2}+\normp{\zeta}{2})\right).
    \end{align*}
    \emph{(b)} Furtherly suppose that $\pi^*(\bX)$, $q^*(\bS)$, $\mu^*(\bS)$ and $\tau_n^*(\bX)$ are correctly specified. Then
    \begin{align*}      
        \Delta_{k,1}=O_p\left(\frac{1}{\sqrt{n}}(\bar{r}_{\mu}+\bar{r}_{\tau}+\bar{r}_{\tau_{n}}+\bar{r}_{\pi}+\bar{r}_{q})\right).
    \end{align*}
\end{lemma}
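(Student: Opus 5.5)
Conditional on the training folds $\I_{-k}$, the estimated nuisances $(\hat\pi,\hat q,\hat\mu,\hat\tau_n,\hat\tau)=(\hat{\pi}^{-k},\hat{q}^{-k},\hat{\mu}^{-k},\hat{\tau}_{n}^{-k},\hat{\tau}^{-k})$ are fixed functions, independent of $\{\bW_j\}_{j\in\I_k}$. So $\Delta_{k,1}$ is a centered average of $n$ i.i.d. terms. Its conditional second moment is at most $n^{-1}\E[\{\hat\bpsi^{(-k)}(\bW)-\bpsi^*(\bW)\}^2\mid\I_{-k}]$, and conditional Chebyshev then gives $\Delta_{k,1}=O_p(n^{-1/2}\|\hat\bpsi^{(-k)}-\bpsi^*\|_{\P,2})$. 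We work on the event $\mathcal E_0$ in \eqref{def:E0}, whose probability tends to one. The task therefore reduces to bounding $\|\hat\bpsi-\bpsi^*\|_{\P,2}$ in the two regimes.

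Write $\hat\bpsi-\bpsi^*$ as a sum of four differences, matching $I_1,\dots,I_4$ in the proof of Lemma~\ref{lem:Delta_k_0}, and telescope each difference one nuisance at a time. For the first term:
\begin{align*}
\frac{A\hat q(Y-\hat\mu)}{\hat\pi}-\frac{Aq^*(Y-\mu^*)}{\pi^*}
=\frac{A\hat q(\mu^*-\hat\mu)}{\hat\pi}+\frac{A(\hat q-q^*)(Y-\mu^*)}{\hat\pi}+Aq^*(Y-\mu^*)\Big(\frac1{\hat\pi}-\frac1{\pi^*}\Big).
\end{align*}
The other three terms are treated the same way. Weights are bounded by $c_0^{-3}$ on $\mathcal E_0$ and by Assumption~\ref{cond:QR_basic}(a). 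Terms linear in $\hat\mu-\mu^*$, $\hat\tau_n-\tau_n^*$ and $\hat\tau-\tau^*$ then contribute $O(\bar r_\mu+\bar r_{\tau_n}+\bar r_\tau)$. For $\hat\mu$ evaluated on the $A=0$ part, Lemma~\ref{lem:mu_relation} transfers the norm from $\bzS$ to $\bS$.

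In part (a) nothing is assumed about the rates of $\hat\pi$ and $\hat q$. Terms carrying $\hat q-q^*$ or $\hat\pi^{-1}-\pi^{*-1}$ are therefore bounded only by boundedness of the weights times the residuals. Using $|\hat q-q^*|\le 2c_0^{-2}$ and $|1/\hat\pi-1/\pi^*|\le 2c_0^{-1}$, these terms are controlled by $\|A(Y-\mu^*)\|_{\P,2}\le\|\varepsilon\|_{\P,2}$ by consistency. They are also controlled by $\|(1-A)(\mu^*(\bS)-\tau_n^*)\|_{\P,2}\le\|\varrho\|_{\P,2}$ and by $\|\tau_n^*-\tau^*\|_{\P,2}\le\|\varrho\|_{\P,2}+\|\zeta\|_{\P,2}$, exactly as in Lemma~\ref{lem:Delta_k_0}. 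This gives (a).

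For (b) we need these crude terms to be $O(\bar r_\pi+\bar r_q)$ instead. Since the residual multipliers have bounded conditional second moments, Hölder's inequality is too weak here. The key step is to condition first: for instance $\E[A(\hat q-q^*)^2(Y-\mu^*)^2]=\E[A(\hat q-q^*)^2\E\{\varepsilon^2\mid\bS\}]\le G_0^{2/r}\bar r_q^2$ by Assumption~\ref{cond:QR_basic}(c). Similarly, $\E[(1/\hat\pi-1/\pi^*)^2\,\E\{\varrho^2\mid\bX\}]\lesssim\bar r_\pi^2$, and the same holds with $\zeta$. Where the model is correct, the $(\tau_n^*-\tau^*)$-type factors are also $\bX$-measurable with bounded conditional moments. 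The main obstacle is making sure every residual multiplying a $\hat\pi$ or $\hat q$ error is expressed through $\varepsilon,\varrho,\zeta$, so that their conditional $r$-th moment bounds can be used after conditioning on $\bS$ or $\bX$. Collecting terms yields (b).
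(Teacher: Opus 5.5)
Your proposal is correct and follows the paper's proof essentially step by step. The shared steps are: Chebyshev with the conditional variance, reducing everything to $\|\hat\bpsi^{(-k)}-\bpsi^*\|_{\P,2}$; the same four-term split; crude bounds by $\normp{\varepsilon}{2},\normp{\varrho}{2},\normp{\zeta}{2}$ in (a); and in (b) the tower rule, which replaces the residuals by $\sqrt{\E[\varepsilon^2\mid\bS]}$, $\sqrt{\E[\varrho^2\mid\bX]}$ and $\sqrt{\E[\zeta^2\mid\bX]}$, all bounded by $G_0^{1/r}$ (Lemma~\ref{lem:QR_lower}).

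One small precision: state your conditioning step as the inequality $\E[A(\hat q-q^*)^2\varepsilon^2]\le\E[(\hat q-q^*)^2\E\{\varepsilon^2\mid\bS\}]$, obtained by first dropping the factor $A\le 1$ as the paper does. Written as an equality it implicitly needs $\E[\varepsilon^2\mid\bS,A=1]=\E[\varepsilon^2\mid\bS]$, which you do not need to invoke.
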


\begin{proof}
\emph{(a)}. By Chebyshev's inequality, it suffices to show that
\begin{align*}
    \normp{\Delta_{k,1}}{2}\leq \frac{1}{\sqrt{n}}(\bar{r}_{\mu}+\bar{r}_{\tau}+\bar{r}_{\tau_{n}}+\normp{\varepsilon}{2}+\normp{\varrho}{2}+\normp{\zeta}{2}).
\end{align*}
Note that
\begin{align*}
    \normp{\Delta_{k,1}}{2}\leq \frac{1}{\sqrt{n}}\sqrt{\Var[\hat{\bpsi}^{(-k)}(\bW)-\bpsi^*(\bW)]}\leq \frac{1}{\sqrt{n}}\sqrt{\E[\{\hat{\bpsi}^{(-k)}(\bW)-\bpsi^*(\bW)\}^2]}.
\end{align*}
We consider $\hat{\bpsi}^{(-k)}(\bW)-\bpsi^*(\bW)=\sum_{l=1}^{4}T_l$, where
\begin{align*}
    T_1&:=\frac{A}{\hat{\pi}(\bX)}\hat{q}(\bS)\{Y-\hat{\mu}(\bS)\}-\frac{A}{\pi^*(\bX)}q^*(\bS)\{Y-\mu^*(\bS)\},\\
    T_2&:=\frac{1-A}{1-\hat{\pi}(\bX)}\{\hat{\mu}(\bS)-\hat{\tau}_n(\bX)\}-\frac{1-A}{1-\pi^*(\bX)}\{\mu^*(\bS)-\tau^*_n(\bX)\},\\
    T_3&:=\frac{A}{\hat{\pi}(\bX)}\{\hat{\tau}_n(\bX)-\hat{\tau}(\bX)\}-\frac{A}{\pi^*(\bX)}\{\tau_{n}^*(\bX)-\tau^*(\bX)\},\\
    T_4&:=\hat{\tau}(\bX)-\tau^*(\bX).
\end{align*}
For the first term $T_1$, we have
\begin{align}
    \normp{T_1}{2}&\leq c_0^{-6}\normp{A\pi^*(\bX)q^*(\bS)\{Y-\hat{\mu}(\bS)\}-A\hat{\pi}(\bX)\hat{q}(\bS)\{Y-\mu^*(\bS)\}}{2}\nonumber\\
    &\overset{(i)}{\leq} c_0^{-6}\normp{\pi^*(\bX)q^*(\bS)\{\varepsilon+\mu^*(\bS)-\hat{\mu}(\bS)\}-\hat{\pi}(\bX)\hat{q}(\bS)\varepsilon}{2}\nonumber\\
    &\overset{(ii)}{\leq} c_0^{-6}\normp{\pi^*(\bX)q^*(\bS)\{\mu^*(\bS)-\hat{\mu}(\bS)\}}{2}+c_0^{-6}\normp{\{\pi^*(\bX)q^*(\bS)-\hat{\pi}(\bX)\hat{q}(\bS)\}\varepsilon}{2},\label{bound:QR_T_1}
\end{align}
where (i) holds by $|A|\leq 1$ and Assumption~\ref{cond:basic} (a), and (ii) holds by Minkowski's inequality. Note that $\P(\{c_0^3 \leq \pi^*(\bX)q^*(\bS) \leq (1-c_0)c_0^{-2}\})=1$ under Assumption~\ref{cond:QR_basic} (a) and $\P(\{c_0^3\leq \hat{\pi}(\bX)\hat{q}(\bS) \leq (1-c_0)c_0^{-2} \})=1$ on event $\mathcal{E}_0$, which occurs with probability approaching 1. We obtain
\begin{align}
    \normp{T_1}{2}\leq (1-c_0)c_0^{-8}\normp{\mu^*(\bS)-\hat{\mu}(\bS)}{2}+(1-c_0)c_0^{-8}\normp{\varepsilon}{2}=O_p(\bar{r}_{\mu}+\normp{\varepsilon}{2}). \label{order:QR_T_1}
\end{align}
Similarly, the $L_2$-norm of $T_2$ can be bounded as
\begin{align}
    \normp{T_2}{2}&\leq c_0^{-2}\normp{(1-A)\{1-\pi^*(\bX)\}\{\hat{\mu}(\bS)-\hat{\tau}_n(\bX)\}-(1-A)\{1-\hat{\pi}(\bX)\}\{\mu^*(\bS)-\tau_n^*(\bX)\}}{2}\nonumber\\
    &\overset{(i)}{\leq} c_0^{-2}\normp{\{1-\pi^*(\bX)\}\{\hat{\mu}(\bzS)-\mu^*(\bzS)+\varrho+\tau_{n}^*(\bX)-\hat{\tau}_n(\bX)\}-\{1-\hat{\pi}(\bX)\}\varrho}{2}\nonumber\\
    &\overset{(ii)}{\leq} c_0^{-2}\normp{\{1-\pi^*(\bX)\}\{\mu^*(\bzS)-\hat{\mu}(\bzS)\}}{2}+c_0^{-2}\normp{\{1-\pi^*(\bX)\}\{\tau_{n}^*(\bX)-\hat{\tau}_n(\bX)\}}{2}\nonumber\\
    &\quad+c_0^{-2}\normp{\{\hat{\pi}(\bX)-\pi^*(\bX)\}\varrho}{2},\nonumber
\end{align}
where (i) holds by $|A|\leq 1$, and (ii) holds by Minkowski's inequality. 
Under Assumption~\ref{cond:QR_basic} (a), $\P(\{c_0 \leq \pi^*(\bX)\leq (1-c_0)\})=1$. Condition on $\mathcal{E}_0$, $\P(\{c_0\leq\hat{\pi}(\bX)\leq(1-c_0)\})=1$. Hence by Lemma~\ref{lem:mu_relation}, 
\begin{align}
    \normp{T_2}{2}&=O_p\left(\bar{r}_{\mu}+\bar{r}_{\tau_{n}}+\normp{\varrho}{2}\right).\label{order:QR_T_2}
\end{align}
    
Additionally, for the third term $T_3$, we have
\begin{align}
    \normp{T_3}{2}&\leq c_0^{-2}\normp{A\pi^*(\bX)\{\hat{\tau}_n(\bX)-\hat{\tau}(\bX)\}-A\hat{\pi}(\bX)\{\tau_{n}^*(\bX)-\tau^*(\bX)\}}{2}\nonumber\\
    &\overset{(i)}{\leq} c_0^{-2}\normp{\pi^*(\bX)\{\hat{\tau}_n(\bX)-\tau_{n}^*(\bX)+(\zeta-\varrho)+\tau^*(\bX)-\hat{\tau}(\bX)\}-\hat{\pi}(\bX)(\zeta-\varrho)}{2}\nonumber\\
    &\overset{(ii)}{\leq} c_0^{-2}\normp{\pi^*(\bX)\{\hat{\tau}_n(\bX)-\tau_{n}^*(\bX)\}}{2}+c_0^{-2}\normp{\pi^*(\bX)\{\tau^*(\bX)-\hat{\tau}(\bX)\}}{2}\nonumber\\
    &\quad+c_0^{-2}\normp{\{\pi^*(\bX)-\hat{\pi}(\bX)\}\zeta}{2}+c_0^{-2}\normp{\{\pi^*(\bX)-\hat{\pi}(\bX)\}\varrho}{2}\nonumber\\
    &\overset{(iii)}{=}O_p\left(\bar{r}_{\tau}+\bar{r}_{\tau_{n}}+\normp{\zeta}{2}+\normp{\varrho}{2}\right),\label{order:QR_T_3}
\end{align}
where (i) follows from that $|A|\leq 1$, (ii) holds by Minkowski's inequality, (iii) follows since $\P(\{c_0 \leq \pi^*(\bX)\leq (1-c_0)\})=1$ under Assumption~\ref{cond:QR_basic} (a) and $\P(\{c_0\leq\hat{\pi}(\bX)\leq(1-c_0)\})$ on event $\mathcal{E}_0$. For the last term $T_4$, under Assumption~\ref{cond:QR_basic}~(c), we have
\begin{align*}
    \normp{T_4}{2}=O_p(\bar{r}_{\tau}).
\end{align*}
Together with \eqref{order:QR_T_1}, \eqref{order:QR_T_2} and \eqref{order:QR_T_3}, we obtain
\begin{align*}
    \sqrt{\E[\{\hat{\bpsi}^{(-k)}(\bW)-\bpsi^*(\bW)\}^2]}&\leq \sum\limits_{t=1}^{4}\normp{T_k}{2}=O_p(\bar{r}_{\mu}+\bar{r}_{\tau}+\bar{r}_{\tau_{n}}+\normp{\varepsilon}{2}+\normp{\varrho}{2}+\normp{\zeta}{2}).
\end{align*}
\vspace{0.3em}

\emph{(b)}. By Lemma~\ref{lem:QR_lower}, we have $\P(\sqrt{\E[\varepsilon^2\mid\bS]}\leq G_0^{1/r})=1$, $\P(\sqrt{\E[\varrho^2\mid\bX]}\leq G_0^{1/r})=1$ and $\P(\sqrt{\E[\zeta^2\mid\bX]}\leq G_0^{1/r})=1$.   
Then by \eqref{bound:QR_T_1} and Assumption~\ref{cond:QR_basic}, we have that,
    \begin{align*}
        \normp{T_1}{2}\overset{(i)}\leq& c_0^{-6}\normp{\pi^*(\bX)q^*(\bS)\{\mu^*(\bS)-\hat{\mu}(\bS)\}}{2}+c_0^{-6}\normp{\{\pi^*(\bX)q^*(\bS)-\hat{\pi}(\bX)\hat{q}(\bS)\}\sqrt{\E[\varepsilon^2\mid\bS]}}{2}
        \\ \leq&c_0^{-6}\normp{\pi^*(\bX)q^*(\bS)\{\mu^*(\bS)-\hat{\mu}(\bS)\}}{2}+c_0^{-6}\normp{\{\pi^*(\bX)q^*(\bS)-\hat{\pi}(\bX)\hat{q}(\bS)\}}{2}G_0^{1/r},
    \end{align*}
where (i) follows from the tower rule. By Minkowski's inequality, it holds that
\begin{align*}
    \normp{\pi^*(\bX)q^*(\bS)-\hat{\pi}(\bX)\hat{q}(\bS)}{2}&\leq \normp{\pi^*(\bX)\{q^*(\bS)-\hat{q}(\bS)\}}{2}+\normp{\{\pi^*(\bX)-\hat{\pi}(\bX)\}\hat{q}(\bS)}{2}
    \\ &\leq (1-c_0)\normp{q^*(\bS)-\hat{q}(\bS)}{2}+c_0^{-2}\normp{\pi^*(\bX)-\hat{\pi}(\bX)}{2}
    \\ &=O_p(\bar{r}_{\pi}+\bar{r}_{q}) 
\end{align*}
under Assumption~\ref{cond:QR_basic} (a).
Therefore,
\begin{align*}
    \normp{T_1}{2}=O_p(\bar{r}_{\mu}+\bar{r}_{\pi}+\bar{r}_{q}).
\end{align*}
Following the same procedure, we derive from the proof of (a) that
\begin{align*}
    \normp{T_2}{2}&\leq c_0^{-2}\normp{\{1-\pi^*(\bX)\}\{\mu^*(\bzS)-\hat{\mu}(\bzS)\}}{2}+c_0^{-2}\normp{\{1-\pi^*(\bX)\}\{\tau_{n}^*(\bX)-\hat{\tau}(\bX)\}}{2}\nonumber\\
    &\quad+c_0^{-2}\normp{\{\hat{\pi}(\bX)-\pi^*(\bX)\}\sqrt{\E[\varrho^2\mid\bX]}}{2}\nonumber\\
    &=O_p\left(\bar{r}_{\mu}+\bar{r}_{\tau_{n}}+\bar{r}_{\pi}\right),
\end{align*}
and
\begin{align*}
    \normp{T_3}{2}
    &\leq c_0^{-2}\normp{\pi^*(\bX)\{\hat{\tau}_n(\bX)-\tau_{n}^*(\bX)\}}{2}+c_0^{-2}\normp{\pi^*(\bX)\{\tau^*(\bX)-\hat{\tau}(\bX)\}}{2}\nonumber\\
    &\quad+2c_0^{-2}\normp{\{\pi^*(\bX)-\hat{\pi}(\bX)\}\sqrt{\E[(\varrho)^2\mid\bX]}}{2}+2c_0^{-2}\normp{\{\pi^*(\bX)-\hat{\pi}(\bX)\}\sqrt{\E[(\zeta)^2\mid\bX]}}{2}\nonumber\\
    &=O_p\left(\bar{r}_{\tau}+\bar{r}_{\tau_{n}}+\bar{r}_{\pi}\right).
\end{align*}
We conclude that
\begin{align}
    \sqrt{\E[\{\hat{\bpsi}^{(-k)}(\bW)-\bpsi^*(\bW)\}^2]}&\leq \sum\limits_{t=1}^{4}\normp{T_k}{2}=O_p\left(\bar{r}_{\mu}+\bar{r}_{\tau}+\bar{r}_{\tau_{n}}+\bar{r}_{\pi}+\bar{r}_{q}\right).\label{order:QR_sigmahat_upper}
\end{align}
Then by Chebyshev's inequality, we obtain
\begin{align*}      
    \Delta_{k,1}=O_p\left(\frac{1}{\sqrt{n}}(\bar{r}_{\mu}+\bar{r}_{\tau}+\bar{r}_{\tau_{n}}+\bar{r}_{\pi}+\bar{r}_{q})\right).
\end{align*}    
\end{proof}    

\begin{lemma}\label{lem:Delta_k_2}
Under the condition of Theorem~\ref{thm:QR_robust}, further let
Assumptions~\ref{cond:basic} and \ref{cond:QR_basic} hold. 
Recall that $\Delta_{k,2}$ defined in \eqref{def:QR_delta_k_2}. Then 

\emph{(a)}. For each $k\in\{1,2,\ldots,\K\}$, as $N\to\infty$,
\begin{align}
    \Delta_{k,2}&=O_p\left(\bar{r}_{\mu}\bar{r}_{q}+\bar{r}_{\tau}\bar{r}_{\pi}+\bar{r}_{\tau_{n}}\bar{r}_{\pi}+\bar{r}_{\mu}\bar{r}_{\pi}+\idf_{\{\pi\neq \pi^*\}}\bar{r}_{\tau}+\idf_{\{q\neq q^*\}}\bar{r}_{\mu}\right.\nonumber\\
    &\quad\left.+\idf_{\{\mu\neq \mu^*\}}\bar{r}_{q}\normp{\varepsilon}{2}+\idf_{\{(\tau_n, \tau)\neq(\tau_n^*, \tau^*)\}}\bar{r}_{\pi}\{\normp{\varrho}{2}+\normp{\varepsilon}{2}+\normp{\zeta}{2}\}\right).\label{order:QR_delta_k_2}
\end{align}
\emph{(b)}. Furtherly suppose that $\pi^*(\bX)$, $q^*(\bS)$, $\mu^*(\bS)$ and $\tau_n^*(\bX)$ are correctly specified. Then
\begin{align*}      
    \Delta_{k,2}=O_p\left(\bar{r}_{\mu}\bar{r}_{q}+\bar{r}_{\tau}\bar{r}_{\pi}+\bar{r}_{\tau_{n}}\bar{r}_{\pi}+\bar{r}_{\mu}\bar{r}_{\pi}\right).
\end{align*}
\end{lemma}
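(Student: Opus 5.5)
Since the nuisance estimates in $\hat\bpsi^{(-k)}$ are built on $\I_{-k}$ and independent of $\bW\sim\P$, $\Delta_{k,2}=\E[\hat\bpsi^{(-k)}(\bW)-\bpsi^*(\bW)]$ is a deterministic functional of the estimated nuisances. We work on the event $\mathcal E_0$ in \eqref{def:E0}. The plan is to decompose this bias exactly, by iterated conditioning, into products of estimation errors plus first-order terms multiplied by misspecification errors. The latter vanish in whichever scenario of Theorem~\ref{thm:QR_robust} holds.

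Write $\delta_\pi=\hat\pi-\pi^*$, $\delta_q=\hat q-q^*$, $\delta_\mu=\hat\mu-\mu^*$, $\delta_n=\hat\tau_n-\tau_n^*$ and $\delta_\tau=\hat\tau-\tau^*$. Conditioning on $\bS$ and then on $\bX$, and using $\E[Y\mid A=1,\bS]=\mu(\bS)$ and $\E[\cdot\mid A=0,\bX]$ for the control term, we get
\begin{align*}
\E[\hat\bpsi(\bW)]&=\E\Big[\frac{\pi}{\hat\pi}\E[\hat q(\mu-\hat\mu)\mid A=1,\bX]+\frac{1-\pi}{1-\hat\pi}\{\E[\hat\mu\mid A=0,\bX]-\hat\tau_n\}\\
&\qquad+\frac{\pi}{\hat\pi}(\hat\tau_n-\hat\tau)+\hat\tau\Big],
\end{align*}
with the analogous expression for $\bpsi^*$. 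Subtracting the two, we rewrite $\E[\hat q\,\delta_\mu\mid A=1,\bX]$ as $\E[q\,\delta_\mu\mid A=1,\bX]+\E[(\hat q-q)\delta_\mu\mid A=1,\bX]$, and note that $\E[q\,\delta_\mu\mid A=1,\bX]=\E[\delta_\mu\mid A=0,\bX]$. We also replace $\pi/\hat\pi$ by $\pi/\pi^*$ plus a term of order $\delta_\pi$, and similarly for $(1-\pi)/(1-\hat\pi)$. After this regrouping, the bias splits into two parts.

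The first part consists of second-order products: $\E[\delta_q\delta_\mu]$-type terms, $\delta_\pi\cdot(\delta_\tau,\delta_n,\delta_\mu)$ terms, and $\delta_\pi\delta_q$ terms multiplied by residuals. Cauchy--Schwarz together with the positivity bounds on $\mathcal E_0$ bounds these by $\bar r_\mu\bar r_q+\bar r_\pi(\bar r_\tau+\bar r_{\tau_n}+\bar r_\mu)$. Lemma~\ref{lem:mu_relation} transfers $\|\delta_\mu(\bzS)\|$ to $\bar r_\mu$.

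The second part consists of first-order terms, each multiplied by a population discrepancy:
\begin{itemize}
\item $\E\big[\{1-\pi/\pi^*\}\delta_\tau\big]$, which vanishes if $\pi^*=\pi$ and is otherwise $O(\bar r_\tau)$;
\item $\E\big[\{\pi/\pi^*-(1-\pi)/(1-\pi^*)\}\{\delta_n+\E[\delta_\mu\mid A=0,\bX]\}\big]$-type terms, which vanish under $\pi^*=\pi$;
\item $\E[(q-q^*)\delta_\mu\cdot]$, which is bounded by $\idf_{q\neq q^*}\bar r_\mu$;
\item $\E[\delta_q(\mu-\mu^*)\cdot]$, which is bounded by $\idf_{\mu\neq\mu^*}\bar r_q\|\varepsilon\|$;
\item $\delta_\pi$ multiplied by the population discrepancy $\E[q^*(\mu-\mu^*)\mid A=1,\bX]+\tau_n^*-\tau^*$ and $\E[\mu^*\mid A=0,\bX]-\tau_n^*$, which vanishes when $(\tau_n,\tau)=(\tau_n^*,\tau^*)$ together with the scenario structure, and is otherwise bounded by $\bar r_\pi(\|\varrho\|+\|\varepsilon\|+\|\zeta\|)$.
\end{itemize}

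The main obstacle is to organize the terms so that each first-order piece is matched with an indicator that vanishes in the relevant scenario. The key difficulty is the $\pi$-direction terms. Their coefficients vanish in scenarios (a) and (c) because $\pi^*=\pi$. In scenario (b), they vanish because $\mu^*=\mu$ and $\tau_n^*=\tau_n=\tau$ make the population discrepancies in the last bullet zero. In scenario (d), they vanish through the DR identity of Theorem~\ref{thm:dr_tau}, namely $\tau-\tau_n=\E[q(Y-\mu^*)\mid A=1,\bX]$.

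Part (b) then follows by setting all indicators except possibly the one on $\tau$ to zero. The remaining $\idf_{\pi\neq\pi^*}\bar r_\tau$-type terms vanish because $\pi^*=\pi$. The term $\idf_{(\tau_n,\tau)\neq\cdots}$ vanishes because Lemma~\ref{lem:compatible} gives $\tau^*=\tau$ when $\mu^*=\mu$ and $\tau_n^*=\tau_n$.
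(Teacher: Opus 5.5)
\textbf{There is a genuine gap.} The gap is your second bullet. It carries the coefficient $\pi/\pi^*-(1-\pi)/(1-\pi^*)=(\pi-\pi^*)/\{\pi^*(1-\pi^*)\}$, and it is linear in the \emph{estimation errors} $\hat\tau_n-\tau_n^*$ and $\hat\mu-\mu^*$, not in a population discrepancy. Your ``main obstacle'' paragraph handles scenarios (b) and (d) only for the $\delta_\pi$-linear terms of the last bullet. Neither $(\mu^*,\tau_n^*)=(\mu,\tau_n)$ nor the identity of Theorem~\ref{thm:dr_tau} removes the second bullet. So when $\pi\neq\pi^*$ it contributes $O_p(\idf_{\{\pi\neq\pi^*\}}(\bar r_{\tau_n}+\bar r_\mu))$, and nothing in \eqref{order:QR_delta_k_2} absorbs this. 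The only $\bar r_{\tau_n}$-term there is $\bar r_{\tau_n}\bar r_\pi$, and in scenario (d) with $q^*=q$ there is no first-order $\bar r_\mu$ term either. Otherwise your decomposition matches the paper's $\Delta_{k,3},\dots,\Delta_{k,10}$, with products handled by Cauchy--Schwarz on $\mathcal E_0$ and Lemma~\ref{lem:mu_relation}.

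\textbf{The term is really present, so regrouping cannot help.} Carrying your conditioning through exactly, under Assumption~\ref{cond:basic} and using $\E[\bpsi^*(\bW)]=\theta$ from Theorem~\ref{thm:QR_robust}, gives for any fixed nuisances
\begin{align*}
\Delta_{k,2}&=\E\Big[\Big\{\frac{\pi}{\hat\pi}-1\Big\}(\tau-\hat\tau)\Big]+\E\Big[\frac{\hat\pi-\pi}{\hat\pi(1-\hat\pi)}\big\{\E[\hat\mu(\bzS)\mid\bX]-\hat\tau_n\big\}\Big]\\
&\qquad+\E\Big[\frac{\pi}{\hat\pi}\E\big[(\hat q-q)(\mu-\hat\mu)\mid A=1,\bX\big]\Big].
\end{align*}
Take scenario (d) with $\pi\neq\pi^*$, and set $\hat\pi=\pi^*$, $\hat q=q$, $\hat\mu=\mu^*$, $\hat\tau=\tau$, and $\hat\tau_n=\tau_n+t_N(\pi^*-\pi)$. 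Then $\Delta_{k,2}=-t_N\E[(\pi^*-\pi)^2/\{\pi^*(1-\pi^*)\}]\asymp\bar r_{\tau_n}$. The right side of \eqref{order:QR_delta_k_2}, however, is $o(t_N)$ once $\bar r_\pi,\bar r_q,\bar r_\mu,\bar r_\tau$ are taken to be $t_N^2$. This choice is allowed, since those errors are zero and the hypotheses only bound errors from above. So part (a) as stated needs an additional $\idf_{\{\pi\neq\pi^*\}}(\bar r_{\tau_n}+\bar r_\mu)$ term, which then also enters $r_{\mathrm{QR}}$ in Theorem~\ref{thm:QR_consistent}.

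\textbf{The paper's proof makes the same slip.} Its $\Delta_{k,5}$ has the factor $\frac{1-\pi}{1-\hat\pi}-\frac{\pi}{\hat\pi}=\frac{\hat\pi-\pi}{\hat\pi(1-\hat\pi)}$, which it bounds by $c_0^{-2}|\hat\pi-\pi^*|$. That bound is valid only when $\pi=\pi^*$. Part (b) is unaffected, and your argument for it is fine: there $\pi^*=\pi$, so the offending term reduces to the product $\bar r_\pi(\bar r_{\tau_n}+\bar r_\mu)$.
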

\begin{proof}

\emph{(a)}. To proof \eqref{order:QR_delta_k_2}, we consider the following decomposition
\begin{align}
    \Delta_{k,2}&=\Delta_{k,3}+\Delta_{k,4}+\Delta_{k,5}+\Delta_{k,6}+\Delta_{k,7}+\Delta_{k,8}+\Delta_{k,9}+\Delta_{k,10},\label{decom:QR_delta2}
\end{align}
where
\begin{align}
    \Delta_{k,3}&=\E\left[\frac{A}{\hat{\pi}(\bX)}(q^*(\bS)-\hat{q}(\bS))\{\hat{\mu}(\bS)-\mu^*(\bS)\}
    \right],\label{def:delta_k_3}\\
    \Delta_{k,4}&=\E\left[\left\{1-\frac{\pi^*(\bX)}{\hat{\pi}(\bX)}\right\}\{\hat{\tau}(\bX)-\tau^*(\bX)\}\right],\label{def:delta_k_4}\\
    \Delta_{k,5}&=\E\left[\left\{\frac{1-A}{1-\hat{\pi}(\bX)}-\frac{A}{\hat{\pi}(\bX)}\right\}\{\tau^*_{n}(\bX)-\hat{\tau}_{n}(\bX)\}\right],\nonumber
    \\
    &\quad+\E\left[\frac{1-A}{1-\hat{\pi}(\bX)}\{\hat{\mu}(\bS)-\mu^*(\bS)\}-\frac{A}{\hat{\pi}(\bX)}\{\hat{\mu}(\bzS)-\mu^*(\bzS)\}\right]\nonumber\\
    &\overset{(i)}{=}\E\left[\left\{\frac{1-\pi(\bX)}{1-\hat{\pi}(\bX)}-\frac{\pi(\bX)}{\hat{\pi}(\bX)}\right\}\{\tau^*_{n}(\bX)-\hat{\tau}_{n}(\bX)+\hat{\mu}(\bzS)-\mu^*(\bzS)\}\right],\label{def:delta_k_5}\\
    \Delta_{k,6}&=\E\left[\frac{1}{\hat{\pi}(\bX)}\{\pi^*(\bX)-\pi(\bX)\}\{\hat{\tau}(\bX)-\tau^*(\bX)\}\right],\label{def:delta_k_6}\\
    \Delta_{k,7}&=\E\left[\frac{A}{\hat{\pi}(\bX)}\{\hat{\mu}(\bzS)-\mu^*(\bzS)\}-\frac{A}{\hat{\pi}(\bX)}q^*(\bS)\{\hat{\mu}(\bS)-\mu^*(\bS)\}\right],\label{def:delta_k_7}\\
    \Delta_{k,8}&=\E\left[\frac{A}{\hat{\pi}(\bX)}\{\hat{q}(\bS)-q^*(\bS)\}\{Y-\mu^*(\bS)\}\right]\nonumber\\
    &\overset{(ii)}{=}\E\left[\frac{A}{\hat{\pi}(\bX)}\{\hat{q}(\bS)-q^*(\bS)\}\{\mu(\bS)-\mu^*(\bS)\}\right],\label{def:delta_k_8}\\
    \Delta_{k,9}&=\E\left[\left\{\frac{1-A}{1-\hat{\pi}(\bX)}-\frac{1-A}{1-\pi^*(\bX)}\right\}\{\mu^*(\bS)-\tau_{n}^*(\bX)\}\right],\nonumber\\
    &\overset{(iii)}{=}\E\left[\left\{\frac{1-A}{1-\hat{\pi}(\bX)}-\frac{1-A}{1-\pi^*(\bX)}\right\}\{\tau_{n}(\bX)-\tau_{n}^*(\bX)\}\right],\label{def:delta_k_9}\\
    \Delta_{k,10}&=\E\left[\left\{\frac{A}{\hat{\pi}(\bX)}-\frac{A}{\pi^*(\bX)}\right\}\big\{q^*(\bS)\{Y-\mu^*(\bS)\}+\tau_{n}^*(\bX)-\tau^*(\bX)\big\}\right]\label{def:delta_k_10}
\end{align}
where (i) and (ii) hold by the tower rule, Assumption~\ref{cond:basic} (a) and that $\mu(\bS)=\E[Y\mid A=1,\bS]$; (iii) holds by tower rule and the definition that $\tau_{n}(\bX)=\E[\mu^*(\bS)\mid A=0,\bX]$.
We then obtain upper bounds for $\Delta_{k,3}$--$\Delta_{k,10}$ using Cauchy-Schwarz inequality, H\"older's inequality and Minkowski's inequality.
First, the upper bound for $\Delta_{k,3}+\Delta_{k,4}+\Delta_{k,5}$ is given by
\begin{align}
    &\Delta_{k,3}+\Delta_{k,4}+\Delta_{k,5}\nonumber\\
    \overset{(i)}\leq& c_0^{-1}\left\{\normp{\hat{q}(\bS)-q^*(\bS)}{2}\normp{\hat{\mu}(\bS)-\mu^*(\bS)}{2}+\normp{\hat{\pi}(\bX)-\pi^*(\bX)}{2}\normp{\hat{\tau}(\bS)-\tau^*(\bS)}{2} \right\}\nonumber\\
    &+c_0^{-2}\normp{\hat{\pi}(\bX)-\pi^*(\bX)}{2}\left\{\normp{\hat{\tau}_{n}(\bS)-\tau_{n}^*(\bS)}{2}+\normp{\hat{\mu}(\bzS)-\mu^*(\bzS)}{2}\right\}\nonumber\\
    \overset{(ii)}{=}&O_p(\bar{r}_{\mu}\bar{r}_{q}+\bar{r}_{\tau}\bar{r}_{\pi}+\bar{r}_{\tau_{n}}\bar{r}_{\pi}+\bar{r}_{\mu}\bar{r}_{\pi}), 
    \label{order:QR_delta_3+4+5}
\end{align}
where (i) holds on event $\mathcal{E}_0$, which occurs with probability approaching one under Assumption~\ref{cond:QR_basic} (a), and (ii) holds by Lemma~\ref{lem:mu_relation} and Assumption~\ref{cond:QR_basic} (b).
For $\Delta_{k,6}$, we have
\begin{align*}
    \normp{\pi^*(\bX)-\pi(\bX)}{2}\leq 2(1-c_0)\idf_{\{\pi\neq \pi^*\}}
\end{align*}
under Assumption~\ref{cond:QR_basic} (a). Hence, 
\begin{align}
    \Delta_{k,6}\leq c_0^{-1}\normp{\pi^*(\bX)-\pi(\bX)}{2}\normp{\hat{\tau}(\bX)-\tau^*(\bX)}{2}=O_p(\idf_{\{\pi\neq \pi^*\}}\bar{r}_{\tau}).\label{order:QR_delta_6}
\end{align}

For $\Delta_{k,7}$, under Assumption~\ref{cond:basic}, 
\begin{align*}
    \Delta_{k,7}=\E\left[\frac{A}{\hat{\pi}(\bX)}\{q(\bS)-q^*(\bS)\}\{\hat{\mu}(\bS)-\mu^*(\bS)\}\right]=0,\text{ when }q=q^*. 
\end{align*}
It then derives from Lemma~\ref{lem:mu_relation} that
\begin{align}
    \Delta_{k,7}\leq\idf_{\{q\neq q^*\}}\left\{c_0^{-1}\normp{\hat{\mu}(\bzS)-\mu^*(\bzS)}{2}+c_0^{-3}\normp{\hat{\mu}(\bS)-\mu^*(\bS)}{2}\right\}=O_p(\idf_{\{q\neq q^*\}}\bar{r}_{\mu}).\label{order:QR_delta_7} 
\end{align}

For $\Delta_{k,8}$, we have $\Delta_{k,8}=0$ when $\mu=\mu^*$. In addition
\begin{align*}
    \Delta_{k,8}&\leq c_0^{-1}\normp{\hat{q}(\bS)-q^*(\bS)}{2}\normp{\mu(\bS)-\mu^*(\bS)}{2}.
\end{align*}
By the tower rule and Jensen's inequality, it holds that
\begin{align}
    \normp{\varepsilon}{2}^2=\E[\E[\{Y(1,\bM)-\mu^*(\bS)\}^2\mid \bS]]\geq \E[\{\mu^*(\bS)-\mu(\bS)\}^2],\label{bound:mu^*-mu}
\end{align}
therefore
\begin{align}
    \Delta_{k,8}=O_p\left(\idf_{\{\mu\neq \mu^*\}}\bar{r}_{q}\normp{\varepsilon}{2}\right).\label{order:QR_delta_8}
\end{align}

Note that $\Delta_{k,9}=0$ when $\tau_n=\tau_n^*$.  
For $\Delta_{k,10}$, by the tower rule, we have
\begin{align*}
    \Delta_{k,10}&=\E\left[\left\{\frac{\pi(\bX)}{\hat{\pi}(\bX)}-\frac{\pi(\bX)}{\pi^*(\bX)}\right\}\big\{\E[q^*(\bS)\{Y-\mu^*(\bS)\}\mid A=1,\bX]+\tau_{n}^*(\bX)-\tau^*(\bX)\big\}\right].
\end{align*}
When $q=q^*$, 
\begin{align*}
    \E[q^*(\bS)\{Y-\mu^*(\bS)\}\mid A=1,\bX]=\tau(\bX)-\tau_n(\bX),
\end{align*}
since $\E[q(\bS)Y\mid A=1,\bX]=\tau(\bX)$ and $\E[\mu^*(\bS)\mid A=1,\bX]=\tau_n(\bX)$. Otherwise, we have $\mu=\mu^*$ under the QR condition in Theorem~\ref{thm:QR_robust}. Hence, 
\begin{align*}
    &\tau(\bX)-\tau_n(\bX)=0,\text{\;\; and}\\
    &\E[q^*(\bS)\{Y-\mu^*(\bS)\}\mid A=1,\bX]=\E[q^*(\bS)\E[Y-\mu^*(\bS)\mid A=1,\bS]\mid A=1,\bX]=0.
\end{align*}
Collectively, we conclude that
\begin{align*}
    \Delta_{k,10}=\E\left[\left\{\frac{A}{\hat{\pi}(\bX)}-\frac{A}{\pi^*(\bX)}\right\}\big\{\tau(\bX)-\tau_n(\bX)+\tau_{n}^*(\bX)-\tau^*(\bX)\big\}\right].
\end{align*}
In addition, it implies that $\Delta_{k,10}=0$ when $(\tau_n^*,\tau^*)=(\tau_n,\tau)$.
Then we derive the upper bound for $\Delta_{k,9}+\Delta_{k,10}$ as
\begin{align}
    \Delta_{k,9}+\Delta_{k,10}\leq 2c_0^{-2}\normp{\hat{\pi}(\bX)-\pi^*(\bX)}{2}\left[\normp{\{\tau_{n}(\bX)-\tau_{n}^*(\bX)\}}{2}+\normp{\{\tau(\bX)-\tau^*(\bX)\}}{2}\right].
    \label{bound:QR_delta_9+10}
\end{align}
By the tower rule and Jensen's inequality, we have
\begin{gather}
    \normp{\varrho}{2}^2=\E[\E[\{\mu^*(\bzS)-\tau_{n}^*(\bX)\}^2\mid \bX]]\geq\E[\{(\tau_{n}(\bX)-\tau_{n}^*(\bX))\}^2].\label{bound:tau_1-tau_1^*}
\end{gather}
For the second term on the right side of \eqref{bound:QR_delta_9+10}, we consider 
\begin{gather}
    \E[\{\mu(\bzS)-\tau^*(\bX)\}^2]=\E[\{\mu(\bzS)-\tau(\bX)+\tau(\bX)-\tau^*(\bX)\}^2]\geq \E[\{\tau(\bX)-\tau^*(\bX)\}^2],\label{bound_pre:tau-tau^*}
\end{gather}
Here, the inequality holds since
\begin{align*}
    \E[\{\mu(\bzS)-\tau(\bX)\}\{\tau(\bX)-\tau^*(\bX)\}]
    &\overset{(i)}{=}\E\big[\E[\mu(\bzS)-\tau(\bX)\mid \bX]\{\tau(\bX)-\tau^*(\bX)\}\big]\overset{(ii)}{=}0.
\end{align*}
where (i) holds by the tower rule, and (ii) follows from $\tau_n(\bX)=\E[\mu(\bzS)\mid \bX]$. 
Additionally, note that
\begin{align*}
    \E[\{\mu(\bzS)-\tau^*(\bX)\}^2]&\overset{(i)}\leq 2\E[\{\mu(\bzS)-\mu^*(\bzS)\}^2+\{\mu^*(\bzS)-\tau^*(\bX)\}^2]\\
    &=2\E\left[\frac{1-A}{1-\pi(\bX)}\{\mu(\bS)-\mu^*(\bS)\}^2\right]+2\E[\{\mu^*(\bzS)-\tau^*(\bX)\}^2]\\
    &\overset{(ii)}\leq 2c_0^{-1}\E[\{\mu(\bS)-\mu^*(\bS)\}^2+2\E[\{\mu^*(\bzS)-\tau^*(\bX)\}^2]\\
    &\overset{(iii)}\leq 2c_0^{-1}\normp{\varepsilon}{2}^2+2\normp{\zeta}{2}^2.
\end{align*}
Here, (i) holds by the fact that $(a+b)^2\leq2a^2+2b^2$; (ii) holds under Assumption~\ref{cond:basic} (c) and that $|1-A|\leq 1$; (iii) holds by \eqref{bound:mu^*-mu} under Assumption~\ref{cond:QR_basic} (c). 
It then follows from \eqref{bound_pre:tau-tau^*} that
\begin{align}
    \E[\{\tau(\bX)-\tau^*(\bX)\}^2]
    &\leq 2c_0^{-1}\normp{\varepsilon}{2}^2+2\normp{\zeta}{2}^2.\label{bound:tau-tau^*}    
\end{align}
Together with \eqref{bound:tau_1-tau_1^*} and \eqref{bound:QR_delta_9+10}, we obtain
\begin{align}
    \Delta_{k,9}+\Delta_{k,10}=O_p\left(\idf_{\{(\tau_n, \tau)\neq(\tau_n^*, \tau^*)\}}\bar{r}_{\pi}\{\normp{\varrho}{2}+\normp{\varepsilon}{2}+\normp{\zeta}{2}\}\right).\label{order:QR_delta_9+10}
\end{align}
Pluging in \eqref{order:QR_delta_3+4+5}, \eqref{order:QR_delta_6}, \eqref{order:QR_delta_7}, \eqref{order:QR_delta_8} and \eqref{order:QR_delta_9+10} into \eqref{decom:QR_delta2}, we obtain
\begin{align*}
    \Delta_{k,2}&=O_p\left(\bar{r}_{\mu}\bar{r}_{q}+\bar{r}_{\tau}\bar{r}_{\pi}+\bar{r}_{\tau_{n}}\bar{r}_{\pi}+\bar{r}_{\mu}\bar{r}_{\pi}+\idf_{\{\pi\neq \pi^*\}}\bar{r}_{\tau}+\idf_{\{q\neq q^*\}}\bar{r}_{\mu}\right.\\
    &\quad\left.+\idf_{\{\mu\neq \mu^*\}}\bar{r}_{q}\normp{\varepsilon}{2}+\idf_{\{(\tau_n, \tau)\neq(\tau_n^*, \tau^*)\}}\bar{r}_{\pi}\{\normp{\varrho}{2}+\normp{\varepsilon}{2}+\normp{\zeta}{2}\}\right).
\end{align*}

\emph{(b)} Suppose that $\pi^*(\bX)$, $q^*(\bS)$, $\mu^*(\bS)$ and $\tau_n^*(\bX)$ are correctly specified, it follows from Lemma~\ref{lem:compatible} that $\tau^*(\bX)=\tau(\bX)$. Therefore, \eqref{order:QR_delta_k_2} implies that
\begin{align*}      
    \Delta_{k,2}=O_p\left(\bar{r}_{\mu}\bar{r}_{q}+\bar{r}_{\tau}\bar{r}_{\pi}+\bar{r}_{\tau_{n}}\bar{r}_{\pi}+\bar{r}_{\mu}\bar{r}_{\pi}\right).
\end{align*} 
\end{proof}

\begin{lemma}\label{lem:QR_lower}
Under Assumptions~\ref{cond:basic} and \ref{cond:QR_basic}, we have 

\emph{(a)} $\E[\varepsilon^2\mid \bS]+\E[\varrho^2\mid \bX]+\E[\zeta^2\mid \bX]\leq G_0^{2/r}$ holds almost surely;

\emph{(b)} $\E[\varepsilon^2]+\E[\varrho^2]+\E[\zeta^2]+\E[\xi^2]\leq G_0^{2/r}$;

\emph{(c)} $\ssigma^2\leq 4c_0^{-6}G_0^{4/r}$.
\end{lemma}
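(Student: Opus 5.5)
The plan is to obtain all three parts from conditional Jensen applied to Assumption~\ref{cond:QR_basic}~(c). Part (c) then follows from the term-by-term decomposition of $\bpsi^*(\bW)-\theta$ already used in the proof of Lemma~\ref{lem:Delta_k_0}. One caveat: as explained under the main obstacle below, the exact constants in (a)–(c) do not follow from this assumption as worded, and I would prove the versions with the constants stated there.

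\textbf{Part (a).} Write $a_\varepsilon:=\E[|\varepsilon|^r\mid\bS]$ and $a_\varrho,a_\zeta,a_\xi$ for the analogous conditional moments given $\bX$. The assumption gives $a_\varepsilon+a_\varrho+a_\zeta+a_\xi\le G_0$ almost surely. Since $r>2$, the map $t\mapsto t^{2/r}$ is concave, so conditional Jensen (or Lyapunov) gives
\[
\E[\varepsilon^2\mid\bS]\le a_\varepsilon^{2/r}\le G_0^{2/r},
\]
and likewise for $\varrho$ and $\zeta$, almost surely. This is the form actually used afterwards, for instance $\sqrt{\E[\varepsilon^2\mid\bS]}\le G_0^{1/r}$ in Lemma~\ref{lem:Delta_k_1}~(b).

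\textbf{Part (b).} Take expectations of the conditional bounds from (a). For $\xi$, use the tower rule together with $\E[\xi^2]\le(\E|\xi|^r)^{2/r}$.

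\textbf{Part (c).} First, $\ssigma^2=\Var[\bpsi^*(\bW)]\le\E[\{\bpsi^*(\bW)-\theta\}^2]$. Next, write $\bpsi^*-\theta=I_1+I_2+I_3+I_4$ exactly as in the proof of Lemma~\ref{lem:Delta_k_0}. Rather than using Minkowski there, I would expand $I_3$ and $I_4$ in terms of $\varrho,\zeta,\xi$ and the bias terms controlled by \eqref{bound:tau-tau^*} and \eqref{bound:tau_1-tau_1^*}. Then apply $(x_1+\dots+x_4)^2\le4\sum_i x_i^2$. Each squared piece carries a weight at most $c_0^{-6}$, namely $(\pi^*)^{-2}(q^*)^2\le c_0^{-6}$ by Assumption~\ref{cond:QR_basic}~(a); the other weights are smaller, since $c_0<1/2$. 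Inserting the second-moment bounds from (b) then yields a bound of the form $4c_0^{-6}\,G_0^{2/r}$, up to the constant issue described next, and this is at most $4c_0^{-6}G_0^{4/r}$ whenever $G_0\ge1$.

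\textbf{Main obstacle.} The step I expect to fail is summing the individual bounds into the single constant $G_0^{2/r}$. Because $2/r<1$, the power mean inequality runs the wrong way: one only has
\[
\sum_i a_i^{2/r}\le 4^{1-2/r}\Bigl(\sum_i a_i\Bigr)^{2/r},
\]
and the constant $4^{1-2/r}$ cannot be removed in general. So the stated sum bounds in (a) and (b) hold only up to this factor, and the exact constant $4c_0^{-6}G_0^{4/r}$ in (c) additionally needs $G_0\ge1$.

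What I would try first is to exploit the paper's convention that constants may change from line to line. Concretely, I would replace $G_0$ by $\max(G_0,1)\,4^{r/2}$, which only enlarges the constant in Assumption~\ref{cond:QR_basic}~(c). With that replacement, both the sum bounds and $G_0^{2/r}\le G_0^{4/r}$ hold literally, and all downstream uses are unaffected, since they only need these quantities to be bounded by constants.
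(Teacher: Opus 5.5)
Your argument follows the paper's route. For (a)--(b) you bound each conditional second moment separately by conditional H\"older/Lyapunov, and for (c) you use the weighted decomposition of $\bpsi^*(\bW)-\theta$ from Lemma~\ref{lem:Delta_k_0}. The constant problem you flag is genuine, and the paper's own proof has it too. It bounds each conditional second moment by $G_0^{2/r}$ and then simply asserts the summed bound. Its final step to $G_0^{4/r}$ does not follow from what precedes it. It also replaces $\tau_n(\bX)-\theta$ by $\xi$, which your split $I_4=(\tau^*-\tau)+\xi$ via \eqref{bound:tau-tau^*} correctly avoids.

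So proving the statement with $G_0$ enlarged to $\max(G_0,1)4^{r/2}$ is the right reading. One bookkeeping point to fix: $\E[\varepsilon^2]$ enters through both $I_1$ and $I_4$, so its total weight is at most $\tfrac98 c_0^{-6}$ rather than $c_0^{-6}$. The enlarged constant absorbs this anyway.
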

\begin{proof}    
\emph{(a)}    
Under Assumption~\ref{cond:QR_basic} (c), we have 
\begin{align*}
    \E[\varepsilon^2\mid \bS]&\overset{(i)}\leq \E[(\varepsilon^2)^{r/2}\mid\bX]^{2/r}\E[1^{\frac{r}{r-2}}\mid\bX]^{1-2/r}\leq G_0^{2/r}
\end{align*}
holds almost surely. Here, (i) follows from H\"older's inequality. 
Similarly, we also have
\begin{gather*}
    \E[\zeta^2\mid \bX]\leq G_0^{2/r}\mathrm{\;\;and\;\;}\E[\varrho^2\mid \bX]\leq G_0^{2/r}
\end{gather*}
holds almost surely. Then (a) holds.

\emph{(b)} It follows from (a) that
\begin{align*}
    \E[\varepsilon^2]+\E[\varrho^2]+\E[\zeta^2]\leq G_0^{2/r}.
\end{align*}
By the tower rule and H\"older's inequality, we also have
\begin{align*}
    \E[\xi^2]&\leq \E[\E[(\xi^2)^{r/2}\mid \bX]]^{2/r}\E[1^{\frac{r}{r-2}}]^{1-2/r}\leq G_0^{2/r}
\end{align*}
holds under Assumption~\ref{cond:basic} (b). Therefore, (b) follows. 

\emph{(c)} we proof that $\ssigma^2<4c_0^{-6}G_0^{4/r}$ through following decomposition:
\begin{align*}
    \bpsi^*(\bW)-\theta&=\frac{Aq^*(\bS)}{\pi^*(\bX)}\{Y-\mu^*(\bS)\}+\frac{1-A}{1-\pi^*(\bX)}\{\mu^*(\bS)-\tau_n^*(\bS)\}+\frac{A}{\pi^*(\bX)}\{\tau_n^*(\bX)-\tau^*(\bX)\}\\
    &+\tau^*(\bX)-\tau_n(\bX)+\tau_n(\bX)-\theta.
\end{align*}
Under Assumption~\ref{cond:basic} (b), by the definition of $\tau_n$, we have 
\begin{align}
    \E[\{\tau^*(\bX)-\tau_n(\bX)\}^2]&=\E[\{\E[\tau^*(\bX)-\mu^*(\bzS)\mid \bX]\}^2]\overset{(i)}{\leq} \E[\E[\zeta^2\mid \bX]]=\E[\zeta^2],\label{bound:(tau^*-tau_n)_2}
\end{align}
where (i) holds by the Jensen's inequality.
As a result, we obtain
\begin{align*}
    \ssigma&\overset{(i)}{\leq} c_0^{-3}\normp{\varepsilon}{2}+c_0^{-1}\normp{\zeta}{2}+\normp{\varrho}{2}+\normp{\tau^*(\bX)-\tau_n(\bX)}{2}+\normp{\xi}{2}\\
    &\overset{(ii)}{\leq} c_0^{-3}\normp{\varepsilon}{2}+c_0^{-1}\normp{\zeta}{2}+\normp{\varrho}{2}+\normp{\zeta}{2}+\normp{\xi}{2}\\
    &\leq 2c_0^{-3}\left(\normp{\varepsilon}{2}+\normp{\zeta}{2}+\normp{\varrho}{2}+\normp{\xi}{2}\right),
\end{align*}
where (i) holds by Minkowski's inequality and Assumption~\ref{cond:QR_basic} (a), and (ii) holds by
\eqref{bound:(tau^*-tau_n)_2}. It then follows from (b) that $\ssigma^2\leq 4c_0^{-6}G_0^{4/r}$. 
\end{proof}

\vspace{0.3cm}

\begin{proof}[Proof of Theorem~\ref{thm:QR_consistent}]
Recall \eqref{decom:QR_theta}, by Lemma~\ref{lem:Delta_k_0}, \ref{lem:Delta_k_1} and \ref{lem:Delta_k_2}, we have
\begin{align*}
    \Delta_{k,0}&=O_p\left(\frac{1}{\sqrt{n}}\left\{\normp{\varrho}{2}+\normp{\varepsilon}{2}+\normp{\zeta}{2}+\normp{\xi}{2}\right\}\right),\\
    \Delta_{k,1}&=O_p\left(\frac{1}{\sqrt{n}}(\bar{r}_{\mu}+\bar{r}_{\tau}+\bar{r}_{\tau_{n}}+\normp{\varepsilon}{2}+\normp{\varrho}{2}+\normp{\zeta}{2})\right),\\
    \Delta_{k,2}&=O_p\left(\bar{r}_{\mu}\bar{r}_{q}+\bar{r}_{\tau}\bar{r}_{\pi}+\bar{r}_{\tau_{n}}\bar{r}_{\pi}+\bar{r}_{\mu}\bar{r}_{\pi}+\idf_{\{\pi\neq \pi^*\}}\bar{r}_{\tau}+\idf_{\{q\neq q^*\}}\bar{r}_{\mu}\right.\nonumber\\
    &\quad\left.+\idf_{\{\mu\neq \mu^*\}}\bar{r}_{q}\normp{\varepsilon}{2}+\idf_{\{(\tau_n, \tau)\neq(\tau_n^*, \tau^*)\}}\bar{r}_{\pi}\{\normp{\varrho}{2}+\normp{\varepsilon}{2}+\normp{\zeta}{2}\}\right).
\end{align*}
Note that $n=N/\K\asymp N$, we have
\begin{align*}
    \hat{\theta}_{\mathrm{QR}}-\theta&=\K^{-1}\sum\limits_{k=1}^{\K}\Delta_{k,0}+\K^{-1}\sum\limits_{k=1}^{\K}\Delta_{k,1}+\K^{-1}\sum\limits_{k=1}^{\K}\Delta_{k,2}\\
    &=O_p\left(N^{-1/2}+\bar{r}_{\mu}\bar{r}_{q}+\bar{r}_{\tau}\bar{r}_{\pi}+\bar{r}_{\tau_{n}}\bar{r}_{\pi}+\bar{r}_{\mu}\bar{r}_{\pi}\right.
    \\ &\quad\left.+\idf_{\{\pi\neq \pi^*\}}\bar{r}_{\tau}+\idf_{\{q\neq q^*\}}\bar{r}_{\mu}+\idf_{\{\mu\neq \mu^*\}}\bar{r}_{q}+\idf_{\{(\tau_{n},\tau)\neq (\tau_{n}^*,\tau)\}}\bar{r}_{\pi}\right).
\end{align*}
\end{proof}

\begin{lemma}\label{lem:QR_consistent_lower}
Suppose that $\pi^*(\bX)$, $q^*(\bS)$, $\mu^*(\bS)$ and $\tau_n^*(\bX)$ are correctly specified. Let Assumption~\ref{cond:basic} hold, then we have
\begin{align}
    \E[\{\bpsi^*(\bW)-\theta\}^{2+t}]&\leq \frac{3^{t+1}}{c_0^{6+3t}}\E[\xi^{2+t}+\varrho^{2+t}+\varepsilon^{2+t}].\label{bound:QR_lyapunov}
\end{align}
\end{lemma}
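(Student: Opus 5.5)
Under the stated correctness of $\pi^*,q^*,\mu^*,\tau_n^*$, Lemma~\ref{lem:compatible} gives $\tau^*=\tau_n^*=\tau_n=\tau$. The QR score then collapses to the efficient influence function: the third term of \eqref{def:QR_form} vanishes and $\bpsi^*=\bpsi_{\mathrm{eff}}$. So I would write $\bpsi^*(\bW)-\theta$ as the sum of three pieces,
\begin{align*}
\bpsi^*(\bW)-\theta=\frac{Aq(\bS)}{\pi(\bX)}\{Y-\mu(\bS)\}+\frac{1-A}{1-\pi(\bX)}\{\mu(\bS)-\tau(\bX)\}+\{\tau(\bX)-\theta\}.
\end{align*}
On $A=1$ we have $Y-\mu(\bS)=\varepsilon$ by consistency (Assumption~\ref{cond:basic}(a)). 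On $A=0$, $\bS$ has the law of $\bzS$ given $\bX$, and $\mu(\bS)-\tau(\bX)$ corresponds to $\varrho=\mu^*(\bzS)-\tau_n^*(\bX)$. The last piece is $\xi$.

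Next I would apply the elementary convexity bound $|a+b+c|^{2+t}\le 3^{1+t}(|a|^{2+t}+|b|^{2+t}+|c|^{2+t})$. This is Jensen's inequality applied to $x\mapsto|x|^{2+t}$ at the average of $3a,3b,3c$, and it produces the factor $3^{t+1}$.

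I would then bound the weights using positivity. We have $q/\pi\le c_0^{-2}c_0^{-1}=c_0^{-3}$ and $1/(1-\pi)\le c_0^{-1}\le c_0^{-3}$. Raising these to the power $2+t$ gives the factor $c_0^{-(6+3t)}$, which dominates every term since $c_0<1$.

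Finally I would take expectations with the tower rule. For the first term, $\E[A|\varepsilon|^{2+t}]\le\E[|\varepsilon|^{2+t}]$. For the second term, I would condition on $\bX$ and use the ignorability and consistency argument from the proof of Lemma~\ref{lem:mu_relation}:
\begin{align*}
\E[(1-A)|\mu(\bS)-\tau(\bX)|^{2+t}]=\E\big[\{1-\pi(\bX)\}\E[|\varrho|^{2+t}\mid\bX]\big]\le\E[|\varrho|^{2+t}].
\end{align*}
The third term is exactly $\E[|\xi|^{2+t}]$.

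The only delicate point is the identification of the $A=0$ term with $\varrho$. This needs $\bM\mid A=0,\bX\overset{d}{=}\bM(0)\mid\bX$, which follows from Assumption~\ref{cond:basic}(a),(b). The displayed statement writes $\xi^{2+t}$ and so on without absolute values; I would read these as absolute moments, which are the quantities relevant for the Lyapunov condition.
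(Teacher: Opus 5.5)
Your proposal is correct and follows essentially the same route as the paper. The paper likewise uses Lemma~\ref{lem:compatible} to reduce $\bpsi^*(\bW)-\theta$ to three terms, which it calls $U_0+U_3+U_5$; it applies convexity of $u\mapsto|u|^{2+t}$ to get the factor $3^{1+t}$, and bounds the weights by positivity to get $c_0^{-(6+3t)}$.

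One small simplification: on $\{A=0\}$, consistency already gives $\bS=\bzS$ pointwise. Hence $(1-A)\{\mu(\bS)-\tau(\bX)\}=(1-A)\varrho$, and $\E[(1-A)|\varrho|^{2+t}]\le\E[|\varrho|^{2+t}]$ follows from $0\le 1-A\le 1$ without invoking ignorability.
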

\begin{proof}
We decompose $\bpsi^*(\bW)-\theta$ as 
\begin{align}
    \bpsi^*(\bW)-\theta=U_0+U_1+U_2+U_3+U_4+U_5+U_6,\label{decom:QR_var_lower}
\end{align}
where
\begin{align*}
    U_0&:=\tau(\bX)-\theta,\\
    U_1&:=\left\{1-\frac{A}{\pi^*(\bX)}\right\}\{\tau^*(\bX)-\tau(\bX)\},\\
    U_2&:=\left\{\frac{1-A}{1-\pi^*(\bX)}-\frac{A}{\pi^*(\bX)}\right\}\{\tau_{n}(\bX)-\tau_{n}^*(\bX)\},\\
    U_3&:=\frac{1-A}{1-\pi^*(\bX)}\{\mu^*(\bS)-\tau_{n}(\bX)\},\\
    U_4&:=\frac{A}{\pi^*(\bX)}\left[\{\tau_{n}(\bX)-\tau(\bX)\}-q^*(\bS)\{\mu^*(\bS)-\mu(\bS)\}\right],\\
    U_5&:=\frac{A q^*(\bS)}{\pi^*(\bX)}\{Y(1,\bM)-\mu(\bS)\}.
\end{align*}
When all the nuisance functions $\pi^*(\bX)$, $q^*(\bS)$, $\mu^*(\bS)$ and $\tau_n^*(\bX)$ are correctly specified, it follows from Lemma~\ref{lem:compatible} that $\tau^*=\tau$, then we have $U_1=U_2=U_4=0$, and therefore $\bpsi^*(\bW)-\theta=U_0+U_3+U_5$.

By definition of $\tau_{n}(\bX)$, $\tau(\bX)$ and $\mu(\bS)$, we have
\begin{align*}
    \E[U_3\mid\bX]&=\frac{1-\pi(\bX)}{1-\pi^*(\bX)}\E\left[\mu^*(\bS)-\tau_{n}(\bX)\mid\bX\right]=0,\\
    \E[U_5\mid\bX]&\overset{(i)}{=}\frac{\pi(\bX)}{\pi^*(\bX)}\E\left[q^*(\bS)\E[Y-\mu(\bS)\mid\bS,A=1]\bX,A=1\right]=0,
\end{align*}
where (i) holds by tower rule. 
Hence, for $j\in\{3,5\}$,
\begin{align*}
    \E[U_0U_3]=\E[U_0\E[U_3\mid\bX]]=0.
\end{align*}
Note that $U_3U_5=0$, then 
by the finite form of Jensen's inequality and convexity of $u \mapsto |u|^{2+t}$ for any $t>0$, we have
\begin{align*}
    \left|\frac{\bpsi^*(\bW)-\theta}{3}\right|^{2+t}=\left|\frac{U_0+U_3+U_5}{3}\right|^{2+t}\leq \frac{U_0^{2+t}+U_3^{2+t}+U_5^{2+t}}{3}.
\end{align*}
Therefore, 
\begin{align*}
    \E[\{\bpsi^*(\bW)-\theta\}^{2+t}]\leq 3^{1+t}\{\E[U_0^{2+t}]+\E[U_3^{2+t}]+\E[U_5^{2+t}]\}.
\end{align*}
Note that 
\begin{align*}
    \E[U_0^{2+t}]&=\E\left[\{\tau(\bX)-\theta\}^{2+t}\right]=\E[\xi^{2+t}],\\
    \E[U_3^{2+t}]&=\E\left[\frac{1-A}{\{1-\pi(\bX)\}^{2+t}}\{\mu(\bzS)-\tau_{n}(\bX)\}^{2+t}\right]\overset{(i)}{\leq}\frac{1}{c_0^{2+t}}\E[\varrho^{2+t}],\\
    \E[U_5^{2+t}]&=\E\left[\frac{A q^{2+t}(\bS)}{\{\pi(\bX)\}^{2+t}}\{Y(1,\bM)-\mu(\bS)\}^{2+t}\right]\overset{(ii)}{\geq}\frac{1}{c_0^{6+3t}}\E[\varepsilon^{2+t}],
\end{align*}
where (i) and (ii) hold since $|A|$, $|1-A|<1$ and $\pi(\bX)>c_0$ as well as $q(\bS)<c_0^2$ occurs with probability one under Assumption~\ref{cond:basic} (c).
Therefore, we conclude that
\begin{align*}
    \E[\{\bpsi^*(\bW)-\theta\}^{2+t}]&\leq 3^{1+t}\left\{\E[\xi^{2+t}]+\frac{1}{c_0^{2+t}}\E[\varrho^{2+t}]+\frac{1}{c_0^{6+3t}}\E[\varepsilon^{2+t}]\right\}\\
    &\leq \frac{3^{1+t}}{c_0^{6+3t}}\E[\xi^{2+t}+\varrho^{2+t}+\varepsilon^{2+t}].
\end{align*}
\end{proof}

\begin{proof}[Proof of Theorem~\ref{thm:QR_normal}]
Suppose that $\pi^*(\bX)$, $q^*(\bS)$, $\mu^*(\bS)$ and $\tau^*(\bX)$ are correctly specified. By Lemma~\ref{lem:Delta_k_0}, \ref{lem:Delta_k_1} and \ref{lem:Delta_k_2}, we have
\begin{align*}
    \Delta_{k,0}&=O_p\left(\frac{1}{\sqrt{n}}\left\{\normp{\varrho}{2}+\normp{\varepsilon}{2}+\normp{\zeta}{2}+\normp{\xi}{2}\right\}\right),\\
    \Delta_{k,1}&=O_p\left(\frac{1}{\sqrt{n}}(\bar{r}_{\mu}+\bar{r}_{\tau}+\bar{r}_{\tau_{n}}+\bar{r}_{\pi}+\bar{r}_{q})\right),\\
    \Delta_{k,2}&=O_p\left(\bar{r}_{\mu}\bar{r}_{q}+\bar{r}_{\tau}\bar{r}_{\pi}+\bar{r}_{\tau_{n}}\bar{r}_{\pi}+\bar{r}_{\mu}\bar{r}_{\pi}\right).
\end{align*}
Since $n\asymp N$, by assumption $\bar{r}_{\mu}\bar{r}_{q}+\bar{r}_{\tau}\bar{r}_{\pi}+\bar{r}_{\tau_{n}}\bar{r}_{\pi}+\bar{r}_{\mu}\bar{r}_{\pi}=o(N^{-1/2}\ssigma)$, Lemma~\ref{lem:QR_consistent_lower} and $\ssigma>\sqrt{c_1}>0$ under Assumption~\ref{cond:QR_basic}, we have
\begin{align}
    \Delta_{k,1}+\Delta_{k,2}=o_p\left(\frac{\ssigma}{\sqrt{N}}\right),\label{order:QR_ignore}\\
    \Delta_{k,0}=O_p\left(\frac{\ssigma}{\sqrt{N}}\right).\nonumber
\end{align}
Recall \eqref{decom:QR_theta}, we obtain 
\begin{align}
    \hat{\theta}_{\mathrm{QR}}-\theta&=\K^{-1}\sum\limits_{k=1}^{\K}\Delta_{k,0}+\K^{-1}\sum\limits_{k=1}^{\K}\Delta_{k,1}+\K^{-1}\sum\limits_{k=1}^{\K}\Delta_{k,2}=O_p\left(\frac{\ssigma}{\sqrt{N}}\right)\label{order:QR_consistent}
\end{align}

Now we justifies that $\sqrt{N}\ssigma^{-1}(\hat{\theta}_{\mathrm{QR}}-\theta)\leadsto \mathcal{N}(0,1)$. By \eqref{order:QR_ignore}, it suffices to show that 
\begin{align*}
    \sqrt{N}\ssigma^{-1}\K^{-1}\sum\limits_{k=1}^{\K}\Delta_{k,0}=\sqrt{N}\ssigma^{-1}(N^{-1}\sum\limits_{i}^{N}\bpsi^*(\bW_j)-\theta)\leadsto \mathcal{N}(0,1).
\end{align*}
Applying Lyapunov's central limit theorem, it  is guaranteed if for some $t>0$,
\begin{align}
    \ssigma^{-t-2}\E[\{\bpsi^*(\bW)-\theta\}^{t+2}]=O(1).\label{certify:QR}
\end{align}
We write $t=r-2>0$. Note that when all models are correctly specified, we have $\varrho=\zeta$. 
It follows from Lemma~\ref{lem:QR_consistent_lower} that
\begin{align*}
    \ssigma^{-t-2}\E[\{\bpsi^*(\bW)-\theta\}^{t+2}]\leq \frac{3^{1+t}\left(\E[\xi^{2+t}]+\E[\varrho^{2+t}]+\E[\varepsilon^{2+t}]\right)}{c_0^{6+3t}c_1^{t/2+1}}\overset{(i)}{\leq}\frac{3^{1+t}G_0}{c_0^{6+3t}c_1^{t/2+1}},
\end{align*}
where (i) holds by $\E[\xi^r+\varrho^r+\varepsilon^r]<G_0$ under Assumption~\ref{cond:QR_basic} (c). Since $3^{1+t}G_0 c_0^{-6-3t}c_1^{-t/2-1}$ is independent to $N$, we obtain \eqref{certify:QR}.

In what follows, we prove $\hat{\sigma}^2=\ssigma^2\{1+o_p(1)\}$. For each $k\in\{1,2,\ldots,\K\}$, 
\begin{align*}
    \E_{\mathbb S_k}\left[\left\{\frac{1}{n}\sum\limits_{j\in\I_k} \{\hat{\bpsi}^{(-k)}(\bW_j)-\bpsi^*(\bW_j)\}^2\right\}^{1/2}\right]&\overset{(i)}{\leq} \left\{\E_{\mathbb S_k}\left[\frac{1}{n}\sum\limits_{j\in\I_k} \{\hat{\bpsi}^{(-k)}(\bW_j)-\bpsi^*(\bW_j)\}^2\right]\right\}^{1/2}\\
    &\overset{(ii)}{=} \left\{\E\left[\{\hat{\bpsi}^{(-k)}(\bW)-\bpsi^*(\bW)\}^2\right]\right\}^{1/2}\\
    &\overset{(iii)}{=} \bar{r}_{\mu}+\bar{r}_{\tau}+\bar{r}_{\tau_{n}}+\bar{r}_{\pi}+\bar{r}_{q}= o_p(\ssigma).
\end{align*}
In the first line, the expectations are taken w.r.t. the joint distribution of $\{\bW_j\}_{j\in\I_k}$; while in the rest lines, the expectation is taken w.r.t. the joint distribution of a new $\bW$. Here, (i) holds by Jensen's inequality; (ii) holds since the estimator of nuisance functions are independent of $\{\bW_j\}_{j\in\I_k}$ based on cross-fitting, $\{\bW_j\}_{j\in\I_k}$ are i.i.d. distributed and $\bW$ is an independent copy of them; (iii) holds by \eqref{order:QR_sigmahat_upper}. Then by Chebyshev's inequality,
\begin{align}
    \left\{\frac{1}{n}\sum\limits_{j\in\I_k} \{\hat{\bpsi}^{(-k)}(\bW_j)-\bpsi^*(\bW_j)\}^2\right\}^{1/2}=o_p(\ssigma).\label{order:gen_sigma_decom1}
\end{align} 
Next, we define $a_j=\hat{\bpsi}^{(-k)}(\bW_j)-\bpsi^*(\bW_j)-(\hat{\theta}_{\mathrm{QR}}-\theta)$ and $b_j=\bpsi^*(\bW_j)-\theta$. By triangular inequality, it holds that
\begin{align} 
    &\left|\frac{1}{n}\sum\limits_{j\in\I_k}(\hat{\bpsi}^{(-k)}(\bW_j)-\hat{\theta}_{\mathrm{QR}})^2-\frac{1}{n}\sum\limits_{j\in\I_k}(\bpsi^*(\bW_j)-\theta)^2\right|\nonumber\\
    &\;\;\leq \frac{1}{n}\sum\limits_{j\in\I_k}|a_j|\cdot |a_j+2b_j|\overset{(i)}{\leq} \left[\frac{1}{n}\sum\limits_{j\in\I_k}a_j^2\right]^{\frac12}\cdot\left[\frac{1}{n}\sum\limits_{j\in\I_k}(a_j+2b_j)^2\right]^{\frac12}\nonumber\\
    &\;\;\leq \left[\frac{1}{n}\sum\limits_{j\in\I_k}a_j^2\right]^{\frac12}\cdot\left[\left(\frac{1}{n}\sum\limits_{j\in\I_k}a_j^2\right)^{\frac12}+2\left(\frac{1}{n}\sum\limits_{j\in\I_k}b_j^2\right)^{\frac12}\right],\label{decom:gen_sigma}
\end{align} 
where (i) holds by Cauchy-Schwarz inequality, and (ii) holds by Minkowski's inequality. By \eqref{order:gen_sigma_decom1} and \eqref{order:QR_consistent}, we have
\begin{align}
    \left[\frac{1}{n}\sum\limits_{j\in\I_k}a_j^2\right]^{\frac12}\leq \left[\frac{1}{n}\sum\limits_{j\in\I_k}|\hat{\bpsi}^{(-k)}(\bW_j)-\bpsi^*(\bW_j)|^2\right]^{\frac12}+ |\hat{\theta}_{\mathrm{QR}}-\theta|=o_p(\ssigma).\label{order:gen_sigma_decom2}
\end{align}
For fixed $k$, let $Z_{d_1,d_2,j}=\ssigma^{-2}(\bpsi^*(\bW_j)-\theta)^2-1$ for $j\in \{1,2,\cdots,n\}$, where $\bW_j$ and all of the nuisance functions $\pi^*,q^*,\mu^*,\tau_n^*,\tau^*$ are possibly dependent with $(d_1,d_2)$, and therefore with $N$. As a result, $Z_{d_1,d_2,j}$ forms a row-wise independent and identically distributed triangular array. By definition, $\E[Z_{d_1,d_2,j}]=1-1=0$ holds for each $j$. In addition, since $r/2>1$, 
\begin{align*}
    \left[\E\left|\frac{(\bpsi^*(\bW)-\theta)^2}{\ssigma^{2}}-1\right|^{r/2}\right]^{2/r}\overset{(i)}{\leq} \left[\frac{\E|(\bpsi^*(\bW)-\theta)^{r}|}{\ssigma^{r}}\right]^{2/r}+1\overset{(ii)}{\leq}C+1
\end{align*} 
for some constant $C>0$. In the above, (i) holds by Minkowski's inequality and Jensen's inequality, (ii) holds by \eqref{certify:QR}. Then, it follows that, for each $j$,
\begin{align*}
    \E[|Z_{d_1,d_2,j}|^{r/2}]=\E\left|\frac{(\bpsi^*(\bW)-\theta)^2}{\ssigma^{2}}-1\right|^{r/2}\leq (C+1)^{r/2}.
\end{align*}
By Lemma 3 of \cite{HighdimensionalSemisupervisedLearning2022}, it follows that
\begin{align}
    \frac{1}{n}\sum\limits_{j\in\I_k}b_j^2-\ssigma^2=\frac{1}{n}\sum\limits_{j\in\I_k}(\bpsi^*(\bW_j)-\theta)^2-\ssigma^2=o_p(\ssigma^2).\label{order:gen_sigma_decom3}
\end{align}
Together with \eqref{decom:gen_sigma} and \eqref{order:gen_sigma_decom2}, we obtain
\begin{align}
    \left|\frac{1}{n}\sum\limits_{j\in\I_k}(\hat{\bpsi}^{(-k)}(\bW_j)-\hat{\theta}_{\mathrm{QR}})^2-\frac{1}{n}\sum\limits_{j\in\I_k}(\bpsi^*(\bW_j)-\theta)^2\right|=o_p(\ssigma^2).\label{order:gen_sigma_decom4}
\end{align}
By \eqref{order:gen_sigma_decom3} and \eqref{order:gen_sigma_decom4}, we conclude that
\begin{align*}
    \hat{\sigma}^2-\ssigma^2&=\frac{1}{\K}\sum\limits_{k=1}^{\K}\frac{1}{n}\sum\limits_{j\in\I_k}(\hat{\bpsi}^{(-k)}(\bW_j)-\hat{\theta}_{\mathrm{QR}})^2-\ssigma^2\\
    &=\frac{1}{\K}\sum\limits_{k=1}^{\K}\left(\frac{1}{n}\sum\limits_{j\in\I_k}(\hat{\bpsi}^{(-k)}(\bW_j)-\hat{\theta}_{\mathrm{QR}})^2-(\bpsi^*(\bW)-\theta)^2+(\bpsi^*(\bW)-\theta)^2-\ssigma^2\right)\\
    &=o_p(\ssigma^2).
\end{align*}
\end{proof}

\section{Proofs of the results for the model quadruply robust (MQR) estimator}
\label{apdx:MQR}
In this section, we provide theoretical results for the MQR estimator, the primary proofs are provided in Section~\ref{apdx:proof_r_normality}. Since we work with parametric working models, we does not assume given nuisance rates as in Assumption~\ref{cond:QR_basic}, where no specific nuisance estimation procedures are specified. Instead, we explicitly derive the nuisance convergence rates; see results in Section~\ref{apdx:r_nuisance}.

Adopting the notation from Section~\ref{apdx:identi}, we present an equivalent characterization of Assumption~\ref{cond:MQR}, which is justified by Lemma~\ref{lem:r_identi}.
\begin{corollary}
    The model $\mathcal M_\mathrm{MQR}$ holds if and only if at least one of the following conditions is satisfied:
    (a) $\mu^*=\mu$ and $\pi^*=\pi$;
    (b) $\mu^*=\mu$, $\tau_n=\tau_n^*$ and $\tau=\tau^*$;
    (c) $\pi^*=\pi$ and $q^*=q$;
    (d) $q^*=q$, $\tau=\tau^*$, and $\bs{m}\in\mathcal{G}_{\bs{m}}$ (therefore $\tau_n=\tau_n^*$).
\end{corollary}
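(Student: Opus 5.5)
The plan is to translate each of the four parametric conditions (a)--(d) of Assumption~\ref{cond:MQR} into statements about the population targets $(\pi_a^*,\pi_b^*,q^*,\mu^*,\tau_n^*,\tau^*)$ defined by $\ell_1$--$\ell_6$, using Lemma~\ref{lem:r_identi} case by case. Throughout, ``$\pi^*=\pi$'' is read as $\pi_a^*=\pi_b^*=\pi$. We prove the two directions separately. For the ``only if'' direction we go through Assumption~\ref{cond:MQR} scenario by scenario. For the ``if'' direction we use the trivial implication in each part of Lemma~\ref{lem:r_identi}: if a target equals the truth, then the truth lies in the corresponding parametric class.

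For the ``only if'' direction we treat the four scenarios in turn.

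Scenario (a): $\mu$ is linear and $\pi$ is logistic. Lemma~\ref{lem:r_identi}(a) gives $\pi_a^*=\pi_b^*=\pi$, and Lemma~\ref{lem:r_identi}(b) gives $\mu^*=\mu$, so condition (a) of the corollary holds.

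Scenario (b): $\mu$ and $\tau_n$ are linear. Lemma~\ref{lem:r_identi}(b) yields $\mu^*=\mu$. Here $\tau_n(\bX)=\E[\mu^*(\bS)\mid A=0,\bX]$ is defined through $\mu^*$, so once $\mu^*=\mu$ it coincides with the true nested mean. Its linearity then gives $\tau_n^*=\tau_n$ by Lemma~\ref{lem:r_identi}(c). Part (f) of the same lemma then gives $\tau^*=\tau$, which is condition (b).

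Scenario (c): $\pi$ is logistic and $q$ is exponential. Lemma~\ref{lem:r_identi}(a) gives $\pi^*=\pi$. Lemma~\ref{lem:r_identi}(d), whose hypothesis is met because (a) holds, then gives $q^*=q$.

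Scenario (d): $q$ is exponential and $\tau$, $\bs m$ are linear. Since $\bs m\in\mathcal G_{\bs m}$, Lemma~\ref{lem:r_identi}(d) gives $q^*=q$. Lemma~\ref{lem:r_identi}(c) gives $\tau_n\in\mathcal G_\tau$ and hence $\tau_n^*=\tau_n$. Now both (c) and (d) of the lemma hold, so (e) applies and linearity of $\tau$ gives $\tau^*=\tau$.

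For the ``if'' direction we again go condition by condition.

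If corollary (a) holds, then $\pi=\pi_a^*\in\mathcal G_\pi$ and $\mu=\mu^*\in\mathcal G_\mu$ directly.

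If corollary (b) holds, then $\mu\in\mathcal G_\mu$ and $\tau_n=\tau_n^*\in\mathcal G_\tau$, which is Assumption~\ref{cond:MQR}(b).

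If corollary (c) holds, $\pi$ is logistic and $q=q^*$ is exponential.

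If corollary (d) holds, $q$ is exponential, $\tau=\tau^*$ is linear, and $\bs m$ is linear, which is Assumption~\ref{cond:MQR}(d).

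The only delicate step is the asymmetry in scenario (d). Correct specification of $q^*$ is not equivalent to $q\in\mathcal G_q$ unless $\pi$ is logistic or $\bs m$ is linear. This is why the corollary retains $\bs m\in\mathcal G_{\bs m}$ rather than replacing it with a statement about targets. A second point to track is the dependence of $\tau_n$ on $\mu^*$ in scenario (b), where one must note that $\tau_n$ is the truth once $\mu^*=\mu$. Apart from these points, the argument is a direct bookkeeping application of Lemma~\ref{lem:r_identi} and requires no new calculation.
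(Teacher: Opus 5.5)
Your proposal is correct and follows the paper's route: the paper gives no separate proof and simply states that the corollary is justified by Lemma~\ref{lem:r_identi}. Your scenario-by-scenario use of parts (a)--(f) for the forward direction, together with the trivial ``target equals truth implies the truth lies in the parametric class'' implications for the converse, is exactly that justification written out.
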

We proceed throughout this section using this equivalent representation of Assumption~\ref{cond:MQR}. As discussed in Section~\ref{sec:MQR}, with proposed MQR score defined in \eqref{def:psihat-MQR} and target nuisance parameters constructed through \eqref{eq:r_loss1}--\eqref{eq:r_loss6}, the resulting $\phi(\bW,\bnu^*)$ ensures that \eqref{eq:moment} hold. Specifically, we have
\begin{align}
    \E[\bnabla_{\bepia}\mpsi(\bW,\bnu^*)]&=1/2\E[\bnabla_{\betau}\ell_6(\bW,\bnu^*_6)]=\bz_{d_1},\label{eq:r_eq_pia}\\
    \E[\bnabla_{\bepib}\mpsi(\bW,\bnu^*)]&=-1/2\E[\bnabla_{\bettau}\ell_5(\bW,\bnu^*_5)]=\bz_{d_1},\label{eq:r_eq_pib}\\
    \E[\bnabla_{\beq}\mpsi(\bW,\bnu^*)]&=-1/2\E[\bnabla_{\bemu}\ell_4(\bW,\bnu^*_4)]=\bz_{d},\label{eq:r_eq_q}\\
    \E[\bnabla_{\bemu}\mpsi(\bW,\bnu^*)]&=-\E[\bnabla_{\beq}\ell_3(\bW,\bnu^*_3)]=\bz_{d_1},\label{eq:r_eq_mu}\\
    \E[\bnabla_{\betau}\mpsi(\bW,\bnu^*)]&=\E[\bnabla_{\bepia}\ell_1(\bW,\bepiar)]=\bz_{d_1},
    \label{eq:r_eq_tau_n}\\
    \E[\bnabla_{\bettau}\mpsi(\bW,\bnu^*)]&=-\E[\bnabla_{\bepia}\ell_1(\bW,\bepiar)+\bnabla_{\bepib}\ell_2(\bW,\bepibr)]=\bz_{d},
    \label{eq:r_eq_tau}
\end{align}
where
\begin{align*}
    &\nabla_{\bepia}\E[\ell_1(\bW,\bepia)]=\E\big[[1-Ag^{-1}(\bX^\top\bepia)]\bX\big],\\
    &\nabla_{\bepib}\E[\ell_2(\bW,\bepib)]=\E\big[[1-(1-A)\{1-g(\bX^\top\bepib)\}^{-1}]\bX\big],\\
    &\nabla_{\beq}\E[\ell_3(\bW,\bnu_3,\beq)]=\E\left[\left\{Ag^{-1}(\bX^\top\bepia)\exp(\bS^\top\beq)-(1-A)\{1-g(\bX^\top\bepib)\}^{-1}\right\}\bS\right],\\
    &\nabla_{\bemu}\E[\ell_4(\bW,\bnu_4,\bemu)]=-2\E\big[Ag^{-1}(\bX^\top\bepia)\exp(\bS^\top\beq)(Y-\bS^\top\bemu)\bS\big],\\
    &\nabla_{\bettau}\E[\ell_5(\bW,\bnu_5,\bettau)]=-2\E\big[(1-A)\exp(\bX^\top\bepib)(\bS^\top\bemu-\bX^\top\bettau)\bX\big],\\
    &\nabla_{\betau}\E[\ell_6(\bW,\bnu_6,\betau)]=-2\E\big[A\exp(-\bX^\top\bepia)\{\exp(\bS^\top\beq)(Y-\bS^\top\bemu)+\bX^\top\bettau-\bX^\top\betau\}\bX\big].
\end{align*}

\subsection{Auxiliary lemmas}

\begin{lemma}[Lemma~D.1 of \cite{HighDimensionalMEstimation2019}]\label{lem:psi2norm}
    Let $X, Y\in \R$ be random variables. If $|X|\leq|Y|$ a.s., then $\|X\|_{\psi_2}\leq\|Y\|_{\psi_2}$ and $\|cX\|_{\psi_2}=|c|\|X\|_{\psi_2} \ \forall c\in\R $.  If $\|X\|_{\psi_2}\leq\sigma$, then $E(X)\leq \sigma\sqrt\pi$ and $E(|X|^m)\leq2\sigma^m\Gamma(m/2+1)$, $\forall m\geq 2$, where $\Gamma(a)=\int_0^\infty x^{a-1}\exp(-x)\d x$. Additionally, $\E[\exp(tX)]\leq \exp(2\sigma^2t^2)$. Conversely, if $\E[|\bX|^{2k}]\leq 2\sigma^{2k}\Gamma(k+1)$, $\forall k\in\N$, then $\normg{X}\leq 2\sigma$.
    Let $\{X_i\}_{i=1}^n$ be random variables (possibly dependent) with $\max_{1\leq i \leq n}\|X_i\|_{\psi_2}\leq\sigma$, then $\|\max_{1\leq i \leq n}|X_i|\|_{\psi_2}\leq\sigma(\log n+2)^{1/2}$. 
\end{lemma}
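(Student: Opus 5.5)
This statement is a collection of standard facts about the $\psi_2$-Orlicz norm, and I would prove each part directly from the definition $\|X\|_{\psi_2}=\inf\{c>0:\E[\exp(X^2/c^2)]\le 2\}$.

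\textbf{Monotonicity and homogeneity.} If $|X|\le|Y|$ a.s., then $\exp(X^2/c^2)\le\exp(Y^2/c^2)$ pointwise. Hence every $c$ admissible for $Y$ is admissible for $X$, and the infimum for $X$ is no larger. Homogeneity follows by substituting $c\mapsto|c|c'$ in the defining set.

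\textbf{Moment bounds.} Suppose $\|X\|_{\psi_2}\le\sigma$, so that $\E\exp(X^2/\sigma^2)\le2$. Markov's inequality gives the tail bound
\[
\P(|X|>t)\le 2\exp(-t^2/\sigma^2).
\]
Integrating this tail through $\E|X|^m=\int_0^\infty m t^{m-1}\P(|X|>t)\,\d t$ and substituting $u=t^2/\sigma^2$ gives $\E|X|^m\le 2\sigma^m\Gamma(m/2+1)$. The case $m=1$ is $\E|X|\le\sigma\sqrt\pi$.

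\textbf{MGF bound.} For $\E\exp(tX)\le\exp(2\sigma^2t^2)$, the only step needing care is that the statement does not assume $\E X=0$. I would therefore avoid a Taylor expansion and use the pointwise inequality
\[
tX\le \frac{X^2}{a}+\frac{a t^2}{4}\quad\text{for any } a>0.
\]
Taking $a=\sigma^2$ and applying Jensen gives
\[
\E e^{tX}\le e^{\sigma^2t^2/4}\,\E e^{X^2/\sigma^2}\le 2e^{\sigma^2t^2/4}.
\]
This is below $e^{2\sigma^2t^2}$ once $\sigma^2t^2\ge \tfrac{4}{7}\log 2$. For small $|t|$, I would instead expand $e^{tX}\le 1+tX+\sum_{k\ge2}|tX|^k/k!$ and use the moment bounds just proved. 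The linear term $t\E X$ is the real obstacle here: without centering, the bound can fail as $t\to0$. Because the paper quotes this lemma from \cite{HighDimensionalMEstimation2019}, I would follow that source and read the MGF claim as holding for centered $X$ (or adopt its convention), then close the small-$t$ case via the series. Since the result is cited, deferring to the reference for this constant is acceptable.

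\textbf{Converse.} Expand $\E\exp(X^2/c^2)=1+\sum_{k\ge1}\E X^{2k}/(c^{2k}k!)$. The hypothesis bounds each term by $2(\sigma/c)^{2k}$. With $c=2\sigma$ the series is $\sum_k 2\cdot 4^{-k}=2/3$, so the total is at most $1+2/3\le2$, which gives $\|X\|_{\psi_2}\le2\sigma$.

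\textbf{Maximal inequality.} Let $Z=\max_i|X_i|$ and $c^2=\sigma^2(\log n+2)$. Bound the maximum by the sum:
\[
\E\exp(Z^2/c^2)\le\sum_i\E\exp\bigl(X_i^2/c^2\bigr).
\]
Apply Jensen with exponent $\sigma^2/c^2\le1$ to each term, which gives $\E\exp(X_i^2/c^2)\le 2^{\sigma^2/c^2}$. The sum is then at most $n\,2^{1/(\log n+2)}$, which exceeds 2 for large $n$, so the crude sum bound alone does not suffice. I would instead use
\[
e^{Z^2/c^2}\le 1+\sum_i\bigl(e^{X_i^2/c^2}-1\bigr),
\]
and, writing $p=\sigma^2/c^2$ and $Y_i=e^{X_i^2/\sigma^2}$, bound each summand via concavity of $y\mapsto y^{p}$:
\[
\E\bigl(Y_i^{p}-1\bigr)\le 2^{p}-1\le p\log 2.
\]
Summing gives
\[
\E e^{Z^2/c^2}\le 1+\frac{n\log2}{\log n+2}.
\]
This is still not $\le 2$ for large $n$, so the constant as stated does not follow from this route either. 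The standard argument combines a union-bound tail estimate with the integral form of the Orlicz norm, and I expect the cited source tunes the constant that way. I would reproduce their tail-integration argument rather than derive the exact constant independently.
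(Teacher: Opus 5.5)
The paper gives no proof of this lemma; it quotes it from the cited reference. So the only question is whether your argument establishes it. Several parts are correct:
\begin{itemize}
\item monotonicity and homogeneity;
\item the Markov tail bound $\P(|X|>t)\le 2e^{-t^2/\sigma^2}$ and the resulting $\E|X|^m\le 2\sigma^m\Gamma(m/2+1)$, hence $\E|X|\le\sigma\sqrt{\pi}$;
\item the series argument for the converse at $c=2\sigma$.
\end{itemize}
You are also right that the MGF claim is false as literally stated. Take $X\equiv\sigma\sqrt{\log 2}$: then $\|X\|_{\psi_2}=\sigma$, but $\E e^{tX}=e^{t\sigma\sqrt{\log 2}}>e^{2\sigma^2t^2}$ for small $t>0$. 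Under $\E X=0$ your two regimes do close. For $s=|t|\sigma$ with $s^2\le\frac{4}{7}\log 2$, the moment bounds give $\E e^{tX}\le 1+s^2(1+0.45s+s^2/6+\cdots)\le 1+1.4s^2\le e^{2s^2}$. Note that the paper invokes the MGF bound in Lemma~\ref{lem:usebound}(b) on $\bU^\top(\halpha-\balpha)$, which is not centered. Your uncentered bound $\E e^{tX}\le 2e^{\sigma^2t^2/4}$ is what actually justifies that step, and it still yields $O_p(1)$.

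The genuine gap is the maximal inequality, and deferring to a tail-integration argument from the source cannot close it. With $\log$ the natural logarithm, $\|\max_i|X_i|\|_{\psi_2}\le\sigma(\log n+2)^{1/2}$ is false for large $n$. Let $I$ be uniform on $\{1,\dots,n\}$ and set $X_i=\sigma\sqrt{\log(n+1)}\,\idf\{I=i\}$; the statement explicitly allows such dependence. Then $\E e^{X_i^2/\sigma^2}=(1-1/n)+(n+1)/n=2$, so $\|X_i\|_{\psi_2}=\sigma$. But $\max_i|X_i|\equiv\sigma\sqrt{\log(n+1)}$, whose $\psi_2$-norm is $\sigma\{\log(n+1)/\log 2\}^{1/2}$. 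This exceeds $\sigma(\log n+2)^{1/2}$ for every $n\ge 89$; at $n=100$, $\log_2 101\approx 6.66>\log 100+2\approx 6.61$.

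The correct and sharp statement is $\|\max_i|X_i|\|_{\psi_2}\le\sigma\{\log_2(n+1)\}^{1/2}$. Your concavity idea proves it once Jensen is applied to the whole sum rather than term by term. Let $Z=\max_i|X_i|$, $Y_i=e^{X_i^2/\sigma^2}\ge 1$ and $p=\sigma^2/c^2\le 1$. Then $e^{Z^2/c^2}=(\max_i Y_i)^p\le\{1+\sum_i(Y_i-1)\}^p$, hence $\E e^{Z^2/c^2}\le(1+n)^p\le 2$ as soon as $c^2\ge\sigma^2\log_2(n+1)$. Since $\log_2(n+1)\le(\log n+2)/\log 2$, this gives the stated bound up to the factor $(\log 2)^{-1/2}$. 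It gives the bound exactly if $\log$ is read as $\log_2$. That is all the paper needs where it uses the inequality, namely to get $\|\,\|\bS\|_\infty\|_{\P,6}=O(\sqrt{\log d})$ in Lemma~\ref{lem:r_gradient'}.
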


\begin{lemma}\label{lem:momentmatrix}
    Let Assumption~\ref{cond:basic2} hold. Denote $\lambda_{\rm{\min}}(\bs{B})$ as the smallest eigenvalue of matrix $\bs{B}$. Then with some constant $c_{\rm{\min}}>0$, we have

    \emph{(i)} $\lambda_{\min}(\E[A\bS\bS^\top])\geq c_{\rm{\min}}$ and $\normg{A\bs{v}\bS}\leq 2\sigma_{\bS}\|\bs{v}\|_2$ holds for all $\bs{v}\in\R^d$.
    
    \emph{(ii)} $\lambda_{\min}(\E[A\bX\bX^\top])\geq c_{\rm{\min}}$ and $\normg{A\bs{v}\bX}\leq 2\sigma_{\bS}\|\bs{v}\|_2$ holds for all $\bs{v}\in\R^{d_1}$.

    \emph{(iii)} $\lambda_{\min}(\E[(1-A)\bX\bX^\top])\geq c_{\rm{\min}}$ and $\normg{(1-A)\bs{v}\bX}\leq 2\sigma_{\bS}\|\bs{v}\|_2$ holds for all $\bs{v}\in\R^{d_1}$. 
\end{lemma}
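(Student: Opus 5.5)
The plan is to reduce both halves of each of (i)--(iii) to Assumption~\ref{cond:basic2}~(b), together with elementary facts about the $\psi_2$-norm collected in Lemma~\ref{lem:psi2norm}. The sub-Gaussian parts are immediate. Since $A\in\{0,1\}$, we have $|A\bv^\top\bS|\le|\bv^\top\bS|$ almost surely, so the monotonicity statement of Lemma~\ref{lem:psi2norm} gives $\normg{A\bv^\top\bS}\le\normg{\bv^\top\bS}\le\sigma_{\bS}\|\bv\|_2\le 2\sigma_{\bS}\|\bv\|_2$. For (ii) and (iii), take $\bv\in\R^{d_1}$ and embed it as $\tilde\bv=(\bv^\top,\bz_{d_2}^\top)^\top\in\R^d$, so that $\tilde\bv^\top\bS=\bv^\top\bX$ and $\|\tilde\bv\|_2=\|\bv\|_2$. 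The same comparison, using $|A|\le1$ and $|1-A|\le1$, then gives the $\bX$ bounds. The slack factor $2$ is not needed here.

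For the eigenvalue bounds, use that a second-moment matrix dominates the covariance matrix. For any random vector $\bs{Z}$ with finite second moments,
\[
\E[\bs{Z}\bs{Z}^\top]=\Var(\bs{Z})+\E[\bs{Z}]\E[\bs{Z}]^\top\succeq\Var(\bs{Z}).
\]
Apply this with $\bs{Z}=A\bS$, noting $A^2=A$ so that $\E[A\bS\bS^\top]=\E[(A\bS)(A\bS)^\top]$. Assumption~\ref{cond:basic2}~(b) then gives $\lambda_{\min}(\E[A\bS\bS^\top])\ge\lambda_{\min}(\Var(A\bS))\ge c_{\min}$. Item (iii) is the same argument with $\bs{Z}=(1-A)\bX$, using $(1-A)^2=1-A$ and the assumed bound on $\Var\{(1-A)\bX\}$.

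Item (ii) is the one place that needs an extra step, because no assumption is placed on $\Var(A\bX)$ directly. The approach is to note that $\E[A\bX\bX^\top]$ is the leading $d_1\times d_1$ principal submatrix of $\E[A\bS\bS^\top]$. For a unit vector $\bv\in\R^{d_1}$, we have $\bv^\top\E[A\bX\bX^\top]\bv=\tilde\bv^\top\E[A\bS\bS^\top]\tilde\bv\ge c_{\min}$. By Courant--Fischer (Cauchy interlacing), this gives the eigenvalue bound in (ii) from (i). This is also how the uniqueness argument in Section~\ref{apdx:unique} already passes from $\E[A\bX\bX^\top]$ to $\E[A\bS\bS^\top]$.

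No step appears to be a real obstacle. The only points that need care are the idempotence $A^2=A$, which lets the second-moment matrix be written as $\E[\bs{Z}\bs{Z}^\top]$, and the embedding of $\bv$ into $\R^d$. The constant $c_{\min}$ can be taken equal to that in Assumption~\ref{cond:basic2}, and the constant $2\sigma_{\bS}$ is conservative.
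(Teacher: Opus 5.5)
Your proposal is correct and follows the paper's proof: the eigenvalue bounds come from $\E[\bs{Z}\bs{Z}^\top]\succeq\Var(\bs{Z})$ with $\bs{Z}=A\bS$ and $\bs{Z}=(1-A)\bX$, and (ii) follows because $\E[A\bX\bX^\top]$ is a principal block of $\E[A\bS\bS^\top]$. The only difference is in the sub-Gaussian bounds. You use the monotonicity of $\normg{\cdot}$ from Lemma~\ref{lem:psi2norm} directly and obtain the sharper constant $\sigma_{\bS}\|\bv\|_2$. The paper instead compares even moments, $\E|A\bv^\top\bS|^{2k}\le\E|\bv^\top\bS|^{2k}$, and then applies the moment-to-$\psi_2$ converse in Lemma~\ref{lem:psi2norm}; that converse is where its factor $2$ comes from.
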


\begin{lemma}[Lemma~S.5 of \cite{DynamicTreatmentEffects2023}]\label{lem:useinrsc}
    Suppose that $\mathbb{S}'=(\bU_i)_{i\in\mathcal J}$ are independent and identically distributed sub-Gaussian random vectors, i.e., $\|\bs{a}^\top\bU\|_{\psi_2}\leq\sigma_{\bU}\|\bs{a}\|_2$ for all $\bs{a}\in\R^d$ with some constant $\sigma_{\bU}>0$. Additionally, suppose $\lambda_{\min}\left\{\E(\bU\bU^\top)\right\}$ is bounded below by some constant $\lambda_{\bU}>0$. Let $M=|\mathcal J|$. For any continuous function $h:\R\to(0,\infty)$, $v\in[0,1]$, and $\bs{\eta}\in\R^d$ satisfying $\E\{|\bU^\top\bs{\eta}|^c\}<C_1$ with some constants $c,C_1>0$, there exists constants $\kappa_1,\kappa_2,c_1,c_2>0$, such that
    \begin{align*}
    &\P_{\mathbb{S}'}\left(M^{-1}\sum_{i\in\mathcal J}h(\bU_i^\top(\bs{\eta}+v\bDelta))(\bU_i^\top\bDelta)^2\geq\kappa_1\|\bDelta\|_2^2-\kappa_2\frac{\log d}{M}\|\bDelta\|_1^2,\;\;\forall\|\bDelta\|_2\leq1\right)\nonumber\\
    &\qquad\geq1-c_1\exp(-c_2M).
    \end{align*}
\end{lemma}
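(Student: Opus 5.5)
The plan is a truncation argument that reduces the weighted quadratic form to an unweighted, truncated one. That truncated form is then handled by a standard restricted-eigenvalue/peeling argument in the style of Negahban et al.\ and Loh--Wainwright.

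\textbf{Step 1: remove the weight $h$.} Fix a large constant $T>0$ and set $h_T:=\min_{|x|\le 2T}h(x)>0$, which is positive by continuity and positivity of $h$. On the event $\{|\bU_i^\top\bs{\eta}|\le T,\ |\bU_i^\top\bDelta|\le T\|\bDelta\|_2\}$, and since $v\in[0,1]$ and $\|\bDelta\|_2\le 1$, we have $|\bU_i^\top(\bs\eta+v\bDelta)|\le 2T$. Hence, uniformly in $v$,
\begin{align*}
M^{-1}\sum_{i\in\mathcal J}h(\bU_i^\top(\bs{\eta}+v\bDelta))(\bU_i^\top\bDelta)^2\ \ge\ h_T\,M^{-1}\sum_{i\in\mathcal J}(\bU_i^\top\bDelta)^2\,\varphi_T(\bU_i^\top\bs\eta)\,\varphi_{T\|\bDelta\|_2}(\bU_i^\top\bDelta),
\end{align*}
where $\varphi_t$ is a Lipschitz surrogate of the indicator $\idf_{|x|\le t}$: it equals $1$ on $[-t/2,t/2]$, vanishes outside $[-t,t]$, and is linear in between. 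This step eliminates $v$ entirely. The right-hand side is also homogeneous of degree two in $\bDelta$, so it suffices to work with $\|\bDelta\|_2=1$, with the $\ell_1$ term rescaled accordingly.

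\textbf{Step 2: population lower bound.} For $\|\bDelta\|_2=1$ we write
\begin{align*}
\E[(\bU^\top\bDelta)^2\varphi\varphi]\ \ge\ \lambda_{\bU}-\E[(\bU^\top\bDelta)^2\idf_{|\bU^\top\bs\eta|>T/2}]-\E[(\bU^\top\bDelta)^2\idf_{|\bU^\top\bDelta|>T/2}].
\end{align*}
We bound both tail terms by Cauchy--Schwarz together with the sub-Gaussian fourth moment $\E(\bU^\top\bDelta)^4\le C\sigma_{\bU}^4$ (Lemma~\ref{lem:psi2norm}). The tail probabilities are controlled as follows: Markov's inequality with $\E|\bU^\top\bs\eta|^c<C_1$ gives $P(|\bU^\top\bs\eta|>T/2)\le C_1(2/T)^c$, and the sub-Gaussian tail controls the second. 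Choosing $T$ large, depending only on $c,C_1,\sigma_{\bU},\lambda_{\bU}$, makes both tail terms at most $\lambda_{\bU}/4$. This gives a population curvature of at least $\lambda_{\bU}/2$.

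\textbf{Step 3: uniform concentration (main obstacle).} We must show that, with probability at least $1-c_1e^{-c_2M}$, for all $\bDelta$ with $\|\bDelta\|_2=1$ and $\|\bDelta\|_1\le r$ (peeling over $r$),
\begin{align*}
\Bigl|(M^{-1}\textstyle\sum_i-\E)\,[(\bU^\top\bDelta)^2\varphi\varphi]\Bigr|\ \le\ \lambda_{\bU}/4+C\,r\sqrt{\log d/M}.
\end{align*}
The summands are bounded by $T^2$, so bounded differences (McDiarmid) give exponential concentration around the expected supremum. For the expectation we would use symmetrization and then the Ledoux--Talagrand contraction inequality. This applies because $x\mapsto x^2\varphi_T(x)$ is Lipschitz with constant $O(T)$ on the truncated range, and the factor $\varphi_T(\bU^\top\bs\eta)$ is a fixed bounded multiplier. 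The contraction reduces the problem to $\E\sup_{\|\bDelta\|_1\le r}|M^{-1}\sum_i\epsilon_i\bU_i^\top\bDelta|\le r\,\E\|M^{-1}\sum_i\epsilon_i\bU_i\|_\infty\lesssim r\sigma_{\bU}\sqrt{\log d/M}$. A union bound over a geometric grid of $r$ (peeling) makes the bound uniform.

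\textbf{Step 4: combine.} Steps 2 and 3 together give a lower bound of $\lambda_{\bU}/4-C\|\bDelta\|_1\|\bDelta\|_2\sqrt{\log d/M}$ after undoing the normalization. Applying $ab\le \epsilon a^2+b^2/(4\epsilon)$ yields $\kappa_1\|\bDelta\|_2^2-\kappa_2(\log d/M)\|\bDelta\|_1^2$, and multiplying by $h_T$ from Step 1 completes the argument.

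The delicate point is Step 3. The contraction argument needs the truncated map to be Lipschitz uniformly, and the threshold must scale as $T\|\bDelta\|_2$ so that homogeneity is preserved. At the same time the exponential probability must not be lost in the peeling union bound. If that bound becomes too costly, we would instead restrict to a single $\ell_1/\ell_2$ cone and use a Maurey-type sparse-covering argument.
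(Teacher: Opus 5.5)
The paper does not prove this lemma; it imports it from Lemma~S.5 of \cite{DynamicTreatmentEffects2023}. Your argument is the standard proof behind that result and it is correct: truncate to replace $h$ by $\min_{|x|\le 2T}h(x)$, get population curvature from $\lambda_{\min}\{\E(\bU\bU^\top)\}$ with tail control via the $c$-th moment of $\bU^\top\bs\eta$, then use bounded differences, symmetrization, Ledoux--Talagrand contraction and peeling, as in the Negahban et al./Loh--Wainwright restricted-strong-convexity proof for GLMs. Your Step~1 bound holds uniformly in $v$, which is what Parts~1--3 of the proof of Lemma~\ref{lem:r_rsc} need, since there $v$ comes from Taylor's theorem and depends on $\Delta$.

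The peeling worry is harmless, and you do not need a Maurey-type covering. When $\|\Delta\|_1>\sqrt{\kappa_1 M/(\kappa_2\log d)}\,\|\Delta\|_2$ the right-hand side is nonpositive and the inequality holds trivially, so only $O(\log M)$ shells matter and the union-bound factor is absorbed into $c_1e^{-c_2M}$. Alternatively, let the deviation level grow with $r$ so that the failure probabilities sum.
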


\begin{lemma}[Corollary 2.3 of \cite{NemirovskisInequalitiesRevisited2010}]\label{bound_expectation_sum}
    Let $\{X_i\}_{i=1}^n$ be identically distributed, then
    $$\E\left[\left\|n^{-1}\sum_{i=1}^{n} X_i\right\|_{\infty}^2\right]\leq n^{-1} (2e\log d-e) \E\left[\|X_i\|_{\infty}^2\right].$$
\end{lemma}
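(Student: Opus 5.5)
The statement as written omits two hypotheses that the argument needs and that hold wherever the lemma is applied later: the $X_i\in\R^d$ are \emph{independent} (or at least form a martingale difference sequence) and \emph{centered}, $\E X_i=\bz$. We also take $d\geq 3$, so that $p:=2\log d\geq 2$. The plan is the standard smoothness argument behind Nemirovski's inequality. Replace $\|\cdot\|_\infty$ by the $\ell_p$ norm with $p=2\log d$, which is a smooth norm. Bound the second moment of the sum in $\ell_p$ by a martingale-type recursion. Then return to $\ell_\infty$ using norm equivalence.

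\textbf{Step 1 (norm comparison).} For every $\bs{z}\in\R^d$ we have $\|\bs{z}\|_\infty\leq\|\bs{z}\|_p\leq d^{1/p}\|\bs{z}\|_\infty$. With $p=2\log d$ this gives $d^{1/p}=e^{1/2}$, hence
\[
\|\bs{z}\|_\infty^2\leq\|\bs{z}\|_p^2\leq e\,\|\bs{z}\|_\infty^2 .
\]

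\textbf{Step 2 (smoothness inequality).} Let $V(\bs{z}):=\|\bs{z}\|_p^2$. We show that $V$ is continuously differentiable and satisfies
\[
V(\bs{z}+\bs{h})\leq V(\bs{z})+\langle\nabla V(\bs{z}),\bs{h}\rangle+(p-1)\|\bs{h}\|_p^2
\quad\text{for all }\bs{z},\bs{h}\in\R^d .
\]
This is the $2$-smoothness of $\ell_p$ with constant $p-1$. One route is to compute the Hessian of $\|\bs{z}\|_p^2$ along a segment and bound its quadratic form in direction $\bs{h}$ by $2(p-1)\|\bs{h}\|_p^2$, using H\"older's inequality with exponents $p/(p-2)$ and $p/2$. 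Alternatively, we simply invoke the uniform smoothness of $L_p$ with modulus $(p-1)/2$. This is the only nontrivial analytic step and is where the main care is needed; in particular, differentiability fails at $\bs{z}=\bz$ and on coordinate hyperplanes. We handle this by restricting to $p\geq 2$, so that $|z_j|^{p-2}$ is continuous, and by treating the origin separately.

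\textbf{Step 3 (recursion on partial sums).} Let $S_k:=\sum_{i\leq k}X_i$. Apply Step 2 with $\bs{z}=S_{k-1}$ and $\bs{h}=X_k$, and take expectations. The cross term is
\[
\E\langle\nabla V(S_{k-1}),X_k\rangle=\E\langle\nabla V(S_{k-1}),\E[X_k\mid S_{k-1}]\rangle=0
\]
by independence and centering. Hence $\E V(S_k)\leq\E V(S_{k-1})+(p-1)\E\|X_k\|_p^2$, and by induction $\E\|S_n\|_p^2\leq(p-1)\sum_{i=1}^n\E\|X_i\|_p^2$.

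\textbf{Step 4 (conclusion).} Combining Steps 1 and 3 with the identical distribution of the $X_i$,
\[
\E\|S_n\|_\infty^2\leq\E\|S_n\|_p^2\leq(2\log d-1)\,n\,e\,\E\|X_1\|_\infty^2 .
\]
Dividing by $n^2$ gives exactly the stated bound $n^{-1}(2e\log d-e)\,\E\|X_1\|_\infty^2$.
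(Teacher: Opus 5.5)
Your proof is correct and is precisely the argument behind the result the paper cites without proof (Corollary~2.3 of D\"umbgen, van de Geer, Veraar and Wellner). Its ingredients are the $(p-1)$-smoothness of $\|\cdot\|_p^2$ for $p\geq 2$, the martingale recursion, and the choice $p=2\log d$ with $d\geq 3$, which produces the constant $(2\log d-1)e$. Only the origin needs separate care in the smoothness step, not the coordinate hyperplanes: for $p\geq 2$ this function is $C^1$ everywhere and $C^2$ away from the origin.

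You are also right that independence (or a martingale-difference structure) and $\E X_i=\bz$ must be added; the printed statement fails for $X_i\equiv\bs{x}\neq\bz$. Both conditions hold where the paper uses the lemma, since the summands there are i.i.d.\ given the out-of-fold nuisance estimates and are centered by the tower rule.
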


\begin{lemma}[Lemma~S.8 of \cite{DynamicTreatmentEffects2023}]\label{lemma:sol}
    Suppose $a,b,c,x\in\R$, $a>0$, and $b,c>0$. Let $ax^2-bx-c\leq0$. Then
        $$x\leq\frac{b}{a}+\sqrt\frac{c}{a}.$$
\end{lemma}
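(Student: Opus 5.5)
The statement to prove is Lemma~\ref{lemma:sol}: if $a>0$, $b,c>0$ and $ax^2-bx-c\le 0$, then $x\le b/a+\sqrt{c/a}$. I will argue directly from the quadratic formula.

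First, the assumption $ax^2-bx-c\le0$ with $a>0$ describes a closed interval. The polynomial $p(x)=ax^2-bx-c$ is an upward-opening parabola with discriminant $b^2+4ac>0$, so it has two real roots. Hence $p(x)\le0$ holds exactly when $x$ lies between them, and in particular
\[
x\le x_+:=\frac{b+\sqrt{b^2+4ac}}{2a}.
\]

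Second, I compare $x_+$ with the target bound. By subadditivity of the square root on nonnegative reals, $\sqrt{u+v}\le\sqrt u+\sqrt v$, we get $\sqrt{b^2+4ac}\le b+2\sqrt{ac}$. Substituting this,
\[
x_+\le \frac{2b+2\sqrt{ac}}{2a}=\frac{b}{a}+\sqrt{\frac{c}{a}},
\]
which is the claim.

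An alternative that avoids the roots is a case split. If $x\le b/a$ there is nothing to prove. Otherwise $x>b/a>0$, so $ax^2-bx=x(ax-b)\ge (x-b/a)\,a(x-b/a)$, since $x\ge x-b/a\ge0$. Then $a(x-b/a)^2\le c$, which gives $x-b/a\le\sqrt{c/a}$.

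The only step needing any care is the inequality $\sqrt{b^2+4ac}\le b+2\sqrt{ac}$, and it is verified by squaring both nonnegative sides. The nonnegativity of $b$ and $c$ is what makes the square roots well defined.
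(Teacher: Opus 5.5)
Your proof is correct: bounding $x$ by the larger root $\bigl(b+\sqrt{b^2+4ac}\bigr)/(2a)$ and then using $\sqrt{b^2+4ac}\le b+2\sqrt{ac}$ gives the claim, and your case-split alternative is also valid. The paper gives no proof of its own here; it imports the result as Lemma~S.8 of the cited work. Your larger-root argument is the standard proof of that lemma.
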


\begin{lemma}\label{lem:usebound}
    Let Assumption \ref{cond:basic2} hold. Let $d_0\in\{d_1,d\}$ and $\{G_{N}\}$ be a positive sequence (possibly depending on $d_1$ and $d_2$). Let $r>0$ be any positive constant. Let $\balpha$ be one of $(\bepiar,\bepibr,\betaur,\bettaur, \bemur,\betaur)$ and $\halpha$ be the corresponding estimate such that $\|\balpha-\halpha\|_2=O_p(G_{N})$ as $N$, $d_1$, $d_2\to\infty$.  

    \emph{(a)} When $d_0=d_1$, let $\bU=\bX$ and $\balpha$ be one of $\bepiar,\bepibr,\betaur,\bettaur$. Then as $N$, $d_1$, $d_2\to\infty$,  
    \begin{align*}
        \normp{\bU^\top(\balpha-\halpha)}{r}=O_p(G_{N}).
    \end{align*}
    When $d_0=d$, let $\bU=\bS$ and $\balpha$ be one of $\beqr,\bemur$, then above result still holds.
    \vspace{0.3em}

    \emph{(b)} When $d_0=d_1$, let $\balpha$ be $\bepiar$ or $\bepibr$. As $N$, $d_1$, $d_2\to\infty$, suppose that $G_N=O_p(1)$, we have 
    \begin{align*}
        \normp{\exp(-\bU^\top\balpha)-\exp(-\bU^\top\halpha)}{r}=\normp{g^{-1}(\bU^\top\balpha)-g^{-1}(\bU^\top\halpha)}{r}=O_p(G_{N}),\\
        \normp{\exp(\bU^\top\balpha)-\exp(\bU^\top\halpha)}{r}=\normp{\frac{1}{1-g(\bU^\top\balpha)}-\frac{1}{1-g(\bU^\top\halpha)}}{r}=O_p(G_{N}).
    \end{align*}
    When $d_0=d$, let $\bU=\bS$, $\balpha=\beqr$ and $G_N=O_p(1)$, then above results still hold.  
\end{lemma}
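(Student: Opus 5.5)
Both parts rest on two facts. First, by cross-fitting, $\halpha$ is independent of the new copy $\bU$, so every $\normp{\cdot}{r}$ below is evaluated with $\halpha$ held fixed. Second, $\bU$ is sub-Gaussian in every direction (Assumption~\ref{cond:basic2}~(b)). Throughout, write $\bDelta:=\halpha-\balpha$; it is fixed given the estimation sample and satisfies $\|\bDelta\|_2=O_p(G_N)$.

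For part (a), I apply Assumption~\ref{cond:basic2}~(b) to $\bv=\bDelta$, giving $\normg{\bU^\top\bDelta}\leq\sigma_{\bS}\|\bDelta\|_2$. This holds with $\bU=\bX$ (take the sub-vector with zeros in the mediator coordinates) and with $\bU=\bS$. By the moment bound in Lemma~\ref{lem:psi2norm},
\[
\E\big[|\bU^\top\bDelta|^m\big]\leq 2\sigma_{\bS}^m\Gamma(m/2+1)\|\bDelta\|_2^m \quad\text{for } m\geq 2.
\]
For $r<2$ I reduce to $m=2$ by Jensen's inequality. Taking $r$-th roots gives $\normp{\bU^\top(\balpha-\halpha)}{r}\leq C_r\|\bDelta\|_2=O_p(G_N)$.

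For part (b), I first use the identities $g^{-1}(x)=1+\exp(-x)$ and $\{1-g(x)\}^{-1}=1+\exp(x)$. These show that the two norms in each displayed line coincide, so only the exponential differences need bounding. By the mean value theorem, for some $t\in[0,1]$,
\[
\exp(-\bU^\top\balpha)-\exp(-\bU^\top\halpha)=\exp(-\bU^\top\balpha-t\,\bU^\top\bDelta)\,\bU^\top\bDelta .
\]
This gives the pointwise bound
\[
\big|\exp(-\bU^\top\balpha)-\exp(-\bU^\top\halpha)\big|\leq \exp(-\bU^\top\balpha)\,\exp(|\bU^\top\bDelta|)\,|\bU^\top\bDelta| .
\]
The same bound holds with $+$ signs in the exponents.

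The first factor is bounded almost surely by positivity. Assumption~\ref{cond:basic2}~(a) gives $\pi_a^*,\pi_b^*\in[c_0,1-c_0]$, hence $\exp(\mp\bX^\top\balpha)\leq(1-c_0)/c_0$ for $\balpha\in\{\bepiar,\bepibr\}$. It also gives $q^*\in[c_0^2,c_0^{-2}]$, which bounds $\exp(\pm\bS^\top\beqr)$. I then apply H\"older's inequality to the remaining product:
\[
\normp{\exp(|\bU^\top\bDelta|)\,\bU^\top\bDelta}{r}\leq\normp{\exp(|\bU^\top\bDelta|)}{2r}\,\normp{\bU^\top\bDelta}{2r}.
\]
The second factor is $O_p(G_N)$ by part (a) with exponent $2r$.

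The main obstacle is the exponential moment $\normp{\exp(|\bU^\top\bDelta|)}{2r}$, and it is the only place where $G_N=O_p(1)$ is used. I would bound $\exp(|x|)\leq\exp(x)+\exp(-x)$ and apply the mgf bound $\E[\exp(s\bU^\top\bDelta)]\leq\exp(2\sigma_{\bS}^2s^2\|\bDelta\|_2^2)$ from Lemma~\ref{lem:psi2norm} with $s=\pm 2r$. This gives
\[
\normp{\exp(|\bU^\top\bDelta|)}{2r}\leq\{2\exp(8r^2\sigma_{\bS}^2\|\bDelta\|_2^2)\}^{1/(2r)},
\]
which is $O_p(1)$ because $\|\bDelta\|_2=O_p(1)$. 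Combining the three factors yields the $O_p(G_N)$ bound for each displayed quantity, and the same argument covers $d_0=d$ with $\bU=\bS$ and $\balpha=\beqr$.
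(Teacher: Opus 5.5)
Your proof is correct and follows essentially the same route as the paper. Part (a) comes from the sub-Gaussian moment bound of Lemma~\ref{lem:psi2norm}; part (b) factors out the almost surely bounded term $\exp(\mp\bU^\top\balpha)$, applies the mean value theorem, and then uses H\"older's inequality with the sub-Gaussian mgf bound, which is where $G_N=O_p(1)$ enters. The only cosmetic difference is that you dominate the mean-value factor by $\exp(|\bU^\top\bDelta|)\le\exp(\bU^\top\bDelta)+\exp(-\bU^\top\bDelta)$, whereas the paper uses $\exp\{-v\bU^\top(\balpha-\halpha)\}\le 1+\exp\{\bU^\top(\halpha-\balpha)\}$.
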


\begin{lemma}\label{lem:tilde}
    Let Assumptions \ref{cond:basic} and \ref{cond:basic2} hold. Define $\tvarepsilon:=Y(1,\bM)-\bS^\top\tbemur$, $\tzeta:=\bzS^\top\tbemur-\bX^\top\tbetau^*$ and $\tvarrho:=\bzS^\top\tbemur-\bX^\top\tbettaur$, where $\tbemur:=t_1\bemur+(1-t_1)\hbemu$, $\tbetaur:=t_2\betaur+(1-t_2)\hbetau$ and $\tbettaur:=t_3\bettaur+(1-t_3)\hbettau$  for some $t_1,t_2, t_3\in [0,1]$. 
    
    If $\|\hbemu-\bemur\|_2=O_p(1)$, then for any constant $r>0$, $\normp{\tvarepsilon}{r}=O_p(1)$; If $\|\hbemu-\bemur\|_2=O_p(1)$ and $\|\hbetau-\betaur\|_2=O_p(1)$ holds, then $\normp{\tzeta}{r}=O_p(1)$. If $\|\hbemu-\bemur\|_2=O_p(1)$ and $\|\hbettau-\bettaur\|_2=O_p(1)$ holds, then $\normp{\tvarrho}{r}=O_p(1)$.
\end{lemma}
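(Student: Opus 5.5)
The plan is to write each perturbed residual as the corresponding residual at the target parameters plus a linear term in the estimation error, and then bound each piece with Minkowski's inequality. The target residuals are controlled by sub-Gaussianity (Assumption~\ref{cond:basic2}~(b)). The linear terms are controlled by sub-Gaussianity of $\bS$ together with independence of the nuisance estimates from the evaluation point.

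Because $\normp{\cdot}{r}$ is nondecreasing in $r$, it suffices to treat $r\geq1$, so Minkowski's inequality applies. Throughout, $\hbemu,\hbetau,\hbettau$ are computed on samples independent of the generic copy $\bW$ under cross-fitting. All norms $\normp{\cdot}{r}$ are therefore taken conditionally on these estimates, which are treated as fixed vectors.

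\textbf{Step 1: the term $\tvarepsilon$.} Since $\tbemur-\bemur=(1-t_1)(\hbemu-\bemur)$, we have
\begin{align*}
\tvarepsilon=\varepsilon-(1-t_1)\bS^\top(\hbemu-\bemur).
\end{align*}
Hence $\normp{\tvarepsilon}{r}\leq\normp{\varepsilon}{r}+\normp{\bS^\top(\hbemu-\bemur)}{r}$.
\begin{itemize}
\item By Assumption~\ref{cond:basic2}~(b), $\varepsilon$ is sub-Gaussian with parameter $\sigma_0$. Lemma~\ref{lem:psi2norm} then gives $\normp{\varepsilon}{r}\leq\{2\sigma_0^r\Gamma(r/2+1)\}^{1/r}=O(1)$.
\item For any fixed $\bv$, $\normg{\bS^\top\bv}\leq\sigma_{\bS}\|\bv\|_2$, so Lemma~\ref{lem:psi2norm} gives $\normp{\bS^\top\bv}{r}\leq C_r\sigma_{\bS}\|\bv\|_2$. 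Plugging in $\bv=\hbemu-\bemur$ gives $O_p(\|\hbemu-\bemur\|_2)=O_p(1)$.
\end{itemize}
This is essentially Lemma~\ref{lem:usebound}~(a) with $G_N=1$.

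\textbf{Step 2: the terms $\tzeta$ and $\tvarrho$.} Similarly,
\begin{align*}
\tzeta=\zeta+(1-t_1)\bzS^\top(\hbemu-\bemur)-(1-t_2)\bX^\top(\hbetau-\betaur),
\end{align*}
and $\tvarrho$ has the same form with $(\bettaur,\hbettau,t_3)$ in place of $(\betaur,\hbetau,t_2)$.
\begin{itemize}
\item The residuals $\zeta$ and $\varrho$ are sub-Gaussian by assumption, so their $r$-norms are $O(1)$ exactly as in Step~1.
\item $\bX$ is a subvector of $\bS$, so $\bX^\top\bv=\bS^\top(\bv^\top,\bz^\top)^\top$ is sub-Gaussian with parameter $\sigma_{\bS}\|\bv\|_2$. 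This gives the $O_p(1)$ bounds for the $\bX^\top(\hbetau-\betaur)$ and $\bX^\top(\hbettau-\bettaur)$ pieces.
\end{itemize}

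\textbf{Main obstacle: the term $\bzS^\top(\hbemu-\bemur)$.} Assumption~\ref{cond:basic2}~(b) imposes sub-Gaussianity only on the observed $\bS$, not on the counterfactual $\bzS=(\bX^\top,\bM(0)^\top)^\top$. I would transfer the moment bound from $\bS$ to $\bzS$ as in the proof of Lemma~\ref{lem:mu_relation}, but with $|\cdot|^r$ in place of squares. By consistency and ignorability (Assumption~\ref{cond:basic}), the conditional law of $\bM(0)$ given $\bX$ equals that of $\bM$ given $(A=0,\bX)$. Therefore, for any fixed $\bv$,
\begin{align*}
\E|\bS^\top\bv|^r\geq\E[(1-A)|\bS^\top\bv|^r]=\E\big[\{1-\pi(\bX)\}\E[|\bzS^\top\bv|^r\mid\bX]\big]\geq c_0\,\E|\bzS^\top\bv|^r,
\end{align*}
where the last step uses positivity, $1-\pi(\bX)\geq c_0$. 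Consequently $\normp{\bzS^\top\bv}{r}\leq c_0^{-1/r}\normp{\bS^\top\bv}{r}\leq C\|\bv\|_2$.

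Taking $\bv=\hbemu-\bemur$ makes this piece $O_p(1)$. Combining it with the bounds from Step~2 via Minkowski's inequality yields $\normp{\tzeta}{r}=O_p(1)$ and $\normp{\tvarrho}{r}=O_p(1)$, as claimed.
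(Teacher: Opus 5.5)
Your proposal is correct and follows the same route as the paper. Each perturbed residual is split by Minkowski's inequality into the target residual plus linear error terms, and each piece is bounded with the sub-Gaussian moment bounds of Lemma~\ref{lem:psi2norm} (equivalently Lemma~\ref{lem:usebound}~(a)). Your $L_r$ analogue of Lemma~\ref{lem:mu_relation}, giving $\normp{\bzS^\top\bv}{r}\leq c_0^{-1/r}\normp{\bS^\top\bv}{r}$ via consistency, ignorability and $1-\pi(\bX)\geq c_0$, is more careful than the paper's proof. The paper simply replaces $\bzS^\top(\hbemu-\bemur)$ by $\bS^\top(\hbemu-\bemur)$ or $(1-A)\bS^\top(\hbemu-\bemur)$ without justification, even though $\tzeta$ and $\tvarrho$ carry no factor $(1-A)$.
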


\begin{lemma}\label{lem:r_useevent}
    Let the assumptions in Theorem~\ref{thm:r_nuisance} hold. For some constant $C>0$, define
    \begin{align}
    \mathcal{E}_1:=&\left\{\|\hbepiar-\bepiar\|_2\leq1\text{ and }\left\|g^{-1}(\bX^\top\bepi)\right\|_{\P,12}\leq C,\;\forall\bbeta\in\{w\bepiar+(1-w)\hbepiar:w\in[0,1]\}\right\},\label{def:E1}\\    
    \mathcal{E}_2:=&\left\{\|\hbepibr-\bepibr\|_2\leq1\text{ and }\left\|g^{-1}(\bX^\top\bepi)\right\|_{\P,12}\leq C,\;\forall\bbeta\in\{w\bepibr+(1-w)\hbepibr:w\in[0,1]\}\right\},\label{def:E2}\\
    \mathcal{E}_3:=&\left\{\|\hbeqr-\beqr\|_2\leq1\text{ and }\left\|\exp(\bS^\top\beq)\right\|_{\P,12}\leq C,\;\forall\bbeta\in\{w\beqr+(1-w)\hbeqr:w\in[0,1]\}\right\}.\label{def:E3}
    \end{align}
    Then, as $N$, $d_1$, $d_2\to\infty$,
    $$\P_{\mathbb S_{\pi}\cup \mathbb S_{q}}(\mathcal{E}_1 \cap\mathcal{E}_2 \cap \mathcal{E}_3)=1-o(1).$$

    Moreover, Let $r'\in[1,12]$ be some constant. 
    On the event $\mathcal{E}_1$, for $\bepi\in\{w\bepiar+(1-w)\hbepiar:w\in[0,1]\}$ and some constant $C_1>0$, we also have
    \begin{align*}
        \left\|g^{-1}(\bX^\top\bepi)\right\|_{\P,r'}\leq C_1,\quad\left\|\exp(-\bX^\top\bepi)\right\|_{\P,r'}\leq C_1,\quad\left\|\exp(\bX^\top\bepi)\right\|_{\P,r'}\leq C_1. 
    \end{align*}
    On the event $\mathcal{E}_2$, for $\bepi\in\{w\bepibr+(1-w)\hbepibr:w\in[0,1]\}$ and some constant $C_2>0$, we have 
    \begin{align}
        \left\|g^{-1}(\bX^\top\bepi)\right\|_{\P,r'}\leq C_2,\quad\left\|\exp(-\bX^\top\bepi)\right\|_{\P,r'}\leq C_2,\quad\left\|\exp(\bX^\top\bepi)\right\|_{\P,r'}\leq C_2.
        \label{event2}         
    \end{align}  
    On the event $\mathcal{E}_3$, for $\beq\in\{w\beqr+(1-w)\hbeqr:w\in[0,1]\}$ and some constant $C_3>0$, we have
    \begin{align}
        \left\|\exp(\bS^\top\beq)\right\|_{\P,r'}\leq C_3,\quad\left\|\exp(-\bS^\top\beq)\right\|_{\P,r'}\leq C_3.
        \label{event3}         
    \end{align}
\end{lemma}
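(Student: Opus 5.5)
The plan is to split the lemma into a probabilistic part, showing that the $\ell_2$-closeness requirements in $\mathcal{E}_1,\mathcal{E}_2,\mathcal{E}_3$ hold with probability $1-o(1)$, and a deterministic part, showing that the moment bounds hold for every point on the segments once $\|\hat{\bbeta}-\bbeta^*\|_2\le 1$. For the probabilistic part I would invoke the $\ell_2$ rates of Theorem~\ref{thm:r_nuisance} in the order the nuisance estimates are built.
\begin{itemize}
\item First, $\|\hbepiar-\bepiar\|_2=O_p(r_\pi)$ and $\|\hbepibr-\bepibr\|_2=O_p(r_\pi)$. These losses $\ell_1,\ell_2$ involve no earlier nuisance estimates, so the bounds come from standard Lasso arguments, using restricted strong convexity via Lemma~\ref{lem:useinrsc} and the $\ell_\infty$ gradient bound via Lemma~\ref{bound_expectation_sum}.
\item Second, $\|\hbeqr-\beqr\|_2=O_p(r_q)$, derived on $\mathcal{E}_1\cap\mathcal{E}_2$.
\end{itemize}
Since $r_\pi+r_q=o(1)$ by Assumption~\ref{cond:sparsity}, each distance is at most $1$ with probability $1-o(1)$. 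A union bound then gives $\P_{\mathbb S_\pi\cup\mathbb S_q}(\mathcal{E}_1\cap\mathcal{E}_2\cap\mathcal{E}_3)=1-o(1)$, provided the moment parts of the events hold automatically, which is the deterministic step below.

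For the deterministic step, fix $\bbeta=\bepiar+(1-w)\bDelta$ with $\bDelta=\hbepiar-\bepiar$, $\|\bDelta\|_2\le 1$ and $w\in[0,1]$. Write $g^{-1}(\bX^\top\bbeta)=1+\exp(-\bX^\top\bepiar)\exp\{-(1-w)\bX^\top\bDelta\}$. By H\"older's inequality,
\[
\|\exp(-\bX^\top\bbeta)\|_{\P,12}\le\|\exp(-\bX^\top\bepiar)\|_{\P,24}\,\|\exp\{-(1-w)\bX^\top\bDelta\}\|_{\P,24}.
\]
\begin{itemize}
\item The first factor is at most $c_0^{-1}$, because $\exp(-\bX^\top\bepiar)=\{1-\pi_a^*(\bX)\}/\pi_a^*(\bX)\le c_0^{-1}$ by Assumption~\ref{cond:basic2}(a).
\item For the second factor, $\bX^\top\bDelta$ is sub-Gaussian with $\psi_2$-norm at most $\sigma_{\bS}\|\bDelta\|_2\le\sigma_{\bS}$. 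The moment generating function bound in Lemma~\ref{lem:psi2norm} then gives $\E\exp\{24t\,\bX^\top\bDelta\}\le\exp(2\cdot 24^2\sigma_{\bS}^2)$ for $|t|\le1$, a constant.
\end{itemize}
The same argument handles $\exp(+\bX^\top\bbeta)$, using $\exp(\bX^\top\bepiar)=\pi_a^*/(1-\pi_a^*)\le c_0^{-1}$. It also handles $\bepibr$ and $\pi_b^*$. For $\beqr$, it handles $\exp(\pm\bS^\top\bbeta)$ using $c_0^2\le q^*\le c_0^{-2}$ and the sub-Gaussianity of $\bS$. This yields $\|g^{-1}(\bX^\top\bbeta)\|_{\P,12}\le 1+C$ uniformly along each segment, so the moment parts of $\mathcal{E}_1,\mathcal{E}_2,\mathcal{E}_3$ hold once the distance condition does. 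Bounds for any $r'\in[1,12]$ then follow from monotonicity of $L_r$ norms, $\|\cdot\|_{\P,r'}\le\|\cdot\|_{\P,12}$.

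The main obstacle is an apparent circularity: the rate for $\hbeqr$ in Theorem~\ref{thm:r_nuisance} is proved on $\mathcal{E}_1\cap\mathcal{E}_2$, because $\ell_3$ uses the weights $g^{-1}(\bX^\top\hbepiar)$ and $\{1-g(\bX^\top\hbepibr)\}^{-1}$. I would resolve this by ordering the argument. First show $\P(\mathcal{E}_1\cap\mathcal{E}_2)\to1$ using only the $\ell_1,\ell_2$ rates. Then, conditional on $\mathbb S_\pi$ and on that event, use the independence of $\mathbb S_q$ to get $\|\hbeqr-\beqr\|_2\le1$ with conditional probability tending to one. Finally integrate out $\mathbb S_\pi$ by dominated convergence.
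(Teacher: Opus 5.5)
Your proposal is correct, and its skeleton is the paper's. The distance clauses of $\mathcal{E}_1,\mathcal{E}_2,\mathcal{E}_3$ come from the $\ell_2$ rates of Theorem~\ref{thm:r_nuisance}. The moment clauses come from two facts: the target weights are bounded under Assumption~\ref{cond:basic2}(a), and $\bX^\top\bDelta$ has a sub-Gaussian moment generating function.

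The difference is in how you carry out the moment step. The paper argues additively. It Taylor-expands $\exp(-\bX^\top\bbeta)$ around $\bepiar$ and uses \eqref{order:For_useevent} to get $\|g^{-1}(\bX^\top\bepiar)-g^{-1}(\bX^\top\bbeta)\|_{\P,12}=O_p(\|\hbepiar-\bepiar\|_2)=o_p(1)$. It then adds this to the bound $c_0^{-1}$ on the target. The bound on $\exp(\bX^\top\bbeta)$ in the ``moreover'' part is obtained the same way from \eqref{order:For_useevent_2}.

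You instead use the exact factorization $\exp(-\bX^\top\bbeta)=\exp(-\bX^\top\bepiar)\exp\{-(1-w)\bX^\top\bDelta\}$ and Cauchy--Schwarz. This makes the moment clauses a deterministic consequence of $\|\bDelta\|_2\le 1$, with an explicit constant. As a result, each $\mathcal{E}_j$ reduces to its distance event, and the ``moreover'' bounds genuinely hold on the event. The paper's $o_p(1)$ argument, read literally, only gives them with probability tending to one. The paper's additive route, for its part, reuses the difference bounds of Lemma~\ref{lem:usebound} that it needs elsewhere anyway. It also records the slightly stronger fact that the weights along the segment are $o_p(1)$-close to the bounded targets.

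Two small points. First, Theorem~\ref{thm:r_nuisance}(c) gives $\|\hbeqr-\beqr\|_2=O_p(r_\pi+r_q)$, not $O_p(r_q)$; only $o_p(1)$ is needed here, so nothing changes. Second, the circularity you anticipate does not occur in the paper. Parts (a)--(c) of Theorem~\ref{thm:r_nuisance} are proved on $\{\|\hbepiar-\bepiar\|_2\le1,\ \|\hbepibr-\bepibr\|_2\le1\}$ without invoking this lemma; only (d)--(f) use it. So you may cite (c) directly; your conditioning-and-integrating argument is valid but unnecessary.
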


\begin{lemma}\label{lem:boundfornonsquareimputation}
    Let the assumptions in Theorem~\ref{thm:r_nuisance} hold. Then for any $\bDelta\in\R^{d_0}$, define
    \begin{align*}
        Q_1:=\frac1{M}\suqr \left[\frac{A_i}{g(\bX_i^\top\hbepiar)}-\frac{A_i}{g(\bX_i^\top\bepiar)}\right]\exp(\bS_i^\top\beqr)\bS_i^\top\bDelta,\\
        Q_2:=\frac1{M}\suqr \left[\frac{1-A_i}{1-g(\bX_i^\top\hbepibr)}-\frac{1-A_i}{1-g(\bX_i^\top\bepibr)}\right]\bS_i^\top\bDelta.
    \end{align*}
    We have
    \begin{align*}
    |Q_1|=O_p\left(r_{\pi}\bigg(\sqrt{\frac{\log d}{N}}\|\bDelta\|_1+\|\bDelta\|_2\bigg)\right)\text{ and }
    |Q_2|=O_p\left(r_{\pi}\bigg(\sqrt{\frac{\log d}{N}}\|\bDelta\|_1+\|\bDelta\|_2\bigg)\right).
    \end{align*}
\end{lemma}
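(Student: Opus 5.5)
The plan is to treat $Q_1$ and $Q_2$ together, conditioning on the nuisance estimates. Both are linear in $\bDelta$, and by Algorithm~\ref{alg:MQR} the estimates $(\hbepiar,\hbepibr)$ are built on $\I_\pi$, which is disjoint from $\I_q$. Conditionally on $\mathbb S_\pi$, the summands are therefore i.i.d. in $i\in\I_q$. I work on the event $\mathcal E_1\cap\mathcal E_2\cap\mathcal E_3$ of Lemma~\ref{lem:r_useevent}, which has probability $1-o(1)$. Write $Q_1=\bs{B}_1^\top\bDelta$ with $\bs{B}_1:=M^{-1}\sum_{i\in\I_q}D_i\bS_i$ and
$$D_i:=A_i\{g^{-1}(\bX_i^\top\hbepiar)-g^{-1}(\bX_i^\top\bepiar)\}\exp(\bS_i^\top\beqr),$$
and $Q_2=\bs{B}_2^\top\bDelta$ analogously. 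Then split
$$|Q_1|\le \big\|\bs{B}_1-\E[D\bS\mid\mathbb S_\pi]\big\|_\infty\|\bDelta\|_1+\big|\E[D\,\bS^\top\bDelta\mid\mathbb S_\pi]\big|,$$
so the fluctuation part carries the $\|\bDelta\|_1$ term and the mean part carries the $\|\bDelta\|_2$ term.

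\emph{Mean part.} By Cauchy--Schwarz it is at most $\|D\|_{\P,2}\|\bS^\top\bDelta\|_{\P,2}$. Sub-Gaussianity of $\bS$ (Assumption~\ref{cond:basic2}(b) and Lemma~\ref{lem:psi2norm}) gives $\|\bS^\top\bDelta\|_{\P,2}\le C\|\bDelta\|_2$. For $D$, Hölder's inequality with the $\P,12$-moment bound on $\exp(\bS^\top\beqr)$ from $\mathcal E_3$, together with Lemma~\ref{lem:usebound}(b) applied with $G_N=r_\pi$, gives $\|D\|_{\P,r'}=O_p(r_\pi)$ for $r'\le 4$. Here I take as given the rate $\|\hbepiar-\bepiar\|_2=O_p(r_\pi)$ from Theorem~\ref{thm:r_nuisance}. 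This yields $O_p(r_\pi\|\bDelta\|_2)$. For $Q_2$ the same argument applies with $\{1-g(\cdot)\}^{-1}$ in place of $g^{-1}$, using $\mathcal E_2$ and the second display of Lemma~\ref{lem:usebound}(b).

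\emph{Fluctuation part.} This is the step I expect to be the main obstacle. Applying Lemma~\ref{bound_expectation_sum} directly gives
$$\E\|\bs{B}_1-\E \bs{B}_1\|_\infty^2\lesssim \frac{\log d}{M}\,\E\|D\bS\|_\infty^2\le \frac{\log d}{M}\,\|D\|_{\P,4}^2\,\big\|\max_j|S_j|\big\|_{\P,4}^2,$$
and the last factor costs an extra $\log d$ by Lemma~\ref{lem:psi2norm}, i.e.\ the bound is $r_\pi\sqrt{\log d}\cdot\sqrt{\log d/N}$. To avoid this loss I would first try a coordinatewise bound followed by a union bound. By the mean value theorem, $g^{-1}(u)=1+e^{-u}$ gives
$$D=-A\exp(-\bX^\top\tilde{\bbeta})\,\bX^\top(\hbepiar-\bepiar)\exp(\bS^\top\beqr)$$
for some intermediate $\tilde{\bbeta}$. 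Hence each $D S_j$ is the product of two sub-Gaussian variables, with Orlicz scale $\|\hbepiar-\bepiar\|_2\lesssim r_\pi$ and $\sigma_{\bS}$ respectively, and an exponential weight whose moments are controlled on $\mathcal E_1\cap\mathcal E_3$. I would truncate the exponential weight at a level $K\asymp$ a power of $\log d$, losing negligibly by Markov's inequality with the 12th moments. Bernstein's inequality for the resulting sub-exponential products, combined with a union bound over the $d$ coordinates, then gives
$$\|\bs{B}_1-\E\bs{B}_1\|_\infty=O_p\Big(r_\pi\sqrt{\tfrac{\log d}{M}}+r_\pi K\tfrac{\log d}{M}\Big).$$
The second term is of smaller order because $\log d/M=o(1)$. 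If the truncation bookkeeping fails, the fallback is to accept the crude bound and absorb the extra $\sqrt{\log d}$ using $r_\pi^2\log d=O(1)$ from Assumption~\ref{cond:sparsity}. That gives the slightly weaker $O_p(\sqrt{\log d/N}\,\|\bDelta\|_1)$, which should still suffice for the downstream restricted-strong-convexity argument. Since $M\asymp N$, the two parts combine to the claim, and $Q_2$ is identical with $\mathcal E_2$ in place of $\mathcal E_1$ and no $q$-weight.
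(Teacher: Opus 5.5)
Your mean part is fine, but the fluctuation step, which you correctly flagged as the crux, does not close under the lemma's hypotheses. The fallback proves a different, weaker statement.

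With $\bs{\delta}:=\hbepiar-\bepiar$, the weight $D$ contains $e^{-\bX^\top\bs{\delta}}-1$, which has log-normal-type tails. So $D S_j$ is not sub-exponential, and Bernstein is available only after truncation. As sketched, the truncation does not work, for two reasons:
\begin{itemize}
\item With only the 12th moments from $\mathcal E_1$, making the discarded part negligible relative to $r_\pi\sqrt{\log d/M}$ forces $K$ to grow polynomially in $M$, not like a power of $\log d$. This holds whether you bound it in mean or show that no $W_i$ exceeds $K$.
\item Even for polylogarithmic $K$, the term $r_\pi K\log d/M$ is negligible only if $K^2\log d/M\to0$. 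That does not follow from $\log d/M=o(1)$. Getting this term at all also needs a variance-sensitive Bernstein inequality; the $\psi_1$ version puts $K$ on the leading term too.
\end{itemize}
Truncating $|\bX^\top\bs{\delta}|$ directly at a level of order $\|\bs{\delta}\|_2\sqrt{\log M}$ can be made to work, but only under growth conditions such as $r_\pi^2\log M=O(1)$ that the lemma does not assume.

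The fallback gives coefficient $\sqrt{\log d/N}$ rather than $r_\pi\sqrt{\log d/N}$ on $\|\bDelta\|_1$. It also invokes $r_\pi^2\log d=O(1)$, which is \eqref{cond:r_rate2}. That is an assumption of Theorem~\ref{thm:r_nuisance'}, not of Theorem~\ref{thm:r_nuisance}, under which this lemma is stated and used. Nor is the weaker bound harmless downstream: part (b) of Lemma~\ref{lem:r_imputation} needs the $\|\bDelta\|_1$-coefficient to be eventually below $\lambda_q/4$. That is exactly what the lost factor $r_\pi=o(1)$ provides.

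The missing idea is to apply Cauchy--Schwarz over $i\in\I_q$ before any concentration. Using $|A_i|\le 1$ and $\exp(\bS_i^\top\beqr)\le c_0^{-2}$,
\[
|Q_1|\le c_0^{-2}\Big\{\frac1M\sum_{i\in\I_q}\big[g^{-1}(\bX_i^\top\hbepiar)-g^{-1}(\bX_i^\top\bepiar)\big]^2\Big\}^{1/2}\Big\{\frac1M\sum_{i\in\I_q}(\bS_i^\top\bDelta)^2\Big\}^{1/2}.
\]
\begin{itemize}
\item The first factor is $O_p(r_\pi)$. Because $\I_q$ is disjoint from $\I_\pi$, its conditional mean is the squared $L_2$ distance bounded in Lemma~\ref{lem:usebound}(b), and conditional Markov finishes. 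Only second moments of the heavy-tailed weight are used.
\item The second factor is bounded uniformly in $\bDelta$ by the standard upper restricted-eigenvalue inequality for sub-Gaussian designs, $M^{-1}\sum_{i\in\I_q}(\bS_i^\top\bDelta)^2=O_p(\|\bDelta\|_2^2+N^{-1}\log d\,\|\bDelta\|_1^2)$.
\item Then $\sqrt{a+b}\le\sqrt a+\sqrt b$ gives the claim, and $Q_2$ is identical.
\end{itemize}
The $\sqrt{\log d/N}\,\|\bDelta\|_1$ term thus comes from the design alone, never from an $\ell_\infty$ deviation of a heavy-tailed weighted average. There is no $\sqrt{\log d}$ loss, no truncation, and no extra sparsity condition, and the bound is uniform over $\bDelta$ as the event $\mathcal B_2$ requires.
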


\begin{lemma}\label{lem:r_imputation}
        Let the assumptions of Theorem~\ref{thm:r_nuisance} hold. Let $s_{\pi}=o(N/\log d_1)$, $s_{q}=o(N/\log d)$, and consider some $\lambda_{\pi_a}, \lambda_{\pi_b}\asymp\sqrt{\log d_1/N}$. Let $\lambda_{q}=2\sigma_{q}\sqrt{(t+\log d)/M}$. 
        For any $0<t<\kappa_1^2M/(16^2\sigma_{q}^2s_{q})$, define 
        \begin{align}
            \mathcal B_1:=&\{\|\bnabla_{\beq}\bar\ell_3(\bepiar,\bepibr,\beqr)\|_\infty\leq\lambda_{q}/2\},\label{def:B1}\\
            \mathcal B_2:=&\left\{|R_1(\bDelta)|\leq cr_{\pi}\left(\sqrt{\frac{\log d}{N}}\|\bDelta\|_1+\|\bDelta\|_2\right),\;\;\forall\bDelta\in\R^d\right\},\label{def:B2}\\
            \mathcal B_3:=&\left\{\delta\bar\ell_3(\hbepiar,\hbepibr, \beqr,\bDelta)\geq\kappa_1\|\bDelta\|_2^2-\kappa_2\frac{\log d}{M}\|\bDelta\|_1^2,\;\;\forall\bDelta\in\R^d:\|\bDelta\|_2\leq1\right\}\label{def:B3},
        \end{align}
        where
        $R_1(\bDelta):=\left\{\bnabla_{\beq}\bar\ell_3(\hbepiar,\hbepibr,\beqr)-\bnabla_{\beq}\bar\ell_3(\bepiar, \bepibr, \beqr)\right\}^\top\bDelta$ and $c>0$ is some constant. Let $\bar s_{q}:=s_{\pi}\log d_1/\log d+s_{q}$. Futhur define $\bDelta_q:=\hbeqr-\beqr$ and $\Ctil(s,k)=\{\bDelta\in\R^d:\|\bDelta\|_1\leq k\sqrt{s}\|\bDelta\|_2\}$, then

        \emph{(a)} Let $S_{\beqr}$ be the support of $\beqr$. On the event $\mathcal B_1$, we have 
        \begin{align*}
            2\delta\bar\ell_3(\hbepiar,\hbepibr,\beqr,\bDelta_q)+\lambda_{q}\|\bDelta_q\|_1&\leq 4\lambda_{q}\|\bDelta_{q,S_{\beqr}}\|_1+2|R_1(\bDelta_q)|.
        \end{align*}

        \emph{(b)} On the event $\mathcal B_1\cap\mathcal B_2$, when $N>N_0$, we have $\bDelta_q\in\Ctil(\bar s_{q},k_0)$, where $N_0$ and $k_0$ are some positive constants. 

        \emph{(c)} On the event $\mathcal B_1\cap\mathcal B_2\cap\mathcal B_3$, if $\bDelta_q\in\Ctil(\bar s_{q},k_0)$, then $\|\bDelta_q\|_2\leq 1$.
    \end{lemma}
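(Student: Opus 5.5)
The proof follows the standard Lasso-type argument for the weighted loss $\ell_3$. The only new ingredient is the remainder $R_1$, which comes from plugging in $(\hbepiar,\hbepibr)$ in place of $(\bepiar,\bepibr)$; it is controlled on $\mathcal B_2$ and absorbed into the effective sparsity $\bar s_{q}$.

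Throughout, write
\[
\delta\bar\ell_3(\hbepiar,\hbepibr,\beqr,\bDelta)=\bar\ell_3(\hbepiar,\hbepibr,\beqr+\bDelta)-\bar\ell_3(\hbepiar,\hbepibr,\beqr)-\bnabla_{\beq}\bar\ell_3(\hbepiar,\hbepibr,\beqr)^\top\bDelta .
\]
This quantity is nonnegative because $\beq\mapsto\bar\ell_3$ is convex: the weights are positive and $\exp(\cdot)$ is convex.

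\textbf{Part (a).} I start from the optimality of $\hbeqr$:
\[
\bar\ell_3(\hbepiar,\hbepibr,\beqr+\bDelta_q)+\lambda_{q}\|\beqr+\bDelta_q\|_1\leq \bar\ell_3(\hbepiar,\hbepibr,\beqr)+\lambda_{q}\|\beqr\|_1 .
\]
I rewrite the left side as $\delta\bar\ell_3+\bnabla_{\beq}\bar\ell_3(\hbepiar,\hbepibr,\beqr)^\top\bDelta_q$ plus the base value. I then split the gradient term into $\bnabla_{\beq}\bar\ell_3(\bepiar,\bepibr,\beqr)^\top\bDelta_q+R_1(\bDelta_q)$.

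On $\mathcal B_1$, Hölder gives $|\bnabla_{\beq}\bar\ell_3(\bepiar,\bepibr,\beqr)^\top\bDelta_q|\leq(\lambda_{q}/2)\|\bDelta_q\|_1$. The support decomposition gives $\|\beqr+\bDelta_q\|_1\geq\|\beqr\|_1-\|\bDelta_{q,S_{\beqr}}\|_1+\|\bDelta_{q,S_{\beqr}^c}\|_1$. Multiplying through by $2$ and writing $\|\bDelta_q\|_1=\|\bDelta_{q,S}\|_1+\|\bDelta_{q,S^c}\|_1$, I get
\[
2\delta\bar\ell_3+\lambda_{q}\|\bDelta_{q,S^c}\|_1\leq 3\lambda_{q}\|\bDelta_{q,S}\|_1+2|R_1(\bDelta_q)|.
\]
Adding $\lambda_{q}\|\bDelta_{q,S}\|_1$ to both sides yields the claim in (a).

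\textbf{Part (b).} On $\mathcal B_2$, I bound $2|R_1(\bDelta_q)|\leq 2cr_{\pi}\sqrt{\log d/N}\,\|\bDelta_q\|_1+2cr_{\pi}\|\bDelta_q\|_2$. Since $r_{\pi}=o(1)$ and $\lambda_{q}\asymp\sqrt{\log d/M}\asymp\sqrt{\log d/N}$, there is $N_0$ such that for $N>N_0$ the first piece is at most $(\lambda_{q}/2)\|\bDelta_q\|_1$, and it can be moved to the left side.

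Dropping $\delta\bar\ell_3\geq0$ and using $\|\bDelta_{q,S}\|_1\leq\sqrt{s_{q}}\|\bDelta_q\|_2$ leaves
\[
\tfrac{\lambda_{q}}{2}\|\bDelta_q\|_1\leq\big(4\lambda_{q}\sqrt{s_{q}}+2cr_{\pi}\big)\|\bDelta_q\|_2 .
\]
Here $r_{\pi}/\lambda_{q}\asymp\sqrt{s_{\pi}\log d_1/\log d}$. Dividing by $\lambda_q/2$ gives $\|\bDelta_q\|_1\leq k_0\sqrt{\bar s_{q}}\,\|\bDelta_q\|_2$ for a suitable constant $k_0$, that is, $\bDelta_q\in\Ctil(\bar s_{q},k_0)$.

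\textbf{Part (c).} This is the main step; I argue by contradiction using convexity. Suppose $\|\bDelta_q\|_2>1$ and set $\tilde\bDelta=t\bDelta_q$ with $t=1/\|\bDelta_q\|_2\in(0,1)$. The map $F(\bDelta)=\bar\ell_3(\hbepiar,\hbepibr,\beqr+\bDelta)-\bar\ell_3(\hbepiar,\hbepibr,\beqr)+\lambda_{q}(\|\beqr+\bDelta\|_1-\|\beqr\|_1)$ is convex with $F(\mathbf 0)=0$ and $F(\bDelta_q)\leq0$. Hence $F(\tilde\bDelta)\leq0$. So the argument of (a) and (b) applies verbatim to $\tilde\bDelta$, and since the cone is scale invariant, $\tilde\bDelta\in\Ctil(\bar s_{q},k_0)$.

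Because $\|\tilde\bDelta\|_2=1$, the event $\mathcal B_3$ applies and gives
\[
\delta\bar\ell_3(\tilde\bDelta)\geq\big(\kappa_1-\kappa_2k_0^2\bar s_{q}\log d/M\big)\|\tilde\bDelta\|_2^2\geq(\kappa_1/2)\|\tilde\bDelta\|_2^2
\]
for large $N$. This uses $\bar s_{q}\log d/M\asymp r_{\pi}^2+r_{q}^2=o(1)$ by Assumption~\ref{cond:sparsity}.

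On the other hand, the inequality from (a)–(b) gives $2\delta\bar\ell_3(\tilde\bDelta)\leq C(\lambda_{q}\sqrt{s_{q}}+r_{\pi})\|\tilde\bDelta\|_2\asymp(r_{q}+r_{\pi})\|\tilde\bDelta\|_2$. Combining the two bounds yields $\kappa_1\leq C(r_{q}+r_{\pi})=o(1)$, a contradiction for $N$ large. Hence $\|\bDelta_q\|_2\leq1$.

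The delicate points are checking that the convexity argument preserves both the basic inequality and the cone membership for $\tilde\bDelta$, and tracking the constants so that $N_0$ does not depend on $d$.
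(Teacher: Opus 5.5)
Your argument follows the paper's proof in all three parts. Part (a) is the same basic-inequality manipulation. Part (b) absorbs the $r_\pi\sqrt{\log d/N}\,\|\bDelta_q\|_1$ piece of $R_1$ into $\lambda_q\|\bDelta_q\|_1$ exactly as the paper does. Part (c) is the same convexity rescaling to the unit sphere. The paper phrases (c) as showing $\mathcal F>0$ on $\Ctil(\bar s_q,k_0)\cap\{\|\bDelta\|_2=1\}$, which is equivalent to your comparison of upper and lower bounds on $\delta\bar\ell_3(\tilde{\bDelta})$.

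One step in (c) does not hold for the lemma as stated. You assert $\lambda_q\sqrt{s_q}\asymp r_q=o(1)$, which presumes $\lambda_q\asymp\sqrt{\log d/N}$, i.e. $t=O(\log d)$. The lemma allows any $0<t<\kappa_1^2M/(16^2\sigma_q^2 s_q)$. Near the top of that range, $\lambda_q\sqrt{s_q}=2\sigma_q\sqrt{s_q(t+\log d)/M}$ is of constant order, about $\kappa_1/8$. So "$\kappa_1\le C(r_q+r_\pi)=o(1)$" does not follow, and with an unspecified $C$ you get no contradiction. The hypothesis on $t$ exists precisely for this step, and your proof never uses it.

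The fix is to keep explicit constants. With $\|\tilde{\bDelta}\|_2=1$, your (a)--(b) bound gives $2\delta\bar\ell_3(\tilde{\bDelta})\le 4\lambda_q\sqrt{s_q}+2cr_\pi$. The bound on $t$ then gives $4\lambda_q\sqrt{s_q}\le 8\sigma_q\sqrt{s_qt/M}+8\sigma_q\sqrt{s_q\log d/M}<\kappa_1/2+o(1)$. On the other side, $\mathcal B_3$ together with cone membership gives $2\delta\bar\ell_3(\tilde{\bDelta})\ge 2\kappa_1-2\kappa_2k_0^2\bar s_q\log d/M=2\kappa_1-o(1)$. These are incompatible for large $N$.

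In (b) only the lower bound $\lambda_q\ge 2\sigma_q\sqrt{\log d/M}$ is needed. So there you should write $r_\pi/\lambda_q\lesssim\sqrt{s_\pi\log d_1/\log d}$ rather than $\asymp$, and $k_0,N_0$ do not depend on $t$; that part is fine for every admissible $t$. Also, $\bar s_q\log d/M=o(1)$ follows directly from the lemma's own hypotheses $s_\pi=o(N/\log d_1)$ and $s_q=o(N/\log d)$, so you do not need Assumption~\ref{cond:sparsity} there.
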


\subsection{Proofs of the results in Section \ref{sec:r_thm}} 
\label{apdx:proof_r_normality}

Similarly as in \eqref{decom:QR_theta}, it holds that
\begin{align}
\hat{\theta}_{\mathrm{MQR}}-\theta=N^{-1}\sum_{k=1}^{\K}\sum_{i\in\I_k}\mpsi(\bW_i;\hbnu^{-k})-\theta
=\K^{-1}\sum\limits_{k=1}^{\K}\Delta_{k,0}+\K^{-1}\sum\limits_{k=1}^{\K}\Delta_{k,1}+\K^{-1}\sum\limits_{k=1}^{\K}\Delta_{k,2},\label{decom:r_theta}
\end{align}
where
\begin{align*}
\Delta_{k,0}&=n^{-1}\sum\limits_{j\in\I_k}\{\mpsi(\bW_j;\bnu^*)-\theta\},\\
\Delta_{k,1}&=n^{-1}\sum_{j\in\I_k}\left\{\mpsi(\bW_j;\hbnu^{-k})-\mpsi(\bW_j;\bnu^*)\right\}-\E\left\{\mpsi(\bW;\hbnu^{-k})-\mpsi(\bW;\bnu^*)\right\},\\
\Delta_{k,2}&=\E\left\{\mpsi(\bW;\hbnu^{-k})-\mpsi(\bW;\bnu^*)\right\}.
\end{align*}
\begin{lemma}
Let Assumptions \ref{cond:basic} and \ref{cond:MQR}--\ref{cond:sparsity} hold. 
For each $k\in\{1,2,\ldots,\K\}$, as $N$, $d_1$, $d_2\to\infty$, \label{lem:r_delta_1}
\begin{align*}
    \Delta_{k,1}=O_p\left(N^{-1}\left\{r_{\pi}+r_{q}+r_{\mu}+r_{\tau_n}+r_{\tau}\right\}\right).
\end{align*}
\end{lemma}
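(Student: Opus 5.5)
The plan is to condition on the training folds $\I_{-k}$ and control $\Delta_{k,1}$ through its conditional second moment, then bound $\normp{\mpsi(\bW;\hbnu^{-k})-\mpsi(\bW;\bnu^*)}{2}$ by the nuisance rates $r_\pi,r_q,r_\mu,r_{\tau_n},r_\tau$. By cross-fitting, $\hbnu^{-k}$ is independent of $(\bW_j)_{j\in\I_k}$. Hence $\Delta_{k,1}$ is, conditionally, an average of $n$ i.i.d. mean-zero terms, and
\[
\E_{\mathbb S_k}[\Delta_{k,1}^2]\leq n^{-1}\E\big[\{\mpsi(\bW;\hbnu^{-k})-\mpsi(\bW;\bnu^*)\}^2\big].
\]
Since $n\asymp N$, this reduces the claim to showing that the conditional second moment of $\Delta_{k,1}$ is $O_p(N^{-1}\{r_\pi+r_q+r_\mu+r_{\tau_n}+r_\tau\}^2)$. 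The displayed bound is then to be read at this scale: the variance contributes the factor $N^{-1}$ and the squared nuisance rates. Chebyshev's inequality conditional on $\I_{-k}$, followed by integrating out the training sample, transfers the bound to $\Delta_{k,1}$ itself.

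For the $L_2$ bound I will first work on the event $\mathcal E_1\cap\mathcal E_2\cap\mathcal E_3$ of Lemma~\ref{lem:r_useevent}, which has probability $1-o(1)$. On this event all inverse weights $g^{-1}(\bX^\top\cdot)$, $\{1-g(\bX^\top\cdot)\}^{-1}$ and $\exp(\pm\bS^\top\cdot)$ along the segments between $\bnu^*$ and $\hbnu$ have bounded $\normp{\cdot}{r'}$ norms for every $r'\le 12$. I will write $\mpsi(\bW;\hbnu)-\mpsi(\bW;\bnu^*)$ as a telescoping sum over the six coordinates of $\bnu$, mirroring the $T_1$--$T_4$ decomposition in the proof of Lemma~\ref{lem:Delta_k_1}. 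Each summand is a product of one difference factor and several bounded factors. The difference factor is one of the following:
\begin{itemize}
\item $g^{-1}(\bX^\top\hbepia)-g^{-1}(\bX^\top\bepiar)$;
\item $\{1-g(\bX^\top\hbepib)\}^{-1}-\{1-g(\bX^\top\bepibr)\}^{-1}$;
\item $\exp(\bS^\top\hbeq)-\exp(\bS^\top\beqr)$;
\item $\bS^\top(\hbemu-\bemur)$;
\item $\bX^\top(\hbettau-\bettaur)$;
\item $\bX^\top(\hbetau-\betaur)$.
\end{itemize}
The bounded factors are weights and residuals such as $\tvarepsilon,\tzeta,\tvarrho$. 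Hölder's inequality with exponents in $[1,12]$, together with Lemma~\ref{lem:usebound}(a)--(b) for the difference factors and Lemma~\ref{lem:tilde} for the residual factors, bounds each summand by $O_p$ of the $\ell_2$ error of the corresponding coefficient estimate.

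The remaining input is the set of $\ell_2$ rates $\|\hbepia-\bepiar\|_2+\|\hbepib-\bepibr\|_2=O_p(r_\pi)$, $\|\hbeq-\beqr\|_2=O_p(r_\pi+r_q)$, and analogously for $\bemu,\bettau,\betau$, where each rate accumulates the preceding ones. I take these from the nuisance-rate theorem (Theorem~\ref{thm:r_nuisance}, Section~\ref{apdx:r_nuisance}), whose proof rests on Lemmas~\ref{lem:useinrsc}, \ref{lem:boundfornonsquareimputation} and \ref{lem:r_imputation}. Assumption~\ref{cond:sparsity} makes every rate $o(1)$, so in particular $O_p(1)$, which is the hypothesis needed in Lemmas~\ref{lem:usebound}(b) and \ref{lem:tilde}. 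Summing the summands gives $\normp{\mpsi(\bW;\hbnu)-\mpsi(\bW;\bnu^*)}{2}=O_p(r_\pi+r_q+r_\mu+r_{\tau_n}+r_\tau)$, and substituting into the variance bound gives the stated order for $\Delta_{k,1}$.

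The main obstacle is the coordinates entering through exponentials, $\beq$ and the two propensity parameters. Their difference factors are nonlinear and unbounded, and they multiply sub-Gaussian residuals with possibly large norm such as $\bS^\top\bemur$. Mean-value expansions along the segment bring in intermediate points, and these must lie in the convex hulls covered by $\mathcal E_1$--$\mathcal E_3$. I will use exactly the $r'\le 12$ moment control from Lemma~\ref{lem:r_useevent} so that Hölder splits into at most four factors, and Lemma~\ref{lem:tilde} to keep the intermediate residuals $\tvarepsilon,\tzeta,\tvarrho$ at $O_p(1)$ in every fixed $L_r$ norm.
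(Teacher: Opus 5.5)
Your argument is correct, but it takes a different route from the paper's. Write $\bar r:=r_{\pi}+r_{q}+r_{\mu}+r_{\tau_n}+r_{\tau}$.

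The paper does not telescope $\phi(\bW;\hbnu^{-k})-\phi(\bW;\bnu^*)$. It applies the mean-value theorem to the squared difference, writing $\E[\{\phi(\bW;\hbnu^{-k})-\phi(\bW;\bnu^*)\}^2]=2\E[\{\phi(\bW;\tilde{\bnu})-\phi(\bW;\bnu^*)\}\nabla_{\bnu}\phi(\bW;\tilde{\bnu})^\top(\hbnu^{-k}-\bnu^*)]$ for an intermediate $\tilde{\bnu}$, and then uses Cauchy--Schwarz. The first factor, with both terms centred at $\bX^\top\betaur$, is bounded crudely by $O_p(1)$; this is where Lemma~\ref{lem:tilde} for $\tilde\varepsilon,\tilde\zeta,\tilde\varrho$ is needed. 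The directional derivative is bounded coordinate by coordinate by the $\ell_2$ errors of Theorem~\ref{thm:r_nuisance}. That step uses the same H\"older, Lemma~\ref{lem:usebound} and Lemma~\ref{lem:r_useevent} bookkeeping you describe for your six difference factors.

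The paper's route reuses the gradient $\nabla_{\bnu}\phi$ that drives the whole MQR construction. Because of the crude factor, however, it only delivers $\E_{\mathbb S_k}(\Delta_{k,1}^2)=O_p(N^{-1}\bar r)$, i.e.\ $\Delta_{k,1}=O_p(N^{-1/2}\bar r^{1/2})$. Your product-rule decomposition bounds $\|\phi(\bW;\hbnu^{-k})-\phi(\bW;\bnu^*)\|_{\P,2}$ itself by $O_p(\bar r)$. It therefore avoids the mean-value step and gives the sharper $\Delta_{k,1}=O_p(N^{-1/2}\bar r)$. Either version supplies the $o_p(N^{-1/2})$ that Theorems~\ref{thm:r_bias} and~\ref{thm:r_infer} actually use.

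Your large-norm concern about $\bS^\top\bemur$ does not arise. In the telescoped expression it appears only through $\varepsilon$, $\varrho$, $\zeta-\varrho$ and estimation differences. These are controlled by Assumption~\ref{cond:basic2}(b) and Lemma~\ref{lem:usebound}(a).

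On the display itself: neither argument yields $O_p(N^{-1}\bar r)$ literally, and you should not expect it to. Conditionally on the training folds, $\Delta_{k,1}$ is a centred average of $n$ i.i.d.\ terms with standard deviation of order $n^{-1/2}\|\phi(\bW;\hbnu^{-k})-\phi(\bW;\bnu^*)\|_{\P,2}$. So instead of saying the bound is ``to be read at this scale'', state plainly that you prove $\Delta_{k,1}=O_p(N^{-1/2}\bar r)$. Since $\bar r=o(1)$, this also implies the second-moment bound $\E_{\mathbb S_k}(\Delta_{k,1}^2)=O_p(N^{-1}\bar r)$ that the paper's proof targets.
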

\begin{proof}
First, we have $\E_{\mathbb S_k}(\Delta_{k,1})=0$. By Chebyshev's inequality, it suffices to show 
\begin{align}
    \E_{\mathbb S_k}\left(\Delta_{k,1}^2\right)=O_p\left(N^{-1}\left\{r_{\pi}+r_{q}+r_{\mu}+r_{\tau_n}+r_{\tau}\right\}\right),\label{order:r_square1delta}
\end{align}
With some $\tilde{\bnu}=(\tbepia^{\top},\tbepib^{\top}, \tbeq^{\top},\tbemu^{\top},\tbettau^{\top},\tbetau^\top)^\top$ lies between $\bnu^*$ and $\hbnu^{-k}$, we apply Taylor's theorem to obtain
\begin{align}
    &\E_{\mathbb S_k}\left(\Delta_{k,1}^2\right)\leq n^{-1}\E\left[\left\{\mpsi(\bW;\hbnu^{-k})-\mpsi(\bW;\bnu^*)\right\}^2\right]\nonumber\\
    &\;\;=2n^{-1}\E\left[\left\{\mpsi(\bW;\tilde{\bnu})-\mpsi(\bW;\bnu^*)\right\}\bnabla_{\bnu}\mpsi(\bW;\tilde{\bnu})^\top(\hbnu^{-k}-\bnu^*)\right]\nonumber\\
    &\;\;\leq 2n^{-1}\left\{\left\|\mpsi(\bW;\tilde{\bnu})-\bX^\top\betaur\right\|_{\P,2}+\left\|\mpsi(\bW;\bnu^*)-\bX^\top\betaur\right\|_{\P,2}\right\}\nonumber\\
    &\;\;\;\;\left\|\bnabla_{\bnu}\mpsi(\bW;\tilde{\bnu})^\top(\hbnu^{-k}-\bnu^*)\right\|_{\P,2}.\label{bound:Delta_k_1}
\end{align}   
We now bound the right hand side above.  Firstly, by Minkowski's inequality and H\"older's inequality,  
\begin{align}
    &\normp{\mpsi(\bW;\bnu^*)-\bX^\top\betaur}{2}\nonumber\\
    \leq & \left\|\frac{1}{g(\bX^\top\bepiar)}\right\|_{\P,6}\left\|\exp(\bS^\top\beqr)\right\|_{\P,6}\left\|\varepsilon\right\|_{\P,6}+\left\|\frac{1}{1-g(\bX^\top\bepibr)}\right\|_{\P,4}\left\|\varrho\right\|_{\P,4}\nonumber\\
    &+\left\|\frac{1}{g(\bX^\top\bepiar)}\right\|_{\P,4}(\left\|\varrho\right\|_{\P,4}+\left\|\zeta\right\|_{\P,4})+\left\|\bX^\top(\betau-\betaur)\right\|_{\P,2}\nonumber\\
    \leq & c_0^{-3}\normp{\varepsilon}{6}+c_0^{-1}\normp{\varrho}{4}+c_0^{-1}(\normp{\zeta}{4}+\normp{\varrho}{4})\overset{(i)}{=}O_p(1), \label{bound:delta_k_1_part1}
\end{align}
where (i) holds under Assumption~\ref{cond:basic2} (b). Secondly, we define
\begin{align*}
    \tvarepsilon:=Y(1,\bM)-\bS^\top\tbemu,\;\;\tzeta:=\bzS^\top\tbemu-\bX^\top\tbetau,\;\;
    \tvarrho:=\bzS^\top\tbemu-\bX^\top\tbettau.
\end{align*} 
Note that $\tzeta-\tvarrho=\bX^\top(\tbettau-\tbetau)$.
We have
\begin{align}
    &\left\|\mpsi(\bW;\tilde{\bnu})-\bX^\top\betaur\right\|_{\P,2}\nonumber\\
    \leq&\left\|\frac{1}{g(\bX^\top\tbepia)}\right\|_{\P,6}\left\|\exp(\bS^\top\tbeq)\right\|_{\P,6}\left\|\tvarepsilon\right\|_{\P,6}+\left\|\frac{1}{1-g(\bX^\top\tbepib)}\right\|_{\P,4}\left\|\varrho\right\|_{\P,4}\nonumber\\
    &+\left\|\frac{1}{g(\bX^\top\tbepia)}\right\|_{\P,4}(\normp{\tvarrho}{4}+\normp{\tzeta}{4})+\left\|\bX^\top(\tbetau-\betaur)\right\|_{\P,2}\nonumber\\
    \overset{(i)}{=}&O_p\left(1+\|\tbetau-\betaur\|_2\right)\overset{(ii)}{=}O_p(1),\label{bound:delta_k_1_part2}
\end{align}
where (i) holds by Lemma~\ref{lem:r_useevent} and Assumption~\ref{cond:basic2} (b), and (ii) holds by Lemma~\ref{lem:tilde}.

Lastly, by Minkowski's inequality, we have
\begin{align}
    &\normp{\bnabla_{\bnu}\mpsi(\bW;\tilde{\bnu})^\top(\hbnu^{-k}-\bnu^*)}{2}\nonumber\\
    &\quad\leq\normp{\bnabla_{\bepia}\mpsi(\bW;\tilde{\bnu})^\top(\hbepia^{-k}-\bepiar)}{2}+\normp{\bnabla_{\bepib}\mpsi(\bW;\tilde{\bnu})^\top(\hbepib^{-k}-\bepibr)}{2}\nonumber\\
    &\quad\quad+\normp{\bnabla_{\bep}\mpsi(\bW;\tilde{\bnu})^\top(\hbeq^{-k}-\beqr)}{2}+\normp{\bnabla_{\bemu}\mpsi(\bW;\tilde{\bnu})^\top(\hbemu^{-k}-\bemur)}{2}\nonumber\\
    &\quad\quad+\normp{\bnabla_{\bettau}\mpsi(\bW;\tilde{\bnu})^\top(\hbettau^{-k}-\bettaur)}{2}+\normp{\bnabla_{\betau}\mpsi(\bW;\tilde{\bnu})^\top(\hbetau^{-k}-\betaur)}{2}.\nonumber
\end{align}
In what follows, we provide bounds for each of these six terms on the right hand side. 
By Lemma~\ref{lem:usebound}, Theorems~\ref{thm:r_nuisance} and\ref{thm:r_nuisance'}, it holds for any constant $s>0$ that
\begin{align}
    \normp{\bX^\top(\hbepia^{-k}-\bepiar)}{s}&=O_p\left(\|\hbepia^{-k}-\bepiar\|_2\right)=O_p\left(r_{\pi^*}\right),\nonumber\\
    \normp{\bX^\top(\hbepib^{-k}-\bepibr)}{s}&=O_p\left(\|\hbepib^{-k}-\bepibr\|_2\right)=O_p\left(r_{\pi^*}\right),\nonumber\\
    \normp{\bS^\top(\hbeq^{-k}-\beqr)}{s}
    &=O_p\left(\|\hbeq^{-k}-\beqr\|_2\right)=O_p\left(r_{q}+r_{\pi}\right),\nonumber\\
    \normp{\bS^\top(\hbemu^{-k}-\bemur)}{s}
    &=O_p\left(\|\hbemu^{-k}-\bemur\|_2\right)=O_p\left(r_{\mu}+\idf_{\{\mu\neq\mu^*\}}(r_{\pi}+r_{q})\right),\nonumber\\
    \normp{\bX^\top(\hbettau^{-k}-\bettaur)}{s}&=O_p\left(\|\hbettau^{-k}-\bettaur\|_2\right)=O_p\left(r_{\tau_n}+r_{\mu}+(\idf_{\{\tau_n\neq\tau_n^*\}}+\idf_{\{\mu\neq\mu^*\}})(r_{\pi}+r_{q})\right),\nonumber\\
    \normp{\bX^\top(\hbetau^{-k}-\betaur)}{s}
    &=O_p\left(\|\hbetau^{-k}-\betaur\|_2\right)\nonumber\\
    &=O_p\left(r_{\tau}+r_{\tau_n}+r_{\mu}+(\idf_{\{(\tau_n,\tau)\neq(\tau_n^*,\tau^*)\}}+\idf_{\{\mu\neq\mu^*\}})(r_{\pi}+r_{q})\right).\label{order:gau_linear_use}
\end{align}
Firstly, we have
\begin{align*}
    &\left\|\bnabla_{\bepia}\mpsi(\bW;\tilde{\bnu})^\top(\hbepia^{-k}-\bepiar)\right\|_{\P,2}\\
    \overset{(i)}{\leq}&\left\|\exp(-\bX^\top\tbepia)\right\|_{\P,8}\left\{\left\|-\tzeta+\tvarrho\right\|_{\P,4}+\left\|\exp(\bS^\top\tbeq)\right\|_{\P,8}\left\|\tvarepsilon\right\|_{\P,8}\right\}\left\|\bX^\top(\hbepia^{-k}-\bepiar)\right\|_{\P,8}\\
    \overset{(ii)}{=}&O_p\left(\|\hbepia^{-k}-\bepiar\|_2\right)=O_p(r_{\pi}),
\end{align*}
where (i) holds by H\"older's inequality and Minkowski's inequality, (ii) holds by Lemmas \ref{lem:r_useevent} and \ref{lem:tilde}. Similarly, for the second term, we have 
\begin{align*}
    &\left\|\bnabla_{\bepib}\mpsi(\bW;\tilde{\bnu})^\top(\hbepib^{-k}-\bepibr)\right\|_{\P,2}\\
    \leq&\left\|\exp(-\bX^\top\tbepib)\right\|_{\P,4}\left\|\tvarrho\right\|_{\P,8}\left\|\bX^\top(\hbepib^{-k}-\bepibr)\right\|_{\P,8}\overset{(i)}{=}O_p\left(\|\hbepib^{-k}-\bepibr\|_2\right)=O_p(r_{\pi}),
\end{align*}
where (i) follows from Lemmas \ref{lem:r_useevent} and \ref{lem:tilde}, and the fact that $(1-A)\bS=(1-A)\bzS$.
For the remaining terms, we have
\begin{align*}
    &\left\|\bnabla_{\beq}\mpsi(\bW;\tilde{\bnu})^\top(\hbeq^{-k}-\beqr)\right\|_{\P,2}\\
    \leq&\left\{1+\left\|\exp(-\bX^\top\tbepia)\right\|_{\P,6}\right\}\left\|\exp(\bS^\top\tbeq)\right\|_{\P,6}\left\|\tvarepsilon\right\|_{\P,12}\left\|\bS^\top(\hbeq^{-k}-\beqr)\right\|_{\P,12}\\
    =&O_p\left(\|\hbep^{-k}-\beqr\|_2\right)=O_p(r_{q}),\\
    &\left\|\bnabla_{\bemu}\mpsi(\bW;\tilde{\bnu})^\top(\bemu^{-k}-\bemur)\right\|_{\P,2}\\
    \leq&\left\{\left(1+\left\|\exp(-\bX^\top\tbepia)\right\|_{\P,6}\right)\left\|\exp(\bS^\top\tbeq)\right\|_{\P,6}+1+\left\|\exp(\bX^\top\tbepib)\right\|_{\P,3}\right\}\left\|\bS^\top(\hbemu^{-k}-\bemur)\right\|_{\P,6}\\
    =&O_p\left(\|\bemu^{-k}-\bemur\|_2\right)=O_p(r_{\mu}),\\      
    &\left\|\bnabla_{\bettau}\mpsi(\bW;\tilde{\bnu})^\top(\hbettau^{-k}-\bettaur)\right\|_{\P,2}\\
    \leq&\left\{2+\left\|\exp(\bX^\top\tbepib)\right\|_{\P,4}\right\}\left\|\bX^\top(\hbettau^{-k}-\bettaur)\right\|_{\P,4}=O_p\left(\|\hbettau^{-k}-\bettaur\|_2\right)=O_p(r_{\tau_n}),\\
    &\left\|\bnabla_{\betau}\mpsi(\bW;\tilde{\bnu})^\top(\hbetau^{-k}-\betaur)\right\|_{\P,2}\\
    \leq&\left\{2+\left\|\exp(-\bX^\top\tbepia)\right\|_{\P,4}\right\}\left\|\bX^\top(\hbetau^{-k}-\betaur)\right\|_{\P,4}=O_p\left(\|\hbetau^{-k}-\betaur\|_2\right)=O_p(r_{\tau}).
\end{align*}
Together with \eqref{bound:Delta_k_1}, \eqref{bound:delta_k_1_part1} and \eqref{bound:delta_k_1_part2}, we conclude that 
\begin{align*}
    \E_{\mathbb S_k}\left(\Delta_{k,1}^2\right)
    &=O_p\left(n^{-1}\left\{r_{\pi}+r_{q}+r_{\mu}+r_{\tau_n}+r_{\tau}\right\}\right).  
\end{align*}
Since $n\asymp N$, we obtain \eqref{order:r_square1delta}.
\end{proof}

\begin{lemma}
Let Assumptions \ref{cond:basic} and \ref{cond:MQR}--\ref{cond:sparsity} hold. For each $k\in\{1,2,\ldots,\K\}$, as $N$, $d_1$, $d_2\to\infty$,\label{lem:r_delta_2}, $\Delta_{k,2}=O_p(r_{\Delta,2})$, where
\begin{align*}
    r_{\Delta,2}=&r_{q}r_{\mu}+r_{\pi}r_{\tau}+r_{\pi}r_{\tau_n}+r_{\pi}r_{\mu}+\idf_{\{\mu\neq\mu^*\}}(r_{\pi}^2+r_{q}^2+r_{\pi}r_{q})+\idf_{\{(\tau_n,\tau)\neq (\tau_n^*,\tau^*)\}}(r_{\pi}^2+r_{\pi}r_{q}).
\end{align*}
\end{lemma}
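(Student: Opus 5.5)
The plan is to write $\Delta_{k,2}$ as a first-order term plus an exact remainder. Let $\bDelta_\nu:=\hbnu^{-k}-\bnu^*$, which is independent of the evaluation sample by cross-fitting. Then
\[
\Delta_{k,2}=\E[\nabla_{\bnu}\mpsi(\bW;\bnu^*)]^\top\bDelta_\nu+\E\big[\mpsi(\bW;\hbnu^{-k})-\mpsi(\bW;\bnu^*)-\nabla_{\bnu}\mpsi(\bW;\bnu^*)^\top\bDelta_\nu\big].
\]
The first term vanishes exactly by the moment conditions \eqref{eq:r_eq_pia}--\eqref{eq:r_eq_tau}, so everything reduces to bounding the remainder. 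I would expand it exactly rather than through a Hessian at an intermediate point. The reason is that the vanishing arguments below hold only at $\bnu^*$.

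\textbf{Notation.} Write the weights as
\[
w_a(\bbeta)=\exp(-\bX^\top\bbeta)=g^{-1}(\bX^\top\bbeta)-1,\qquad w_b(\bbeta)=\exp(\bX^\top\bbeta)=\{1-g(\bX^\top\bbeta)\}^{-1}-1,\qquad w_q(\bbeta)=\exp(\bS^\top\bbeta).
\]
Because $\mpsi$ is linear in $(\bemu,\bettau,\betau)$, the remainder splits into two groups.

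\textbf{(i) Cross terms.} These pair the change in a weight with the change in a linear coefficient. Examples are
\[
A\{\hat w_a\hat w_q-w_a^*w_q^*\}\bS^\top(\hbemu-\bemur),\qquad (1-A)(\hat w_b-w_b^*)\bX^\top(\hbettau-\bettaur),\qquad A(\hat w_a-w_a^*)\bX^\top(\hbetau-\betaur).
\]
Each is bounded by Hölder's inequality together with Lemma~\ref{lem:usebound}, on the event $\mathcal E_1\cap\mathcal E_2\cap\mathcal E_3$ of Lemma~\ref{lem:r_useevent}, which has probability $1-o(1)$. This gives the bound $(r_\pi+r_q)\|\hbemu-\bemur\|_2+r_\pi(\|\hbettau-\bettaur\|_2+\|\hbetau-\betaur\|_2)$. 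I then insert the rates listed in \eqref{order:gau_linear_use}, taken from Theorems~\ref{thm:r_nuisance} and~\ref{thm:r_nuisance'}. This produces the products $r_qr_\mu+r_\pi r_\mu+r_\pi r_{\tau_n}+r_\pi r_\tau$. It also produces $\idf_{\mu\neq\mu^*}(r_\pi+r_q)^2$ and $\idf_{(\tau_n,\tau)\neq(\tau_n^*,\tau^*)}(r_\pi^2+r_\pi r_q)$, because the coefficient errors for $\bemu,\bettau,\betau$ carry those indicator terms.

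\textbf{(ii) Pure weight remainders.} These come from the nonlinear parameters $(\bepia,\bepib,\beq)$ with the linear coefficients held at $\bnu^*$. I would use the exact factorization
\[
\hat w_a-w_a^*+w_a^*\bX^\top\bDelta_a=w_a^*\,h(-\bX^\top\bDelta_a),\qquad h(u)=e^u-1-u,
\]
and its analogues for $w_b$ and $w_q$, and for the product $w_aw_q$. The key point is that each such second-order factor depends only on $(\bX,\bS)$ and multiplies one of three residual factors evaluated at $\bnu^*$:
\begin{itemize}
\item $A\,w_q^*(Y-\mu^*)$: its conditional mean given $(A=1,\bS)$ is $w_q^*(\mu-\mu^*)$, which is zero when $\mu=\mu^*$.
\item $(1-A)(\mu^*-\tau_n^*)$: its conditional mean given $(A=0,\bX)$ is $\tau_n-\tau_n^*$, which is zero when $\tau_n=\tau_n^*$.
\item $A\{w_q^*(Y-\mu^*)+\tau_n^*-\tau^*\}$: its conditional mean given $(A=1,\bX)$ is $\tau-\tau_n+\tau_n^*-\tau^*$. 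This vanishes when $(\tau_n^*,\tau^*)=(\tau_n,\tau)$, or when $\mu=\mu^*$ together with $\tau_n^*=\tau^*$. The latter holds by Lemma~\ref{lem:compatible} or Lemma~\ref{lem:r_identi}(f) under the MQR characterization.
\end{itemize}
Conditioning on $(\bX,\bS)$ therefore kills these remainders in the well-specified cases. In the remaining cases I bound them by Hölder's inequality, using $|h(u)|\le u^2e^{|u|}$, Lemma~\ref{lem:r_useevent}, and Lemma~\ref{lem:tilde}, which controls the moments of the residuals. This gives $\idf_{\mu\neq\mu^*}(r_\pi^2+r_q^2+r_\pi r_q)$ and $\idf_{(\tau_n,\tau)\neq(\tau_n^*,\tau^*)}(r_\pi^2+r_\pi r_q)$. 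A pure $r_q^2$ term attaches only to the $\varepsilon$ factor, which is why $r_q^2$ appears only with $\idf_{\mu\neq\mu^*}$.

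The main obstacle is the bookkeeping in step (ii). I must check that every second-order term lands on a residual whose conditional expectation vanishes under exactly the right subset of the MQR conditions (a)--(d). The mixed $\bepia$--$\beq$ term $A w_a^*w_q^*\varepsilon\,(\bX^\top\bDelta_a)(\bS^\top\bDelta_q)$ must also be grouped with the $\varepsilon$ residual, so that it gets the $\idf_{\mu\neq\mu^*}$ indicator. I would first verify this case by case for each of (a)--(d) and then collect the terms into $r_{\Delta,2}$.
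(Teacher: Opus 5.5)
Your proposal is correct and follows essentially the paper's route. The paper also cancels the first-order term through the moment conditions \eqref{eq:r_eq_pia}--\eqref{eq:r_eq_tau}, though it does so term by term across $\Delta'_{k,3},\dots,\Delta'_{k,10}$ rather than globally. It bounds the (weight change)$\times$(coefficient change) products by H\"older together with Lemmas~\ref{lem:usebound} and~\ref{lem:r_useevent}, and its $\Delta'_{k,8},\Delta'_{k,9},\Delta'_{k,10}$ are exactly your three residual groups, which vanish under the corresponding correct specifications.

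One small correction: when $\mu=\mu^*$, the identity $\tau^*=\tau_n^*$ comes from the weighted analogue of Lemma~\ref{lem:compatible}. It does not follow from Lemma~\ref{lem:r_identi}(f), which additionally requires $\tau_n^*=\tau_n$.
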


\begin{proof}  
We consider following decomposition, which slightly differs from \eqref{decom:QR_delta2}:
\begin{align}
    \Delta_{k,2}&=\Delta_{k,3}'+\Delta_{k,4}'+\Delta_{k,5}'+\Delta_{k,6}'+\Delta_{k,7}'+\Delta_{k,8}'+\Delta_{k,9}'+\Delta_{k,10}',\label{decom:delta2r}
\end{align}
where
\begin{align*}
    \Delta_{k,3}'&=\E\left[\frac{A}{g(\bX^\top\hbepia^{-k})}(\exp(\bS^\top\beqr)-\exp(\bS^\top\hbeq^{-k}))
    \bS^\top(\hbemu^{-k}-\bemur)\right],\\
    \Delta_{k,4}'&=\E\left[\left\{1-\frac{g(\bX^\top\bepiar)}{g(\bX^\top\hbepia^{-k})}\right\}\bX^\top(\hbetau^{-k}-\betaur)\right],\\
    \Delta_{k,5}'&=\E\left[\left\{\frac{1-A}{1-g(\bX^\top\hbepib^{-k})}-\frac{A}{g(\bX^\top\hbepia^{-k})}\right\}\bX^\top(\bettaur-\hbettau^{-k})\right],\\
    \Delta_{k,6}'&=\E\left[\left\{\frac{1-A}{1-g(\bX^\top\hbepib^{-k})}-\frac{A}{g(\bX^\top\hbepia^{-k})}\exp(\bS^\top\beqr)\right\}\bS^\top(\hbemu^{-k}-\bemur)\right],\\
    \Delta_{k,7}'&=\E\left[\frac{1}{g(\bX^\top\hbepia^{-k})}\{g(\bX^\top\bepiar)-A\}\bX^\top(\hbetau^{-k}-\betaur)\right],\\
    \Delta_{k,8}'&=\E\left[\frac{A}{g(\bX^\top\hbepia^{-k})}\{\exp(\bS^\top\hbeq^{-k})-\exp(\bS^\top\beqr)\}\{Y-\bS^\top\bemur\}\right],\\
    \Delta_{k,9}'&=\E\left[\left\{\frac{1-A}{1-g(\bX^\top\hbepib^{-k})}-\frac{1-A}{1-g(\bX^\top\bepibr)}\right\}\{\bS^\top\bemur-\bX^\top\bettaur\}\right],\\
    \Delta_{k,10}'&=\E\left[\left\{\frac{A}{g(\bX^\top\hbepia^{-k})}-\frac{A}{g(\bX^\top\bepiar)}\right\}\{\exp(\bS^\top\beqr)\varepsilon+\zeta-\varrho\}\right].
\end{align*}
Under Assumptions~\ref{cond:MQR} and \ref{cond:basic2}, the last five terms occur to be zero when certain working model is correctly specified. Specifically, we have
\begin{align*}
    \Delta_{k,7}'&=\E\left[\E\left[\frac{1}{g(\bX^\top\hbepia^{-k})}\{g(\bX^\top\bepiar)-A\}\bX^\top(\hbetau^{-k}-\betaur)\mid\bX\right]\right],\\
    &=\E\left[\frac{\{\pi^*(\bX)-\pi(\bX)\}\bX^\top(\hbetau^{-k}-\betaur)}{g(\bX^\top\hbepia^{-k})}\right]=0, \text{ when }\pi=\pi_a^*=\pi_b^*.\\
    \Delta_{k,8}'&=\E\left[\E\left[\frac{A}{g(\bX^\top\hbepia^{-k})}\{\exp(\bS^\top\hbeq^{-k})-\exp(\bS^\top\beqr)\}\{Y(1,\bM)-\mu^*(\bS)\}\mid\bS\right]\right],\\
    &=\E\left[\frac{A}{g(\bX^\top\hbepia^{-k})}\{\exp(\bS^\top\hbeq^{-k})-\exp(\bS^\top\beqr)\}\{\mu(\bS)-\mu^*(\bS)\}\right]=0, \text{ when }\mu=\mu^*.\\
    \Delta_{k,9}'&=\E\left[\{1-\pi(\bX)\}[\left\{\exp(\bX^\top\hbepib^{-k})-\exp(\bX^\top\bepibr)\right\}\E\left\{\bS^\top\bemur-\bX^\top\bettaur\}\mid A=0,\bX\right]\right]\\
    &=0, \text{ when }\tau_n=\tau_n^*.
\end{align*}
For $\Delta_{k,10}'$, when $q=q^*$, we have
\begin{align*}
    \Delta_{k,10}'&=\E\left[\E\left[\left\{\frac{A}{g(\bX^\top\hbepia^{-k})}-\frac{A}{g(\bX^\top\bepiar)}\right\}q^*(\bS)\{Y(1,\bM)-\mu^*(\bS)\}\mid A=1,\bX\right]\right]\\
    &\quad+\E\left[\E\left[\left\{\frac{A}{g(\bX^\top\hbepia^{-k})}-\frac{A}{g(\bX^\top\bepiar)}\right\}\{\tau_n^*(\bX)-\tau^*(\bX)\}\mid A=1,\bX\right]\right]\\
    &=\E\left[\left\{\frac{\pi(\bX)}{g(\bX^\top\hbepia^{-k})}-\frac{\pi(\bX)}{g(\bX^\top\bepiar)}\right\}\E\left[\{Y(1,\bM)-\mu^*(\bS)\}\mid A=0,\bX\right]\right]\\
    &\quad+\E\left[\left\{\frac{\pi(\bX)}{g(\bX^\top\hbepia^{-k})}-\frac{\pi(\bX)}{g(\bX^\top\bepiar)}\right\}\E\left[\left\{\tau_n^*(\bX)-\tau^*(\bX)\right\}\mid A=1,\bX\right]\right]\\
    &=\E\left[\left\{\frac{\pi(\bX)}{g(\bX^\top\hbepia^{-k})}-\frac{\pi(\bX)}{g(\bX^\top\bepiar)}\right\}\left\{\tau(\bX)-\tau_n(\bX)-\tau^*(\bX)+\tau_n^*(\bX)\right\}\right]\\
    &=0, \text{ when } (\tau,\tau_n)=(\tau^*,\tau_n^*)\text{ holds furtherly.}
\end{align*}
Otherwise, we have $\mu=\mu^*$, which also implies that
\begin{align*}
    \Delta_{k,10}'&=\E\left[\E\left[\left\{\frac{A}{g(\bX^\top\hbepia^{-k})}-\frac{A}{g(\bX^\top\bepiar)}\right\}q^*(\bS)\{Y(1,\bM)-\mu^*(\bS)\}\mid\bS\right]\right]\\
    &=\E\left[\E\left\{\frac{p(\bS)}{g(\bX^\top\hbepia^{-k})}-\frac{p(\bS)}{g(\bX^\top\bepiar)}\right\}\{q^*(\bS)\{\mu(\bS)-\mu^*(\bS)\}\}\mid\bS\right]\\
    &=0,  \text{ when } (\tau,\tau_n)=(\tau^*,\tau_n^*)\text{ holds furtherly.}
\end{align*}
Therefore, $\Delta_{k,10}'=0$ always holds when $(\tau,\tau_n)=(\tau^*, \tau_n^*)$.

Derived from these facts, we rewrite \eqref{decom:delta2r} as
\begin{align*}
        \Delta_{k,2}&=\Delta_{k,3}'+\Delta_{k,4}'+\Delta_{k,5}'+\Delta_{k,6}'+\Delta_{k,7}'\idf_{\{\pi\neq\pi^*\}}\\
        &\quad+\Delta_{k,8}'\idf_{\{\mu\neq\mu^*\}}+(\Delta_{k,9}'+\Delta_{k,10}')\idf_{\{(\tau,\tau_n)\neq(\tau^*,\tau_n^*)\}},
\end{align*}
Recall the six equations in \eqref{order:gau_linear_use}.
Additionally, by Lemma~\ref{lem:usebound}, we have
\begin{align}
    \normp{g(\bX^\top\bepiar)-g(\bX^\top\hbepia^{-k})}{r}&=O_p\left(\|\hbepi^{-k}-\bepiar\|_2\right)=O_p\left(r_{\pi^*}\right),\nonumber\\
    \normp{\exp(-\bX^\top\bepiar)-\exp(-\bX^\top\hbepia^{-k})}{r}&=O_p\left(\|\hbepi^{-k}-\bepiar\|_2\right)=O_p\left(r_{\pi^*}\right),\nonumber\\
    \normp{\exp(\bX^\top\bepibr)-\exp(\bX^\top\hbepib^{-k})}{r}&=O_p\left(\|\hbepi^{-k}-\bepiar\|_2\right)=O_p\left(r_{\pi^*}\right),\nonumber\\
    \normp{\exp(\bS^\top\beqr)-\exp(\bS^\top\hbeq^{-k})}{r}
    &=O_p\left(\|\hbeq^{-k}-\beqr\|_2\right)=O_p\left(r_{q}+\idf_{\{q\neq q^*\}}r_{\pi}\right).
    \label{order:r_inferuse}
\end{align}
By H\"older's inequality, Minkowski's inequality, Lemma~\ref{lem:r_useevent} and the fact that $|A|\leq 1$, we have
\begin{align*}
    |\Delta'_{k,3}|&\leq\normp{g^{-1}(\bX^\top\hbepia^{-k})}{4}\normp{\exp(\bS^\top\beqr)-\exp(\bS^\top\hbeq^{-k})}{4}\normp{\bS^\top(\hbemu^{-k}-\bemur)}{2}\\
    &=O_p\left(\|\hbeq^{-k}-\beqr\|_2\|\hbemu^{-k}-\bemur\|_2\right),
\end{align*}
and
\begin{align*}
    |\Delta'_{k,4}|&\leq\normp{g^{-1}(\bX^\top\hbepia^{-k})}{4}\normp{g(\bX^\top\bepiar)-g(\bX^\top\hbepia^{-k})}{4}\normp{\bX^\top(\hbetau^{-k}-\betaur)}{2}\\
    &=O_p\left(\|\hbepia^{-k}-\bepiar\|_2\|\hbetau^{-k}-\betaur\|_2\right).
\end{align*}
Since \eqref{eq:r_eq_pia} and \eqref{eq:r_eq_pib} hold for $\bepiar$ and $\bepibr$, we rewrite $\Delta'_{k,5}$ as
\begin{align*}
    \Delta'_{k,5}=&\E\left[\left\{\frac{1-A}{1-g(\bX^\top\hbepib^{-k})}-\frac{1-A}{1-g(\bX^\top\bepibr)}+\frac{A}{g(\bX^\top\bepiar)}-\frac{A}{g(\bX^\top\hbepia^{-k})}\right\}\bX^\top(\bettaur-\hbettau^{-k})\right].
\end{align*}
By \eqref{order:r_inferuse} and Minkowski's inequality, it follows that
\begin{align*}
    |\Delta'_{k,5}|&=O_p\left(\left\{\|\hbepia^{-k}-\bepiar\|_2+\|\hbepib^{-k}-\bepibr\|_2\right\}\|\hbettau^{-k}-\bettaur\|_2\right).
\end{align*}
Similarly, \eqref{eq:r_eq_q} and \eqref{eq:r_eq_pia} separately imply that
\begin{align*}
    \Delta'_{k,6}=&\E\left[\left\{\frac{1-A}{1-g(\bX^\top\hbepib^{-k})}-\frac{1-A}{1-g(\bX^\top\bepibr)}\right\}\bS^\top(\hbemu^{-k}-\bemur)\right]\\
    &+\E\left[\left\{\frac{A}{g(\bX^\top\bepiar)}-\frac{A}{g(\bX^\top\hbepia^{-k})}\right\}\exp(\bS^\top\beqr)\bS^\top(\hbemu^{-k}-\bemur)\right],\\
    \Delta'_{k,7}=&\E\left[\left\{\frac{1}{g(\bX^\top\hbepia^{-k})}-\frac{1}{g(\bX^\top\bepiar)}\right\}\{g(\bX^\top\bepiar)-A\}\bX^\top(\hbetau^{-k}-\betaur)\right],
\end{align*}
thus by \eqref{order:r_inferuse} and Assumption~\ref{cond:basic2} (a),
\begin{align*}
    |\Delta'_{k,6}|&=O_p\left(\left\{\|\hbepia^{-k}-\bepiar\|_2+\|\hbepib^{-k}-\bepibr\|_2\right\}\|\hbemu^{-k}-\bemur\|_2\right).\\
    |\Delta'_{k,7}|&=O_p\left(\|\hbepia^{-k}-\bepiar\|_2\|\hbetau^{-k}-\betaur\|_2\right).
\end{align*}

For $|\Delta'_{k,8}|$, with some $\tbepia$ lies between $\bepiar$ and $\hbepia^{-k}$ and some $\tbeq$ lies between $\beqr$ and $\hbeq^{-k}$, we have
\begin{align*}
    |\Delta_{k,8}'|&\overset{(i)}{=}\left|\E\left[A g^{-1}(\bX^\top\hbepia)\exp(\bS^\top\beqr)\varepsilon\bS^\top(\beqr-\hbeq^{-k})\right]\right|\\
    &\quad+2^{-1}\left|\E\left[A\exp(-\bX^\top\hbepia^{-k}+\bS^\top\tbeq)\varepsilon\{\bS^\top(\beqr-\hbeq^{-k})\}^2\right]\right|\\
    &\leq \left|\E\left[A g^{-1}(\bX^\top\bepiar)\exp(\bS^\top\beqr)\varepsilon\bS^\top(\beqr-\hbeq^{-k})\right]\right|\\
    &\quad+\left|\E\left[A\{\exp(-\bX^\top\hbepia^{-k})-\exp(-\bX^\top\bepiar)\}\exp(\bS^\top\beqr)\varepsilon\bS^\top(\beqr-\hbeq^{-k})\right]\right|\\
    &\quad+2^{-1}\left|\E\left[A\exp(-\bX^\top\hbepia^{-k}+\bS^\top\tbeq)\varepsilon\{\bS^\top(\beqr-\hbeq^{-k})\}^2\right]\right|\\
    &\overset{(ii)}{=}\left|\E\left[A\exp(-\bX^\top\tbepia+\bS^\top\beqr)\varepsilon\bS^\top(\beqr-\hbeq^{-k})\bX^\top(\bepiar-\hbepia^{-k})\right]\right|\\
    &\quad+2^{-1}\left|\E\left[A\exp(-\bX^\top\hbepia^{-k}+\bS^\top\tbeq)\varepsilon\{\bS^\top(\beqr-\hbeq^{-k})\}^2\right]\right|\\
    &\overset{(iii)}{\leq}\normp{\exp(-\bX^\top\tbepia)}{4} \normp{\exp(\bS^\top\beqr)}{4}\normp{\varepsilon}{4}\normp{\bS^\top(\beqr-\hbeq^{-k})}{8}\normp{\bX^\top(\bepiar-\hbepia^{-k})}{8}\\
    &\quad+2^{-1}\normp{\exp(-\bX^\top\hbepia^{-k})}{4}\normp{\exp(\bS^\top\tbeq)}{4}\normp{\varepsilon}{4}\normp{\{\bS^\top(\beqr-\hbeq^{-k})\}}{8}^2\\
    &=O_p\left(\|\beqr-\hbeq^{-k}\|_2\|\bepiar-\hbepia^{-k}\|_2+\|\beqr-\hbeq^{-k}\|_2^2\right).
\end{align*}
where (i) holds by Taylor's theorem; (ii) holds from \eqref{eq:r_eq_mu}; (iii) holds by H\"older's inequality, \eqref{order:r_inferuse} and the fact $|A|\leq 1$. 

Next, with some $\tbepib$ lies between $\bepibr$ and $\hbepib^{-k}$,
\begin{align*}
    \Delta_{k,9}'&=\left|\E\left[(1-A)\exp(\bX^\top\bepibr)\varrho\bX^\top(\hbepib^{-k}-\bepibr)\right]\right|\\
    &\quad+2^{-1}\left|\E\left[(1-A)\exp(\bX^\top\tbepib)\varrho\{\bX^\top(\hbepib^{-k}-\bepibr)\}^2\right]\right|\\
    &\overset{(i)}{\leq}\normp{\exp(\bX^\top\tbepib)}{2}\normp{\varrho}{4}\normp{\bX^\top(\hbepib^{-k}-\bepibr)}{4}^2\\
    &\overset{(ii)}{=}O_p\left(\|\hbepib^{-k}-\bepibr\|_2^2\right),
\end{align*}
where (i) holds by \eqref{eq:r_eq_tau_n}, H\"older's inequality and the fact that $(1-A)\bS=(1-A)\bzS$; (ii) holds by Lemma~\ref{lem:r_useevent} and \eqref{order:r_inferuse}. Similarly,  with some $\tbepia$ lies between $\bepiar$ and $\hbepia^{-k}$, we have 
\begin{align*}
    |\Delta_{k,10}'|&=\left|\E\left[A\exp(-\bX^\top\bepiar)\{\exp(\bS^\top\beqr)\varepsilon+\zeta-\varrho\}\bX^\top(\hbepia^{-k}-\bepiar)\right]\right|\\
    &\quad+2^{-1}\left|\E\left[A\exp(-\bX^\top\tbepia)\{\exp(\bS^\top\beqr)\varepsilon+\zeta-\varrho\}\{\bX^\top(\hbepia^{-k}-\bepiar)\}^2\right]\right|\\
    &\overset{(i)}{=}O_p\left(\|\hbepia^{-k}-\bepiar\|_2^2\right),
\end{align*}
where (i) holds by \eqref{eq:r_eq_tau}, \eqref{order:r_inferuse}, H\"older's inequality and Lemma~\ref{lem:r_useevent}. Collectively, we obtain
\begin{align*}
    \Delta_{k,2}&=O_p\left(\left\{r_{q}+r_{\pi}\right\}\left\{r_{\mu}+\idf_{\{\mu\neq\mu^*\}}(r_{\pi}+r_{q}\right\})\right)\\
    &\quad+O_p\left(r_{\pi^*}\{r_{\tau}+r_{\tau_n}+r_{\mu}+(\idf_{\{(\tau_n,\tau)\neq(\tau_n^*,\tau^*)\}}+\idf_{\{\mu\neq\mu^*\}})(r_{\pi}+r_{q})\}\right)\\
    &\quad+O_p\left(r_{\pi}\{r_{\tau_n}+r_{\mu}+(\idf_{\{\tau_n\neq\tau_n^*\}}+\idf_{\{\mu\neq\mu^*\}})(r_{\pi}+r_{q})\}\right)\\
    &\quad+O_p\left(r_{\pi}\{r_{\mu}+\idf_{\{\mu\neq\mu^*\}}(r_{\pi}+r_{q})\}\right)\\
    &\quad+\idf_{\{\pi\neq \pi^*\}}O_p\left(r_{\pi^*}\{r_{\tau}+\idf_{\{\mu\neq\mu^*\}}(r_{q}r_{\mu})+\idf_{\{q\neq q^*\}}r_{\mu}+\idf_{\{(\tau_n,\tau)\neq(\tau_n^*,\tau^*)\}}(r_{\pi}+r_{q}+r_{\mu})\}\right)\\
    &\quad+\idf_{\{\mu\neq\mu^*\}}O_p\left(\{r_{q}+\idf_{\{q\neq q^*\}}r_{\pi}+r_{\pi}\}(r_{q}+\idf_{\{q\neq q^*\}}r_{\pi})\right)\\
    &\quad+\idf_{\{(\tau_n,\tau)\neq(\tau_n^*,\tau^*)\}}O_p\left(r_{\pi}^2\right)\\
    &\overset{(i)}{=}O_p\left(r_{q}r_{\mu}+r_{\pi^*}r_{\tau}+r_{\pi}r_{\tau_n}+r_{\pi}r_{\mu}\right.\\
    &\quad\left.+\idf_{\{\mu\neq\mu^*\}}(r_{\pi}^2+r_{q}^2+r_{\pi}r_{q})+\idf_{\{(\tau_n,\tau)\neq (\tau_n^*,\tau^*)\}}(r_{\pi}^2+r_{\pi^*}r_{q})\right).
\end{align*}
where (i) holds by the fact that either $q=q^*$ or $\mu=\mu^*$ holds, i.e., $\idf_{\{q\neq q^*\}}\idf_{\{\mu\neq\mu^*\}}=0$. 
\end{proof}

\begin{proof}[Proof of Theorem~\ref{thm:r_bias}]
Recall \eqref{decom:r_theta}, note that 
\begin{align*}
\E_{\mathbb S_k}\left[\left\{\K^{-1}\sum\limits_{k=1}^\K \Delta_{k,0}\right\}^2\right]=\E_{\mathbb S_k}\left[\left\{N^{-1}\sum\limits_{j=1}^{N}\mpsi(\bW_j;\bnu^*)-\theta\right\}^2\right]\leq\frac{\rsigma^2}{N}.
\end{align*}
By Markov's inequality, 
\begin{align*}
\K^{-1}\sum\limits_{k=1}^\K \Delta_{k,0}=O_p\left(\frac{\rsigma}{\sqrt{N}}\right).
\end{align*}
Combine with Lemmas \ref{lem:r_delta_1} and \ref{lem:r_delta_2}, we obtain 
\begin{align*}
\hat{\theta}_{\mathrm{MQR}}-\theta&=O_p\left(\rsigma N^{-1/2}+r_{q}r_{\mu}+r_{\pi^*}r_{\tau}+r_{\pi}r_{\tau_n}+r_{\pi}r_{\mu}\right)\\
&\quad+\idf_{\{\mu\neq\mu^*\}}O_p(r_{\pi}^2+r_{q}^2+r_{\pi}r_{q})+\idf_{\{(\tau_n,\tau)\neq (\tau_n^*,\tau^*)\}}O_p(r_{\pi}^2+r_{\pi_a^*}r_{q}).
\end{align*}

In what follows, we demonstrate that $\rsigma\asymp \|\bemur\|_2+1$. By Minkowski's inequality, it holds for any constant $s\geq2$ that  
\begin{align}
    \normp{\mpsi(\bW;\bnu^*)-\theta}{s}&\leq \normp{\bzS^\top\bemur-\theta}{s}+\normp{\left\{\frac{A}{g(\bX^\top\bepiar)}-1\right\}\{\bzS^\top\bemur-\bX^\top\betaur\}}{s}\nonumber\\
    &\phantom{\leq}+\normp{\frac{A}{g(\bX^\top\bepiar)}\{\bX^\top\bettaur-\bzS^\top\bemur\}}{s}\nonumber\\
    &\phantom{\leq}+\normp{\frac{1-A}{1-g(\bX^\top\bepibr)}\{\bS^\top\bemur-\bX^\top\bettaur\}}{s}\nonumber\\
    &\phantom{\leq}+\normp{\frac{A\exp(\bS^\top\beqr)}{g(\bX^\top\bepiar)}\{Y(1,\bM)-\bS^\top\bemur\}}{s}\nonumber\\
    &\overset{(i)}{\leq} \normp{\bzS^\top\bemur-\theta}{s}+(1+c_0^{-1})\normp{\varepsilon+\zeta}{s}+c_0^{-1}\normp{\varepsilon+\varrho}{s}\nonumber\\
    &\phantom{\leq}+c_0^{-1}\normp{\varepsilon+\zeta}{s}+c_0^{-3}\normp{\varepsilon}{s}\nonumber\\
    &\overset{(ii)}{=}O\left(\normp{\bzS^\top\bemur-\theta}{s}+1\right),\nonumber
\end{align}
where (i) holds by that $(1-A)\bS=(1-A)\bzS$ under Assumption~\ref{cond:basic} (a), and (ii) holds under Assumption~\ref{cond:basic2}.
For $\normp{\bzS^\top\bemur-\theta}{s}$, we have
\begin{align}
    \normp{\bzS^\top\bemur-\theta}{s}&= \normp{\frac{1-A}{1-\pi(\bX)}\bS^\top\bemur-\E\left[\frac{1-A}{1-\pi(\bX)}(\varepsilon+\bS^\top\bemur)\right]}{s}\nonumber\\
    &\overset{(i)}{\leq}  c_0^{-1}\normp{\bS^\top\bemur}{s}+c_0^{-1}\normp{\varepsilon}{s}+c_0^{-1}\normp{\bS^\top\bemur}{s}\nonumber\\
    &\overset{(ii)}{=} O_p(\|\bemur\|_2+1),\nonumber
\end{align}
where (i) holds by Assumption~\ref{cond:basic} (c) and Jensen's inequality; (ii) holds by Assumption~\ref{cond:basic2} (b). Therefore, we conclude that 
\begin{align}
    \normp{\mpsi(\bW;\bnu^*)-\theta}{s}&=O\left(\normp{\bzS^\top\bemur-\theta}{s}+1\right)=O\left(\|\bemur\|_2+1\right).\label{order:r_normalneed} 
\end{align}
Particularly, let $s=2$, we have
\begin{align}
    \rsigma=\normp{\mpsi(\bW;\bnu^*)-\theta}{2}=O(\|\bemur\|_2+1).\label{order:r_sigma}
\end{align}

We then decompose $\mpsi(\bW;\bnu^*)-\theta$ as 
\begin{align*}
    \mpsi(\bW;\bnu^*)-\theta=U'_0+U'_1+U'_2+U'_3+U'_4+U'_5,
\end{align*}
where
\begin{align*}
    U'_0&:=\tau(\bX)-\theta,\\
    U'_1&:=\left\{1-\frac{A}{\pi^*_a(\bX)}\right\}\{\tau^*(\bX)-\tau(\bX)\},\\
    U'_2&:=\left\{\frac{1-A}{1-\pi_b^*(\bX)}-\frac{A}{\pi_a^*(\bX)}\right\}\{\tau_{n}(\bX)-\tau_{n}^*(\bX)\}\\
    U'_3&:=\frac{1-A}{1-\pi^*_b(\bX)}\{\mu^*(\bS)-\tau_{n}(\bX)\},\\
    U'_4&:=\frac{A}{\pi^*_a(\bX)}\left[\{\tau_{n}(\bX)-\tau(\bX)\}-q^*(\bS)\{\mu^*(\bS)-\mu(\bS)\}\right],\\
    U'_5&:=\frac{A q^*(\bS)}{\pi^*_a(\bX)}\{Y(1,\bM)-\mu(\bS)\}.
\end{align*}
Note that
\begin{align*}
    \E[U'_1\mid\bX]&=\left\{1-\frac{\pi(\bX)}{\pi^*_a(\bX)}\right\}\{\tau^*(\bX)-\tau(\bX)\}\overset{(i)}{=}0,\\
    \E[U'_2\mid\bX]&=\left\{\frac{1-\pi(\bX)}{1-\pi_b^*(\bX)}-\frac{\pi(\bX)}{\pi_a^*(\bX)}\right\}\{\tau_{n}(\bX)-\tau_{n}^*(\bX)\}\overset{(ii)}{=}0,\\
    \E[U'_4\mid\bX]&=\frac{\pi(\bX)}{\pi^*_a(\bX)}\E[\{\tau_{n}(\bX)-\tau(\bX)-q^*(\bS)\{\mu^*(\bS)-\mu(\bS)\}\}\mid A=1,\bX]\overset{(iii)}{=}0,
\end{align*}
where (i) and (ii) hold since that either $\pi=\pi^*$ or $(\tau_n,\tau)=(\tau_n^*,\tau^*)$ holds, and (iii) holds by that either $q=q^*$ or $\mu=\mu^*$ holds. Additionally, 
\begin{align*}
    \E[U'_3\mid\bX]&=\frac{1-\pi(\bX)}{1-\pi_b^*(\bX)}\E\left[\mu^*(\bzS)-\tau_{n}(\bX)\mid\bX,A=0\right]=0,\\
    \E[U'_5\mid\bX]&=\E[\E[U'_5\mid\bS,A=1]\mid\bX]=\frac{\pi(\bX)}{\pi^*_a(\bX)}\E\left[q^*(\bS)\{\mu(\bS)-\mu(\bS)\}\mid A=1,\bX\right]=0.
\end{align*}
Hence, for each $j\in\{1,2,3,4,5\}$, 
\begin{align*}
    \E[U'_jU'_0]=\E[\E[U'_j\mid\bX]U'_0]=0.
\end{align*}
Moreover, since $\E[U'_5\mid\bS,A]=0$ and that $(U'_1, U'_2, U'_3, U'_4)$ are functions of $(A,\bS)$, for each $j\in\{1,2,3,4\}$,
\begin{align*}
    \E[U'_jU'_5\mid\bX]&=\E\left[U'_j\E[U'_5\mid\bS,A]\mid\bX\right]=0.
\end{align*}
Note that $A(1-A)=0$, we have $U'_4U'_3=0$. Futhermore,
\begin{align*}
    \E[U'_1U'_3]=&\E[\{\tau^*(\bX)-\tau(\bX)\}U'_3]\\
    =&\E\left[\frac{1-\pi(\bX)}{1-\pi_b^*(\bX)}\{\tau^*(\bX)-\tau(\bX)\}\E[\mu^*(\bS)-\tau_n(\bS)\mid A=0,\bX]\right]=0,\\
    \E[U'_2U'_3]=&\E[\frac{1-A}{1-\pi_b^*(\bX)}\{\tau_n(\bX)-\tau_n^*(\bX)\}U'_3]\\
    =&\E\left[\frac{1-\pi(\bX)}{\{1-\pi_b^*(\bX)\}^2}\{\tau_n(\bX)-\tau_n^*(\bX)\}\E[\mu^*(\bS)-\tau_n(\bS)\mid A=0,\bX]\right]=0.
\end{align*}
Therefore, it follows that
\begin{align*}
    \rsigma^2&=\E[(U'_0)^2]+\E[(U'_5)^2]+\E[(U'_3)^2]+\E[(U'_1+U'_2+U'_4)^2]\geq \E[(U'_0)^2]+\E[(U'_3)^2]+\E[(U'_5)^2].
\end{align*}
Under Assumption~\ref{cond:basic2} (b), the last term $\E[(U'_5)^2]$ satisfies that
\begin{align*}
    \E[(U'_5)^2]&=\E\left[\frac{A \{q^*(\bS)\}^2}{\{\pi^*_a(\bX)\}^2}\{Y(1,\bM)-\mu(\bS)\}^2\right]\geq \E\left[\frac{A c_0^4 \{Y(1,\bM)-\mu(\bS)\}^2}{(1-c_0)^2}\right]\geq \frac{c_0^4 c_{Y}}{(1-c_0)^2}.
\end{align*}

The second term $\E[(U'_3)^2]$ satisfies that
\begin{align*}
    \E[(U'_0)^2]&=\E\left[\frac{1-A}{\{1-\pi(\bX)\}^2}\{\mu^*(\bS)-\tau_n(\bX)\}^2\right]\\
    &=\E\left[\frac{1-\pi(\bX)}{\{1-\pi^*_b(\bX)\}^2}\E\left[\{\mu^*(\bS)-\tau_n(\bX)\}^2\mid A=0,\bX\right]\right]\\
    &\geq c_0(1-c_0)^{-2}\E\left[\mathrm{Var}[\bzS^\top\bemur\mid \bX]\right]\geq c_0(1-c_0)^{-2}c_{\min}\|\bemur\|_2^2.
\end{align*}
Therefore
\begin{align*}
    \rsigma^2\geq \E[(U'_3)^2]+\E[(U'_5)^2]\geq c_0(1-c_0)^{-2}c_{\min}\|\bemur\|_2^2+\frac{c_0^4 c_{Y}}{(1-c_0)^2}.
\end{align*}
Together with \eqref{order:r_sigma}, we obtain 
\begin{align}
    \rsigma^2\asymp \|\bemur\|_2+1. \label{order:sigma}
\end{align}

\end{proof}

\begin{proof}[Proof of Theorem~\ref{thm:r_infer}]
Recall \eqref{decom:r_theta}, i.e.,
\begin{align*}
    \hat{\theta}_{\mathrm{MQR}}-\theta=N^{-1}\sum_{k=1}^{\K}\sum_{i\in\I_k}\mpsi(\bW_i;\hbnu^{-k})-\theta
    =\K^{-1}\sum\limits_{k=1}^{\K}\Delta_{k,0}+\K^{-1}\sum\limits_{k=1}^{\K}\Delta_{k,1}+\K^{-1}\sum\limits_{k=1}^{\K}\Delta_{k,2}.
\end{align*}
Under the additional produce-rate condition in Theorem~\ref{thm:r_infer}, by Lemmas \ref{lem:r_delta_1} and \ref{lem:r_delta_2}, we have
\begin{align*}
    \K^{-1}\sum\limits_{k=1}^{\K}\Delta_{k,1}+\K^{-1}\sum\limits_{k=1}^{\K}\Delta_{k,2}=o_p\left(\frac{1}{\sqrt{N}}\right).
\end{align*}
Hence,
\begin{align*}
    \sqrt{N}(\hat{\theta}_{\mathrm{MQR}}-\theta)
    =\sqrt{N}\sum_{i=1}^N\left(\mpsi(\bW_i;\bnu^*)-\theta\right)+o_p(1).
\end{align*}
As shown in \eqref{order:sigma}, $\rsigma^2\asymp \|\bemur\|_2+1$. Additionaly, for any constant $s>2$, by \eqref{order:r_normalneed},
\begin{align}
    \E\left[\rsigma^{-s}\left(\mpsi(\bW;\bnu^*)-\theta\right)^{s}\right]=O(1).\label{order:r_lyapunov}
\end{align}
By Lyapunov's central limit theorem, we obtain
\begin{align}
    \rsigma^{-1}N^{-1/2}\sum_{i=1}^N\left(\mpsi(\bW_i;\bnu^*)-\theta\right)\leadsto \mathcal{N}(0,1)\label{eq:r_normal}
\end{align}
as $N$, $d_1$, $d_2\to\infty$.

Lastly, we prove that, as $N$, $d_1$, $d_2\to\infty$, 
\begin{align}
    \hat{\sigma}_{\mathrm{m}}^2=\rsigma^2(1+o(1)).\label{approx:sigma}
\end{align}
It follows from \eqref{order:sigma} and \eqref{order:r_normalneed} that, for any $s>2$,
\begin{align*}
    \frac{\E[\{\mpsi(\bW;\bnu^*)-\theta\}^s]}{N^{s/2-1}(\|\bemur\|_2+1)^s}=O_p(N^{1-s/2})=o(1).
\end{align*}
Additionally, Theorem~\ref{thm:r_bias} implies that $\hat{\theta}_{\mathrm{MQR}}-\theta=o_p(\|\bemur\|_2+1)$. By \eqref{order:r_square1delta}, for each $k\in\{1,2,\ldots,\K\}$,
\begin{align*}
    \E\left[\left\{\mpsi(\bW;\hbnu^{-k})-\mpsi(\bW;\bnu^*)\right\}^2\right]=o_p(1)=o_p(\|\bemur\|_2+1).
\end{align*}
By Lemma~D.4 of \cite{DoubleRobustSemisupervised2023a}, we obtain \eqref{approx:sigma}.
\end{proof}

\subsection{Proof of the asymptotic results for the nuisance estimators}
\label{apdx:r_nuisance}
In this section, we establish the theoretical properties of the proposed nuisance estimates: $\hbepiar$, $\hbepibr$, $\hbeqr$, $\hbemur$, $\hbettaur$, and $\hbetaur$.
The analysis hinges on the controling of imputation error, which comes from the sequential dependence structure of nuisance estimates.  

\begin{theorem}\label{thm:r_nuisance}
Let Assumptions \ref{cond:basic} and \ref{cond:basic2} hold. As $N$, $d_1$, $d_2\to\infty$, suppose that
\begin{align}
    r_{\pi}r_{q}+r_{\mu}+r_{\tau_n}+r_{\tau}=o(1).\label{cond:r_rate1}
\end{align}  
Then the following holds:

\textbf{(a)}
$\|\hbepiar-\bepiar\|_2=O_p\left(r_{\pi}\right).$

\textbf{(b)}
$\|\hbepibr-\bepibr\|_2=O_p\left(r_{\pi}\right).$

\textbf{(c)}
$\|\hbeqr-\beqr\|_2= O_p \left(r_{\pi}+r_{q}\right).$

\textbf{(d)}
$\|\hbemur-\bemur\|_2= O_p\left(r_{\pi}+r_{q}+r_{\mu}\right).$

\textbf{(e)}
$\|\hbettaur-\bettaur\|_2= O_p\left(r_{\pi}+r_{q}+r_{\mu}+r_{\tau_n}\right).$

\textbf{(f)}
$\|\hbetaur-\betaur\|_2= O_p\left(r_{\pi}+r_{q}+r_{\mu}+r_{\tau_n}+r_{\tau}\right).$
\end{theorem}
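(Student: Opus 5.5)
The plan is a sequential Lasso analysis following the ordering $\ell_1\to\ell_6$. Each step treats the previously estimated nuisance parameters as plug-ins. Their errors are absorbed through an ``imputation error'' term whose contribution is linear in the earlier rates. The argument for each part has three ingredients: (i) a deviation bound on the $\ell_\infty$-norm of the empirical gradient at the target; (ii) restricted strong convexity (RSC) of the empirical loss around the target; and (iii) the basic inequality of the $\ell_1$-penalized M-estimator, which combines (i) and (ii) to give an $\ell_2$ rate of order $\sqrt{s}\lambda$ plus the imputation error.

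Parts (a) and (b) are the standard case, since $\ell_1,\ell_2$ involve no plug-ins. The gradient $\{1-Ag^{-1}(\bX^\top\bepiar)\}\bX$ has mean zero by the first-order condition in Section~\ref{apdx:unique}. It is sub-exponential under Assumption~\ref{cond:basic2}, given positivity $\pi_a^*\ge c_0$. Hence Lemma~\ref{bound_expectation_sum} (or Bernstein with a union bound) gives $\|\nabla\bar\ell_1(\bepiar)\|_\infty=O_p(\sqrt{\log d_1/N})\le\lambda_{\pi_a}/2$. RSC follows from Lemma~\ref{lem:useinrsc} with $h=\exp(-\cdot)$ and $\bU=A\bX$. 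The eigenvalue bound for $\bU$ comes from Lemma~\ref{lem:momentmatrix}(ii). The moment condition $\E|\bU^\top\bepiar|^c<C_1$ holds because $g(\bX^\top\bepiar)$ is bounded away from $0$. The cone condition and Lemma~\ref{lemma:sol} then give rate $r_\pi$; part (b) is identical, using Lemma~\ref{lem:momentmatrix}(iii).

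For (c), the work is already organised by Lemma~\ref{lem:r_imputation}. On $\mathcal B_1$, the gradient of $\bar\ell_3$ at the true $(\bepiar,\bepibr,\beqr)$ has mean zero by the definition of $\beqr$, so $\mathcal B_1$ holds with high probability. $\mathcal B_2$ holds with high probability by Lemma~\ref{lem:boundfornonsquareimputation}, which bounds $R_1$ by $r_\pi(\sqrt{\log d/N}\|\bDelta\|_1+\|\bDelta\|_2)$. $\mathcal B_3$ is RSC with plug-in weights $g^{-1}(\bX^\top\hbepia)$. I would establish it on $\mathcal E_1$ from Lemma~\ref{lem:r_useevent}, using Lemma~\ref{lem:useinrsc} and the fact that the weights are bounded below. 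Parts (a)–(c) of Lemma~\ref{lem:r_imputation} give the cone membership $\bDelta_q\in\Ctil(\bar s_q,k_0)$ and $\|\bDelta_q\|_2\le1$. Plugging into (a) yields $\kappa_1\|\bDelta_q\|_2^2\lesssim(\lambda_q\sqrt{\bar s_q}+r_\pi)\|\bDelta_q\|_2$. Because $\lambda_q\sqrt{\bar s_q}\asymp r_q+r_\pi$, this gives $\|\bDelta_q\|_2=O_p(r_\pi+r_q)$. The assumption $(r_\pi^2+r_q^2)\log d=O(1)$ (or \eqref{cond:r_rate1}) is used to keep $\kappa_2\log d\,\|\bDelta\|_1^2/M$ dominated.

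Parts (d)–(f) repeat the same template for the weighted least-squares losses $\ell_4,\ell_5,\ell_6$. RSC is now easier, because each loss is quadratic in its own parameter. With plug-in weights, the Hessian is $M^{-1}\sum w_i\bS_i\bS_i^\top$, where the weights $w_i$ are bounded below with high probability on $\mathcal E_1\cap\mathcal E_2\cap\mathcal E_3$. The imputation error is the difference of gradients evaluated at estimated versus true plug-ins. For instance, in $\ell_4$ this is
\[
M^{-1}\sum A_i\{g^{-1}(\bX_i^\top\hbepia)e^{\bS_i^\top\hbeq}-g^{-1}(\bX_i^\top\bepiar)e^{\bS_i^\top\beqr}\}(Y_i-\bS_i^\top\bemur)\bS_i^\top\bDelta .
\]
By a mean-value expansion, Hölder, Lemma~\ref{lem:usebound} and Lemma~\ref{lem:tilde}, it is $O_p\{(r_\pi+r_q)(\sqrt{\log d/N}\|\bDelta\|_1+\|\bDelta\|_2)\}$. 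Sample splitting makes the plug-ins independent of the fold used, so conditional arguments apply. Each step therefore inherits the earlier rates additively, giving (d), (e) and (f) in turn. For (e) and (f), the plug-ins $\hbemu$ and $\hbettau$ enter linearly, through $\bS^\top(\hbemu-\bemur)$ and $\bX^\top(\hbettau-\bettaur)$, and contribute their $\ell_2$ rates.

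I expect the main obstacle to be (c) and the uniform control of exponential weights. The weights $e^{\bS^\top\beq}$ and $g^{-1}(\bX^\top\bepia)$ are unbounded under sub-Gaussian design. Both RSC and the imputation bounds therefore require the moment-controlling events $\mathcal E_1,\mathcal E_2,\mathcal E_3$ along the whole segment between estimate and target, which is where the $\log d$ side condition is spent. I would first prove $\P(\mathcal E_1\cap\mathcal E_2\cap\mathcal E_3)\to1$ using a preliminary crude consistency bound, $\|\hat\bbeta-\bbeta^*\|_1\lesssim\sqrt{s}\,r$, combined with $\|\bS\|_\infty\lesssim\sqrt{\log d}$.
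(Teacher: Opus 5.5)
Your architecture matches the paper's. Parts (a) and (b) come from a union-bound gradient deviation (Lemma~\ref{lem:r_gradient}) plus RSC from Lemma~\ref{lem:useinrsc}, and part (c) goes through the events $\mathcal B_1$--$\mathcal B_3$ of Lemma~\ref{lem:r_imputation}. One small point in (a): Lemma~\ref{bound_expectation_sum} loses a $\sqrt{\log d_1}$ factor because $\E\|\bX\|_\infty^2\asymp\log d_1$, so only the Hoeffding/Bernstein union bound gives the rate $\sqrt{\log d_1/N}$.

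In (d)--(f) you handle the plug-in cross term by a different, still valid, route. You bound it uniformly using Cauchy--Schwarz together with $M^{-1}\sum_i(\bS_i^\top\bDelta)^2=O_p(\|\bDelta\|_2^2+N^{-1}\log d\,\|\bDelta\|_1^2)$, which gives an ordinary cone with inflated sparsity. That uniform empirical bound, used in Lemma~\ref{lem:boundfornonsquareimputation}, is what makes this work; H\"older alone is not uniform in $\bDelta$. The paper instead uses $2ab\le a^2/2+2b^2$ to absorb the cross term into the weighted quadratic Taylor error. This leaves a $\bDelta$-free remainder ($R_3,R_5,R_7$), a generalized cone $\|\bDelta\|_1\le 4\sqrt{s}\|\bDelta\|_2+2\lambda^{-1}R$, and a quadratic inequality solved by Lemma~\ref{lemma:sol}. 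Both give the stated rates: the paper's version avoids an upper restricted-eigenvalue bound, while yours keeps the template of (c).

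The step you single out as the main obstacle, however, fails as proposed. Bounding $|\bS^\top(\hat\bbeta-\bbeta^*)|\le\|\bS\|_\infty\|\hat\bbeta-\bbeta^*\|_1\lesssim s\log d/\sqrt N$ keeps $\|\exp\{\bS^\top(\hat\bbeta-\bbeta^*)\}\|_{\P,12}$ bounded only under ultra-sparsity, $s\log d=O(\sqrt N)$. That is not assumed, and it is not implied by $(r_\pi^2+r_q^2)\log d=O(1)$ either. That side condition is a hypothesis of Theorem~\ref{thm:r_nuisance'}, not of this theorem. What (c) actually needs is $r_\pi+r_q=o(1)$, so that $\bar s_q\log d/M\to0$ and $r_\pi\sqrt{\log d/N}=o(\lambda_q)$; the paper's proof uses exactly this.

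The missing idea is directional sub-Gaussianity conditional on the independent-fold estimate. Given $\hat\bbeta$, the variable $\bS^\top(\hat\bbeta-\bbeta^*)$ is sub-Gaussian with parameter $\sigma_{\bS}\|\hat\bbeta-\bbeta^*\|_2$. Hence $\|\exp\{\bS^\top(\hat\bbeta-\bbeta^*)\}\|_{\P,r}$ is bounded by a constant depending only on $r$ and $\sigma_{\bS}$ whenever $\|\hat\bbeta-\bbeta^*\|_2\le1$. Combined with the almost-sure bounds on $\pi_a^*,\pi_b^*,q^*$ in Assumption~\ref{cond:basic2}(a), $\ell_2$-consistency alone gives $\mathcal E_1$--$\mathcal E_3$, with no $\log d$ condition; see Lemma~\ref{lem:usebound}(b) and Lemma~\ref{lem:r_useevent}.

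There is a related problem in your RSC argument for (d)--(f). RSC for $\ell_5$, $\ell_6$, and the $\exp(\bS^\top\hbeq)$ factor in $\ell_4$ cannot come from the weights being bounded below on $\mathcal E_1\cap\mathcal E_2\cap\mathcal E_3$. Those events control only upper moments, and $\exp(\bX^\top\hbepib)$, $\exp(-\bX^\top\hbepia)$ and $\exp(\bS^\top\hbeq)$ have no pointwise lower bound. Instead, apply Lemma~\ref{lem:useinrsc} with the estimated index, as in Lemma~\ref{lem:r_rsc}; its moment condition holds once the plug-in is within $\ell_2$-distance $1$ of its target. The only genuine lower bound is $g^{-1}\ge1$, and it does simplify $\mathcal B_3$.
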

Theorem~\ref{thm:r_nuisance} reveals how estimation errors accumulate across the nuisance estimates. Since the nuisance estimates are constructed sequentially, causing the later estimators depend on all the previous ones. As a result, its consistency rate depends on the sparsity levels of all the nuisance parameters up to itself. 

However, the following theorem establishes that when some nuisance models are correctly specified, better convergence rates can be achieved under more stringent sparsity conditions.
\begin{theorem}\label{thm:r_nuisance'}
Let Assumptions \ref{cond:basic} and \ref{cond:basic2} hold. Assume \eqref{cond:r_rate1} and
\begin{equation}
    (r_{\pi}^2+r_{q}^2)\log d=O(1).\label{cond:r_rate2}
\end{equation}
As $N,d_1, d_2\to\infty$, the following holds:

\textbf{(a)} Suppose $\mu(\cdot)=\mu^*(\cdot)$. Then 
$\|\hbemur-\bemur\|_2=O_p\left(r_{\mu}\right)$.

\textbf{(b)} Suppose $\tau_n(\cdot)=\tau_n^*(\cdot)$ and $\mu(\cdot)=\mu^*(\cdot)$. Then $\|\hbettaur-\bettaur\|_2=O_p\left(r_{\mu}+r_{\tau_n}\right)$.

\textbf{(c)} Suppose $\tau(\cdot)=\tau^*(\cdot)$, $\tau_n(\cdot)=\tau_n^*(\cdot)$ and $\mu(\cdot)=\mu^*(\cdot)$. Then $\|\hbetaur-\betaur\|_2 =O_p\left(r_{\mu}+r_{\tau_n}+r_{\tau}\right)$.

\end{theorem}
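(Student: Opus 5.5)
The plan is to rerun, for each of $\hbemur$, $\hbettaur$, $\hbetaur$, the Lasso-type argument behind Theorem~\ref{thm:r_nuisance}(d)--(f). The new ingredient is that, under correct specification, the imputation error contributed by the earlier plug-in estimates enters only through \emph{products} of rates, not linearly. We use the basic inequality from the penalized objective together with restricted strong convexity (Lemma~\ref{lem:useinrsc}) on the cone $\Ctil(s,k_0)$. This gives $\|\bDelta\|_2\lesssim \sqrt{s}\,\lambda + (\text{imputation term})$. In Theorem~\ref{thm:r_nuisance}, the imputation term was of order $r_\pi+r_q$ because the gradient perturbation $R(\bDelta)$ was bounded linearly, as in Lemma~\ref{lem:boundfornonsquareimputation}. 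Under correct specification we will instead show that the \emph{population} part of this perturbation vanishes to first order.

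\textbf{Step (a).} Write the empirical gradient of $\ell_4$ at $\bemur$ with the estimated weights $\hbepia,\hbeq$. Decompose it into three pieces: the oracle-weight gradient, whose $\ell_\infty$ norm is $O_p(\sqrt{\log d/N})$ by sub-Gaussianity and Lemma~\ref{bound_expectation_sum}; a centered fluctuation of the weight difference times $\varepsilon\bS$; and the population term $-2\E[\{\hat w(\bS)-w^*(\bS)\}\varepsilon\bS]$. When $\mu=\mu^*$ we have $\E[\varepsilon\mid A=1,\bS]=0$. The population term is then exactly zero for \emph{any} weight that is a function of $\bS$, and cross-fitting makes $\hat w$ independent of the sample $\I_\mu$. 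The remaining fluctuation term has $\ell_\infty$ norm of order $(r_\pi+r_q)\sqrt{\log d/N}$. Here the hypothesis $(r_\pi^2+r_q^2)\log d=O(1)$ is used, via a Taylor bound with Lemmas~\ref{lem:usebound} and~\ref{lem:r_useevent} and Orlicz maxima (Lemma~\ref{lem:psi2norm}), to keep the exponential weights' differences uniformly controlled. Hence $\lambda_\mu\asymp\sqrt{\log d/N}$ dominates the gradient. RSC holds with the estimated weights on the event $\mathcal E_1\cap\mathcal E_3$. This yields $\|\hbemur-\bemur\|_2=O_p(r_\mu)$, with the cone condition following as in Lemma~\ref{lem:r_imputation}(b)--(c).

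\textbf{Step (b).} For $\ell_5$ the gradient at $\bettaur$ involves $(1-A)\hat w_b(\bX)(\bS^\top\hbemu-\bX^\top\bettaur)\bX$. Split it as $\bS^\top(\hbemu-\bemur)$ plus $\varrho$. When $\tau_n=\tau_n^*$ we have $\E[\varrho\mid A=0,\bX]=0$, so the weight error multiplies a conditionally mean-zero term, and the argument of Step (a) applies. The $\hbemu$ part is a sample average of $(1-A)\hat w_b\bX\bS^\top(\hbemu-\bemur)$. Its inner product with $\bDelta$ is bounded by $\|\bDelta\|_2\|\hbemu-\bemur\|_2$ up to $\sqrt{\log d/N}\|\bDelta\|_1$ corrections, exactly as in Lemma~\ref{lem:boundfornonsquareimputation}. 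By Step (a) this contributes $O_p(r_\mu)$, and the total is $O_p(r_\mu+r_{\tau_n})$.

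\textbf{Step (c).} Proceed the same way for $\ell_6$. The residual at $\betaur$ is $q^*\varepsilon+\bX^\top\bettaur-\bX^\top\betaur$. Under $\mu=\mu^*$, the term $\E[\cdot\mid A=1,\bS]$ of the $\varepsilon$ part vanishes. Under $\tau=\tau^*$ and $\tau_n=\tau_n^*$, Lemma~\ref{lem:r_identi}(f) gives $\tau^*=\tau_n^*$, so $\bX^\top(\bettaur-\betaur)=0$. Therefore weight errors in $\hbepia$ and $\hbeq$ again multiply conditionally mean-zero quantities. The plug-in errors $\hbemu-\bemur$ and $\hbettau-\bettaur$ enter linearly and contribute $r_\mu+r_{\tau_n}$ via Steps (a) and (b), which gives $O_p(r_\mu+r_{\tau_n}+r_\tau)$.

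The main obstacle is the uniform $\ell_\infty$ control of the fluctuation terms involving products of exponential weight differences with heavy-ish factors such as $\exp(\bS^\top\hbeq)\varepsilon\bS_j$. I would handle this by conditioning on the independent nuisance folds and applying Lemma~\ref{bound_expectation_sum} conditionally. The key bound is $\max_j\|(\hat w-w^*)\varepsilon S_j\|_{\P,2}\lesssim (r_\pi+r_q)\sqrt{\log d}$, and the condition $(r_\pi^2+r_q^2)\log d=O(1)$ makes this $O(1)$ times the noise level.
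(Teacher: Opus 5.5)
Your proposal is correct and is essentially the paper's argument. Lemma~\ref{lem:r_gradient'} splits the gradient at the true target, evaluated with estimated weights, into the oracle gradient plus an average that has mean zero under the respective correct-specification hypotheses. It bounds that average through Lemma~\ref{bound_expectation_sum} by $O_p\bigl((r_\pi+r_q)\sqrt{\log d}\cdot\sqrt{\log d/N}\bigr)$, where the extra $\sqrt{\log d}$ comes from $\normp{\|\bS\|_\infty}{6}^2\lesssim\log d$. So the order $(r_\pi+r_q)\sqrt{\log d/N}$ written in your Step~(a) should be read as in your final paragraph, which is exactly where $(r_\pi^2+r_q^2)\log d=O(1)$ enters. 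The paper then runs the usual RSC/Lasso argument, as you propose.

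The only minor difference is in (b)--(c). The paper absorbs the plug-in errors of $\hbemu$ and $\hbettau$ into the Taylor remainder via $2ab\le a^2/2+2b^2$, which leaves an additive $O_p(r_\mu^2+r_{\tau_n}^2)$ term. Your Cauchy--Schwarz bound in the style of Lemma~\ref{lem:boundfornonsquareimputation} works equally well, provided you use the $\sqrt{\log d_1/N}\,\|\bDelta\|_1$ correction appropriate for $\bDelta\in\R^{d_1}$.
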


Theorem~\ref{thm:r_nuisance'} (a) reveals that $\hbemu$ achieves asymptotic decoupling \citep{DynamicTreatmentEffects2023}: when $\mu^*$ is correctly specified, its convergence rate depends only on $s_\mu$, while the impacts of estimation errors in $\hbepia$, $\hbepib$, and $\hbeq$ become asymptotically negligible. This property extends further: provided all preceding regression models within $\mu^*$, $\tau_n^*$ and $\tau^*$ are correctly specified, the convergence rates of $\hbettaur$ and $\hbetaur$ become independent with $s_{\pi_a}$, $s_{\pi_b}$, and $s_{q}$ as well.

However, a fundamental limitation remains due to the nested regression structure: even under correct specification, the convergence rates of $\hbettaur$ and $\hbetaur$ remain influenced by the sparsity of earlier regression parameters. Specifically, $\hbettaur$ is affected by $s_{\mu}$, and $\hbetaur$ by both $s_{\mu}$ and $s_{\tau_n}$. This limitation originates from the nature of the nested regression, which is a weighted regression procedure of the imputed values $\bS^\top\hbemu$ on $\bX$. Consequently, the accumulation of errors across the regression sequence becomes unavoidable. We omit the dependence on $k$ for brevity.

Before proof, we introduce some notations.
For each $k\in\{0,1,\ldots,\K\}$, let $\mathbb{S}_{\pi}=(\bW_j)_{j\in\mathcal I_{\pi}}$, $\mathbb{S}_{q}=(\bW_j)_{j\in\mathcal I_{q}}$, $\mathbb{S}_{\mu}=(\bW_j)_{j\in\mathcal I_{\mu}}$, $\mathbb{S}_{\tau_n}=(\bW_j)_{j\in\mathcal I_{\tau_n}}$ and $\mathbb{S}_{\tau}=(\bW_j)_{j\in\mathcal I_{\tau}}$ be the subsets of $\mathbb{S}$ corresponding to the index sets $\mathcal I_{\pi}$, $\mathcal I_{q}$, $\mathcal I_{\mu}$, $\mathcal I_{\tau_n}$ and $\mathcal I_{\tau}$, respectively.

For any $\bepia$, $\bepib$, $\bettau$, $\betau\in\mathbb{R}^{d_1}$ and $\beq$, $\bemu\in\mathbb{R}^{d}$, we define 
\begin{align*}
    \bar{\ell}_1(\bepia)&:=M^{-1}\supiar \ell_1(\bW_i;\bepia),\\
    \bar{\ell}_2(\bbeta_b)&:=M^{-1}\supibr\ell_2(\bW_i;\bbeta_b),\\
    \bar{\ell}_3(\bepia,\bbeta_b,\beq)&:=M^{-1}\suqr \ell_3(\bW_i;\bepia,\bbeta_b,\beq)\\
    \bar{\ell}_4(\bepia,\beq,\bemu)&:=M^{-1}\sumur \ell_4(\bW_i;\bepia,\beq,\bemu),\\
    \bar\ell_5(\bbeta_b,\bemu,\bettau)&:=M^{-1}\suttaur \ell_5(\bW_i;\bbeta_b,\bemu,\bettau),\\
    \bar{\ell}_6(\bepia,\beq,\bemua,\bettau,\betau)&:=M^{-1}\sutaur \ell_6(\bW_i;\bepia,\beq,\bemu,\bettau,\betau),
\end{align*}
where the loss functions are defined in \eqref{eq:r_loss1}--\eqref{eq:r_loss6}. Then for any $\bDelta\in\R^{d_1}$, their corresponding Taylor-series errors are defined as
\begin{align*}
\delta\bar{\ell}_1(\bepia,\bDelta):=&\bar{\ell}_1(\bepia+\bDelta)-\bar{\ell}_1(\bepia)-\bnabla_{\bepia}\bar{\ell}_1(\bepia)^\top\bDelta, \\
\delta\bar{\ell}_2(\bepib,\bDelta):=&\bar{\ell}_2(\bepib+\bDelta)-\bar{\ell}_2(\bepib)-\bnabla_{\bepib}\bar{\ell}_2(\bepib)^\top\bDelta, \\
\delta \bar{\ell }_5(\bepib,\bemu,\bettau,\bDelta):=&\bar{\ell}_5(\bepib,\bemu,\bettau+\bDelta)-\bar{\ell}_5(\bepib,\bemu,\bettau)\nonumber\\
&-\bnabla_{\betau}\bar{\ell}_5(\bepib,\bemu,\bettau)^\top\bDelta, \\
\delta \bar{\ell }_6(\bepia,\beq,\bemu,\bettau,\betau,\bDelta):=&\bar{\ell}_6(\bepia,\beq,\bemu,\bettau, \betau+\bDelta)-\bar{\ell}_6(\bepia,\beq,\bemu,\bettau,\betau)\nonumber\\
&-\bnabla_{\betau}\bar{\ell}_6(\bepia,\beq,\bemu,\bettau, \betau)^\top\bDelta.
\end{align*}
Similarly, for any $\bDelta\in\mathbb{R}^{d}$, we define
\begin{align}
\delta\bar{\ell}_3(\bepia,\bepib,\beq,\bDelta)&:=\bar{\ell}_3(\bepia,\bepib,\beq+\bDelta)-\bar{\ell}_3(\bepia,\bepib,\beq)-\bnabla_{\beq}\bar{\ell}_3(\bepia,\bepib,\beq)^\top\bDelta, \label{def:Taylor_error_3}\\
\delta\bar{\ell}_4(\bepia,\beq,\bemu,\bDelta)&:=\bar{\ell}_4(\bepia,\beq,\bemu+\bDelta)-\bar{\ell}_4(\bepia,\beq,\bemu)-\bnabla_{\bemu}\bar{\ell}_4(\bepia,\beq,\bemu)^\top\bDelta.\nonumber
\end{align}

\begin{lemma}[Restrict strong convexity (RSC) conditions]\label{lem:r_rsc}
Let Assumptions \ref{cond:basic} and \ref{cond:basic2} hold. Define $f_{M, d_1}(\bDelta):=\kappa_1\|\bDelta\|_2^2-\kappa_2\|\bDelta\|_1^2\log d_1/M$, for any $\bDelta\in \R^{d_1}$,  and $f_{M,d}(\bDelta):=\kappa_1\|\bDelta\| _2^2-\kappa _2\|\bDelta\|_1^2\log d/ M$, for any $\bDelta\in \R^d$. Then, with some constants $\kappa_{1}$, $\kappa_{2}$, $c_{1}$, $c_{2}>0$ and note that $M\asymp N$, we have
\begin{align}
    \P_{\mathbb{S}_{\pi}}\left(\delta\bar{\ell}_1(\bepiar,\bDelta)\geq f_{M,d_1}(\bDelta),\:\forall\|\bDelta\|_2\leq1\right)&\geq1-c_1\exp(-c_2M).\label{eq:r_RSC1}\\
    \P_{\mathbb{S}_{\pi}}\left(\delta\bar{\ell}_2(\bepibr,\bDelta)\geq f_{M,d_1}(\bDelta),\:\forall\|\bDelta\|_2\leq1\right)&\geq1-c_1\exp(-c_2M),\label{eq:r_RSC2}
\end{align}
Further, let $\|\hbepibr-\bepibr\|_{2}\leq1$. Then
\begin{equation}
    \P_{\mathbb{S}_{\tau_n}}\left(\delta\bar{\ell}_5(\hbepibr,\hbemur,\bettaur,\bDelta)\geq f_{M,d_1}(\bDelta),\:\forall\|\bDelta\|_{2}\in\R^{d_1}\right)\geq1-c_{1}\exp(-c_{2}M),\label{eq:r_RSC5}.
\end{equation}
In \eqref{eq:r_RSC5}, we only consider the randomness in $\mathbb{S}_{\tau_n}$, and $\hbepibr$ is treated as fixed (or conditional on).
On the hand, if $\|\hbepiar-\bepiar\|_{2}\leq1$ rather than $\|\hbepibr-\bepibr\|_{2}\leq1$, we have 
\begin{align}
    \P_{\mathbb{S}_{q}}\left(\delta\bar{\ell}_3(\hbepiar, \hbepibr, \beqr,\bDelta)\geq f_{M,d}(\bDelta),\:\forall\|\bDelta\|_{2}\leq1\right)&\geq1-c_{1}\exp(-c_{2}M),\label{eq:r_RSC3}\\
    \P_{\mathbb{S}_{\tau}}\left(\delta\bar{\ell}_6(\hbepiar,\hbeqr,\hbemur,\hbettaur,\betaur,\bDelta)\geq f_{M,d_1}(\bDelta),\:\forall\bDelta\in\R^{d_1}\right)&\geq1-c_1\exp(-c_2M).\label{eq:r_RSC6}
\end{align}
In \eqref{eq:r_RSC2}, we only consider the randomness in $\mathbb{S}_{q}$, and $\hbepiar$ is treated as fixed. And in \eqref{eq:r_RSC4}, $\hbepiar,\hbeqr$, $\hbemur$ and $\hbettaur$ are all treated as fixed.

Let $\|\hbepiar-\bepiar\|_{2}\leq1$ and $\|\hbeqr-\beqr\|_{2}\leq1$, we have
\begin{equation}
    \P_{\mathbb{S}_{\mu}}\left(\delta\bar{\ell}_4(\hbepiar,\hbeqr,\bemur,\bDelta)\geq f_{M,d}(\bDelta),\:\forall\bDelta\in\R^d\right)\geq1-c_1\exp(-c_2M),\label{eq:r_RSC4}
\end{equation}
where $\hbepiar$ and $\hbeqr$ are treated as fixed.
\end{lemma}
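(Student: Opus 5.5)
The plan is to reduce every claim to Lemma~\ref{lem:useinrsc}, applied with a suitable sub-Gaussian vector $\bU$, positive weight function $h$ and base point $\bs{\eta}$. The key observation is that every Taylor error is an empirical weighted quadratic form. For $\ell_1$, Taylor's theorem with integral remainder gives
\[
\delta\bar\ell_1(\bepiar,\bDelta)=M^{-1}\textstyle\sum_{i}A_i\int_0^1(1-v)\exp\{-\bX_i^\top(\bepiar+v\bDelta)\}\,\d v\,(\bX_i^\top\bDelta)^2 .
\]
The analogous expression for $\ell_2$ has weight $(1-A_i)\exp\{\bX_i^\top(\bepibr+v\bDelta)\}$, and for $\ell_3$ it has weight $A_ig^{-1}(\bX_i^\top\hbepiar)\exp\{\bS_i^\top(\beqr+v\bDelta)\}$. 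Since $\ell_4,\ell_5,\ell_6$ are quadratic in their last argument, their Taylor errors are exact:
\[
\delta\bar\ell_4=M^{-1}\textstyle\sum_i A_ig^{-1}(\bX_i^\top\hbepiar)\exp(\bS_i^\top\hbeqr)(\bS_i^\top\bDelta)^2,
\]
and likewise $\delta\bar\ell_5$ has weight $(1-A_i)\exp(\bX_i^\top\hbepibr)$ and $\delta\bar\ell_6$ has weight $A_i\exp(-\bX_i^\top\hbepiar)$. This is why \eqref{eq:r_RSC4}--\eqref{eq:r_RSC6} hold for all $\bDelta$ rather than only $\|\bDelta\|_2\le1$: the quadratic form is homogeneous, so the bound on the unit ball rescales.

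To fit the single-index form $h(\bU^\top(\bs\eta+v\bDelta))$ of Lemma~\ref{lem:useinrsc}, I handle the indicators $A$ and the extra weights as follows.

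\emph{Indicators.} Restrict to the subsample with $A_i=1$ (or $A_i=0$), or equivalently take $\bU=A\bX$ or $(1-A)\bX$. Lemma~\ref{lem:momentmatrix} supplies both the sub-Gaussianity and the minimum-eigenvalue bound $c_{\min}$ for $A\bS$, $A\bX$ and $(1-A)\bX$.

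\emph{Extra weights.} The weight $g^{-1}(\bX^\top\hbepiar)$ is bounded below by $1$. The weights $\exp(\pm\bX^\top\hbepia)$ and $\exp(\bS^\top\hbeqr)$ do not depend on $\bDelta$. I will handle them by truncation: on the event $\{|\bU^\top\bs\eta|\le T\}$ the weight is bounded below by $e^{-T}$, so the quadratic form dominates $e^{-T}M^{-1}\sum_i\idf\{|\bU_i^\top\bs\eta|\le T\}(\bU_i^\top\bDelta)^2$. This is exactly the mechanism inside the proof of Lemma~\ref{lem:useinrsc}, and it only needs a moment bound $\E|\bU^\top\bs\eta|^c<C_1$ on the base point. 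For $\bepiar,\bepibr,\beqr$ this moment bound follows from Assumption~\ref{cond:basic2}(a), since the targets' link values are bounded a.s. For the estimated base points, on $\|\hbepiar-\bepiar\|_2\le1$ (resp.\ $\|\hbeqr-\beqr\|_2\le1$) the sub-Gaussian bound $\|\bX^\top(\hbepiar-\bepiar)\|_{\psi_2}\le\sigma_{\bS}$ keeps the moment bound uniform. The estimated parameters come from independent splits, so conditional on them I apply the lemma to $\mathbb S_q,\mathbb S_\mu,\mathbb S_{\tau_n},\mathbb S_\tau$ as i.i.d.\ samples.

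\emph{Composite weights.} Where two factors appear, such as $\exp(-\bX^\top\hbepia+\bS^\top\hbeq)$, I combine them into a single $\bs\eta$ acting on $\bU=A\bS$, with $\bs\eta=(-\hbepia^\top,\mathbf 0^\top)^\top+\hbeq$. Its moment bound follows from the two separate bounds by Minkowski.

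The main obstacle is \eqref{eq:r_RSC3}. There the curvature weight mixes the fixed $g^{-1}(\bX^\top\hbepiar)$ with the $\bDelta$-dependent $\exp\{\bS^\top(\beqr+v\bDelta)\}$. I will bound $g^{-1}\ge1$ and then apply Lemma~\ref{lem:useinrsc} with $h=\exp$, $\bU=A\bS$ and $\bs\eta=\beqr$, integrating over $v\in[0,1]$. Finally, a union bound over the finitely many statements, with $M\asymp N$, gives the stated probabilities $1-c_1\exp(-c_2M)$.
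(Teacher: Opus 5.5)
Your proposal is correct and follows essentially the paper's route. Like the paper, you write each Taylor error as an empirical weighted quadratic form, apply Lemma~\ref{lem:useinrsc} with $\bU\in\{A\bX,(1-A)\bX,A\bS\}$ and moment bounds on the base point, which come from Assumption~\ref{cond:basic2}(a) and from the unit-distance events on the preliminary estimates, and extend \eqref{eq:r_RSC4}--\eqref{eq:r_RSC6} to all $\bDelta$ by homogeneity. The one difference is that for $\ell_3$ (and $\ell_4$) you simply use $g^{-1}\geq 1$, whereas the paper splits $g^{-1}(u)=1+\exp(-u)$ and applies the lemma to both pieces; your shortcut is valid and shows that the condition $\|\hbepiar-\bepiar\|_2\leq 1$ is not actually needed for \eqref{eq:r_RSC3}.
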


\begin{proof}
By Taylor's theorem, with some $v_1,v_2\in(0,1)$, for any $\bDelta\in\R^{d_1}$, 
\begin{align*}
    \delta\bar{\ell}_1(\bepiar,\bDelta)&=\frac1{2M}\supiar A_i\exp(-\bX_i^\top\{\bepiar+v_1\bDelta\})(\bX_i^\top\bDelta)^2,\\
    \delta\bar{\ell}_2(\bepibr,\bDelta)&=\frac1{2M}\supibr (1-A_i)\exp(\bX_i^\top\{\bepibr+v_1\bDelta\})(\bX_i^\top\bDelta)^2,\\
    \delta\bar{\ell}_5(\hbepibr,\hbemur,\bettaur,\bDelta)&=\frac1{M}\suttaur (1-A_i)\exp(\bX_i^\top\hbepibr)(\bX_i^\top\bDelta)^2,\\
    \delta\bar{\ell}_6(\hbepiar,\hbeqr,\hbemur,\hbettaur,\betaur,\bDelta)&=\frac1{M}\sutaur A_i\exp(-\bX_i^\top\hbepiar)(\bX_i^\top\bDelta)^2.
\end{align*}
For any $\bDelta\in\R^{d}$,
\begin{align*}
    \delta\bar{\ell}_3(\hbepiar, \hbepibr, \beqr,\bDelta)&=\frac1{2M}\suqr \frac{A_i}{g(\bX_i^\top\hbepiar)}\exp(\bS_i^\top\{\beqr+v_2\bDelta\})(\bS_i^\top\bDelta)^2,\\
    \delta\bar{\ell}_4(\hbepiar,\hbeqr,\bemur,\bDelta)&=\frac1{M}\sumur \frac{A_i}{g(\bX_i^\top\hbepiar)}\exp(\bS_i^\top\hbeqr)(\bS_i^\top\bDelta)^2.
\end{align*}

\textbf{Part 1.} Let $\bU=A\bX$, $\mathbb S'=(A_{i}\bX_i)_{i\in\I_{\pi}}$, $U(u)=\exp(-u)$, $v=v_1$ and $\bs{\eta}=\bepiar$. Under Assumption~\ref{cond:basic2}, $|\bU^\top\bs{\eta}|\leq |\bX^\top\bepiar|<C$ with some constant $C>0$. By Lemmas \ref{lem:momentmatrix} and \ref{lem:useinrsc}, we have \eqref{eq:r_RSC1} holds.
\vspace{0.3em}

\textbf{Part 2.} Similarly, let $\bU=(1-A)\bX$, $\mathbb S'=(\{1-A_{i}\}\bX_i)_{i\in\I_{\pi}}$, $U(u)=\exp(u)$, $v=v_1$ and $\bs{\eta}=\bepibr$. Under Assumption~\ref{cond:basic2}, $|\bU^\top\bs{\eta}|\leq |\bX^\top\bepibr|<C$ with some constant $C>0$. By Lemmas \ref{lem:momentmatrix} and \ref{lem:useinrsc}, we have \eqref{eq:r_RSC2} holds.
\vspace{0.3em}

\textbf{Part 3.} 
Now we treat $\hbepiar$ as fixed (or conditional on) and suppose that $\|\bepiar-\hbepiar\|_2\leq 1$. Note that $g^{-1}(u)=1+\exp(-u)$ and $\bS=(\bX^\top,\bM^\top)^\top$, we have 
\begin{align*}
    \delta\bar{\ell}_3(\hbepiar, \hbepibr, \beqr,\bDelta)&=\frac1{2M}\suqr A_i\exp(\bS_i^\top\{\beqr+v_2\bDelta\})(\bS_i^\top\bDelta)^2\\
    &\quad+\frac1{2M}\suqr A_i\exp(\bS_i^\top\{\bar{\bbeta}_{\pi,a}+\beqr+v_2+\bDelta\})(\bS_i^\top\bDelta)^2. 
\end{align*}
where $\bar{\bbeta}_{\pi,a}=(\hbepiar^\top,\bz^\top)^\top\in\R^d$. Let $\bU=A\bS$, $\mathbb S'=(A_{i}\bS_i)_{i\in\I_{q}}$, $U(u)=\exp(u)$, $v=v_2$ and $\bs{\eta}=\beqr$. Under Assumption~\ref{cond:basic2}, $|\bU^\top\bs{\eta}|\leq |\bS^\top\beqr|<C$ with some constant $C>0$. By Lemmas \ref{lem:momentmatrix} and \ref{lem:useinrsc}, we have
\begin{align}
    \frac1{2M}\suqr A_i\exp(\bS_i^\top\{\beqr+v_2\bDelta\})(\bS_i^\top\bDelta)^2 \geq \kappa_1'\|\bDelta\|_2^2-\kappa_2'\|\bDelta\|_1^2\frac{\log d}{M}\text{, }\forall \|\bDelta\|_2\leq1,\label{bound:r_rsc3_1}
\end{align}
with probability $\P_{\mathbb{S}_q}$ at least $1-c_1'\exp(-c_2'M)$ and some constants $\kappa_1',\kappa_2',c_1',c_2'>0$.

Similarly, let $\bU=A\bS$, $\mathbb S'=(A_{i}\bS_i)_{i\in\I_{q}}$, $U(u)=\exp(u)$, $v=v_2$ and $\bs{\eta}=\beqr+\bar{\bbeta}_{\pi,a}$. On the event $\{\|\hbepiar-\bepiar\|_2\leq 1\}$, under Assumption~\ref{cond:basic2}, we have $\E[|\bU^\top\bs{\eta}|]\leq \E[|\bS^\top\beqr|]+\E[|\bX^\top\bepiar|]+\E[|\bX^\top(\hbepiar-\bepiar)|]<C$ with some constant $C>0$. By Lemmas \ref{lem:momentmatrix} and \ref{lem:useinrsc}, we have
\begin{align*}
    \frac1{2M}\suqr A_i\exp(\bS_i^\top\{\bar{\bbeta}_{\pi,a}+\beqr+v_2\bDelta\})(\bS_i^\top\bDelta)^2 \geq \kappa_1'\|\bDelta\|_2^2-\kappa_2'\|\bDelta\|_1^2\frac{\log d}{M}\text{, }\forall \|\bDelta\|_2\leq1,
\end{align*}
with probability $\P_{\mathbb{S}_q}$ at least $1-c_1'\exp(-c_2'M)$ and some constants $\kappa_1',\kappa_2',c_1',c_2'>0$. Combined with \eqref{bound:r_rsc3_1}, we obtain \eqref{eq:r_RSC3}.
\vspace{0.3em}

\textbf{Part 4.} 
Treat both $\hbepiar$ and $\hbeqr$ as fixed (or conditional on) and suppose that $\|\bepiar-\hbepiar\|_2\leq 1$ and $\|\beqr-\hbeqr\|_2\leq 1$. Note that 
\begin{align*}
    \delta\bar{\ell}_4(\hbepiar,\hbeqr,\bemur,\bDelta)&=\frac1{M}\sumur A_i\exp(\bS_i^\top\hbeqr)(\bS_i^\top\bDelta)^2+\frac1{M}\sumur A_i\exp(\bS_i^\top\{\hbeqr-\bar{\bbeta}_{\pi,a}\})(\bS_i^\top\bDelta)^2.
\end{align*}
Let $\bU=A\bS$, $\mathbb S'=(A_{i}\bS_i)_{i\in\I_{\mu}}$, $U(u)=\exp(u)$, $v=0$ and $\bs{\eta}=\hbeqr$. On the event $\{\|\hbemur-\bemur\|_2\leq 1\}$, it holds that $\E[|\bU^\top\bs{\eta}|]\leq \E[|\bS^\top\beqr|]+\E[|\bS^\top(\hbeqr-\beqr)|]<C$ with some constant $C>0$. By Lemmas \ref{lem:momentmatrix} and \ref{lem:useinrsc}, we have
\begin{align}
    \frac1{M}\sumur A_i\exp(\bS_i^\top\hbeqr)(\bS_i^\top\bDelta)^2 \geq \kappa_1'\|\bDelta\|_2^2-\kappa_2'\|\bDelta\|_1^2\frac{\log d}{M}\text{, }\forall \|\bDelta\|_2\leq1,\label{bound:r_rsc4_1}
\end{align}
with probability $\P_{\mathbb{S}_\mu}$ at least $1-c_1'\exp(-c_2'M)$ and some constants $\kappa_1',\kappa_2',c_1',c_2'>0$.

Similarly, let $\bU=A\bS$, $\mathbb S'=(A_{i}\bS_i)_{i\in\I_{\mu}}$, $U(u)=\exp(u)$, $v=0$ and $\bs{\eta}=\hbeqr-\bar{\bbeta}_{\pi,a}$. On the event $\{\|\hbepiar-\bepiar\|_2\leq 1\}\cap\{\|\hbemur-\bemur\|_2\leq 1\}$, under Assumption~\ref{cond:basic2}, we have $\E[|\bU^\top\bs{\eta}|]\leq \E[|\bS^\top\beqr|]+\E[|\bX^\top\bepiar|]+\E[|\bX^\top(\hbeqr-\beqr)|]+\E[|\bX^\top(\hbepiar-\bepiar)|]<C$ with some constant $C>0$. By Lemmas \ref{lem:momentmatrix} and \ref{lem:useinrsc}, we have
\begin{align}
    \frac1{M}\sumur A_i\exp(\bS_i^\top\{\hbeqr-\bar{\bbeta}_{\pi,a}\})(\bS_i^\top\bDelta)^2 \geq \kappa_1'\|\bDelta\|_2^2-\kappa_2'\|\bDelta\|_1^2\frac{\log d}{M}\text{, }\forall \|\bDelta\|_2\leq1,\label{bound:r_rsc4_2}
\end{align}
with probability $\P_{\mathbb{S}_q}$ at least $1-c_1'\exp(-c_2'M)$ and some constants $\kappa_1',\kappa_2',c_1',c_2'>0$. 

Note that $\delta\bar{\ell}_4(\hbepiar,\hbeqr,\bemur,\bDelta)$ is based on a weighted-squared loss. For any $\bDelta\in\R^d$ with $\|\bDelta\|_2\geq 1$, \eqref{bound:r_rsc4_1} and \eqref{bound:r_rsc4_2} hold for $\bDelta/\|\bDelta\|_2$ and hence also hold after multiplying the LHS and RHS by a factor $\|\bDelta\|_2$. Therefore, we obtain \eqref{eq:r_RSC4}.  
\vspace{0.3em}

\textbf{Part 5.} 
Treat $\hbepibr$ as fixed (or conditional on) and suppose that $\|\bepibr-\hbepibr\|_2\leq 1$. 
Let $\bU=(1-A)\bX$, $\mathbb S'=(\{1-A_{i}\}\bX_i)_{i\in\I_{\tau_n}}$, $U(u)=\exp(u)$, $v=0$ and $\bs{\eta}=\hbepibr$. On the event $\{\|\hbepibr-\bepibr\|_2\leq 1\}$, under Assumption~\ref{cond:basic2}, we have $\E[|\bU^\top\bs{\eta}|]\leq \E[|\bS^\top\hbepibr|]+\E[|\bX^\top(\hbepibr-\bepibr)|]<C$ with some constant $C>0$. By Lemmas \ref{lem:momentmatrix} and \ref{lem:useinrsc}, we obtain \eqref{eq:r_RSC5}. Here,
analogously as in part 4, the lower bound can be extended to any $\bDelta\in\R^{d_1}$,
since $\delta\bar{\ell}_5(\hbepibr,\hbemur,\bettaur,\bDelta)$ is also constructed based on a weighted squared loss.
\vspace{0.3em}

\textbf{Part 6.} 
Treat $\hbepiar$ as fixed (or conditional on) and suppose that $\|\bepiar-\hbepiar\|_2\leq 1$. 
Let $\bU=A\bX$, $\mathbb S'=(A_i\bX_i)_{i\in\I_{\tau_n}}$, $U(u)=\exp(u)$, $v=0$ and $\bs{\eta}=\hbepiar$. On the event $\{\|\hbepiar-\bepiar\|_2\leq 1\}$, under Assumption~\ref{cond:basic2}, we have $\E[|\bU^\top\bs{\eta}|]\leq \E[|\bS^\top\hbepiar|]+\E[|\bX^\top(\hbepiar-\bepiar)|]<C$ with some constant $C>0$.Note that $\delta\bar{\ell}_6(\hbepiar,\hbeqr,\hbemur,\hbettaur,\betaur,\bDelta)$ is also constructed based on a weighted squared loss. By Lemmas \ref{lem:momentmatrix} and \ref{lem:useinrsc}, similarly as in part 4, we obtain \eqref{eq:r_RSC6}.
\end{proof}

\begin{lemma}[Gradients estimation for MQR method]\label{lem:r_gradient}
Let Assumption~\ref{cond:basic2}~(b) hold. Let $\sigma_{\pi_a}$, $\sigma_{\pi_b}$, $\sigma_{q}$, $\sigma_{\mu}$, $\sigma_{\tau_n}$, $\sigma_{\tau}>0$ be some constants and note that $M\asymp N$. Then, for any $t>0$,
\begin{align}
&\P_{\mathbb{S}_{\pi}}\left(\left\|\bnabla_{\bepiar}\bar\ell_1(\bepiar)\right\|_\infty\leq\sigma_{\pi_a}\sqrt\frac{t+\log d_1}{M}\right)\geq1-2\exp(-t),\label{eq:r_gradient1}\\
&\P_{\mathbb{S}_{pi}}\left(\left\|\bnabla_{\bepibr}\bar\ell_2(\bepibr)\right\|_\infty\leq\sigma_{\pi_b}\sqrt\frac{t+\log d_1}{M}\right)\geq1-2\exp(-t),\label{eq:r_gradient2}  
\end{align}
Further, let Assumptions~\ref{cond:basic} and \ref{cond:basic2}~(a) hold. Then for any $t>0$,
\begin{align}
\P_{\mathbb{S}_{q}}\left(\left\|\bnabla_{\beqr}\bar\ell_3(\bepiar,\bepibr,\beqr)\right\|_\infty\leq\sigma_{q}\sqrt\frac{t+\log d}{M}\right)&\geq1-2\exp(-t),\label{eq:r_gradient3}\\
\P_{\mathbb{S}_{\mu}}\left(\left\|\bnabla_{\bemu}\bar\ell_4(\bepiar,\beqr,\bemur)\right\|_\infty\leq\sigma_{\mu}\left(2\sqrt\frac{t+\log d}{M}+\frac{t+\log d}{M}\right)\right)&\geq1-2\exp(-t),\label{eq:r_gradient4}\\
\P_{\mathbb{S}_{\tau_n}}\left(\left\|\bnabla_{\bettau}\bar\ell_5(\bepibr,\bemur,\bettaur)\right\|_\infty\leq\sigma_{\tau_n}\left(2\sqrt\frac{t+\log d_1}{M}+\frac{t+\log d_1}{M}\right)\right)&\geq1-2\exp(-t),\label{eq:r_gradient5}\\
\P_{\mathbb{S}_{\tau}}\left(\left\|\bnabla_{\betau}\bar\ell_6(\bepiar,\beqr,\bemur,\bettaur,\betaur)\right\|_\infty\leq\sigma_{\tau}\left(2\sqrt\frac{t+\log d_1}{M}+\frac{t+\log d_1}{M}\right)\right)&\geq1-2\exp(-t).\label{eq:r_gradient6}
\end{align}
\end{lemma}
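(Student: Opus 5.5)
Each gradient in Lemma~\ref{lem:r_gradient} is an empirical mean of i.i.d.\ mean-zero vectors, since the population gradient vanishes at $\bnu^*$ by the first-order conditions in Section~\ref{apdx:unique}. The plan is therefore a coordinatewise concentration bound followed by a union bound over $d_1$ or $d$ coordinates. Write $\bnabla\bar\ell_k(\cdot)=M^{-1}\sum_{i}\bs{Z}_i$ with $\E[\bs{Z}_i]=\bz$. For each coordinate $j$, I will show that $Z_{ij}$ is sub-Gaussian in the cases $k=1,2,3$ and sub-exponential in the cases $k=4,5,6$, with a constant Orlicz norm. Bernstein's inequality then gives $\P(|M^{-1}\sum_i Z_{ij}|>\sigma(\sqrt{u/M}+u/M))\le 2e^{-u}$ in the sub-exponential case (Hoeffding-type $\sqrt{u/M}$ in the sub-Gaussian case). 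Choosing $u=t+\log d_0$ and taking a union bound over the $d_0\in\{d_1,d\}$ coordinates yields the stated probability $1-2\exp(-t)$, after adjusting constants.

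For $\ell_1$, we have $\bs{Z}=\{1-Ag^{-1}(\bX^\top\bepiar)\}\bX$, and Assumption~\ref{cond:basic2}(a) gives $|1-Ag^{-1}(\bX^\top\bepiar)|\le c_0^{-1}$. Lemma~\ref{lem:psi2norm} then bounds $\|Z_j\|_{\psi_2}\le c_0^{-1}\sigma_{\bS}$. The case of $\ell_2$ is identical, using $\pi_b^*$. For $\ell_3$, the weight $Ag^{-1}(\bX^\top\bepiar)\exp(\bS^\top\beqr)-(1-A)\{1-g(\bX^\top\bepibr)\}^{-1}$ is bounded by $c_0^{-3}+c_0^{-1}$ via the bounds on $\pi_a^*,\pi_b^*,q^*$, so the same argument applies to the $\bS$ coordinates. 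Here the mean-zero property is exactly the defining equation of $\beqr$.

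For $\ell_4$, $\bs{Z}=-2Ag^{-1}(\bX^\top\bepiar)q^*(\bS)(Y-\bS^\top\bemur)\bS$. On $A=1$, consistency gives $Y-\bS^\top\bemur=\varepsilon$. The weight is bounded, so each coordinate is a bounded multiple of a product of two sub-Gaussians, $\varepsilon\cdot S_j$, which has $\|\cdot\|_{\psi_1}\le\|\varepsilon\|_{\psi_2}\|S_j\|_{\psi_2}\le\sigma_0\sigma_{\bS}$. For $\ell_5$, we use $(1-A)\bS=(1-A)\bzS$ to rewrite the residual as $\varrho$, and the weight $\exp(\bX^\top\bepibr)=\pi_b^*/(1-\pi_b^*)$ is bounded. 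For $\ell_6$, the residual on $A=1$ is $q^*\varepsilon+\tau_n^*-\tau^*=q^*\varepsilon+\zeta-\varrho$, since $\bX^\top(\bettaur-\betaur)=\zeta-\varrho$. This is sub-Gaussian with constant parameter, and the weight $\exp(-\bX^\top\bepiar)=(1-\pi_a^*)/\pi_a^*$ is bounded. In each case the vector is centred by the first-order condition of the corresponding population loss. Bernstein's inequality for centred sub-exponential variables (centring inflates $\psi_1$ norms by at most a constant) produces the $2\sqrt{\cdot}+(\cdot)$ form.

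The main obstacle is being careful that the residuals are expressed via the assumed sub-Gaussian quantities $\varepsilon,\zeta,\varrho$ rather than through $Y$ and $\bX^\top\betaur$ separately, and that the weights are genuinely bounded. The latter uses the identities $g^{-1}-1=\exp(-u)$ and $1/(1-g)-1=\exp(u)$ together with Assumption~\ref{cond:basic2}(a). Everything else is a routine application of the standard sub-Gaussian and sub-exponential tail bounds plus the union bound.
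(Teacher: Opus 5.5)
Your proposal is correct and follows the paper's proof essentially step for step. The common steps are: mean-zero summands from the population first-order conditions; bounded weights via Assumption~\ref{cond:basic2}(a); a Hoeffding bound for the sub-Gaussian coordinates of $\ell_1$--$\ell_3$; Bernstein's inequality for $\ell_4$--$\ell_6$ via $\|\varepsilon S_j\|_{\psi_1}\le\|\varepsilon\|_{\psi_2}\|S_j\|_{\psi_2}$ and the analogous products with $\varrho$ and $\exp(\bS^\top\beqr)\varepsilon+\zeta-\varrho$; and a union bound with $u=t+\log d_0$. Like the paper, you need Assumption~\ref{cond:basic2}(a) already for $\ell_1,\ell_2$, even though the statement lists only (b). Your weight bound $|1-Ag^{-1}(\bX^\top\bepiar)|\le c_0^{-1}$ is the correct form of the constant that the paper misstates.
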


\begin{proof}
Define $\varepsilon_i:=Y_i(1,\bM_i)-\bS_i^\top\bemur$, $\zeta_i=\bS_{0,i}^\top\bemur-\bX_i^\top\betaur$ and $\varrho_i=\bS_{0,i}^\top\bemur-\bX_i^\top\bettaur$. Note that $\zeta_i-\varrho_i=\bX_i^\top\bettaur-\bX_i^\top\betaur$.
We have
\begin{align*}
    \bnabla_{\bepia}\bar{\ell}_1(\bepiar,\bDelta)&=\frac1{M}\supiar [1-A_ig^{-1}(\bX_i^\top\bepiar)]\bX_i,\\
    \bnabla_{\bepib}\bar{\ell}_2(\bepibr,\bDelta)&=\frac1{M}\supibr [1-(1-A_i)\{1-g(\bX_i^\top\bepibr)\}^{-1}]\bX_i,\\
    \bnabla_{\beq}\bar{\ell}_3(\bepiar, \bepibr, \beqr,\bDelta)&=\frac1{M}\suqr \left\{\frac{A_i\exp(\bS_i^\top\beqr)}{g(\bX_i^\top\bepiar)}-\frac{1-A_i}{1-g(\bX_i^\top\bepibr)}\right\}\bS_i,\\
    \bnabla_{\bemu}\bar{\ell}_4(\bepiar,\beqr,\bemur,\bDelta)&=-\frac2M\sumur \frac{A_i\exp(\bS_i^\top\beqr)}{g(\bX_i^\top\bepiar)}\varepsilon_{i}\bS_i,\\
    \bnabla_{\bettau}\bar{\ell}_5(\bepibr,\bemur,\bettaur,\bDelta)&=-\frac2M\suttaur (1-A_i)\exp(\bX_i^\top\bepibr)\varrho_{i}\bX_i,\\  
    \bnabla_{\betau}\bar{\ell}_6(\bepiar,\beqr,\bemur,\bettaur,\betaur,\bDelta)&=-\frac2M\sutaur A_i\exp(-\bX_i^\top\bepiar)(\exp(\bS^\top\beqr)\varepsilon_{i}+\zeta_{i}-\varrho_{i})\bX_i.
\end{align*}

Recall the moment equations \eqref{eq:r_eq_pia}--\eqref{eq:r_eq_tau}, we next prove the above inequalities using concentration bounds.

For \eqref{eq:r_gradient1}, under Assumption~\ref{cond:basic2}, we have $|1-Ag^{-1}(\bX^\top\bepiar)|\leq (2-c_0^{-1})$, such that for each $1\leq j\leq d_1$, 
$$
    \normg{[1-Ag^{-1}(\bX^\top\bepiar)]\bX\e_j}\leq (2-c_0^{-1})\normg{\bX\e_j}\leq (2-c_0^{-1})\sigma_{\bS}.
$$
Let $\sigma_{\pi_a}:=\sqrt{8}(2-c_0^{-1})\sigma_{\bS}$. By the Hoeffding bound provided in Lemma~D.2 of \cite{HighDimensionalMEstimation2019}, we have for each $1\leq j\leq d_1$ and any $t>0$, 
$$
    \P_{\mathbb{S}_{\pi}}\left(|\bnabla_{\bepia}\bar\ell_1(\bepiar)^\top \e_j|>\sigma_{\pi_a}\sqrt{\frac{t+\log d_1}{M}}\right)\leq 2\exp(-t-\log d_1).
$$
It follows that 
\begin{align*}
    &\P_{\mathbb{S}_{\pi}}\left(\|\bnabla_{\bepia}\bar\ell_1(\bepiar)\|_\infty>\sigma_{\pi_a}\sqrt{\frac{t+\log d_1}{M}}\right)\\
    \leq& \suj{d_1}\P_{\mathbb{S}_{\pi}}\left(|\bnabla_{\bepia}\bar\ell_1(\bepiar)^\top \e_j|>\sigma_{\pi_a}\sqrt{\frac{t+\log d_1}{M}}\right)\\
    \leq& 2d_1\exp(-t-\log d_1)=2\exp(-t).
\end{align*}
Similar justification holds for \eqref{eq:r_gradient2}. Specifically, $|1-(1-A)\{1-g(\bX^\top\bepibr)\}^{-1}|\leq (1+c_0^{-1})$ under Assumption~\ref{cond:basic2}, such that for each $1\leq j\leq d_1$, 
$$
    \normg{[1-(1-A)\{1-g(\bX^\top\bepibr)\}^{-1}]\bX\e_j}\leq (1+c_0^{-1})\normg{\bX\e_j}\leq (1+c_0^{-1})\sigma_{\bS}.
$$
Let $\sigma_{\pi_b}:=\sqrt{8}(1+c_0^{-1})\sigma_{\bS}$. By the Hoeffding bound provided in Lemma~D.2 of \cite{HighDimensionalMEstimation2019}, for each $1\leq j\leq d_1$ and any $t>0$, 
$$
    \P_{\mathbb{S}_{\pi}}\left(|\bnabla_{\bepib}\bar\ell_2(\bepibr)^\top \e_j|>\sigma_{\pi_b}\sqrt{\frac{t+\log d_1}{M}}\right)\leq 2\exp(-t-\log d_1).
$$
It follows that 
\begin{align*}
    &\P_{\mathbb{S}_{\pi}}\left(\|\bnabla_{\bepib}\bar\ell_2(\bepibr)\|_\infty>\sigma_{\pi_b}\sqrt{\frac{t+\log d_1}{M}}\right)\\
    \leq& \suj{d_1}\P_{\mathbb{S}_{\pi}}\left(|\bnabla_{\bepib}\bar\ell_2(\bepibr)^\top \e_j|>\sigma_{\pi_b}\sqrt{\frac{t+\log d_1}{M}}\right)\\
    \leq& 2d_1\exp(-t-\log d_1)=2\exp(-t).
\end{align*}
For \eqref{eq:r_gradient3}, $|Ag^{-1}(\bX^\top\bepiar)\exp(\bS^\top\beqr)-(1-A)\{1-g(\bX^\top\bepibr)\}^{-1}|\leq c_0^{-3}+c_0^{-1}$ under Assumption~\ref{cond:basic2}, hence for each $1\leq j\leq d$, 
$$
    \normg{\left\{\frac{A\exp(\bS^\top\beqr)}{g(\bX^\top\bepiar)}-\frac{1-A}{1-g(\bX^\top\bepibr)}\right\}\bS\e_j}\leq (c_0^{-3}+c_0^{-1})\normg{\bX\e_j}\leq (c_0^{-3}+c_0^{-1})\sigma_{\bS}.
$$
Let $\sigma_{q}:=\sqrt{8}(c_0^{-3}+c_0^{-1})\sigma_{\bS}$. By the Hoeffding bound provided in Lemma~D.2 of \cite{HighDimensionalMEstimation2019}, we have for each $1\leq j\leq d_1$ and any $t>0$, 
$$
    \P_{\mathbb{S}_{q}}\left(|\bnabla_{\beq}\bar\ell_3(\beqr)^\top \e_j|>\sigma_{q}\sqrt{\frac{t+\log d_1}{M}}\right)\leq 2\exp(-t-\log d_1).
$$
It follows that 
\begin{align*}
    \P_{\mathbb{S}_{q}}\left(\|\bnabla_{\beq}\bar\ell_3(\beqr)\|_\infty>\sigma_{q}\sqrt{\frac{t+\log d_1}{M}}\right)&\leq \suj{d}\P_{\mathbb{S}_{q}}\left(|\bnabla_{\beq}\bar\ell_3(\beqr)^\top \e_j|>\sigma_{q}\sqrt{\frac{t+\log d_1}{M}}\right)\\
    &\leq 2d_1\exp(-t-\log d_1)=2\exp(-t).
\end{align*}

For \eqref{eq:r_gradient4}, under Assumption~\ref{cond:basic2}, we have $|Ag^{-1}(\bX^\top\bepiar)\exp(\bS^\top\beqr)|\leq c_0^{-3}$, such that for each $1\leq j\leq d$, 
$$
    \norme{\frac{A\exp(\bS^\top\beqr)}{g(\bX^\top\bepiar)}\varepsilon\bS\e_j}\leq c_0^{-3}\normg{\varepsilon}\normg{\bS\e_j}\leq c_0^{-3}\sigma_0\sigma_{\bS}.
$$
Let $\sigma_{\mu}:=2c_0^{-3}\sigma_0\sigma_{\bS}$, by the Bernstein's inequality provided in Lemma~D.4 of \cite{HighDimensionalMEstimation2019}, we have for each $1\leq j\leq d$ and any $t>0$, 
$$
    \P_{\mathbb{S}_{\mu}}\left(|\bnabla_{\bemu}\bar\ell_4(\bemur)^\top \e_j|>\sigma_{\mu}\left(\sqrt{\frac{t+\log d}{M}}+\frac{t+\log d}{M}\right)\right)\leq 2\exp(-t-\log d_1).
$$
It follows that 
\begin{align*}
    &\P_{\mathbb{S}_{\mu}}\left(\|\bnabla_{\bemu}\bar\ell_4(\bemur)\|_\infty>\sigma_{\mu}\left(\sqrt{\frac{t+\log d}{M}}+\frac{t+\log d}{M}\right)\right)\\
    \leq& \suj{d}\P_{\mathbb{S}_{\mu}}\left(|\bnabla_{\bemu}\bar\ell_4(\bemur)^\top \e_j|>\sigma_{\mu}\left(\sqrt{\frac{t+\log d}{M}}+\frac{t+\log d}{M}\right)\right)\\
    \leq& 2d\exp(-t-\log d)=2\exp(-t).
\end{align*}

For \eqref{eq:r_gradient5}, we have $|(1-A)\exp(\bX^\top\bepibr)|\leq c_0^{-1}$ under Assumption~\ref{cond:basic2}. Hence, for each $1\leq j\leq d_1$, 
$$
    \normg{(1-A)\exp(\bX^\top\bepibr)\varrho\bX\e_j}\leq c_0^{-1}\normg{\varrho}\normg{\bX\e_j}\leq c_0^{-1}\sigma_0\sigma_{\bS}.
$$
Let $\sigma_{\tau_n}:=2c_0^{-1}\sigma_0\sigma_{\bS}$. By the Bernstein's inequality provided in Lemma~D.4 of \cite{HighDimensionalMEstimation2019}, we have for each $1\leq j\leq d_1$ and any $t>0$, 
$$
    \P_{\mathbb{S}_{\tau_n}}\left(|\bnabla_{\bettau}\bar\ell_5(\bettaur)^\top \e_j|>\sigma_{\tau_n}\left(\sqrt{\frac{t+\log d_1}{M}}+\frac{t+\log d_1}{M}\right)\right)\leq 2\exp(-t-\log d_1).
$$
It follows that 
\begin{align*}
    &\P_{\mathbb{S}_{\tau_n}}\left(\|\bnabla_{\bettau}\bar\ell_5(\bettaur)\|_\infty>\sigma_{\tau_n}\left(\sqrt{\frac{t+\log d_1}{M}}+\frac{t+\log d_1}{M}\right)\right)\\
    \leq& \suj{d_1}\P_{\mathbb{S}_{\tau_n}}\left(|\bnabla_{\bettau}\bar\ell_5(\bettaur)^\top \e_j|>\sigma_{\tau_n}\left(\sqrt{\frac{t+\log d_1}{M}}+\frac{t+\log d_1}{M}\right)\right)\\
    \leq& 2d_1\exp(-t-\log d_1)=2\exp(-t).
\end{align*}

For \eqref{eq:r_gradient6}, both $|A\exp(-\bX^\top\bepiar)|\leq c_0^{-1}-1$ and $|A\exp(-\bX^\top\bepiar)\exp(\bS^\top\beqr)|\leq (c_0^{-3}-c_0^{-2})$ hold under Assumption~\ref{cond:basic2}, hence for each $1\leq j\leq d_1$, 
\begin{align*}
    &\norme{[A\exp(-\bX^\top\bepiar)\{(-\zeta+\varrho)-\exp(\bS^\top\beqr)\varepsilon\}]\bX\e_j}\\
    \leq& (c_0^{-1}-1)(\normg{\zeta}+\normg{\varrho}+c_0^{-2}\normg{\varepsilon})\sigma_{\bS}.
\end{align*}
Let $\sigma_{\tau}:=(c_0^{-1}-1)(\normg{\zeta}+\normg{\varrho}+c_0^{-2}\normg{\varepsilon})\sigma_{\bS}$. By the Bernstein's inequality provided in Lemma~D.4 of \cite{HighDimensionalMEstimation2019}, we have for each $1\leq j\leq d_1$ and any $t>0$, 
$$
    \P_{\mathbb{S}_{\tau}}\left(|\bnabla_{\betau}\bar\ell_6(\betaur)^\top \e_j|>\sigma_{\tau}\left(\sqrt{\frac{t+\log d_1}{M}}+\frac{t+\log d_1}{M}\right)\right)\leq 2\exp(-t-\log d_1).
$$
It follows that 
\begin{align*}
    &\P_{\mathbb{S}_{\tau}}\left(\|\bnabla_{\betau}\bar\ell_6(\betaur)\|_\infty>\sigma_{\tau}\left(\sqrt{\frac{t+\log d_1}{M}}+\frac{t+\log d_1}{M}\right)\right)\\
    \leq& \suj{d_1}\P_{\mathbb{S}_{\tau}}\left(|\bnabla_{\betau}\bar\ell_6(\betaur)^\top \e_j|>\sigma_{\tau}\left(\sqrt{\frac{t+\log d_1}{M}}+\frac{t+\log d_1}{M}\right)\right)\\
    \leq& 2d_1\exp(-t-\log d_1)=2\exp(-t).
\end{align*}
\end{proof}

\begin{proof}[Proof of Theorem~\ref{thm:r_nuisance}]

\phantom{1}

\emph{(a)} By Lemmas \ref{lem:r_gradient} and \ref{lem:r_rsc}, as well as the Corollary 9.20 of \cite{HighDimensionalStatisticsNonAsymptotic2019}, we have
\begin{align*}
    \|\hbepiar-\bepiar\|_2=O_p\left(\sqrt{\frac{s_{\pi_a}\log d_1}{M}}\right)\text{, } \|\hbepiar-\bepiar\|_1=O_p\left(s_{\pi_a}\sqrt{\frac{\log d_1}{M}}\right),
\end{align*}
where $\sqrt{s_{\pi_a}\log d_1/M}=O_p(r_{\pi})$ since $M\asymp N$.

\vspace{0.3em}

\emph{(b)} As an analog of (a), we have
\begin{align*}
    \|\hbepibr-\bepibr\|_2=O_p\left(\sqrt{\frac{s_{\pi_b}\log d_1}{M}}\right)\text{, } \|\hbepibr-\bepibr\|_1=O_p\left(s_{\pi_b}\sqrt{\frac{\log d_1}{M}}\right),
\end{align*}
where $\sqrt{s_{\pi_b}\log d_1/M}=O_p(r_{\pi})$ since $M\asymp N$.

\vspace{0.3em}

\emph{(c)} Recall that in Lemma~\ref{lem:r_imputation}, $s_{\pi}=s_{\pi_a}+s_{\pi_b}$ and
\begin{align*}
    \mathcal B_1:=&\{\|\bnabla_{\beq}\bar\ell_3(\bepiar,\bepibr,\beqr)\|_\infty\leq\lambda_{q}/2\},\\
    \mathcal B_2:=&\left\{|R_1(\bDelta)|\leq cr_{\pi}\left(\|\bDelta\|_1\sqrt{\frac{\log d}{N}}+\|\bDelta\|_2\right),\;\; \forall\bDelta\in\R^d\right\},\\
    \mathcal B_3:=&\left\{\delta\bar\ell_3(\hbepiar,\hbepibr, \beqr,\bDelta)\geq\kappa_1\|\bDelta\|_2^2-\kappa_2\frac{\log d}{M}\|\bDelta\|_1^2,\;\;\forall\bDelta\in\R^d:\|\bDelta\|_2\leq1\right\},
\end{align*}
where
\begin{align}
    R_1(\bDelta):=\{\bnabla_{\beq}\bar{\ell}_3(\hbepiar,\hbepibr,\beqr)-\bnabla_{\beq}\bar{\ell}_3(\bepiar,\bepibr,\beqr)\}^\top\bDelta=Q_1-Q_2\label{decom:R1}
\end{align}
with
\begin{align*}
    Q_1&:= \frac1{M}\suqr A_i[g^{-1}(\bX_i^\top\hbepiar)-g^{-1}(\bX_i^\top\bepiar)]\exp(\bS_i^\top\beqr)\bS_i^\top\bDelta,\\
    Q_2&:= \frac1{M}\suqr (1-A_i)[\{1-g(\bX_i^\top\hbepibr)\}^{-1}-\{1-g(\bX_i^\top\bepibr)\}^{-1}]\bS_i^\top\bDelta.
\end{align*}
Let $\lambda_{q}=2\sigma_{q}\sqrt{(t+\log d)/M}$ with some $t>0$. Then by Lemma~\ref{lem:r_gradient}, we have $\P_{\mathbb{S}_{q}}(\mathcal B_1)\geq1-2\exp(-t)$.
By Lemma~\ref{lem:boundfornonsquareimputation}, it holds that 
\begin{align*}
    |Q_1|=O_p\left(r_{\pi}\left(\|\bDelta\|_1\sqrt{\frac{\log d}{N}}+\|\bDelta\|_2\right)\right) \text{ and }
    |Q_2|=O_p\left(r_{\pi}\left(\|\bDelta\|_1\sqrt{\frac{\log d}{N}}+\|\bDelta\|_2\right)\right).
\end{align*}
Combined with \eqref{decom:R1}, we conclude that, when $N$ is large enough, for any $t>0$, there exists some constant $c>0$ such that $\P_{\mathbb{S}_{\pi}\cup \mathbb{S}_{q}}(\mathcal B_2)\geq1-t$. 

By (a) and (b), we have $\P_{\mathbb S_{\pi}}(\{\|\hbepiar-\bepiar\|_\infty\leq 1, \|\hbepibr-\bepibr\|_\infty \leq 1\})=1-o(1)$. It then follows from Lemma~\ref{lem:r_rsc} that $\P_{\mathbb{S}_{\pi}\cup \mathbb{S}_{q}}(\mathcal{B}_3)\geq 1-o(1)-c_1\exp(-c_{2} M)=1-o(1)$. Hence we conclude that
\begin{align*}
\P_{\mathbb{S}_{\pi} \cup \mathbb{S}_{q}}(\mathcal B_1\cap\mathcal B_2 \cap \mathcal B_3)&\geq 1-t-2\exp(-t).
\end{align*}

In the remaining proof of (c), We write $\bDelta:=\hbeqr-\beqr$. Then by Lemma~\ref{lem:r_imputation} (a), we have
\begin{align*}
    2\delta\bar\ell_3(\hbepiar,\hbepibr,\beqr,\bDelta)+\lambda_{q}\|\bDelta\|_1&\leq 4\lambda_{q}\|\bDelta_{S_{\beqr}}\|_1+2|R_1(\bDelta)|.
\end{align*}
on event $\mathcal B_1$, 
Now conditional on $\mathcal B_1\cap\mathcal B_2\cap\mathcal B_3$, by Lemma~\ref{lem:r_imputation}, we have $\bDelta\in\Ctil(\bar s_{q},k_0)$ and $\|\bDelta\|_2\leq 1$. Moreover, note that $\lambda_{q}=2\sigma_{q}\sqrt{(t+\log d)/M}\geq2\sigma_{q}\sqrt{\log d/M}$, $M\asymp N$, and
$r_{\pi}=o(1)$, we have $r_{\pi}\sqrt{\log d/N}=o(\lambda_{q})$  Hence, with some $N_0>0$, when $N>N_0$, we have $r_{\pi}\sqrt{\log d/N}\leq 1/2\lambda_{q}\|\bDelta\|_1$.
For large enough $N$,   
\begin{align*}
    \left(2\lambda_{q}\sqrt{s_{q}}+cr_{\pi}\right)\|\bDelta\|_2 &\overset{(i)}{\geq} 2\lambda_{q}\|\bDelta_{S_{\beqr}}\|_1+|R_1(\bDelta)|-\frac12\lambda_{q}\|\bDelta\|_1\\
    &\overset{(ii)}{\geq}\delta\bar\ell_3(\hbepiar,\hbepibr,\beqr,\bDelta)\\
    &\overset{(iii)}{\geq}\kappa_1\|\bDelta\|_2^2-\kappa_2k_0^2\frac{\bar s_{q}\log d}{M}\|\bDelta\|_2^2\\
    &\overset{(iv)}{\geq}\frac{\kappa_1}{2}\|\bDelta\|_2^2,
\end{align*}
where (i) holds by $\|\bDelta_{S_{\beqr}}\|_1\leq \sqrt{s_{q}}\|\bDelta_{S_{\beqr}}\|_2\leq \sqrt{s_{q}}\|\bDelta\|_2$ and the fact that on event $\mathcal B_2$, $|R_1(\bDelta)|\leq cr_{\pi}\left(\|\bDelta\|_1\sqrt{\log d/N}+\|\bDelta\|_2\right)\leq 1/2\lambda_{q}\|\bDelta\|_1+cr_{\pi}\|\bDelta\|_2$; (ii) holds by Lemma~\ref{lem:r_imputation}; (iii) holds by the construction of $\mathcal B_3$, $\|\bDelta\|_2\leq 1$ and $\bDelta\in\Ctil(\bar s_{q},k_0)$, (iv) holds for large enough $N$, since $\bar s_{q}\log d/M=s_{\pi}\log d_1/M+s_{q}\log d/M=o(1)$. 

Since the selected tuning parameter $\lambda_{q}=2\sigma_{q}\sqrt{(t+\log d)/M}\asymp\sqrt{\log d/N}$, we have
$$\|\bDelta\|_2\leq\frac{4\lambda_{q}\sqrt{s_{q}}}{\kappa_1}+\frac{2c}{\kappa_1}r_{\pi}=O_p\left(r_{\pi}+r_{q}\right)$$
on event $\mathcal B_1\cap \mathcal B_2\cap \mathcal B_3$.
Note that $\bDelta\in\Ctil(\bar s_{q},k_0)$, it follows that
$$\|\bDelta\|_1\leq k_0\sqrt{\bar s_{q}}\|\bDelta\|_2=O_p\left(s_{\pi}\sqrt\frac{(\log d_1)^2}{N\log d}+s_{q}\sqrt\frac{\log d}{N}\right).$$

\vspace{0.3em}

\emph{(d)}
For any $t>0$, let $\lambda_{\mu}=2\sigma_{\mu}\{2\sqrt{(t+\log d)/M}+(t+\log d)/M\}$. Choose some $\lambda_{\pi_a}\asymp\sqrt{\log d_1/N}$ and $\lambda_{\mu}\asymp\sqrt{\log d/N}$. Define
\begin{align*}
    \mathcal B_4:=&\{\|\bnabla_{\bemu}\bar\ell_4(\bepiar,\beqr,\bemur)\|_\infty \leq\lambda_{\mu}/2\},\\
    \mathcal B_5:=&\left\{\delta\bar\ell_4(\hbepiar,\hbeqr,\bemur,\bDelta)\geq\kappa_1\|\bDelta\|_2^2-\kappa_2\frac{\log d}{M}\|\bDelta\|_1^2,\;\;\forall\bDelta\in\R^d\right\}.
\end{align*}
By Lemma~\ref{lem:r_gradient}, we have $\P_{\mathbb{S}_{\mu}}(\mathcal B_4)\geq1-2\exp(-t)$. By Lemma~\ref{lem:r_rsc} and (a) of Theorem~\ref{thm:r_nuisance}, we also have $\P_{\mathbb{S}_{\pi}\cup\mathbb{S}_{q}\cup \mathbb{S}_{\tau}}(\mathcal B_5)\geq1-o(1)-c_1\exp(-c_2M)=1-o(1)$.

In the remaining proof of (d), let $\bDelta:=\hbemur-\bemur$.
Similar to the proof of Lemma~\ref{lem:r_imputation} (i), we have, on the event $\mathcal B_4$,
\begin{align}
    2\delta\bar\ell_4(\hbepiar,\hbeqr,\bemur,\bDelta)+\lambda_{\mu}\|\bDelta\|_1&\leq 4\lambda_{\mu}\|\bDelta_{S_{\bemur}}\|_1+2R_2,\label{eq:r_key2}
\end{align}
where
\begin{align*}
    R_2&:=\{\bnabla_{\bemu}\bar\ell_4(\hbepiar,\hbeqr,\bemur)-\bnabla_{\bemu}\bar\ell_4(\bepiar,\beqr,\bemur)\}^\top\bDelta\\
    &=-\frac2M\sumur A_i\big[g^{-1}(\bX_i^\top\hbepiar)\exp(\bS_i^\top\hbeqr)-g^{-1}(\bX_i^\top\bepiar)\exp(\bS_i^\top\beqr)\big]\varepsilon_{i}\bS_i^\top\bDelta.
\end{align*}

By the fact that $2ab\leq a^2/2+2b^2$, 
\begin{equation}
    |R_2|\leq \frac{1}{2}\delta\bar\ell_4(\hbepiar,\hbeqr,\bemur,\bDelta)+2R_3,\label{rela:R2R3}
\end{equation}
where
\begin{align*}
    R_3&:=\frac1{M}\sumur \frac{g(\bX_i^\top\hbepiar)}{\exp(\bS_{i}^\top\hbeqr)}\big[g^{-1}(\bX_i^\top\hbepiar)\exp(\bS_{i}^\top\hbeqr)-g^{-1}(\bX_i^\top\bepiar)\exp(\bS_{i}^\top\beqr)\big]^2\varepsilon^2.
\end{align*}
Note that by Minkowski's inequality, we have
\begin{align*}
    \E_{\mathbb S_{\mu}}[R_3]=&\E_{\mathbb S_{\mu}}\left[\frac{g(\bX^\top\hbepiar)}{\exp(\bS^\top\hbeqr)}\left[\frac{\exp(\bS^\top\hbeqr)}{g(\bX^\top\hbepiar)}
    -\frac{\exp(\bS^\top\beqr)}{g(\bX^\top\bepiar)}\right]^2\varepsilon^2\right]\nonumber\\
    \leq& \left\|\frac{g(\bX^\top\hbepiar)}{\exp(\bS^\top\hbeqr)}\right\|_{\P,4}\left\|\frac{\exp(\bS^\top\hbeqr)}{g(\bX^\top\hbepiar)}-\frac{\exp(\bS^\top\beqr)}{g(\bX^\top\bepiar)}\right\|_{\P,4}^2\|\varepsilon\|_{\P,8}^2.
\end{align*}

Let $d_0=d_1$, $\balpha=\bepiar$ and $\halpha=\hbepiar$. By (a), we have $\|\balpha-\halpha\|_2=O_p(r_{\pi})$, then
by Lemma~\ref{lem:usebound}, 
\begin{align}
    \normp{\exp(-\bX^\top\hbepiar)-\exp(-\bX^\top\bepiar)}{r}=\normp{g^{-1}(\bX^\top\hbepiar)-g^{-1}(\bX^\top\bepiar)}{r}=O_p\left(r_{\pi}\right),\label{order:r_betara_imputation}
\end{align}
where $r>0$ is any positive constant. Similarly, let $d_0=d$ and $\balpha=\beqr$ and $\halpha=\hbeqr$. Then by (c), we have $\|\balpha-\halpha\|_2=O_p(r_{\pi}+r_{q})$. It follows from Lemma~\ref{lem:usebound} that 
\begin{align*}
    \normp{\exp(\bX^\top\hbeqr)-\exp(\bX^\top\beqr)}{r}=O_p\left(r_{\pi}+r_{q}\right).
\end{align*}
By Lemma~\ref{lem:r_useevent}, $\|\exp(-\bS^\top\hbeq)\|_{\P,r'}\leq C_3$,$\|\exp(\bS^\top\hbeq)\|_{\P,8}\leq C_3$ and $\|g(\bX^\top\hbepia)\|_{\P,r'}\leq C_1$ holds on $\mathcal{E}_1\cap\mathcal{E}_3$ for some constant $C_1, C_3>0$, where $\mathcal{E}_1$ and $\mathcal{E}_3$ are defined in \eqref{def:E1} and \eqref{def:E3}, respectively. Furthurmore, $\mathcal{E}_1\cap\mathcal{E}_3$ occurs with probability approaching 1.

Therefore, condition on $\mathcal{E}_1\cap\mathcal{E}_3$,
\begin{align*}
    \left\|\frac{g(\bX^\top\hbepiar)}{\exp(\bS^\top\hbeqr)}\right\|_{\P,4}=O_p(1).
\end{align*}  
In addition, under Assumption~\ref{cond:basic2} (a) 
\begin{align}
    \left\|\frac{\exp(\bS^\top\hbeqr)}{g(\bX^\top\hbepiar)}-\frac{\exp(\bS^\top\beqr)}{g(\bX^\top\bepiar)}\right\|_{\P,4} 
    &\leq\left\|g^{-1}(\bS^\top\bepiar)\left\{\exp(\bS^\top\hbeqr)-\exp(\bS^\top\beqr)\right\}\right\|_{\P,4}\nonumber\\
    &\quad+\left\|\exp(\bS^\top\hbeqr)\left\{g^{-1}(\bS^\top\hbepiar)-g^{-1}(\bS^\top\bepiar)\right\}\right\|_{\P,4}\nonumber\\
    &=O_p\left(r_{\pi}+r_{q}\right).\label{eq:r_beq+bepi_bound}    
\end{align}
We then have  
\begin{align}
    \E_{\mathbb S_{\mu}}[R_3]&=O_p\left(r_{\pi}^2+r_{q}^2\right), \text{ and }R_3=O_p\left(r_{\pi}^2+r_{q}^2\right)\label{bound:R3}.
\end{align}
Based on \eqref{rela:R2R3} and the fact that $\|\bDelta_{S_{\bemur}}\|_1\leq\sqrt{s_{\mu}}\|\bDelta_{S_{\bemur}}\|_2\leq\sqrt{s_{\mu}}\|\bDelta\|_2$, \eqref{eq:r_key2} yields that, on event $\mathcal B_4$,
\begin{equation*}
    \delta\bar\ell_4(\hbepiar,\hbeqr,\bemur,\bDelta)+\lambda_{\mu}\|\bDelta\|_1\leq 4\lambda_{\mu}\|\bDelta_{S_{\bemur}}\|_1+2R_3\leq 4\lambda_{\mu}\sqrt{s_{\mu}}\|\bDelta\|_2+2R_3,
\end{equation*}
By the covexity of $\ell_4$, we have $\delta\bar\ell_4(\hbepiar,\hbeqr,\bemur,\bDelta)\geq 0$, such that
\begin{equation}\label{eq:r_deltacone}
    \|\bDelta\|_1\leq4\sqrt{s_{\mu}}\|\bDelta\|_2+2\lambda_{\mu}^{-1}R_3.
\end{equation}
Now, conditional on $\mathcal{E}_1\cap\mathcal{E}_3\cap\mathcal B_4\cap\mathcal B_5$, we have that for large enough $N$, 
\begin{align*} 
    4\lambda_{\mu}\sqrt{s_{\mu}}\|\bDelta\|_2+2R_3
    &\overset{(i)}{\geq}\delta\bar\ell_4(\hbepiar,\hbeqr,\bemur,\bDelta)
    \overset{(ii)}{\geq} \kappa_1\|\bDelta\|_2^2-\kappa_2\frac{\log d}{M}\|\bDelta\|_1^2\\
    &\overset{(iii)}{\geq} \kappa_1\|\bDelta\|_2^2-2\kappa_2\frac{\log d}{M}\left(16s_{\mu}\|\bDelta\|_2^2+\frac{4R_3^2}{\lambda_{\mu}^2}\right)\\
    &\overset{(iv)}{\geq} \frac{\kappa_1}{2}\|\bDelta\|_2^2-8\kappa_2R_3^2\frac{ \log d }{M\lambda_{\mu}^2},
\end{align*}
where (i) holds since $\|\bDelta\|_1 \geq0$; (ii) holds on $\mathcal B_5$; (iii) holds by \eqref{eq:r_deltacone} and also the fact that $(a+b)^2\leq2a^2+2b^2$; (iv) holds for large enough $N$, since $s_{\mu}\log d/M =o(1)$.

Since the selected tuning parameter $\lambda_{\mu}=2\sigma_{\mu}\{2\sqrt{(t+\log d)/M}+(t+\log d)/M\}\asymp\sqrt{\log d/N}$, it follows from Lemma~\ref{lemma:sol} that
\begin{align*} 
    \|\bDelta\|_2
    &\leq \frac{8\lambda_{\mu}\sqrt{s_{\mu}}}{\kappa_1}+\sqrt{16R_3^2\frac{ \kappa_2\log d }{\kappa_1 M\lambda_{\mu}^2}+\frac{4R_3}{\kappa_1}}\\
    &\overset{(i)}{=}O_p\left(r_{\mu}+r_{\pi}^2+r_{q}^2+r_{\pi}+r_{q} \right) \\
    &=O_p\left(r_{\mu}+r_{\pi}+r_{q}\right),
\end{align*}
where (i) holds by $\lambda_{\mu}\sqrt{s_{\mu}}\asymp\sqrt{s_{\mu}\log d/N}$ and \eqref{bound:R3}.
By \eqref{eq:r_deltacone}, we further have 
\begin{align*}
    \|\bDelta\|_1=O_p\left(s_{\pi}\sqrt\frac{(\log d_1)^2}{N\log d}+s_{q}\sqrt\frac{\log d}{N}+s_{\mu}\sqrt\frac{\log d}{N}\right).
\end{align*}

\vspace{0.3em}

\emph{(e)} For any $t>0$, let $\lambda_{\tau_n}=2\sigma_{\tau_n}\{2\sqrt{(t+\log d_1)/M}+(t+\log d_1)/M\}$. Choose some $\lambda_{\pi_b}\asymp\sqrt{\log d_1/N}$ and $\lambda_{\mu}\asymp\sqrt{\log d/N}$. Define
\begin{align}
    \mathcal B_6:=&\{\|\bnabla_{\bettau}\bar\ell_5(\bepibr,\bemur,\bettaur)\|_\infty \leq\lambda_{\tau_n}/2\},\nonumber\\
    \mathcal B_7:=&\left\{\delta\bar\ell_5(\hbepibr,\hbemur,\bettaur,\bDelta)\geq\kappa_1\|\bDelta\|_2^2-\kappa_2\frac{\log d_1}{M}\|\bDelta\|_1^2,\;\;\forall\bDelta\in\R^d_1\right\}.\label{def:B7}
\end{align}
By Lemma~\ref{lem:r_gradient}, we have $\P_{\mathbb{S}_{\tau_n}}(\mathcal B_6)\geq1-2\exp(-t)$. By Lemma~\ref{lem:r_rsc} and (b) of Theorem~\ref{thm:r_nuisance}, we also have $\P_{\mathbb{S}_{\pi}\cup \mathbb{S}_{\tau_n}}(\mathcal B_7)\geq1-o(1)-c_1\exp(-c_2M)=1-o(1)$. 

In the remaining proof of (e), let $\bDelta=\hbettaur-\bettaur$.
Similar to the proof of Lemma~\ref{lem:r_imputation} (i), we have, on the event $\mathcal B_6$,
\begin{equation}
    2\delta\bar\ell_5(\hbepibr,\hbemur,\bettaur,\bDelta)+\lambda_{\tau_n}\|\bDelta\|_1\leq 4\lambda_{\tau_n}\|\bDelta_{S_{\bettaur}}\|_1+2R_4,\label{eq:r_key3}
\end{equation}
where
\begin{align*}
    R_4&:=\{\bnabla_{\bettau}\bar\ell_5(\hbepibr,\hbemur,\bettaur)-\bnabla_{\bettau}\bar\ell_5(\bepibr,\hbemur,\bettaur)\}^\top\bDelta\\
    &=-\frac2M\suttaur (1-A_i)\big[\exp(\bX_i^\top\hbepibr)\{\bX_i^\top\bettaur-\bS_i^\top\hbemur\}\\
    &\quad-\exp(\bX_i^\top\bepibr)\{\bX_i^\top\bettaur-\bS_i^\top\bemur\}\big]\bX_i^\top\bDelta.
\end{align*}
By the fact that $2ab\leq a^2/2+2b^2$, we have
\begin{equation}
    |R_4|\leq \frac{1}{2}\delta\bar\ell_5(\hbepibr,\hbemur,\bettaur,\bDelta)+2R_{5},\label{rela:R4R5}
\end{equation}
where
\begin{align*}
    R_{5}=&\frac1{M}\suttaur \exp(-\bX_i^\top\hbepibr)\Big[\exp(\bX_i^\top\hbepibr)\{\bX_i^\top\bettaur-\bziS^\top\hbemur\}\\
    &-\exp(\bX_i^\top\bepibr)\{\bX_i^\top\bettaur-\bziS^\top\bemur\}\Big]^2.
\end{align*}
Note that 
\begin{equation*}
    \E_{\mathbb S_{\tau_n}}[R_{5}]=\E_{\mathbb S_{\tau_n}}\left[\exp(-\bX^\top\hbepibr)(Q_3+Q_4)^2\right]\leq \normp{\exp(-\bX^\top\hbepibr)}{2}(\normp{Q_3}{4}+\normp{Q_4}{4}),
\end{equation*}
where
\begin{align*}
    Q_3&=\left\{\exp(\bX^\top\hbepibr)-\exp(\bX^\top\bepibr)\right\}\varrho,\\
    Q_4&=\exp(\bX^\top\hbepibr)\bzS^\top(\hbemur-\bemur).
\end{align*}
Let $d_0=d_1$, $\balpha=\bepibr$ and $\halpha=\hbepibr$, then by (b), we have $\|\balpha-\halpha\|_2=O_p(r_{\pi})$, hence by Lemma~\ref{lem:usebound}, it holds for any constant $r>0$ that
    \begin{align}
    \normp{\frac{1}{1-g(\bX^\top\hbepibr)}-\frac{1}{1-g(\bX^\top\bepibr)}}{r}=\normp{\exp(\bX^\top\hbepibr)-\exp(\bX^\top\bepibr)}{r}=O_p\left(r_{\pi}\right).
    \label{order:r_betabr_imputation}
\end{align}

Similarly, let $d_0=d$, $\balpha=\bemur$ and $\halpha=\hbemur$. By Lemma~\ref{lem:usebound} and (d), we have, for any constant $r>0$, 
\begin{align}
    \normp{\bS^\top(\bemur-\hbemur)}{r}=O_p(\|\bemur-\hbemur\|_2)=O_p\left(r_{\pi}+r_{q}+r_{\mu}\right).
    \label{order:r_mu_imputation}
\end{align}
By Lemmma \ref{lem:r_useevent}, $\|\exp(\bX^\top\hbepia)\|_{\P,r'}\leq C_1$ holds on $\mathcal{E}_2$ for some constant $C_2>0$, where $\mathcal{E}_2$ is defined in \eqref{def:E2}. Furthurmore, $\mathcal{E}_2$ occurs with probability approaching 1.
Hence, by H\"older's inequality and Minkowski's inequality,
\begin{align*}
    \normp{Q_3}{4}&\leq\normp{\exp(\bX^\top\hbepibr)-\exp(\bX^\top\bepibr)}{8}\normp{\varrho}{8}\overset{(i)}{=}O_p\left(r_{\pi}\right),\\
    \normp{Q_4}{4}&\leq\normp{\exp(\bX^\top\hbepibr)}{8}\normp{\bS^\top(\hbemur-\bemur)}{8}\overset{(ii)}{=}O_p\left(r_{\pi}+r_{q}+r_{\mu}\right),
\end{align*}
where (i) and (ii) hold by \eqref{order:r_betabr_imputation} and \eqref{order:r_mu_imputation}, respectively. Hence, 
\begin{align}
    R_{5}&=O_p\left(r_{\pi}^2+r_{q}^2+r_{\mu}^2\right).\label{bound:R_11}
\end{align}

Based on \eqref{rela:R4R5} and the fact that $\|\bDelta_{S_{\bettaur}}\|_1\leq\sqrt{s_{\tau_n}}\|\bDelta_{S_{\bettaur}}\|_2\leq\sqrt{s_{\tau_n}}\|\bDelta\|_2$, \eqref{eq:r_key3} derives that
\begin{equation*}
    \delta\bar\ell_5(\hbepibr,\hbemur,\bettaur, \bDelta)+\lambda_{\tau_n}\|\bDelta\|_1\leq 4\lambda_{\tau_n}\|\bDelta_{S_{\bettaur}}\|_1+2R_{5}\leq 4\lambda_{\tau_n}\sqrt{s_{\tau_n}}\|\bDelta\|_2+2R_{5}
\end{equation*}
on event $\mathcal B_6$. By the covexity of $\ell_5$, we have $\delta\bar\ell_5(\hbepibr,\hbemur,\bettaur,\bDelta)\geq 0$, such that
\begin{equation}\label{eq:r_varphicone}
    \|\bDelta\|_1\leq4\sqrt{s_{\tau_n}}\|\bDelta\|_2+2\lambda_{\tau_n}^{-1}R_{5}.
\end{equation}

Now, conditional on $\mathcal{E}_2\cap\mathcal B_6\cap\mathcal B_7$, we have that for large enough $N$, 
\begin{align*} 
    4\lambda_{\tau_n}\sqrt{s_{\tau_n}}\|\bDelta\|_2+2R_{5}
    &\overset{(i)}{\geq}\delta\bar\ell_5(\hbepibr,\hbemur,\bettaur,\bDelta)
    \overset{(ii)}{\geq} \kappa_1\|\bDelta\|_2^2-\kappa_2\frac{\log d_1}{M}\|\bDelta\|_1^2\\
    &\overset{(iii)}{\geq} \kappa_1\|\bDelta\|_2^2-2\kappa_2\frac{\log d_1}{M}\left(16s_{\tau_n}\|\bDelta\|_2^2+\frac{4R_{5}^2}{\lambda_{\tau_n}^2}\right)\\
    &\overset{(iv)}{\geq} \frac{\kappa_1}{2}\|\bDelta\|_2^2-8\kappa_2R_{5}^2\frac{ \log d_1 }{M\lambda_{\tau_n}^2},
\end{align*}
where (i) holds since $\|\bDelta\|_1 \geq0$; (ii) holds on $\mathcal B_7$; (iii) holds by \eqref{eq:r_varphicone} and also the fact that $(a+b)^2\leq2a^2+2b^2$; (iv) holds for large enough $N$, since $s_{\tau_n}\log d_1/M =o(1)$.

Since the selected tuning parameter $\lambda_{\tau_n}=2\sigma_{\tau_n}\{2\sqrt{(t+\log d_1)/M}+(t+\log d_1)/M\}\asymp\sqrt{\log d_1/N}$, it follows from Lemma~\ref{lemma:sol} that
\begin{align*} 
    \|\bDelta\|_2
    &\leq \frac{8\lambda_{\tau_n}\sqrt{s_{\tau_n}}}{\kappa_1}+\sqrt{16R_{5}^2\frac{ \kappa_2\log d }{\kappa_1 M\lambda_{\tau_n}^2}+\frac{4R_{5}}{\kappa_1}}\\
    &\overset{(i)}{=}O_p\left(r_{\tau_n}+r_{\pi}^2+r_{q}^2+r_{\mu}^2+r_{\pi}+r_{q}+r_{\mu} \right) \\
    &=O_p\left(r_{\pi}+r_{q}+r_{\mu}+r_{\tau_n}\right).
\end{align*}
where (i) holds by $\lambda_{\tau_n}\sqrt{s_{\tau_n}}\asymp\sqrt{s_{\tau_n}\log d_1/N}$ and \eqref{bound:R_11}.
By \eqref{eq:r_varphicone}, we further have 
\begin{align*}
    \|\bDelta\|_1=O_p\left(s_{\pi}\sqrt\frac{\log d_1}{N}+s_{q}\sqrt\frac{(\log d)^2}{N\log d_1}+s_{\mu}\sqrt\frac{(\log d)^2}{N\log d_1}+s_{\tau_n}\sqrt\frac{\log d_1}{N}\right).
\end{align*}

\vspace{0.3em}

\emph{(f)} For any $t>0$, let $\lambda_{\tau}=2\sigma_{\tau}\{2\sqrt{(t+\log d_1)/M}+(t+\log d_1)/M\}$. Choose some $\lambda_{\pi_a}, \lambda_{\pi_b}$ and $\lambda_{\tau_n}\asymp\sqrt{\log d_1/N}$, $\lambda_{q}$ and  $\lambda_{\mu}\asymp\sqrt{\log d/N}$. Define
\begin{align}
    \mathcal B_8:=&\{\|\bnabla_{\bettau}\bar\ell_6(\bepiar,\beqr,\bemur,\bettaur,\betaur)\|_\infty \leq\lambda_{\tau}/2\},\nonumber\\
    \mathcal B_9:=&\left\{\delta\bar\ell_6(\hbepiar,\hbeqr,\hbemur,\hbettaur, \betaur,\bDelta)\geq\kappa_1\|\bDelta\|_2^2-\kappa_2\frac{\log d_1}{M}\|\bDelta\|_1^2,\;\;\forall\bDelta\in\R^d_1\right\}.\label{def:B9}
\end{align}
By Lemma~\ref{lem:r_gradient}, we have $\P_{\mathbb{S}_{\tau}}(\mathcal B_8)\geq1-2\exp(-t)$. By Lemma~\ref{lem:r_rsc} as well as (a) and (b) of Theorem~\ref{thm:r_nuisance}, we also have $\P_{\mathbb{S}_{\pi}\cup \mathbb{S}_{\tau}}(\mathcal B_9)\geq1-o(1)-c_1\exp(-c_2M)=1-o(1)$. 

In the remaining proof of (f), let $\bDelta=\hbetaur-\betaur$.
Similar to the proof of Lemma~\ref{lem:r_imputation} (i), we have, on the event $\mathcal B_8$,
\begin{equation}
    2\delta\bar\ell_6(\hbepiar,\hbeqr,\hbemur,\hbettaur,\betaur,\bDelta)+\lambda_{\tau}\|\bDelta\|_1\leq 4\lambda_{\tau}\|\bDelta_{S_{\betaur}}\|_1+2R_6,\label{eq:r_key4}
\end{equation}
where
\begin{align*}
    R_6&:= \{\bnabla_{\betau}\bar\ell_6(\hbepiar,\hbeqr, \hbemur,\hbettaur,\betaur)-\bnabla_{\betau}\bar\ell_6(\bepiar, \beqr, \bemur,\bettaur,\betaur)\}^\top\bDelta\\
    &=\frac2M\sutaur A_i\big[\exp(-\bX_i^\top\hbepiar)\{\bX_i^\top\betaur-\bX_i^\top\hbettaur-\exp(\bS_i^\top\hbeqr)(Y-\bS_i^\top\hbemur)\}\\
    &\quad-\exp(-\bX_i^\top\bepiar)\{\bX_i^\top\betaur-\bX_i^\top\bettaur-\exp(\bS_i^\top\beqr)(Y-\bS_i^\top\bemur)\}\big]\bX_i^\top\bDelta.
\end{align*}
By the fact that $2ab\leq a^2/2+2b^2$, we have
\begin{equation}
    |R_6|\leq \frac{1}{2}\delta\bar\ell_6(\hbepiar,\hbeqr,\hbemur,\hbettaur,\betaur,\bDelta)+2R_{7},\label{rela:R6R7}
\end{equation}
where
\begin{align*}
    R_{7}&=\frac1{M}\sutaur \exp(\bX_i^\top\hbepiar)\big[\exp(-\bX_i^\top\hbepiar)\{\bX_i^\top(\betaur-\hbettaur)-\exp(\bS_{i}^\top\hbeqr)(Y_i-\bS_{i}^\top\hbemur)\}\\
    &\quad-\exp(-\bX_i^\top\bepiar)\{\bX_i^\top(\betaur-\bettaur)-\exp(\bS_{i}^\top\beqr)(Y_i-\bS_{i}^\top\bemur)\}\big]^2.
\end{align*}
Note that 
\begin{equation*}
    \E_{\mathbb S_{\tau}}[R_{7}]=\E_{\mathbb S_{\tau}}\left[\exp(\bX^\top\hbepiar)(Q_5+Q_6+Q_7+Q_8)^2\right],
\end{equation*}
where
\begin{align*}
    Q_5&=-[\exp(-\bX^\top\hbepiar)\exp(\bS^\top\hbeqr)-\exp(-\bX^\top\bepiar)\exp(\bS^\top\beqr)]\varepsilon,\\
    Q_6&=[\exp(-\bX^\top\hbepiar)-\exp(-\bX^\top\bepiar)](-\zeta+\varrho),\\
    Q_7&=\exp(\bS^\top\hbeqr)\bS^\top(\hbemur-\bemur),\\
    Q_8&=-\exp(-\bX^\top\hbepiar)\bX^\top(\hbettaur-\bettaur).
\end{align*}
Let $d_0=d$, $\balpha=\bettaur$ and $\halpha=\hbettau$. By Lemma~\ref{lem:usebound} and (d), we have
\begin{align}
    &\normp{\bX^\top(\hbettaur-\bettaur)}{8}=O_p(\|\hbettaur-\bettaur\|_2)=O_p\left(r_{\pi}+r_{q}+r_{\mu}+r_{\tau_n}\right).\label{order:r_tau_n_imputation}
\end{align}
By Lemma \ref{lem:r_useevent}, Minkowski's inequality and H\"older's inequality, it follows that
\begin{align*}
    \normp{Q_5}{4}&\leq\normp{\exp(-\bX^\top\hbepiar)\exp(\bS^\top\hbeqr)-\exp(-\bX^\top\bepiar)\exp(\bS^\top\beqr)}{8}\normp{\varepsilon}{8}\\
    &\overset{(i)}{=}O_p\left(\normp{\frac{\exp(\bS^\top\hbeqr)}{\exp(\bX^\top\hbepiar)}-\frac{\exp(\bS^\top\beqr)}{\exp(\bX^\top\bepiar)}}{8}\right)\overset{(ii)}{=}O_p\left(r_{\pi}+r_{q}\right),\\
    \normp{Q_6}{4}&\leq\normp{\exp(-\bX^\top\hbepiar)-\exp(-\bX^\top\bepiar)}{8}(\normp{\zeta}{8}+\normp{\varrho}{8})\overset{(iii)}{=}O_p\left(r_{\pi}\right),\\
    \normp{Q_7}{4}&\leq\normp{\exp(\bS^\top\hbeqr)}{8}\normp{\bS^\top(\hbemur-\bemur)}{8}\overset{(iv)}{=}O_p\left(r_{\pi}+r_{q}+r_{\mu}\right),\\
    \normp{Q_8}{4}&\leq\normp{\exp(-\bX^\top\hbepiar)}{8}\normp{\bX^\top(\hbettaur-\bettaur)}{8}\\
    &\overset{(v)}{=}O_p\left(r_{\pi}+r_{q}+r_{\mu}+r_{\tau_n}\right),
\end{align*}
Where (i) holds under Assumption~\ref{cond:basic2}; (ii) holds by Lemma~\ref{lem:r_useevent}, analogously as the justification of \eqref{eq:r_beq+bepi_bound}; (iii) holds by \eqref{order:r_betara_imputation}; (iv) holds by \eqref{order:r_mu_imputation} and Lemma~\ref{lem:r_useevent}; (v) follows from Lemma~\ref{lem:r_useevent} and \eqref{order:r_tau_n_imputation}.

Collectively, we conclude that
\begin{align}
    \E_{\mathbb S_{\tau}}[R_{7}]=O_p\left(r_{\pi}^2+r_{q}^2+r_{\mu}^2+r_{\tau_n}^2\right)\text{ and }R_{7}=O_p\left(r_{\pi}^2+r_{q}^2+r_{\mu}^2+r_{\tau_n}^2\right).\label{bound:R_13}
\end{align}

Based on \eqref{rela:R6R7} and the fact that $\|\bDelta_{S_{\betaur}}\|_1\leq\sqrt{s_{\tau}}\|\bDelta_{S_{\betaur}}\|_2\leq\sqrt{s_{\tau}}\|\bDelta\|_2$, \eqref{eq:r_key4} yields that on event $\mathcal B_8$,
\begin{equation*}
    \delta\bar\ell_6(\hbepiar,\hbeqr,\hbemur,\bettaur, \betaur \bDelta)+\lambda_{\tau}\|\bDelta\|_1\leq 4\lambda_{\tau}\|\bDelta_{S_{\betaur}}\|_1+2R_{5}\leq 4\lambda_{\tau}\sqrt{s_{\tau}}\|\bDelta\|_2+2R_{7},
\end{equation*}
By the covexity of $\ell_6$, we have $\delta\bar\ell_6(\hbepiar,\hbeqr,\hbemur,\hbettaur,\betaur,\bDelta)\geq 0$, such that
\begin{equation}\label{eq:r_taucone}
    \|\bDelta\|_1\leq4\sqrt{s_{\tau}}\|\bDelta\|_2+2\lambda_{\tau}^{-1}R_{7}.
\end{equation}

Now, conditional on $\mathcal{E}_1\cap\mathcal{E}_3\cap\mathcal B_8\cap\mathcal B_9$, for large enough $N$, 
\begin{align} 
    4\lambda_{\tau}\sqrt{s_{\tau}}\|\bDelta\|_2+2R_{7}
    &\overset{(i)}{\geq}\delta\bar\ell_6(\hbepiar,\hbeqr,\hbemur,\hbettaur,\betaur,\bDelta)
    \overset{(ii)}{\geq} \kappa_1\|\bDelta\|_2^2-\kappa_2\frac{\log d_1}{M}\|\bDelta\|_1^2\nonumber\\
    &\overset{(iii)}{\geq} \kappa_1\|\bDelta\|_2^2-2\kappa_2\frac{\log d_1}{M}\left(16s_{\tau}\|\bDelta\|_2^2+\frac{4R_{7}^2}{\lambda_{\tau}^2}\right)\nonumber\\
    &\overset{(iv)}{\geq} \frac{\kappa_1}{2}\|\bDelta\|_2^2-8\kappa_2R_{7}^2\frac{ \log d_1 }{M\lambda_{\tau}^2}\label{eq:rtaukey}
\end{align}
where (i) holds since $\|\bDelta\|_1 \geq0$; (ii) holds by the construction of $\mathcal B_8$; (iii) holds by \eqref{eq:r_taucone} and also the fact that $(a+b)^2\leq2a^2+2b^2$; (iv) holds for large enough $N$, since $s_{\tau}\log d_1/M =o(1)$.

Since the selected tuning parameter $\lambda_{\tau}=2\sigma_{\tau}\{2\sqrt{(t+\log d_1)/M}+(t+\log d_1)/M\}\asymp\sqrt{\log d_1/N}$, it follows from Lemma~\ref{lemma:sol} that
\begin{align*} 
    \|\bDelta\|_2
    &\leq \frac{8\lambda_{\tau}\sqrt{s_{\tau}}}{\kappa_1}+\sqrt{16R_{7}^2\frac{ \kappa_2\log d }{\kappa_1 M\lambda_{\tau}^2}+\frac{4R_{7}}{\kappa_1}}\\
    &\overset{(i)}{=}O_p\left(r_{\tau}+r_{\pi}^2+r_{q}^2+r_{\mu}^2+r_{\tau_n}^2+r_{\pi}+r_{q}+r_{\mu}+r_{\tau_n}\right) \\
    &=O_p\left(r_{\pi}+r_{q}+r_{\mu}+r_{\tau_n}+r_{\tau}\right)
\end{align*}
where (i) holds by $\lambda_{\tau}\sqrt{s_{\tau}}\asymp\sqrt{s_{\tau}\log d_1/N}$ and \eqref{bound:R_13}.
By \eqref{eq:r_taucone}, we further have 
\begin{align*}
    \|\bDelta\|_1=O_p\left(s_{\pi}\sqrt\frac{\log d_1}{N}+s_{q}\sqrt\frac{(\log d)^2}{N\log d_1}+s_{\mu}\sqrt\frac{(\log d)^2}{N\log d_1}+s_{\tau_n}\sqrt\frac{\log d_1}{N}+s_{\tau}\sqrt\frac{\log d_1}{N}\right).
\end{align*}
\end{proof}

\begin{lemma}\label{lem:r_gradient'}
Let the assumptions in Theorem~\ref{thm:r_nuisance} hold.

\textbf{(a)} Let $\mu(\cdot)=\mu^*(\cdot)$. Then, as $N$, $d_1$, $d_2\to\infty$,
\begin{align*}
    \|\bnabla_{\bemu}\bar\ell_4(\hbepiar,\hbeqr,\bemur)\|_\infty&=O_p\left(\left\{1+(r_{\pi}+r_{q})\sqrt{\log d}\right\}\frac{\log d}{N}\right).
\end{align*}

\textbf{(b)} Let $\tau_n(\cdot)=\tau_n^*(\cdot)$ and $\mu(\cdot)=\mu^*(\cdot)$. Then, as $N$, $d_1$, $d_2\to\infty$,
\begin{align*}
    \|\bnabla_{\bettau}\bar\ell_5(\hbepibr,\bemur,\bettaur)\|_\infty&=O_p\left(\left\{1+r_{\pi}\sqrt{\log d_1}\right\}\sqrt{\frac{\log d_1}{N}}\right).
\end{align*}

\textbf{(c)} Let $\mu(\cdot)=\mu^*(\cdot)$, $\tau_n(\cdot)=\tau_n^*(\cdot)$ and $\tau(\cdot)=\tau^*(\cdot)$. Then, as $N$, $d_1$, $d_2\to\infty$,,
\begin{align*}
    \|\bnabla_{\betau}\bar\ell_6(\hbepiar,\hbeqr, \bemur,\bettaur,\betaur)\|_\infty&=O_p\left(\left\{1+\left(r_{\pi}+r_{q}\right)\sqrt{\log d_1}\right\}\sqrt{\frac{\log d_1}{N}}\right).
\end{align*}
\end{lemma}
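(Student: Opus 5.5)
The plan is to bound each gradient by splitting it into its value at the true (population) nuisance parameters plus an imputation error. For part (a), write
\[
\bnabla_{\bemu}\bar\ell_4(\hbepiar,\hbeqr,\bemur)=\bnabla_{\bemu}\bar\ell_4(\bepiar,\beqr,\bemur)+\bs{D},
\]
where
\[
\bs{D}=-\frac{2}{M}\sumur A_i\{g^{-1}(\bX_i^\top\hbepiar)e^{\bS_i^\top\hbeqr}-g^{-1}(\bX_i^\top\bepiar)e^{\bS_i^\top\beqr}\}\varepsilon_i\bS_i .
\]
The first term is $O_p(\sqrt{\log d/N})$ by \eqref{eq:r_gradient4} of Lemma~\ref{lem:r_gradient}.

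The key observation concerns the second term. When $\mu=\mu^*$ we have $\E[\varepsilon\mid A=1,\bS]=0$. The estimates $(\hbepiar,\hbeqr)$ come from samples disjoint from $\mathbb S_\mu$, so conditional on them each coordinate of $\bs{D}$ is a mean of i.i.d.\ mean-zero terms. This is the step that exploits correct specification. Without it, $\bs{D}$ has a nonzero mean of order $r_\pi+r_q$, which is exactly the bias that Theorem~\ref{thm:r_nuisance} pays for.

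Conditionally, the summand's weight difference $w_i-w_i^*$ is controlled in $L_r$ norm by $O_p(r_\pi+r_q)$. This follows from Lemma~\ref{lem:usebound} together with Theorem~\ref{thm:r_nuisance}(a)(c), working on the events $\mathcal E_1\cap\mathcal E_3$ of Lemma~\ref{lem:r_useevent}. I will then bound $\E\|\bs D\|_\infty^2$ with the maximal inequality of Lemma~\ref{bound_expectation_sum}, conditionally on the nuisance samples. This gives
\[
\E\|\bs D\|_\infty^2\lesssim \frac{\log d}{M}\,\E\big[\max_j|(w-w^*)\varepsilon S_j|^2\big].
\]
The inner term is bounded by Hölder's inequality: $\|w-w^*\|_{\P,4}$ times $\|\varepsilon\|_{\P,8}$ times $\|\max_j|S_j|\|_{\P,8}$. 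The last factor is $O(\sqrt{\log d})$ by the maximal sub-Gaussian bound in Lemma~\ref{lem:psi2norm}. The result is
\[
\|\bs D\|_\infty=O_p\Big((r_\pi+r_q)\sqrt{\log d}\,\sqrt{\log d/N}\Big),
\]
which combines with the first term to give the stated form $\{1+(r_\pi+r_q)\sqrt{\log d}\}\sqrt{\log d/N}$. The displayed rate in (a) is presumably meant as $\sqrt{\log d/N}$.

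The main obstacle is making the conditional-mean-zero argument rigorous while the weights are random and unbounded. This needs the exponential moment control from Lemma~\ref{lem:r_useevent}, applied on events of probability tending to one, followed by converting conditional bounds into $O_p$ statements via Markov's inequality.

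Parts (b) and (c) follow the same template. In (b), the gradient is evaluated at $\bemur$ and differs from Lemma~\ref{lem:r_gradient} only through $\exp(\bX^\top\hbepibr)$ versus $\exp(\bX^\top\bepibr)$. Since $\tau_n=\tau_n^*$, we have $\E[\varrho\mid A=0,\bX]=0$, so the imputation term is again conditionally mean zero, with weight error $O_p(r_\pi)$ by Theorem~\ref{thm:r_nuisance}(b). This gives the $r_\pi\sqrt{\log d_1}$ factor.

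In (c), the residual $e^{\bS^\top\beqr}\varepsilon+\zeta-\varrho$ has conditional mean zero given $(A=1,\bX)$. This holds because $\mu=\mu^*$ kills the $\varepsilon$ part, and $\tau=\tau^*$, $\tau_n=\tau_n^*$, $\tau=\tau_n$ give $\E[\zeta-\varrho\mid\bX]=0$. After replacing $(\bepiar,\beqr)$ by their estimates, the $\varepsilon$ term keeps conditional mean zero given $\bS$ for any weight. The $\zeta-\varrho=\bX^\top(\bettaur-\betaur)$ term also vanishes, since $\bettaur=\betaur$ here by Lemma~\ref{lem:r_identi}(f). The weight error is therefore $O_p(r_\pi+r_q)$, and the same maximal-inequality argument applies with $d_1$ in place of $d$.
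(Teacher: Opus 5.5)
Your proposal is correct and follows the paper's proof essentially step for step: you split each gradient into its value at $\bnu^*$, handled by Lemma~\ref{lem:r_gradient}, plus an imputation term that is conditionally mean zero under the stated correct specifications (thanks to cross-fitting), and then bound that term via Lemma~\ref{bound_expectation_sum}, H\"older's inequality, and the sub-Gaussian maximal bound on $\max_j|S_j|$. You are also right that the rate in (a) should read $\{1+(r_\pi+r_q)\sqrt{\log d}\}\sqrt{\log d/N}$, which is what the paper's own derivation yields and how it is used in Theorem~\ref{thm:r_nuisance'}; your observation that $\zeta-\varrho\equiv0$ in (c) is a small simplification over the paper, which bounds that piece separately.
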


\begin{proof}
By Lemma~\ref{lem:r_gradient}, we have
\begin{align}
    \left\|\bnabla_{\bemu}\bar\ell_4(\bepiar,\beqr,\bemur)\right\|_\infty&=O_p\left(\sqrt\frac{\log d}{N}
   \right),\label{eq:r_order_ell4}\\
   \left\|\bnabla_{\bettau}\bar\ell_5(\bepibr,\bemur,\bettaur)\right\|_\infty&=O_p\left(\sqrt\frac{\log d_1}{N}
    \right),\label{eq:r_order_ell5}\\
    \left\|\bnabla_{\betau}\bar\ell_6(\bepiar,\beqr,\bemur,\bettaur,\betaur)\right\|_\infty&=O_p\left(\sqrt\frac{\log d_1}{N}
    \right)\label{eq:r_order_ell6}.
\end{align}

\emph{(a)} Assume that $\mu=\mu^*$, we have
$$\bnabla_{\bemu}\bar\ell_4(\hbepiar,\hbeqr,\bemur)-\bnabla_{\bemu}\bar\ell_4(\bepiar,\beqr,\bemur)=M^{-1}\sumur W_{\mu,i},$$
where 
$$W_{\mu,i}= 2A_i\big[g^{-1}(\bX_i^\top\hbepiar)\exp(\bS_i^\top\hbeqr)-g^{-1}(\bX_i^\top\bepiar)\exp(\bS_i^\top\beqr)\big]\varepsilon_{i}\bS_i.$$
For any independent copy $W_{\mu}$ of $W_{\mu,i}$, since $\mu=\mu^*$, it follows from the tower rule that $\E[W_{\mu}]=0$. Note that $|A_i|\leq 1$, we have
\begin{align}
    &\E_{\mathbb S_{\mu}}\left(\|M^{-1}\sumur W_{\mu,i}\|_\infty^2\right)\overset{(i)}{\leq} M^{-1}(2e\log d-e)\E[\|W_{\mu,i}\|_\infty^2]\nonumber\\
    =& 4M^{-1}(2e\log d-e)\E\{|g^{-1}(\bX^\top\hbepiar)\exp(\bS^\top\hbeqr)-g^{-1}(\bX^\top\bepiar)\exp(\bS^\top\beqr)|^2|\varepsilon|^2\|\bS\|_\infty^2\}\nonumber \\
    \overset{(ii)}{\leq}& 4M^{-1}(2e\log d-e)\normp{\frac{\exp(\bS^\top\hbeqr)}{g(\bX^\top\hbepiar)}-\frac{\exp(\bS^\top\beqr)}{g(\bX^\top\bepiar)}}{6}^2\normp{\varepsilon}{6}^2\normp{\|\bS\|_\infty}{6}^2\nonumber\\
    \overset{(iii)}{=}& O_p\left((\log d)^2\frac{s_{\pi}\log d_1+s_{q}\log d}{N^2}\right),\label{eq:r_W_repeat}
\end{align}
where (i) hold by Lemma~\ref{bound_expectation_sum}; (ii) holds by H\"older's inequality; (iii) holds by Lemma~\ref{lem:psi2norm} and \eqref{eq:r_beq+bepi_bound}. Therefore, we obtain
\begin{align*}
    \E_{\mathbb S_{\mu}}\left[\|\bnabla_{\bemu}\bar\ell_4(\hbepiar,\hbeqr,\bemur)-\bnabla_{\bemu}\bar\ell_4(\bepiar,\beqr,\bemur)\|_\infty^2\right]&=O_p\left((\log d)^2\frac{s_{\pi}\log d_1+s_{q}\log d}{N^2}\right).
\end{align*}
It follows that
\begin{align*}
    \|\bnabla_{\bemu}\bar\ell_4(\hbepiar,\hbeqr,\bemur)-\bnabla_{\bemu}\bar\ell_4(\bepiar,\beqr,\bemur)\|_\infty&=O_p\left(\frac{\log d}{\sqrt{N}}(r_{\pi}+r_{q})\right).
\end{align*}
Combined with \eqref{eq:r_order_ell4}, we have
\begin{align*}
    \|\bnabla_{\bemu}\bar\ell_4(\hbepiar,\hbeqr,\bemur)\|_\infty&=O_p\left(\left\{1+(r_{\pi}+r_{q})\sqrt{\log d}\right\}\frac{\log d}{N}\right).
\end{align*}

\emph{(b)} Assume that $\tau_n=\tau_n^*$ and $\mu=\mu^*$, we have
$$
\bnabla_{\bettau}\bar\ell_5(\hbepibr,\bemur,\bettaur)-\bnabla_{\bettau}\bar\ell_5(\bepibr,\bemur,\bettaur)=M^{-1}\suttaur W_{\tau_n,i},
$$
where
\begin{align*}
    W_{\tau_n,i}&=2(1-A_i)\{\exp(\bX_i^\top\hbepibr)-\exp(\bX_i^\top\bepibr)\}\varrho_{i}\bX_i.
\end{align*}
For any independent copy $W_{\tau_n}$ of $W_{\tau_n,i}$, since $\mu=\mu^*$ and $\tau_n=\tau_n^*$, we have $\E[W_{\tau_n}]=0$,
which is held by the tower rule. Then note that $|1-A|\leq 1$, we have
\begin{align*}
    &\E_{\mathbb S_{\tau_n}}\left(\|M^{-1}\suttaur W_{\tau_n,i}\|_\infty^2\right)\overset{(i)}{\leq} M^{-1}(2e\log d_1-e)\E[\|W_{\tau_n,i}\|_\infty^2]\\
    &\qquad\leq 4M^{-1}(2e\log d_1-e)\E\{|\exp(\bX_i^\top\hbepibr)-\exp(\bX_i^\top\bepibr)|^2|\varrho|^2\|\bX\|_\infty^2\} \\
    &\qquad\overset{(ii)}{\leq} 4M^{-1}(2e\log d_1-e)\normp{\exp(\bX_i^\top\hbepibr)-\exp(\bX_i^\top\bepibr)}{6}^2\normp{\varrho}{6}^2\normp{\|\bX\|_\infty}{6}^2\\
    &\qquad\overset{(iii)}{=} O_p\left(\frac{ s_{\pi}(\log d_1)^3}{N^2}\right),
\end{align*}
where (i) holds by Lemma~\ref{bound_expectation_sum}, (ii) holds by H\"older's inequality, (iii) holds by Lemma~\ref{lem:psi2norm} and \eqref{order:r_betabr_imputation}. Therefore, we obtain
\begin{align*}
    \E_{\mathbb S_{\tau_n}}\left[\|\bnabla_{\bettau}\bar\ell_5(\hbepibr,\hbemur,\bettaur)-\bnabla_{\bettau}\bar\ell_5(\bepibr,\hbemur,\bettaur)\|_\infty^2\right]=O_p\left(\frac{(\log d_1)^2}{N}r_{\pi}^2\right).
\end{align*}
It follows that 
\begin{align*}
    \|\bnabla_{\bettau}\bar\ell_5(\hbepibr,\hbemur,\bettaur)-\bnabla_{\bettau}\bar\ell_5(\bepibr,\hbemur,\bettaur)\|_\infty&=O_p\left(\frac{\log d_1}{\sqrt{N}}r_{\pi}\right).
\end{align*}
Combined with \eqref{eq:r_order_ell5}, we have
\begin{align*}
    \|\bnabla_{\bettau}\bar\ell_5(\hbepibr,\hbemur,\bettaur)\|_\infty&=O_p\left(\left\{1+r_{\pi}\sqrt{\log d_1}\right\}\sqrt{\frac{\log d_1}{N}}\right).
\end{align*}

\vspace{0.3em}

\emph{(c)} Assume that $\mu=\mu^*$, $\tau_n=\tau_n^*$ and $\tau=\tau^*$, we have
$$
\bnabla_{\betau}\bar\ell_6(\hbepiar,\hbeqr, \bemur,\bettaur,\betaur)-\bnabla_{\betau}\bar\ell_6(\bepiar, \beqr, \bemur,\bettaur,\betaur)=M^{-1}\sutaur (W_{\tau,i}^{1}+W_{\tau,i}^{2}),
$$
where
\begin{align*}
    W_{\tau,i}^{1}&=2A_i\{\exp(-\bX_i^\top\hbepiar)-\exp(-\bX_i^\top\bepiar)\}(\zeta_{i}-\varrho_{i})\bX_i,\\
    W_{\tau,i}^{2}&=2A_i \{\exp(-\bX_i^\top\hbepiar)\exp(\bS_i^\top\hbeqr)-\exp(-\bX_i^\top\bepiar)\exp(\bS_i^\top\beqr)\} \varepsilon_{i}\bX_i.\nonumber
\end{align*}
For any independent copy $(W_{\tau}^{1},W_{\tau}^{2})$ of $(W_{\tau,i}^{1},W_{\tau,i}^{2})$, since $\mu=\mu^*$, $\tau_n=\tau_n^*$ and $\tau=\tau^*$, we have $\E[W_{\tau}^{1}]=\E[W_{\tau}^{2}]=0$.

By (a) and (c) of Theorem~\ref{thm:r_nuisance}, we have $\|\hbepiar-\bepiar\|_2=O_p(r_{\pi})$ and $\|\hbeqr-\beqr\|_2=O_p(r_{\pi}+r_{q})$. 
Let $d_0=d_1$, $\bs{\alpha}=\bepiar$, $\bU=\bX$. It follows from Lemma~\ref{lem:usebound} (b) that
\begin{align}
    \normp{\exp(\bX_i^\top\hbepiar)-\exp(\bX_i^\top\bepiar)}{6}=O_p(r_{\pi}).\label{bound:error_g_pia}
\end{align}
Then
\begin{align}
    &\E_{\mathbb S_{\tau}}\left(\|M^{-1}\sutaur W_{\tau,i}^{1}\|_\infty^2\right)\overset{(i)}{\leq} M^{-1}(2e\log d_1-e)\E[\|W_{\tau,i}^{1}\|_\infty^2]\nonumber\\
    &\qquad\leq 4M^{-1}(2e\log d_1-e)\E\{|\exp(\bX_i^\top\hbepiar)-\exp(\bX_i^\top\bepiar)|^2 |-\zeta+\varrho|^2 \|\bX\|_\infty^2\} \nonumber\\
    &\qquad\overset{(ii)}{\leq} 4M^{-1}(2e\log d_1-e)\normp{\exp(\bX_i^\top\hbepiar)-\exp(\bX_i^\top\bepiar)}{6}^2\normp{-\zeta+\varrho}{6}^2\normp{\|\bX\|_\infty}{6}^2\nonumber\\
    &\qquad\overset{(iii)}{=} O_p\left(\frac{ s_{\pi_a}(\log d_1)^3}{N^2}\right),\label{eq:r_W_repeat2}
\end{align}
where (i) holds by Lemma~\ref{bound_expectation_sum}, (ii) holds by H\"older's inequality, (iii) holds by Lemma~\ref{lem:psi2norm} and \eqref{bound:error_g_pia}.
Under Assumption~\ref{cond:basic2} (a), by Lemma~\ref{lem:usebound} (b), we also have
\begin{align}
    &\left\|\exp(\bX_i^\top\hbepiar)\exp(\bS_i^\top\hbeqr)-\exp(\bX_i^\top\bepiar)\exp(\bS_i^\top\beqr)\right\|_{\P,6}\nonumber\\ 
    \leq&\left\|\exp(\bS^\top\bepiar)\{\exp(\bS^\top\hbeqr)-\exp(\bS^\top\beqr)\}\right\|_{\P,6}\nonumber+\left\|\exp(\bS^\top\hbeqr)\{\exp(\bS^\top\hbepiar)-\exp(\bS^\top\bepiar)\}\right\|_{\P,6}\nonumber\\
    =&O_p\left(r_{\pi}+r_{q}\right).\nonumber 
\end{align}
Therefore, we obtain
\begin{align*}
    &\E_{\mathbb S_{\tau}}\left(\|M^{-1}\sutaur W_{\tau,i}^{2}\|_\infty^2\right)\leq M^{-1}(2e\log d_1-e)\E[\|W_{\tau,i}^{2}\|_\infty^2]\\
    =& O_p\left((\log d_1)^2\frac{s_{\pi}\log d_1+s_{q}\log d}{N^2}\right).
\end{align*}
Together with \eqref{eq:r_W_repeat2}, we then have
\begin{align*}
    &\E_{\mathbb S_{\tau}}\left[\|\bnabla_{\betau}\bar\ell_6(\hbepiar,\hbeqr, \bemur,\bettaur,\betaur)-\bnabla_{\betau}\bar\ell_6(\bepiar, \beqr, \bemur,\bettaur,\betaur)\|_\infty^2\right]\\
    =&O_p\left(\frac{s_{\pi}(\log d_1)^3+s_{q}(\log d)(\log d_1)^2}{N^2}\right).
\end{align*}
It follows that
\begin{align*}
    \|\bnabla_{\betau}\bar\ell_6(\hbepiar,\hbeqr, \bemur,\bettaur,\betaur)-\bnabla_{\betau}\bar\ell_6(\bepiar, \beqr, \bemur,\bettaur,\betaur)\|_\infty&=O_p\left((r_{\pi}+r_{q})\frac{\log d_1}{\sqrt{N}}\right).
\end{align*}
Combined with \eqref{eq:r_order_ell6}, we have
\begin{align*}
    \|\bnabla_{\betau}\bar\ell_6(\hbepiar,\beqr, \hbemur,\bettaur,\betaur)\|_\infty&=O_p\left(\left\{1+\left(r_{\pi}+r_{q}\right)\sqrt{\log d_1}\right\}\sqrt{\frac{\log d_1}{N}}\right).
\end{align*}
\end{proof}

\begin{proof}[Proof of Theorem~\ref{thm:r_nuisance'}]
\emph{(a)} Suppose $\mu=\mu^*$. By Lemma~\ref{lem:r_gradient'} (a) and \eqref{cond:r_rate2}, we have
\begin{align*}
    \|\bnabla_{\bemu}\bar\ell_4(\hbepiar,\hbeqr,\bemur)\|_\infty&=O_p\left(\sqrt{\frac{\log d}{N}}\right).
\end{align*}
By Lemma~\ref{lem:r_useevent}, $\P_{\mathbb S_{\pi}\cup \mathbb S_{q^*}}(\|\hbepibr-\bepibr\|_\infty, \|\hbeqr-\beqr\|_\infty\leq 1)=1-o(1)$. Further by Lemma~\ref{lem:r_rsc}, when $\|\hbepibr-\bepibr\|_\infty,\|\hbeqr-\beqr\|_\infty\leq 1$, \eqref{eq:r_RSC4} holds, such that by Corollary 9.20 of \cite{HighDimensionalStatisticsNonAsymptotic2019}, we have
$$
    \|\hbemur-\bemur\|_2=O_p\left(r_{\mu}\right),\;\; \|\hbemur-\bemur\|_1=O_p\left(\sqrt{s_{\mu}}r_{\mu}\right).
$$

\vspace{0.3em}

\emph{(b)} Suppose $\tau_n=\tau_n^*$ and $\mu=\mu^*$. By Lemma~\ref{lem:r_gradient'} and \eqref{cond:r_rate2},
\begin{align*}
    \|\bnabla_{\bettau}\bar\ell_5(\hbepibr,\bemur,\bettaur)\|_\infty&=O_p\left(\sqrt{\frac{\log d_1}{N}}\right).
\end{align*}
We define event 
\begin{equation}
    \mathcal B_{10}:=\left\{ \|\bnabla_{\bettau}\bar\ell_5(\hbepibr,\bemur,\bettaur)\|_\infty\leq \lambda_{\tau_n}/2\right\}.\label{def:B10}
\end{equation}
For any $t>0$, we can choose a proper $\lambda_{\tau_n}\asymp \sqrt{\log d/N}$ such that $\P_{\mathbb S_{\pi}\cup\mathbb S_{q} \cup \mathbb S_{\tau_n}}(\mathcal B_{10})\geq 1-t$.
Recall that by (a), (b) of Theorem~\ref{thm:r_nuisance} and Lemma~\ref{lem:r_rsc}, we have
$\P_{\mathbb S_{\pi}\mathbb{S}_{\tau_n}}(\mathcal B_7)\geq 1-o(1)-c_1\exp(-c_2M)=1-o(1)$, where $\mathcal B_7$ is defined in \eqref{def:B7}. For the remaining proof of (c), let $\bDelta=\hbettaur-\bettaur$. Similar to the proof of Lemma~\ref{lem:r_imputation} (i), we have, on the event $\mathcal B_{10}$,
\begin{align*}
    2\delta\bar\ell_5(\hbepibr, \hbemur,\hbettaur,\bDelta)+\lambda_{\tau_n}\|\bDelta\|_1\leq 4\lambda_{\tau_n}\|\bDelta_{S_{\bettaur}}\|_1+2R_6^{(1)},
\end{align*}
where
\begin{align*}
    R_6^{(1)}&:= \{\bnabla_{\bettau}\bar\ell_5(\hbepibr, \hbemur,\hbettaur)-\bnabla_{\bettau}\bar\ell_5(\hbepibr, \bemur,\bettaur)\}^\top\bDelta\\
    &=\frac2M\suttaur (1-A_i)\exp(\bX_i^\top\hbepibr)\bS_i^\top(\bemur-\hbemur)\bX_i^\top\bDelta.
\end{align*}
By the fact that $2ab\leq a^2/2+2b^2$, we have
\begin{equation}
    |R_6^{(1)}|\leq \frac{1}{2}\delta\bar\ell_5(\hbepibr,\hbemur,\bettaur,\bDelta)+2R_{7}^{(1)},\label{rela:c1R6R7}
\end{equation}
where
\begin{align*}
    R_{7}^{(1)}&=\frac1{M}\sutaur \exp(\bX_i^\top\hbepibr)\big[\bS_i^\top(\bemur-\hbemur)\big]^2.
\end{align*}
Note that 
\begin{align*}
    \E_{\mathbb S_{\tau_n}}[R_{7}^{(1)}]
    &\leq \normp{\exp(\bX^\top\hbepibr)}{3}\normp{\bS^\top(\bemur-\hbemur)}{6}^2,
\end{align*}
Analogously as in \eqref{order:r_mu_imputation}, by (b) of Theorem~\ref{thm:r_nuisance'} and Lemma~\ref{lem:usebound}, we have
\begin{align*}
    \normp{\bS^\top(\hbemur-\bemur)}{6}&=O_p(\|\hbemur-\bemur\|_2)=O_p(r_{\mu}).
\end{align*}
By Lemmma \ref{lem:r_useevent}, $\|\exp(\bX^\top\hbepia)\|_{\P,r'}\leq C_1$ holds on $\mathcal{E}_2$ for some constant $C_2>0$, where $\mathcal{E}_2$ is defined in \eqref{def:E2}. Furthurmore, $\mathcal{E}_2$ occurs with probability approaching 1. Hence, 
\begin{align}
    \E_{\mathbb S_{\tau}}[R_{7}^{(1)}]=O_p\left(r_{\mu}^2\right)\text{ and }R_{7}^{(1)}=O_p\left(r_{\mu}^2\right).\label{bound:cR7}
\end{align}
Similar to \eqref{eq:r_taucone}, it follows from \eqref{rela:c1R6R7} that
\begin{equation}\label{eq:rc1_tau_n_cone}
    \|\bDelta\|_1\leq4\sqrt{s_{\tau_n}}\|\bDelta\|_2+2\lambda_{\tau}^{-1}R_{7}^{(1)}.
\end{equation}
condition on event $\mathcal B_{10}$.
Now, conditional on $\mathcal{E}_2\cap\mathcal B_7\cap\mathcal B_{10}$, where $\mathcal B_7$ and $\mathcal B_{10}$ are defined in \eqref{def:B9} and \eqref{def:B10}, respectively. Analogously as the justification of \eqref{eq:rtaukey}, for large enough $N$, we have
\begin{align*} 
    4\lambda_{\tau_n}\sqrt{s_{\tau_n}}\|\bDelta\|_2+2R_{7}^{(1)}
    \geq \frac{\kappa_1}{2}\|\bDelta\|_2^2-8\kappa_2(R_{7}^{(1)})^2\frac{ \log d_1 }{M\lambda_{\tau_n}^2}.
\end{align*}
Since the selected tuning parameter $\lambda_{\tau_n}\asymp\sqrt{\log d_1/N}$, it follows from Lemma~\ref{lemma:sol} that
\begin{align*} 
    \|\bDelta\|_2
    \leq&\frac{8\lambda_{\tau}\sqrt{s_{\tau_n}}}{\kappa_1}+\sqrt{16(R_{7}^{(1)})^2\frac{ \kappa_2\log d }{\kappa_1 M\lambda_{\tau_n}^2}+\frac{4R_{7}^{(1)}}{\kappa_1}}\\
    \overset{(i)}{=}&O_p\left(r_{\tau_n}+r_{\mu}^2+r_{\mu}\right)
    =O_p\left(r_{\mu}+r_{\tau_n}\right),
\end{align*}
where (i) holds by $\lambda_{\tau_n}\sqrt{s_{\tau_n}}\asymp\sqrt{s_{\tau_n}\log d_1/N}$ and \eqref{bound:cR7}. In addition, it follows from \eqref{eq:rc1_tau_n_cone} that 
\begin{align*}
    \|\bDelta\|_1=O_p\left(s_{\mu}\sqrt\frac{(\log d)^2}{N\log d_1}+s_{\tau_n}\sqrt\frac{\log d_1}{N}\right).
\end{align*}

\vspace{0.3em}

\emph{(c)} Suppose $\mu=\mu^*$, $\tau_n=\tau_n^*$ and $\tau=\tau^*$. By Lemma~\ref{lem:r_gradient'} and \eqref{cond:r_rate2},
\begin{align*}
    \|\bnabla_{\betau}\bar\ell_6(\hbepiar,\hbeqr, \bemur,\bettaur,\betaur)\|_\infty&=O_p\left(\sqrt{\frac{\log d_1}{N}}\right).
\end{align*}
We define event 
\begin{equation}
    \mathcal B_{11}:=\left\{ \|\bnabla_{\betau}\bar\ell_6(\hbepiar,\hbeqr, \bemur,\bettaur,\betaur)\|_\infty\leq \lambda_{\tau}/2\right\},\label{def:B11}
\end{equation}
for any $t>0$, we can choose a proper $\lambda_{\tau}\asymp \sqrt{\log d/N}$ such that $\P_{\mathbb S_{\pi}\cup\mathbb S_{q} \cup \mathbb S_{\tau}}(\mathcal B_{11})\geq 1-t$.
holds with probability at least $1-t$. 
By (a), (b) of Theorem~\ref{thm:r_nuisance} and Lemma~\ref{lem:r_rsc}, we have
$\P_{\mathbb S_{\pi}\cup\mathbb{S}_{\tau}}(\mathcal B_9)\geq 1-o(1)-c_1\exp(-c_2M)=1-o(1)$, where $\mathcal B_9$ is defined in \eqref{def:B9}. For the remaining proof of (d), let $\bDelta=\hbetaur-\betaur$. On the event $\mathcal B_{11}$, it holds that
\begin{align*}
    2\delta\bar\ell_6(\hbepiar,\hbeqr,\hbemur,\hbettaur, \betaur,\bDelta)+\lambda_{\tau}\|\bDelta\|_1\leq 4\lambda_{\tau}\|\bDelta_{\betaur,S_{\betaur}}\|_1+2R_8^{(1)},
\end{align*}
where
\begin{align*}
    R_8^{(1)}&:= \{\bnabla_{\betau}\bar\ell_6(\hbepiar,\hbeqr, \hbemur,\hbettaur,\betaur)-\bnabla_{\betau}\bar\ell_6(\hbepiar, \hbeqr, \bemur,\bettaur,\betaur)\}^\top\bDelta\\
    &=\frac2M\sutaur A_i\big[\exp(-\bX_i^\top\hbepiar)\{\bX_i^\top\betaur-\bX_i^\top\hbettaur-\exp(\bS_i^\top\hbeqr)(Y-\bS_i^\top\hbemur)\}\\
    &\quad-\exp(-\bX_i^\top\hbepiar)\{\bX_i^\top\betaur-\bX_i^\top\bettaur-\exp(\bS_i^\top\hbeqr)(Y-\bS_i^\top\bemur)\}\big]\bX_i^\top\bDelta.
\end{align*}
By the fact that $2ab\leq a^2/2+2b^2$, we have
\begin{align*}
    |R_8^{(1)}|\leq \frac{1}{2}\delta\bar\ell_6(\hbepiar,\hbeqr,\hbemur,\hbettaur,\betaur,\bDelta)+2R_{9}^{(1)},
\end{align*}
where
\begin{align*}
    R_{9}^{(1)}&=\frac1{M}\sutaur \exp(-\bX_i^\top\hbepiar)\big[\bX_i^\top(\bettaur-\hbettaur)+\exp(\bS_{i}^\top\hbeqr)\bS_{i}^\top(\hbemur-\bemur)\big]^2.
\end{align*}
Note that 
\begin{align}
    \E_{\mathbb S_{\tau}}[R_{9}^{(1)}]&=\E_{\mathbb S_{\tau}}\left[\exp(-\bX^\top\hbepiar)(Q_7^{(1)}+Q_8^{(1)})^2\right]\nonumber\\
    &\leq \normp{\exp(-\bX^\top\hbepiar)}{3}\left(\normp{Q_7^{(1)}}{6}^2+\normp{Q_8^{(1)}}{6}^2\right),\label{bound:R9^1}
\end{align}
where
\begin{align*}
    Q_7^{(1)}&=\exp(\bS^\top\hbeqr)\bS^\top(\hbemur-\bemur),\\
    Q_8^{(1)}&=-\bX^\top(\hbettaur-\bettaur).
\end{align*}
Analogously as the justifications of \eqref{order:r_mu_imputation} and \eqref{order:r_tau_n_imputation}, by (b) and (c) of Theorem~\ref{thm:r_nuisance'} and Lemma~\ref{lem:usebound}, we have
\begin{align*}
    \normp{\bS^\top(\hbemur-\bemur)}{12}&=O_p(\|\hbemur-\bemur\|_2)=O_p(r_{\mu}),\\
    \normp{\bX^\top(\hbettaur-\bettaur)}{6}&=O_p(\|\hbettaur-\bettaur\|_2)=O_p(r_{\tau_n}).
\end{align*}
Hence, by Lemma~\ref{lem:r_useevent}, H\"older's inequality and Minkowski's inequality, 
\begin{align*}
    \normp{Q_7^{(1)}}{6}&\leq \normp{\exp(\bS^\top\hbeqr)}{12}\normp{\bS^\top(\hbemur-\bemur)}{12}=O_p(r_{\mu}),\\
    \normp{Q_8^{(1)}}{6}&\leq \normp{\bX^\top(\hbettaur-\bettaur)}{6}=O_p(r_{\tau_n}).
\end{align*}
Together with \eqref{bound:R9^1}, we further have
\begin{align}
    \E_{\mathbb S_{\tau}}[R_{9}^{(1)}]=O_p\left(r_{\mu}^2+r_{\tau_n}^2\right)\text{ and }R_{9}^{(1)}=O_p\left(r_{\mu}^2+r_{\tau_n}^2\right).\label{bound:cR9}
\end{align}
Similar to \eqref{eq:r_taucone}, by \eqref{rela:c1R6R7}, we have that
\begin{equation}\label{eq:rc1_taucone}
    \|\bDelta\|_1\leq4\sqrt{s_{\tau}}\|\bDelta\|_2+2\lambda_{\tau}^{-1}R_{7}^{(1)}
\end{equation}
condition on event $\mathcal B_{11}$.
Now, conditional on $\mathcal B_9\cap\mathcal B_{11}$, where $\mathcal B_9$ and $\mathcal B_{11}$ are defined in \eqref{def:B9} and \eqref{def:B11}, respectively. Then analogously as in \eqref{eq:rtaukey}, we have that for large enough $N$, 
\begin{align*} 
    4\lambda_{\tau}\sqrt{s_{\tau}}\|\bDelta\|_2+2R_{9}^{(1)}
    \geq \frac{\kappa_1}{2}\|\bDelta\|_2^2-8\kappa_2(R_{9}^{(1)})^2\frac{ \log d_1 }{M\lambda_{\tau}^2}.
\end{align*}
Since the selected tuning parameter $\lambda_{\tau}\asymp\sqrt{\log d_1/N}$, it follows from Lemma~\ref{lemma:sol} that
\begin{align*} 
    \|\bDelta\|_2
    \leq&\frac{8\lambda_{\tau}\sqrt{s_{\tau}}}{\kappa_1}+\sqrt{16(R_{9}^{(1)})^2\frac{ \kappa_2\log d }{\kappa_1 M\lambda_{\tau}^2}+\frac{4R_{9}^{(1)}}{\kappa_1}}\\
    \overset{(i)}{=}&O_p\left(r_{\tau}+r_{\mu}^2+r_{\tau_n}^2+r_{\mu}+r_{\tau_n}\right)
    =O_p\left(r_{\mu}+r_{\tau_n}+r_{\tau}\right),
\end{align*}
where (i) holds by $\lambda_{\tau}\sqrt{s_{\tau}}\asymp\sqrt{s_{\tau}\log d_1/N}$ and \eqref{bound:cR9}.
By \eqref{eq:rc1_taucone}, we further have 
\begin{align*}
    \|\bDelta\|_1=O_p\left(s_{\mu}\sqrt\frac{(\log d)^2}{N\log d_1}+s_{\tau_n}\sqrt\frac{\log d_1}{N}+s_{\tau}\sqrt\frac{\log d_1}{N}\right).
\end{align*}
\end{proof} 

\subsection{Proofs of auxiliary Lemmas}
\begin{proof}[Proof of Lemma~\ref{lem:momentmatrix}]
\emph{(i)} Under Assumption~\ref{cond:basic2}, note that $\E[A\bS\bS^\top]=\Var[A\bS]+\E[A\bS]^\top\E[A\bS]$ and $\E[A\bS]^\top\E[A\bS]$ is non-negative definite, we have $\lambda_{\min}(\E[A\bS\bS^\top])\geq \lambda_{\min}(\Var[A\bS])\geq c_{\min}$. Additionally, for any $\bs{v}\in\R^d$ and $k\in\N$, 
\begin{align*}
    \E[|A\bs{v}\bS|^{2k}]\leq \E[|\bs{v}\bS|^{2k}]\overset{(i)}{\leq} 2(\sigma_{\bS}\|\bs{v}\|_2)^{2k}\Gamma(k+1),
\end{align*} 
where (i) holds by Lemma~\ref{lem:psi2norm}. Then by Lemma~\ref{lem:psi2norm}, 
\begin{align*}
    \normg{A\bs{v}\bS}\leq 2\sigma_{\bS}\|\bs{v}\|_2\text{, }\forall\bs{v}\in\R^d.
\end{align*}
\vspace{0.3em}

\emph{(ii)} Note that $\E[A\bX\bX^\top]$ is the upper-left block of $\E[A\bS\bS^\top]$. By (i), it follows that $\E[A\bX\bX^\top]\geq \E[A\bS\bS^\top]\geq \lambda_{\min}(\Var[A\bS])\geq c_{\min}$. Additionally, for any $k\in\N$ and $\bs{v}\in\R^{d_1}$, 
\begin{align*}
    \E[|A\bs{v}\bX|^{2k}]\leq \E[|\bs{v}\bX|^{2k}] \leq \E[|\bar{\bs{v}}\bS|^{2k}] \overset{(i)}{\leq} 2(\sigma_{\bS}\|\bs{v}\|_2)^{2k}\Gamma(k+1),
\end{align*} 
where $\bar{\bs{v}}=(\bs{v}^\top,\bz^\top)^\top\in\R^d$ and (i) holds by Lemma~\ref{lem:psi2norm} and the fact $\|\bar{\bs{v}}\|_2=\|\bs{v}\|_2$. Then by Lemma~\ref{lem:psi2norm}, 
\begin{align*}
    \normg{A\bs{v}\bX}\leq 2\sigma_{\bS}\|\bs{v}\|_2\text{, }\forall\bs{v}\in\R^d.
\end{align*}
\vspace{0.3em}

\emph{(iii)} Under Assumption~\ref{cond:basic2}, note that $\E[(1-A)\bX\bX^\top]=\Var[(1-A)\bX]+\E[(1-A)\bX]^\top\E[(1-A)\bX]$ and $\E[(1-A)\bX]^\top\E[(1-A)\bX]$ is non-negative definite, we have $\lambda_{\min}(\E[(1-A)\bX\bX^\top])\geq \lambda_{\min}(\Var[(1-A)\bX])\geq c_{\min}$. Additionally, for any $\bs{v}\in\R^{d_1}$ and $k\in\N$, 
\begin{align*}
    \E[|(1-A)\bs{v}\bX|^{2k}]\leq \E[|\bs{v}\bX|^{2k}]\overset{(i)}{\leq} 2(\sigma_{\bS}\|\bs{v}\|_2)^{2k}\Gamma(k+1),
\end{align*} 
where (i) holds by Lemma~\ref{lem:psi2norm}. Then by Lemma~\ref{lem:psi2norm}, 
\begin{align*}
    \normg{(1-A)\bs{v}\bX}\leq 2\sigma_{\bX}\|\bs{v}\|_2\text{, }\forall\bs{v}\in\R^d.
\end{align*}
\end{proof}

\begin{proof}[Proof of Lemma~\ref{lem:usebound}]
\;\;

\emph{(a)}
Let $d_0=d_1$, $\bU=\bX$. Note that $\bX$ is sub-vector of random vector $\bS$.
First, under Assumption~\ref{cond:basic2} (b), $\bU^\top(\balpha-\halpha)$ is a sub-Gaussian random variable with $\normg{\bU^\top(\halpha-\balpha)}\leq \sigma_{\bS}\|\balpha-\halpha\|_2$. Then by Lemma~\ref{lem:psi2norm}, we have
\begin{align*}
    \left\|\bU^\top(\halpha-\balpha)\right\|_{\P,r}=O_p(\normg{\bU^\top(\balpha-\halpha)})=O_p(\|\balpha-\halpha\|_2)=O_p(G_N).
\end{align*}
Let $d_0=d$, $\bU=\bS$. Then under Assumption~\ref{cond:basic2} (b), similar justification still holds. 
\vspace{0.3em}

\emph{(b)} Let $d_0=d_1$, $\bU=\bX$ and $\balpha$ be $\bepiar$ or $\bepibr$. 
Then we have, with some $v\in[0,1]$, 
\begin{align*}
    \normp{\exp(-\bU^\top\balpha)-\exp(-\bU^\top\halpha)}{r}&=\normp{\exp(-\bU^\top\balpha)\big\{1-\exp\{\bU^\top(\balpha-\halpha)\}\big\}}{r}\\
    &\overset{(i)}{\leq}(c_0^{-1}-1)\normp{1-\exp\{\bU^\top(\balpha-\halpha)\}}{r}\\
    &\overset{(ii)}{=}(c_0^{-1}-1)\normp{\exp\big\{-v\bU^\top(\balpha-\halpha)\big\}\bU^\top(\balpha-\halpha)}{r},
\end{align*}
where (i) holds by that $\P(\exp(-\bU^\top\balpha)\leq c_0^{-1}-1)=1$ under Assumption~\ref{cond:basic2} (a); (ii) holds by Taylor's Theorem. 

By the monotonicity of $\exp(\cdot)$, it holds for any $v\in[0,1]$ that
\begin{align*}
   \exp\big\{-v\bU^\top(\balpha-\halpha)\big\}\leq 1+\exp\big\{\bU^\top(\halpha-\balpha)\big\}.
\end{align*}
Additionally, under Assumption~\ref{cond:basic2}, by Lemma~\ref{lem:psi2norm}, we have 
\begin{align*}
    \normp{\exp\big\{\bU^\top(\halpha-\balpha)\big\}}{2r}=\E\big[\exp\big\{\bU^\top(2r(\halpha-\balpha))\big\}\big]^{1/2r}\leq \exp(4r\sigma_{\bS}^2\|\halpha-\balpha\|_2^2).
\end{align*}
Note that by (a), we have $\normp{\bU^\top(\balpha-\halpha)}{2r}=O_p(\|\balpha-\halpha\|_2)$. Hence by H\"older's inequality, for any $v\in[0,1]$, as $N$, $d_1$, $d_2\to\infty$,,
\begin{align}
    &\normp{\exp\big\{-v\bU^\top(\balpha-\halpha)\big\}\bU^\top(\balpha-\halpha)}{r}\nonumber\\
    \leq&\normp{\bU^\top(\halpha-\balpha)}{r}+\normp{\exp\big\{\bU^\top(\halpha-\balpha)\big\}}{2r}\normp{\bU^\top(\balpha-\halpha)}{2r}=O_p(\|\balpha-\halpha\|_2).\label{order:For_useevent}
\end{align}
We then have
\begin{align*}
    \normp{\exp(-\bU^\top\balpha)-\exp(-\bU^\top\halpha)}{r}&=O_p(\|\balpha-\halpha\|_2).
\end{align*}
Similarly, it holds for any $v\in[0,1]$ that, 
\begin{align}
    &\normp{\exp\big\{v\bU^\top(\halpha-\balpha)\big\}\bU^\top(\balpha-\halpha)}{r}\nonumber\\
    \leq& \normp{\bU^\top(\halpha-\balpha)}{r}+\normp{\exp\big\{\bU^\top(\halpha-\balpha)\big\}}{2r}\normp{\bU^\top(\balpha-\halpha)}{2r}=O_p(\|\balpha-\halpha\|_2).\label{order:For_useevent_2}
\end{align}
Then for some $v'\in[0,1]$,
\begin{align*}
    \normp{\exp(\bU^\top\balpha)-\exp(\bU^\top\halpha)}{r}&=\normp{\exp(\bU^\top\balpha)\big\{1-\exp\{\bU^\top(\halpha-\balpha)\}\big\}}{r}\\
    &\leq(c_0^{-1}-1)\normp{1-\exp\{\bU^\top(\halpha-\balpha)\}}{r}\\
    &\leq(c_0^{-1}-1)\normp{\exp\big\{v'\bU^\top(\halpha-\balpha)\big\}\bU^\top(\balpha-\halpha)}{r}\\
    &=O_p(\|\balpha-\halpha\|_2).
\end{align*}
Let $d_0=d$, $\bU=\bS$ and $\balpha=\beqr$. Note that $\P(c_0^2\leq\exp(\bS^\top\beqr)\leq c_0^{-2})=1$, then similar justification still holds.  
\end{proof}

\begin{proof}[Proof of Lemma~\ref{lem:tilde}]
By Lemma~\ref{lem:psi2norm}, for any positive constant $r>0$, it holds that
\begin{align*}
    \normp{\varepsilon}{r}=O_p(1) \text{, } \normp{\zeta}{r}=O_p(1)\text{, }\normp{\varrho}{r}=O_p(1)
\end{align*}
under Assumption~\ref{cond:basic2}.
Additionally, by Lemma~\ref{lem:usebound}, as $N$, $d_1$, $d_2\to\infty$,
\begin{align*}
    \normp{\bS^\top(\bemur-\hbemur)}{r}&=O_p(\|\bemur-\hbemur\|_2),\\
    \normp{\bX^\top(\betaur-\hbetaur)}{r}&=O_p(\|\betaur-\hbetaur\|_2),\\
    \normp{\bX^\top(\bettaur-\hbettaur)}{r}&=O_p(\|\bettaur-\hbettaur\|_2).
\end{align*}
Hence, if $\|\bemur-\tbemur\|_2=O_p(1)$, by Minkowski's inequality,  
\begin{align*}
    \normp{\tvarepsilon}{r}&\leq\normp{\varepsilon}{r}+c_0^{-1}\normp{(1-t_1)\bS^\top(\bemur-\hbemur)}{r}=O_p(1).
\end{align*}
Additionally, if $\|\betaur-\tbetaur\|_2=O_p(1)$, we have
\begin{align*}
    \normp{\tzeta}{r}&\overset{(i)}{\leq}\normp{\zeta}{r}+\normp{(1-t_2)\bX^\top(\betaur-\hbetaur)}{r}+\normp{t_2\bS^\top(\bemur-\hbemur)}{r}=O_p(1).
\end{align*}
where (i) holds by $|A|\leq 1$.
Similarly, if $\|\bemur-\tbemur\|_2=O_p(1)$ and $\|\bettaur-\tbettaur\|_2=O_p(1)$, under Assumption~\ref{cond:basic2}, we have
\begin{align*}
    \normp{\tvarrho}{r}&\leq\normp{\varrho}{r}+\normp{(1-t_3)\bX^\top(\bettaur-\hbettaur)}{r}+\normp{(1-A)t_3\bS^\top(\bemur-\hbemur)}{r}=O_p(1).
\end{align*}
\end{proof}

\begin{proof}[Proof of Lemma~\ref{lem:r_useevent}]
By (a) of Theorem~\ref{thm:r_nuisance}, we have $\|\hbepiar-\bepiar\|_2=O_p(r_{\pi})=o_p(1)$, hence 
\begin{align*}
    \P(\|\hbepiar-\bepiar\|_2\leq 1)=1-o(1).
\end{align*}
For any $\bbeta\in\{w\bepiar+(1-w)\hbepiar:w\in[0,1]\}$ and $r>2$, it holds for some $v\in[0,1]$ that 
\begin{align*}
    \normp{\exp(-\bX^\top\bepiar)-\exp(-\bX^\top\bbeta)}{r}&=\normp{\exp(-\bX^\top\bepiar)\big\{1-\exp\{\bX^\top(\bepiar-\bbeta)\}\big\}}{r}\\
    &\leq(c_0^{-1}-1)\normp{1-\exp\{\bX^\top(\bepiar-\bbeta)\}}{r}\\
    &\overset{(i)}{\leq}(c_0^{-1}-1)\normp{\exp\big\{-v\bX^\top(\bepiar-\bbeta)\big\}\bX^\top(\bepiar-\bbeta)}{r},
\end{align*}
where (i) holds by Taylor's theorem. Let $r=12$, it follows from \eqref{order:For_useevent} that 
\begin{align*}
    \normp{g^{-1}(\bX^\top\bepiar)-g^{-1}(\bX^\top\bbeta)}{12}&=\normp{\exp(-\bX^\top\bepiar)-\exp(-\bX^\top\bbeta)}{12}\\
    &=O_p(\|\bepiar-\hbepiar\|_2)=o_p(1).
\end{align*}
Note that under Assumption~\ref{cond:basic2}, $\P(g^{-1}(\bX^\top\bepiar)\leq c_0^{-1})=1$, then it holds with some constants $K_1>0$ that
\begin{align*}
    \normp{g^{-1}(\bX^\top\bbeta)}{12}\leq K_1.
\end{align*}
Therefore, we have $\P_{\mathbb S_{\pi}}(\mathcal{E}_1)=1-o(1)$ as $N$, $d_1$, $d_2\to\infty$,.
Moreover, note that $\P(g^{-1}(\bX^\top\bepibr)\leq c_0^{-1})=1$ and $\P(\exp(\bS^\top\beqr)\leq c_0^{-2})=1$ under Assumption~\ref{cond:basic2}, similar justifications also hold for $\mathcal{E}_2$ and $\mathcal{E}_3$. We conclude that as $N$, $d_1$, $d_2\to\infty$,
\begin{align*}
    \P_{\mathbb S_{\pi}\cup \mathbb S_{q}}(\mathcal{E}_1\cap\mathcal{E}_2\cap\mathcal{E}_3)=1-o(1).
\end{align*}

Next, condition on $\mathcal{E}_1$, by the fact that $\exp(-u)=g^{-1}(u)-1< g^{-1}(u)$ and $\normp{U}{r'}\leq \normp{U}{12}$ for any random variable $U\in\R$ and $0<r'<12$, we have
\begin{align*}
    \normp{g^{-1}(\bX^\top\bbeta)}{r'}\leq K_1\text{ and }\normp{\exp(-\bX^\top\bbeta)}{r'}\leq K_2,
\end{align*}
with some $K_1,K_2>0$. Additionally, with some $v'\in[0,1]$, we have
\begin{align*}
    \normp{\exp(\bX^\top\bepiar)-\exp(\bX^\top\bbeta)}{r}&=\normp{\exp(\bX^\top\bepiar)\big\{1-\exp\{\bX^\top(\bbeta-\bepiar)\}\big\}}{r}\\
    &\leq(c_0^{-1}-1)\normp{1-\exp\{\bX^\top(\bbeta-\bepiar)\}}{r}\\
    &\leq(c_0^{-1}-1)\normp{\exp\big\{v'\bX^\top(\bbeta-\bepiar)\big\}\bX^\top(\bepiar-\bbeta)}{r}.
\end{align*}
By \eqref{order:For_useevent_2}, we have 
\begin{align*}
    \normp{\exp(\bX^\top\bepiar)-\exp(\bX^\top\bbeta)}{r}=o_p(1),
\end{align*}
hence we also have 
\begin{align*}
    \normp{\exp(\bX^\top\bbeta)}{r'}\leq\normp{\exp(\bX^\top\bepiar)}{r'}+\normp{\exp(\bX^\top\bepiar)-\exp(\bX^\top\bbeta)}{r'}\leq K_3,
\end{align*}
with some $K_3>0$. Let $C_1=\max\{K_1,K_2,K_3\}$, we obtain
\begin{align*}
    \left\|g^{-1}(\bX^\top\bepi)\right\|_{\P,r'}\leq C_1,\quad\left\|\exp(-\bX^\top\bepi)\right\|_{\P,r'}\leq C_1,\quad\left\|\exp(\bX^\top\bepi)\right\|_{\P,r'}\leq C_1.   
\end{align*}
Repeating the same procedure, we have that \eqref{event2} and \eqref{event3} hold as well.
\end{proof}

\begin{proof}[Proof of Lemma~\ref{lem:boundfornonsquareimputation}]
\;\;

For $Q_1$, we have 
\begin{align*}
    |Q_1|&\overset{(i)}{\leq} \frac1{M}\suqr \left|\left[\frac{1}{g(\bX_i^\top\hbepiar)}-\frac{1}{g(\bX_i^\top\bepiar)}\right]\exp(\bS_i^\top\beqr)\bS_i^\top\bDelta\right|\\
    &\overset{(ii)}{\leq} c_0^{-2}\frac1{M}\suqr \left|\left[\frac{1}{g(\bX_i^\top\hbepiar)}-\frac{1}{g(\bX_i^\top\bepiar)}\right]\bS_i^\top\bDelta\right|\\
    &\overset{(iii)}{\leq} c_0^{-2}\sqrt{\frac1{M}\suqr\left[\frac{1}{g(\bX_i^\top\hbepiar)}-\frac{1}{g(\bX_i^\top\bepiar)}\right]^2}\sqrt{\frac1{M}\suqr(\bS_i^\top\bDelta)^2},
\end{align*}
where (i) holds by Jensen's inequality and that $|A|<1$; (ii) holds under Assumption~\ref{cond:basic2} (a); (iii) holds by Cauchy-Schwarz inequality. 

Note that, by Lemma~\ref{lem:usebound} and Theorem~\ref{thm:r_nuisance} (a), we have
\begin{align*}
    \E_{\mathbb{S}_{q}}\left[\frac1{M}\suqr\left[\frac{1}{g(\bX_i^\top\hbepiar)}-\frac{1}{g(\bX_i^\top\bepiar)}\right]^2\right]&=\E\left[\left[\frac{1}{g(\bX^\top\hbepiar)}-\frac{1}{g(\bX^\top\bepiar)}\right]^2\right]=O_p(r_{\pi}^2).
\end{align*} 
Hence by Chebyshev's inequality, 
\begin{align*}
    \frac1{M}\suqr\left[\frac{1}{g(\bX_i^\top\hbepiar)}-\frac{1}{g(\bX_i^\top\bepiar)}\right]^2=O_p(r_{\pi}^2).
\end{align*}
Additionally, by Lemma~C.7 of \cite{DynamicTreatmentEffects2023}, we have
\begin{align*}
\sup_{\bDelta\in\R^{d}/\{\bz\}}\frac{M^{-1}\sum_{i\in\I_{q^*}}(\bS_{i}^\top\bDelta)^2}{N^{-1}\log d\|\bDelta\|_1^2+\|\bDelta\|_2^2}=O_p(1).
\end{align*}
It follows that, for any $\bDelta\in\R^d$, 
\begin{align*}
    \frac1{M}\suqr(\bS_i^\top\bDelta)^2\leq \sup_{\bDelta\in\R^{d}/\{\bz\}}M^{-1}\sum_{i\in\I_{q^*}}(\bS_{i}^\top\bDelta)^2=O_p\left(\frac{\log d}{N}\|\bDelta\|_1^2+\|\bDelta\|_2^2\right),
\end{align*}
since when $\bDelta=0$, the left side becomes zero. We then conclude that 
\begin{align}
    |Q_1|=O_p\left(r_{\pi}\left(\sqrt{\frac{\log d}{N}}\|\bDelta\|_1+\|\bDelta\|_2\right)\right).
\end{align}
\vspace{0.3em}

\emph{(b)}
For $Q_2$, note that
\begin{align*}
    |Q_2|&\overset{(i)}{\leq} \frac1{M}\suqr \left|\left[\frac{1}{1-g(\bX_i^\top\hbepiar)}-\frac{1}{1-g(\bX_i^\top\bepiar)}\right]\exp(\bS_i^\top\beqr)\bS_i^\top\bDelta\right|\\
    &\overset{(ii)}{\leq} c_0^{-2}\sqrt{\frac1{M}\suqr\left[\frac{1}{1-g(\bX_i^\top\hbepiar)}-\frac{1}{1-g(\bX_i^\top\bepiar)}\right]^2}\sqrt{\frac1{M}\suqr(\bS_i^\top\bDelta)^2},
\end{align*}
where (i) holds by Jensen's inequality and that $|1-A|\leq 1$; (ii) follows by Cauchy-Schwarz inequality. Moreover, by Lemma~\ref{lem:usebound} and Theorem~\ref{thm:r_nuisance} (b), we have
\begin{align*}
    \E_{\mathbb{S}_{q}}\left[\frac1{M}\suqr\left[\frac{1}{1-g(\bX_i^\top\hbepiar)}-\frac{1}{1-g(\bX_i^\top\bepiar)}\right]^2\right]&=\E\left[\left[\exp(\bX^\top\hbepiar)-\exp(\bX^\top\bepiar)\right]^2\right]\\
    &=O_p(r_{\pi}^2).
\end{align*} 
The remaining part of proof is an analog of \emph{(a)}.
\end{proof}

\begin{proof}[Proofs of Lemma~\ref{lem:r_imputation}]
\;\;

\emph{(a)} Define   
\begin{align}
   \mathcal F(\bDelta)&:=\bar{\ell}_3(\bepia,\bepib,\beqr+\bDelta)+\lambda_{q}\|\beqr+\bDelta\|_1-\bar{\ell}_3(\bepia,\bepib,\beqr)-\lambda_{q}\|\beqr\|_1\nonumber\\
   &=\delta\bar\ell_3(\hbepiar,\hbepibr,\beqr,\bDelta)+\lambda_{q}\|\beqr+\bDelta\|_1+\bnabla_{\beq}\bar\ell_3(\hbepiar,\hbepibr,\beqr)^\top\bDelta-\lambda_{q}\|\beqr\|_1\nonumber\nonumber\\
   &=\delta\bar\ell_3(\hbepiar,\hbepibr,\beqr,\bDelta)+\lambda_{q}\|\beqr+\bDelta\|_1+\bnabla_{\beq}\bar\ell_3(\bepiar,\bepibr,\beqr)^\top\bDelta+R_1(\bDelta)-\lambda_{q}\|\beqr\|_1,
   \label{def:F_Delta}
\end{align}
where $\delta\bar\ell_3(\bepia,\bepib,\beq,\bDelta_q)$ is defined in \eqref{def:Taylor_error_3}. Then condition on $\mathcal B_1$, we have
\begin{align}
   2\mathcal F(\bDelta)&\overset{(i)}{\geq}2\delta\bar\ell_3(\hbepiar,\hbepibr,\beqr,\bDelta)+2\lambda_{q}\|\beqr+\bDelta\|_1-2\lambda_{q}\|\beqr\|_1-2|R_1(\bDelta)|-\lambda_{q}\|\bDelta\|_1 \nonumber
   \\ &\overset{(ii)}{\geq} 2\delta\bar\ell_3(\hbepiar,\hbepibr,\beqr,\bDelta)+2\lambda_{q}\|\bDelta_{q,S_{\beqr}^c}\|_1-2\lambda_{q}\|\bDelta_{q,S_{\beqr}}\|_1-2|R_1(\bDelta)|-\lambda_{q}\|\bDelta\|_1 \nonumber
   \\ &\overset{(iii)}{\geq} 2\delta\bar\ell_3(\hbepiar,\hbepibr,\beqr,\bDelta)+\lambda_{q}\|\bDelta\|_1-4\lambda_{q}\|\bDelta_{q,S_{\beqr}}\|_1-2|R_1(\bDelta)|,\label{bound:lower_F}
\end{align}
where (i) holds by $R_1(\bDelta)>-|R_1(\bDelta)|$ and the construction of $\mathcal B_1$; (ii) holds from $\|\beqr+\bDelta\|_1=\|\bbeta^*_{q,S_{\beqr}}+\bDelta_{q,S_{\beqr}}\|_1+\|\bDelta_{q,S_{\beqr}^c}\|_1$ and $\|\beqr\|_1=\|\bbeta^*_{q,S_{\beqr}}\|_1\leq\|\bbeta^*_{q,S_{\beqr}}+\bDelta_{S_{\beqr}}\|_1+\|\bDelta_{q,S_{\beqr}}\|_1$; (iii) follows from $\|\bDelta_{q,S_{\beqr}^c}\|_1=\|\bDelta\|_1-\|\bDelta_{q,S_{\beqr}}\|_1$. Since $\bDelta_q:=\hbeqr-\beqr$, by the construction of $\hbeq$, we further have $\mathcal F(\bDelta_q)\leq 0$, it follows that, on event $\mathcal B_1$,
\begin{align*}
   2\delta\bar\ell_3(\hbepiar,\hbepibr,\beqr,\bDelta_q)+\lambda_{q}\|\bDelta_q\|_1&\leq 4\lambda_{q}\|\bDelta_{q,S_{\beqr}}\|_1+2|R_1(\bDelta_q)|.
\end{align*}

\emph{(b)} Conditional on $\mathcal B_1\cap\mathcal B_2$, we have 
\begin{align*}
4\mathcal F(\bDelta)&\overset{(i)}{\geq}4\delta\bar\ell_3(\hbepiar,\hbepibr,\beqr,\bDelta)+2\lambda_{q}\|\bDelta\|_1-8\lambda_{q}\|\bDelta_{q,S_{\beqr}}\|_1-4|R_1(\bDelta)|\\
&\overset{(ii)}{\geq} 4\delta\bar\ell_3(\hbepiar,\hbepibr,\beqr,\bDelta)+\left(2\lambda_{q}-4cr_{\pi}\sqrt{\frac{\log d}{N}}\right)\|\bDelta\|_1-(4cr_{\pi}+8\lambda_{q}\sqrt{s_{q}})\|\bDelta\|_2.
\end{align*}
where (i) holds by \eqref{bound:lower_F}, (ii) holds by the construction of $\mathcal B_2$ and the fact that $\|\bDelta_{q,S_{\beqr}}\|_1\leq \sqrt{s_{q}}\|\bDelta_{q,S_{\beqr}}\|_2\leq \sqrt{s_{q}}\|\bDelta_{q}\|_2$. Under assumptions of Theorem~\ref{thm:r_nuisance}, we have $r_{\pi}=o(1)$. Note that $\lambda_{q}\asymp \sqrt{\log d/N}$. Then for some enough large $N_0>0$, $4cr_\pi\sqrt{\log d/N}\leq \lambda_{q}$ holds with any $N>N_0$. Therefore, 
\begin{align}
4\delta\bar\ell_3(\hbepiar,\hbepibr,\beqr,\bDelta)+\lambda_{q}\|\bDelta\|_1&\leq (4cr_{\pi}+8\lambda_{q}\sqrt{s_{q}})\|\bDelta\|_2+4\mathcal F(\bDelta),\label{bound:taylorerror}
\end{align}
Let $\bDelta=\bDelta_q$. Then note that $\delta\bar\ell_3(\hbepiar,\hbepibr,\beqr,\bDelta_q)\geq 0$, $F(\bDelta_q)\leq 0$, $\lambda_{q}\asymp \sqrt{\log d/N}$ and $r_{\pi}=\sqrt{s_{\pi}\log d_1/N}$, there exists some constant $k_0>0$, such that
\begin{align*}
\|\bDelta_q\|_1&\leq k_0\sqrt{\frac{s_{\pi}\log d_1}{\log d}+8\sqrt{s_{q}}}\|\bDelta_q\|_2=k_0\sqrt{\bar s_{q}}\|\bDelta_q\|_2.
\end{align*}

(c) We proof (iii) by contradiction. Let $\Ktil(k_0,\bar s_{q},1):=\Ctil(\bar s_{q},k_0)\cap \{\bDelta\in\R^d:\|\bDelta\|_2=1\}$. If $\|\bDelta_q\|_2\geq 1$, we write $u=\|\bDelta_q\|_2\geq 1$ and $\bDelta_0=\bDelta_q/u$. Then we have $\|\bDelta_0\|_2=1$ and 
\begin{align*}
\|\bDelta_0\|_1=\|\bDelta_q\|_1/\bDelta_q\|_2\leq k_0\sqrt{\bar s_{q}}=k_0\sqrt{\bar s_{q}}\|\bDelta_0\|_2,
\end{align*} 
hence $\bDelta_0\in \Ktil(k_0,\bar s_{q},1)$. Recall \eqref{def:F_Delta}, $\mathcal F(\cdot)$ is convex, we have 
\begin{align*}
\mathcal F(\bDelta_0)=\mathcal F\left(\frac{1}{u}\bDelta_q+(1-\frac{1}{u})\bz\right)\leq \frac{1}{u}\mathcal F(\bDelta_q)+\left(1-\frac{1}{u}\right)\mathcal F(\bz)\overset{(i)}{=}\frac{1}{u}\mathcal F(\bDelta_q)\overset{(ii)}{\leq} 0.
\end{align*} 
where (i) holds by $\mathcal F(\bz)=0$, (ii) holds by the construction of $\beqr$. 
However, if we admit the following statement: condition on $\mathcal B_1\cap \mathcal B_2 \cap \mathcal B_3$, 
\begin{align}
\mathcal{F}(\bDelta)>0\text{, } \forall \bDelta\in \Ktil(k_0,\bar s_{q},1),\label{contra:F_delta}
\end{align}
then it makes contradiction and thus $\|\bDelta_q\|_2\leq 1$. 

It ramains to proof the statement. For any $\bDelta\in \Ktil(k_0,\bar s_{q},1)$, we have $\|\bDelta\|_2=1$ and $\|\bDelta\|_1\leq k_0\sqrt{\bar s_{q}}$. Hence on event $\mathcal B_3$, 
\begin{align*}
\delta\bar\ell_3(\hbepiar,\hbepibr,\beqr,\bDelta)\geq\kappa_1\|\bDelta\|_2^2-\kappa_2\frac{\log d}{M}\|\bDelta\|_1^2\geq \kappa_1-\kappa_2k_0^2\frac{\bar s_{q}\log d}{M}.
\end{align*}
Condition on $\mathcal B_1\cap \mathcal B_2$, recall that \eqref{bound:taylorerror} holds for any $\bDelta\in\R^d$, we have 
\begin{align*}
 4\mathcal F(\bDelta_q)& \geq 4\delta\bar\ell_3(\hbepiar,\hbepibr,\beqr,\bDelta_q)+2\lambda_{q}\|\bDelta_q\|_1-4cr_{\pi}-8\lambda_{q}\sqrt{s_{q}}\\
 &\geq \kappa_1-\kappa_2k_0^2\frac{\bar s_{q}\log d}{M}+2\lambda_{q}\|\bDelta_q\|_1-4cr_{\pi}-8\lambda_{q}\sqrt{s_{q}}. 
\end{align*}
Note that as $N$, $d_1$, $d_2\to\infty$, we have $\bar s_{q}\log d/M=O_p(r_{\pi}^2+r_{q}^2)=o(1)$ and $r_{\pi}=o(1)$. With some $\lambda_q=2\sigma_q\sqrt{(t+\log d/M)}\asymp \sqrt{\log d/M}$, we have $\lambda_q\sqrt{s_q}=2\sigma_{q}\sqrt{s_q(t+\log d)/M}\leq 2\sigma_{q}\sqrt{s_{q}t/M}+ 2\sigma_{q}\sqrt{s_{q}\log d/M}\leq \kappa_1/4+o(1)$ when $0<t<\kappa_1^2M/(16^2\sigma_{q}^2s_{q})$. Therefore, 
\begin{align*}
    4\mathcal F(\bDelta_q)\geq \frac{\kappa_1}{2}> 0.
\end{align*}
\end{proof}

\end{appendices}
\end{document}